\newcommand*\widefbox[1]{\fbox{\hspace{2em}#1\hspace{2em}}}\setlength\fboxsep{0.3cm}
\renewcommand{\@chapapp}{}
\newenvironment{chapquote}[2][2em]
{\setlength{\@tempdima}{#1}%
	\def\chapquote@author{#2}%
	\parshape 1 \@tempdima \dimexpr\textwidth-2\@tempdima\relax%
	\itshape}
{\par\normalfont\hfill--\ \chapquote@author\hspace*{\@tempdima}\par\bigskip}
\definecolor{purple}{rgb}{1,0,1}
\definecolor{lime}{HTML}{A6CE39} 
\newtheorem{theorem}{Theorem}[chapter]
\newtheorem{proposition}{Proposition}[chapter]
\begin{document}
\frontmatter 

\author{Del Rajan}
\title{Quantum Entanglement in Time}


\abstract{This thesis is in the field of quantum information science, which is an area that reconceptualizes quantum physics in terms of information.  Central to this area is the quantum effect of entanglement in space.  It is an interdependence among two or more spatially separated quantum systems that would be impossible to replicate by classical systems.  Alternatively, an entanglement in space can also be viewed as a resource in quantum information in that it allows the ability to perform information tasks that would be impossible or very difficult to do with only classical information.  Two such astonishing applications are quantum communications which can be harnessed for teleportation, and quantum computers which can drastically outperform the best classical supercomputers. 
	
In this thesis our focus is on the theoretical aspect of the field, and we provide one of the first expositions on an analogous quantum effect known as entanglement in time.  It can be viewed as an interdependence of quantum systems across time, which is stronger than could ever exist between classical systems.  We explore this temporal effect within the study of quantum information and its foundations as well as through relativistic quantum information.

An original contribution of this thesis is the design of one of the first quantum information applications of entanglement in time, namely a quantum blockchain.  We describe how the entanglement in time provides the quantum advantage over a classical blockchain.  Furthermore, the information encoding procedure of this quantum blockchain can be interpreted as non-classically influencing the past, and hence the system can be viewed as a `quantum time machine.'}


\ack{Somtimes a few words carry more weight: There have been an eclectic set of mentors and books in my life that were pivotal to my positive development.  However there is truly one mentor who I owe so much to, and who has impacted my life the most in a positive way.  And that is my supervisor, Professor Matt Visser.  Matt gave me an opportunity when I needed it the most in my life; I am not really sure where I would have been without it.  He taught me so much about theoretical physics, supervised me with care, and his work especially on wormholes inspired me.  I am extraordinarily grateful to be trained under this great scientist.  
	
Thank you Matt.}


\phd

\maketitle

\thispagestyle{empty}
\null\vspace{\stretch {1}}
\begin{flushright}
	Dedicated to Albert Einstein\footnote{It was a temporal narrative (the twin paradox) from his theories of relativity that shocked a less than ordinary boy to first open the door to the magical world of theoretical physics.}
\end{flushright}
\vspace{\stretch{2}}\null

\tableofcontents

\mainmatter 

\begin{figure}
	\includegraphics[width=\textwidth]{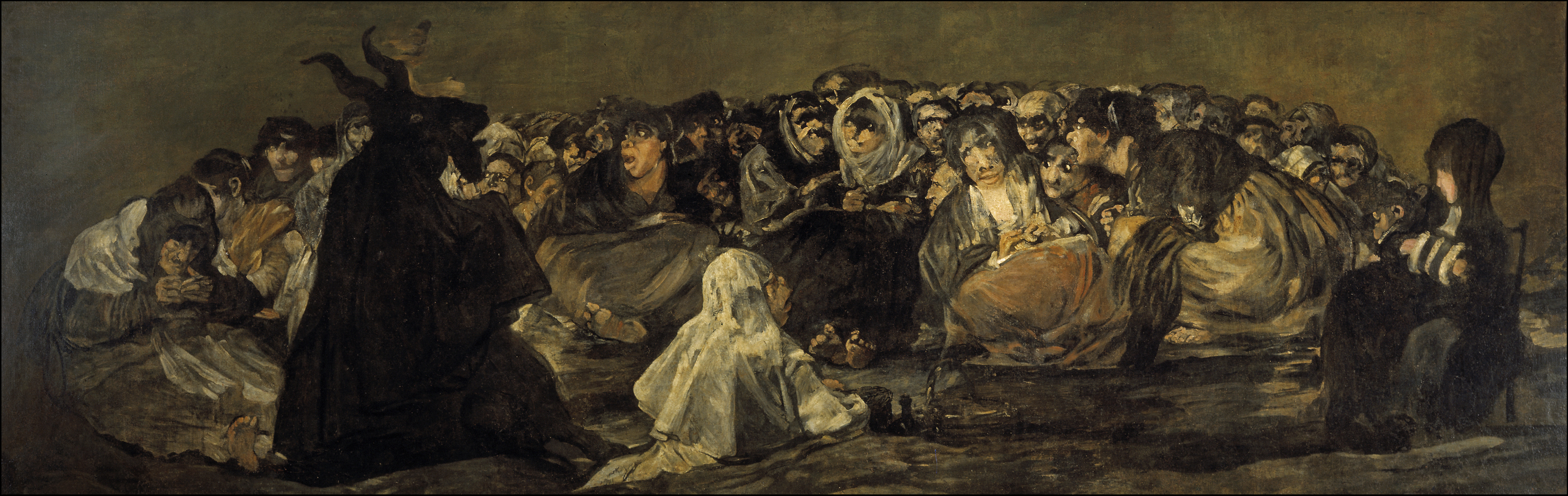}
\caption*{One of the fourteen untitled pieces from  Francisco Goya's \textit{Black Paintings} series.}
\caption*{\footnotesize{\textit{``Quantum mechanics: Real Black Magic Calculus"} - Albert Einstein}}
\end{figure}


\chapter{Introduction}

\begin{chapquote}{Niels Bohr, co-inventor of quantum theory}
	``Anyone who is not shocked by quantum theory has not understood it.''
\end{chapquote}

\textsf{THIS THESIS} explores the most shocking temporal effects in quantum physics.  These recently discovered phenomena violently overthrow the classical picture of the world \cite{jennings2016no}.  Each of these effects has been described as an `entanglement in time' and yet arise from different contexts within quantum physics.  This thesis collects these results to provide one of the first systematic expositions on entanglement in time, which also entails a comparison with the extensively researched entanglement in space. Furthermore, this project is carried out using the theoretical framework of quantum information science \cite{nielsen2002quantum}.

Quantum information science reconceptualizes quantum physics in terms of information.  It is the theoretical and experimental study of quantum information and its applications.  It can be regarded as a fundamental subject in that it distils questions on the nature of quantum physics to distinctions between quantum information and classical information.  One very powerful advantage of quantum information is that it can contain a resource known as entanglement in space.  As a result, quantum information has the ability to perform information tasks that would be impossible or very difficult to do with only classical information.  One prominent example of such tasks is quantum communication which can be used to teleport quantum information.  Another remarkable example are quantum computers which can be shown to efficiently solve problems that would be infeasible to perform on any classical computer that could ever be built.  Both of these applications of entanglement in space are major research programs, and hence emphasizes the crucial role of this resource in quantum information.

Apart from its importance, entanglement in space is a perplexing phenomenon when expressed in terms of the physical systems that instantiate the quantum information.  It can be described as an interdependence among two or more spatially separated quantum information systems in which any one system can instantaneously affect the other systems that can in principle be arbitrarily far.  

Entanglement in time, of which we bring to light in this thesis, corresponds to an analogous effect.  It can be viewed as an interdependence of quantum information systems across time, which is stronger than could ever exist between classical information systems; it can even arise for the case of a single system (across multiple times).  Moreover, the interpretations associated with each manifestation of the effect are far more bizarre; for example, a newly created photon can \textit{affect the physical description of a photon in the past} that has long since been destroyed; in another scenario, a quantum detector that is switched on and off at say quarter to 12:00 can form a non-classical interdependence with another detector at the same spatial location in the future, but \textit{only if the future detector waits} to be switched on and off at precisely quarter past 12:00. 

To gain a deeper understanding of these effects,  a natural question that arises is if there is a common trait that marks these various entanglement phenomena as shocking? This is a challenging question given that these effects are theoretically expressed using different mathematical areas; we can conceive of a large number of factors that contribute to its radical departure from classical properties.  This project provides an insight towards an answer, and captures this in the form of the following overarching theme of this thesis:\textit{ The interdependence in any entanglement in space is shocking due to the absense of a time interval involved.  The interdependence in any entanglement in time is shocking due to the existence of a time interval involved.}  To elaborate on this observation, in an entanglement in space the ability of a system to instantaneously affect a distant system signifies a lack of a time interval.  Introducing a time interval in this scenario will only make the effect clash less harshly with our classical intuition (as it allows for an explanation involving hidden causal signals of some form or the other); such an insight regarding this spatial case was first expressed in the concluding remarks in \cite{schrodinger1935discussion}.  However in an entanglement in time, it is precisely the introduction of a time interval that makes the effect completely unpalatable to the mind of a classical physicist.  This is non-trivial as it was already noted that a time interval allows for a classical dependence among systems across time in the form of a causal relationship. We aim to provide a compelling case for this theme.

In quantum information science, both quantum communications and quantum computers are established applications of entanglement in space.  A central aim of the field is the creation of new quantum information applications.  In this thesis, we make an original contribution by designing one of the first novel applications of entanglement in time, namely a quantum blockchain.  In our mathematical design, we show that the entanglement in time (as opposed to an entanglement in space) provides the quantum advantage over a classical blockchain.  Furthermore, the information encoding procedure of this quantum blockchain can be interpreted as non-classically influencing the past, and hence the system can be viewed as a `quantum time machine.'  This advancement forms one piece of the various original works and insights presented in this thesis.

Rather than provide a chronological presentation, our thesis will place the entanglements in quantum information science as the conceptual core and coherently organize the diverse topics as backgrounds or extensions of this core.  We believe this approach achieves the most clarity.  Hence the structure of this thesis is as follows:  In Chapter \ref{chap: classical}, we provide mathematical descriptions of classical information and highlight three of its applications.  These fall under the respective sections of classical communications, classical computers and classical blockchains.  In Chapter \ref{chap: QInfo}, we introduce quantum information and contrast this with the properties of classical information. As a result, it contains a description of the mathematical tools of quantum information science, and this will closely follow the material in \cite{nielsen2002quantum} along with recent developments; from the perspective of a theoretical physicist, this subject can be viewed as an information-theoretic reformulation of non-relativistic quantum mechanics.  Chapter \ref{chap: QEnt} is the core chapter of this thesis which starts by describing entanglement in space using the tools obtained in the previous chapter.  Furthermore, it introduces quantum communications and quantum computers which are the applications of entanglement in space.  We proceed to describe the central topic of entanglement in time, and its various non-classical properties.  We conclude this chapter by presenting a mathematical design of the quantum blockchain which is an application of entanglement in time.  All three quantum applications will be contrasted with the classical case.  We proceed to Chapter \ref{chap: QFound}, where the notions of both entanglement in space and entanglement in time are extended to the subject of quantum foundations.  These are respectively described in sections titled nonlocality in space and nonlocality in time.  The emphasis will be placed on the aspects of quantum foundations which shares an interface with quantum information science.  In Chapter \ref{chap: RQI}, we see entanglement in space and entanglement in time manifesting itself in the relativistic regime.  These effects are respectively termed spacelike entanglement and timelike entanglement in the subject of relativistic quantum information.  One can think of this new subject as placing quantum information science within the broader framework of quantum field theory (in flat and curved spacetimes).  Finally in Chapter \ref{chap: Conclusion}, we provide a conclusion which includes a discussion on future research projects concerning entanglement in time.

Though entanglement in time may resemble some of the concepts of closed timelike curves, we exclude a detailed study of the latter for two reasons.  The first reason is that closed timelike curves have already been extensively reviewed within relativity \cite{visser1996lorentzian} as well as in quantum theory \cite{ralph2012relativistic}; whereas this thesis is one of the first to systematically compile the diverse literature on entanglement in time.  The second reason is that unlike closed timelike curves, the effects of entanglement in time have been experimentally verified for a number of cases,  thereby warranting itself as a distinct phenomenon.  

Although we refer to various literature in experimental physics, information theory, and computer science, this thesis falls under the theoretical physics aspect of quantum information science.  Hence, the focus is solely on the mathematics with an emphasis on the physical theories.


\chapter{Classical Information}\label{chap: classical}

\begin{chapquote}{Microsoft's founding vision statement}
	``A [classical] computer on every desk and in every home.''
\end{chapquote}

\textsf{THE MODERN CONCEPT} of classical information will be articulated against the backdrop of three technological applications.  The associated mathematical models in each of the three cases provide an abstraction for how this information is represented and transformed.  For the sole study of information, this abstraction provides the necessary framework to ignore the details of the various physical systems used which store that information.   

Of particular importance to this thesis is the notion of systems exhibiting \textit{interdependence} with each other.  In the realm of classical information, we will find this to be naturally captured through the constructs of probability theory.

\section{Review of Probability Theory}

In this thesis, probability theory will prove to be essential in two primary ways:
\begin{enumerate}[noitemsep, topsep=0pt, label=\roman*)]
	\item To assist in mathematically defining classical information.
	\item To understand the probabilities derived from quantum information.
\end{enumerate}

\subsection{Single random variable}

The fundamental object of probability theory is the random variable, which we denote as $X$.  The random variable can take one of a number of values, $x$, with respective probabilities $p(X=x)$; we limit ourselves to the case where the values form a finite set; we also use the convention that $p(x)$ can represent $p(X=x)$.

The expectation value of $X$ is defined as
\begin{equation}\label{expectation}
\mathbb{E}(X) \equiv  \sum_{x}p(x) \, x.
 \end{equation}
It is a particular type of mean for all the values the random variable can take.  Furthermore if $a$ and $b$ are constants, then it can be shown that $\mathbb{E}(aX + b) = a \mathbb{E}(X) + b$.

The variance and standard deviation are respectively defined as
\begin{equation}
\text{Var}(X) \equiv \mathbb{E}[(X-\mathbb{E}(X))^{2}] = \mathbb{E}(X^{2}) - \mathbb{E}(X)^{2},
\end{equation}
\begin{equation}\label{standarddeviation}
\Delta(X) \equiv \sqrt{\text{Var}(X)}.
\end{equation}
Both are statistical measures of the `spread' of values about the average. One advantage of using the standard deviation, as opposed to the variance, is that it has the same units as the expectation value.

\subsection{Multiple random variables}

When considering the case of more than one random variable, several new constructions can be introduced.  Suppose $X$ and $Y$ are random variables.  Then the probability that $X =x$ \textit{and} $Y=y$ is known as the joint probability,
\begin{equation}
	p(X=x, Y=y).    
\end{equation}
An equivalent notation is simply $p(x,y)$.  

The conditional probability that $X=x$ given that $Y=y$ is defined as
\begin{equation}\label{conditional}
	p(X=x \,|\, Y = y) \equiv \frac{p(X=x, Y=y)}{p(Y=y)}.
\end{equation}
Bayes' rule allows one to `invert' conditional probabilities
\begin{equation}
	p(X=x \,|\, Y = y) = p(Y=y \,|\, X = x) \,\frac{p(X=x)}{p(Y=y)}.  
\end{equation}

Given two random variables, the law of total probability provides an alternative way to calculate probabilities of one of the variables,
\begin{equation}
	p(Y=y) = \sum_{x} p(Y=y \,|\, X = x) \, p(X=x).
\end{equation}
The sum is over all values that the other random variable can take.  

The expectation value for two random variables is the rather simple result
\begin{equation}
	\mathbb{E}(X + Y) = \mathbb{E}(X) + \mathbb{E}(Y).
\end{equation}

\subsection{Independent random variables}
The pertinent question, from the view of this thesis, is how to describe random variables where the realization of one has no effect on the other?  This can be enunciated by the following mathematical definition:  Random variables $X$ and $Y$ are said to be \textit{independent} if
\begin{empheq}[box=\widefbox]{align}\label{classicalindependence}
	p(X=x, Y=y) = p(X=x) \, p(Y=y), \quad \forall x, y.
\end{empheq}
When viewed through the concepts from the previous subsection, it is not difficult to see that if $X$ and $Y$ are independent random variables, then:
\begin{align}
	p(Y=y \,|\, X=x)=& \, p(Y=y) \quad \forall x, y \\
\mathbb{E}(X Y)=&  \, \mathbb{E}(X) \, \mathbb{E}(Y), \\
\text{Var}(X + Y)=& \, \text{Var}(X) + \text{Var}(Y).
\end{align}
A further consequence from considering independent random variables is the following theorem.
\begin{theorem}\label{lawlarge}
	\textbf{(Law of large numbers)}  Suppose $X_{1}$, $X_{2}$, $\dots$ are independent random variables that all have identical probability distributions as $X$, where $\lvert \mathbb{E}(X)\rvert < \infty$ and $\lvert \mathbb{E}(X^{2})\rvert < \infty$.  Then for any $\epsilon > 0$, and where $S_{n} \equiv \sum_{i=1}^{n} X_{i}/n$, we have that $p(\lvert S_{n} - \mathbb{E}(X) \rvert > \epsilon) \rightarrow 0$ as $n \rightarrow \infty$.  (See e.g. \cite{nielsen2002quantum}.)\hfill$\Box$
\end{theorem}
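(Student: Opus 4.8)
The plan is to prove the weak law of large numbers via Chebyshev's inequality, which is the standard and cleanest route given the hypotheses of finite first and second moments. First I would set up the notation: let $\mu \equiv \mathbb{E}(X)$ and recall that each $X_i$ is identically distributed to $X$, so $\mathbb{E}(X_i) = \mu$ and $\text{Var}(X_i) = \text{Var}(X)$ for all $i$; finiteness of $\text{Var}(X)$ is guaranteed since $\lvert \mathbb{E}(X^2)\rvert < \infty$ and $\text{Var}(X) = \mathbb{E}(X^2) - \mathbb{E}(X)^2$. I would then compute the mean and variance of the sample average $S_n = \sum_{i=1}^{n} X_i / n$. By linearity of expectation (stated earlier in the excerpt), $\mathbb{E}(S_n) = \mu$. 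For the variance, I would invoke the additivity of variance for independent random variables, namely $\text{Var}(X+Y) = \text{Var}(X) + \text{Var}(Y)$, which the excerpt established as a consequence of independence; applied inductively this gives $\text{Var}\!\left(\sum_{i=1}^n X_i\right) = n\,\text{Var}(X)$, and then the scaling rule $\text{Var}(aX) = a^2 \text{Var}(X)$ yields
\begin{equation}
\text{Var}(S_n) = \frac{\text{Var}(X)}{n}.
\end{equation}

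Next I would state and apply Chebyshev's inequality, which for any random variable $Z$ with finite variance and any $\epsilon > 0$ gives $p(\lvert Z - \mathbb{E}(Z)\rvert > \epsilon) \le \text{Var}(Z)/\epsilon^2$. Setting $Z = S_n$ and using $\mathbb{E}(S_n) = \mu = \mathbb{E}(X)$ together with the variance computation above, I obtain
\begin{equation}
p\bigl(\lvert S_n - \mathbb{E}(X)\rvert > \epsilon\bigr) \le \frac{\text{Var}(X)}{n\,\epsilon^2}.
\end{equation}
Since $\text{Var}(X)$ is a fixed finite constant and $\epsilon > 0$ is fixed, the right-hand side tends to $0$ as $n \to \infty$, which is exactly the claimed convergence. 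The squeeze is completed by noting the probability is non-negative and bounded above by a quantity going to zero.

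The main obstacle, such as it is, will be justifying Chebyshev's inequality itself, since the excerpt has not introduced it; I would either cite it as standard (matching the paper's own reference to \cite{nielsen2002quantum}) or include a one-line derivation from Markov's inequality applied to the non-negative random variable $(S_n - \mu)^2$. A secondary subtlety worth flagging is that this argument establishes only the \emph{weak} law (convergence in probability), which is precisely what the statement asserts with its $p(\cdot) \to 0$ formulation; the stronger almost-sure convergence would require a different and more delicate argument (e.g. Borel--Cantelli with a fourth-moment bound or Kolmogorov's inequality), but that is not needed here. Given the finite-second-moment hypothesis supplied in the statement, the Chebyshev route is entirely adequate and avoids any heavier machinery.
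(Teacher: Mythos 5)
Your proof is correct: the Chebyshev-inequality argument is the standard route to the weak law under a finite-second-moment hypothesis, and it is exactly the proof given in the paper's cited reference \cite{nielsen2002quantum} (the paper itself states the theorem without proof, deferring to that source). Your closing remarks — deriving Chebyshev from Markov applied to $(S_n - \mu)^2$, and noting that only convergence in probability is claimed — correctly flag the only points needing care.
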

The utility of this result arises in many applications such as in games involving chance.  Hence, it is of no surprise that origins of probability theory can be traced to systemic study of dice games \cite{applebaum1996probability}. 

\subsection{Application:  Monty Hall game}

We devote this subsection to applying some of the reviewed concepts to the well known Monty Hall game \cite{rodriguez2018probability, rosenthal2008monty, gill2010monty, lucas2009monty}.  This game is perhaps the most bizarre application of classical probabilities.

\textbf{a) Classic Monty Hall game:} A character named Monty hosts a game show.  There are three doors respectively labelled $\{1,2,3\}$.  There is a car prize behind one door, and goats behind the remaining two.  We let a random variable $A$ represent the prize door which can take value $i$ from the set of door labels.  We assume in the game that when a random choice needs to be made, all options are chosen with equal probability.  Note that this implies that the choice for prize door has probabilities $p(A = i) = 1/3$ for each value $i$.

The contestant on the show, who doesn't know which door the prize is behind, is given a choice to pick a door;  we represent the chosen door using a random variable denoted $B$ which can take value $j$ from the set of door labels.  Provided this is a random choice, we have $p(B=j \,|\, A=i) = 1/3$ for all values $i,j$.   

Next, Monty who knows where the prize is, has to open a goat door, which we represent using random variable $C$; in a similar manner, the variable takes value $k$ from the set of door labels.  But unlike the previous choices, Monty's decision is constrained through the game rule that he is not allowed to open the door chosen by the contestant. From this, we derive the following conditional probabilities:
\begin{equation}\label{Monty}
p(C=k \, | \, B=j, A=i) = 
\begin{cases}
\frac{1}{2},& \text{if } i =j \neq k\\
1,              & \text{if } i \neq j \neq k \\
0, & \text{otherwise}
\end{cases}
\end{equation}
Once a goat door is opened, Monty offers the contestant the option to stick with the original choice, or alternatively switch to the other unopened door.  By sticking, the contestant's probability of opening the prize door is $1/3$.  Counter-intuitively, by switching doors, the probability of winning increases to $2/3$.

This can be seen by proceeding to compute the non-zero joint probabilities
\begin{equation}
p(A=i, B=j, C=k)= p(C=k \, | \, B=j, A=i) \, p(B=j \,|\, A=i) \, p(A=i).
\end{equation}   
Then we sum the joint probabilities corresponding to the combination of door labels where the contestant would win by switching.  This leads to the desired result
\begin{equation}
p(\text{win if switch}) = \sum_{i \neq j \neq k} p(A=i, B=j, C=k) = \frac{2}{3}.
\end{equation}

\textbf{b) Ignorant Monty Hall game:} Let us consider the case where Monty does not know what lies behind any of the doors.  Nonetheless, we still have $p(A=i) = 1/3$, and also $p(B=j \,|\, A=i)= 1/3$ for all values $i,j$.  The only constraint as in the Classic game is that Monty cannot open the door chosen by the contestant.  This means that (\ref{Monty}) is modified to   
\begin{equation}\label{IgnorantMontyHallgame}
p(C=k \, | \, B=j, A=i) = 
\begin{cases}
0,& \text{if } j=k \\
\frac{1}{2}, & \text{otherwise}
\end{cases}
\end{equation}
Unlike the previous case, there is a probability in this scenario that Monty opens the prize door by accident; this be seen as the set of cases where $i=k$:
\begin{equation}\label{prize}
p(\text{opens prize door}) = \sum_{i = k \neq j} p(A=i, B=j, C=k) = \frac{1}{3}.
\end{equation}
By respecting that probabilities sum to unity, we derive from (\ref{prize}) that the probability Monty opens a goat door is $2/3$.  The joint probability that Monty opens a goat door and the contestant wins by switching doors can be computed to be $1/3$.  Substituting the last two values into the conditional probability formula (\ref{conditional}), we obtain
\begin{equation}
p(\text{win if switch} \, | \, \text{opens goat door}) = \frac{1/3}{2/3} = \frac{1}{2}.
\end{equation}
Thus, in this modified game, the contestant essentially acquires the same probability of winning whether a choice to switch is made or not.

\section{Classical Communication}

Classical information theory \cite{shannon1948mathematical} is a powerful application of probability theory. It arose from considering engineering problems associated with classical communication systems.  The central mathematical object of the subject is the Shannon entropy.  It turns out that there are two rather separate ways to interpret this quantity; the first is derived on an intuitive notion of what properties information should have; the second is based on an operational definition in terms of data compression.  For an extensive treatment on the subject, refer to \cite{nielsen2002quantum, cover2012elements, wilde2017quantum}.

\subsection{Information content}
Consider a random variable $X$ which can take one of the values $x$ with respective probabilities $p(x)$.  The information content of $x$ is defined as
\begin{equation}\label{infocontent}
I(x) \equiv -\log_{2} (p(x)).
\end{equation}
This mathematical definition captures the intuition that the occurrence of a value associated with a lower probability provide a greater `information' gain than the occurrence of a value associated with a larger probability.

\textbf{a) Single random variable:}  Generalizing (\ref{infocontent}) to the case of the random variable gives $I(X) = -\log_{2}(p(X))$.  The Shannon entropy of $X$ is defined as the expectation value of $I(X)$:
\begin{equation}\label{Shannon}
H(X) \equiv \mathbb{E}(I(X)) = -\sum_{x} p(x) \log_{2} p(x).
\end{equation}
This quantity is a function of only the probability distribution.  We take the convention that $0 \log_{2} 0 \equiv 0$, which is supported through $\lim_{x \to 0}$ $x \log_{2} x = 0$. Within this intuitive definition, there are three ways to view the Shannon entropy:   
\begin{enumerate}[noitemsep, topsep=0pt, label=\roman*)]
	\item It represents the information content of random variable $X$.
	\item It quantifies the information gained after we learn the value of $X$.
	\item It measures the uncertainty before we know the value of $X$.
\end{enumerate}
It can be shown that the entropy has the bounds $0 \leq H(X) \leq \log_{2} d$, where $d$ is the number of values $X$ can take.  

\textbf{b) Multiple random variables:} By extracting the notions developed in probability theory, one can develop various information-theoretic constructions for multiple random variables.  An example of this is the joint entropy of random variables $X$ and  $Y$, which is defined as
\begin{equation}\label{(joint)}
H(X,Y) \equiv - \sum_{x,y} p(x,y) \log_{2} p(x,y).
\end{equation} 
The joint entropy corresponds to the total uncertainty of both the variables considered.  Using (\ref{(joint)}), we define the conditional entropy, of $X$ conditioned on $Y$, as
\begin{equation}\label{conditionalentropy}
H(X|Y) \equiv H(X,Y) - H(Y).
\end{equation}
It can be interpreted as the remaining uncertainty of $X$ once the value of $Y$ is known.  A quantity of great importance is the mutual information as it provides a way to measure how much information $X$ and $Y$ have in common:
\begin{equation}\label{mutualinformation}
H(X:Y) \equiv H(X) + H(Y) - H(X,Y).
\end{equation}
Next, consider the case where $p(x)$ and $q(x)$ are probability distributions over the same index set, $x$.  The relative entropy provides measure of `distance' between these distributions; it is defined (from $p(x)$ to $q(x)$) as
\begin{equation}\label{relativeentropy}
H(p(x) || q(x)) \equiv \sum_{x} p(x) \log_{2} \frac{p(x)}{q(x)}.
\end{equation}
Given our emphasis on temporal phenomena in this thesis, we want to consider the relationships between random variables across time.  This can be exemplified by a Markov chain, which is a sequence of random variables $X_{1} \rightarrow X_{2} \rightarrow \cdots$ such that
\begin{equation}
p(X_{n+1} = x_{n+1} \,|\, X_{n} = x_{n}, \dots, X_{1} = x_{1}) = p(X_{n+1} = x_{n+1} \,|\, X_{n} = x_{n}).
\end{equation}

\textbf{c) Properties:}  We list out some elementary properties including how the considered quantities relate to one another:
\begin{align}
H(X:Y)&=H(X) - H(X|Y), \\
H(X,Y)&=H(Y,X), \\
H(X:Y)&=H(Y:X), \\
H(Y|X)&\geq0, \\
H(X)&\leq H(X,Y),\\
H(X,Y)&\leq H(X) + H(Y), \\
H(Y|X)&\leq H(Y), \\
H(X:Y)&= H(p(x,y)||p(x)p(y)), \\
H(X,Y,Z) + H(Y) &\leq H(X,Y) + H(X,Z) \\
H(X|Y,Z) &\leq H(X|Y).
\end{align} 
Along with that, the chaining rule for conditional entropies is a result that relates random variable $Y$ to a set of random variables $X_{1}, \dots , X_{n}$ in the following way
\begin{equation}
H(X_{1}, \dots , X_{n}|Y) = \sum_{i=1}^{n}H(X_{i}|Y, X_{1}, \dots , X_{i-1}).
\end{equation}
With respect to temporal relationships, we expect that once information is lost over time, it is gone forever.  This idea is mathematically captured by the data processing inequality:  If $X \rightarrow Y \rightarrow Z$ is a Markov chain, then 
\begin{equation}
H(X) \geq H(X:Y) \geq H(X:Z).
\end{equation}
\textbf{d) Independent random variables:}  For the special case of random variables that are independent, each of the following are a biconditional property:
\begin{align}
	H(X,Y)&=H(X) + H(Y), \\
	H(Y|X)&=H(Y), \\
	H(X:Y)&=0.
\end{align}

\subsection{Data compression}

The fundamental results of classical information theory are the noiseless channel coding theorem and the noisy channel coding theorem; the former is concerned with the problem of compressing a message in a communication channel; the latter quantifies the reliability of transmitting that message over a noisy channel.  

However, our focus is solely on the noiseless coding theorem as it provides an operational definition of the Shannon entropy.  Instead of viewing $H(X)$ as the information content of $X$, it will be seen as the minimal physical resource necessary and sufficient to reliably store the output of a classical information source.

\textbf{a) Defining an information source:}  In order to derive the noiseless coding theorem, we define a classical information source as a sequence of random variables ($X_{1}, X_{2}, \dots$).  The output of the source are the values the variables take.  Furthermore, we assume the variables are independent and have identical distributions, which we abbreviate as \textit{i.i.d}.  Hence we have $H(X) \equiv H(X_{1}) = H(X_{2})= \dots$  Developing on this model, we define a compression scheme of rate $R$ as mapping output $x=(x_{1},\dots, x_{n})$ to a string of length $nR$, which we represent by $C^{n}(x) = C^{n}(x_{1},\dots, x_{n})$.  Conversely, the corresponding decompression scheme, $D^{n}(C^{n}(x))$, maps the string of length $nR$ to a string of length $n$.  The compression-decompression scheme is defined to be reliable if the probability that $D^{n}(C^{n}(x)) = x$ goes to one as $n$ goes to $\infty$.         

\textbf{b) Defining typical sequences:}  The possible outputs of the information source can divided into two sets, namely typical sequences and its complement, atypical sequences.  More precisely, given $\epsilon > 0$, a sequence $x_{1}, \ldots , x_{n}$ is $\epsilon$-typical if it satisfies 
\begin{equation}\label{typical}
2^{-n(H(X) + \epsilon)} \leq p(x_{1}, \ldots , x_{n}) \leq 2^{-n(H(X) - \epsilon)}.
\end{equation}
We can reformulate (\ref{typical}) as
\begin{equation}\label{typical2}
\biggl\lvert \frac{1}{n} \log \frac{1}{p(x_{1}, \ldots , x_{n})} - H(X) \, \biggl\rvert  \leq \, \epsilon.
\end{equation}
We also denote $T(n, \epsilon)$ as the set of of all $\epsilon$-typical sequences of length $n$. 

\textbf{c) Application of the law of large numbers:}  In the case of large $n$, it can be observed that most sequences are typical.  This hypothesis is rigorously proved in the following theorem using the law of large numbers.
\begin{theorem}\label{typicaltheorem}
	\textbf{(Theorem of typical sequences)}   
	\begin{enumerate}[noitemsep, topsep=0pt, label=\roman*)]
		\item Fix $\epsilon > 0$.  Then for any $\delta > 0$, for sufficiently large $n$, the probability that a sequence is $\epsilon$-typical is at least $1-\delta$.
		\item For any fixed $\epsilon > 0$ and $\delta > 0$, for sufficiently large $n$, the number of $\epsilon$-typical sequences, $\lvert T(n,\epsilon) \rvert$, satisfies  
		\begin{equation}
		(1-\delta)\, 2^{n(H(X) - \epsilon)} \leq \lvert T(n,\epsilon) \rvert \leq 2^{n(H(X) + \epsilon)}.
		\end{equation}
		\item Suppose $R<H(X)$.  Let $S(n)$ be a collection of size at most $2^{nR}$, of length $n$ sequences from the source.  Then for any $\delta > 0$ and for sufficiently large $n$, 
		\begin{equation}
		\sum_{x \in S(n)} p(x) \leq \delta.  
		\end{equation}
	\end{enumerate} \hfill$\Box$ 
\end{theorem}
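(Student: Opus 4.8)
The plan is to deduce all three parts from the law of large numbers (Theorem \ref{lawlarge}) applied to a suitable sequence of auxiliary random variables, with parts (ii) and (iii) then following from elementary counting against the bounds in the definition (\ref{typical}).

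First I would introduce the auxiliary variables $Y_{i} \equiv -\log_{2} p(X_{i})$. Because the source is i.i.d., the joint probability factorizes, $p(x_{1}, \ldots, x_{n}) = \prod_{i=1}^{n} p(x_{i})$, so that
\begin{equation}
\frac{1}{n} \log_{2} \frac{1}{p(x_{1}, \ldots, x_{n})} = \frac{1}{n} \sum_{i=1}^{n} \bigl(-\log_{2} p(x_{i})\bigr).
\end{equation}
The $Y_{i}$ are themselves i.i.d., and by the definition of Shannon entropy (\ref{Shannon}) their common expectation is $\mathbb{E}(Y_{i}) = -\sum_{x} p(x)\log_{2} p(x) = H(X)$; since $X$ ranges over a finite set, both $\mathbb{E}(Y)$ and $\mathbb{E}(Y^{2})$ are finite, so the hypotheses of Theorem \ref{lawlarge} hold. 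The reformulated typicality condition (\ref{typical2}) is then exactly the statement $\lvert \frac{1}{n}\sum_{i} Y_{i} - H(X) \rvert \leq \epsilon$, so a sequence fails to be $\epsilon$-typical precisely on the event whose probability the law of large numbers drives to zero. This immediately yields part (i): for any $\delta > 0$, for $n$ large the atypical probability is below $\delta$, hence the typical probability is at least $1-\delta$.

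Part (ii) I would obtain by counting against the two bounds in (\ref{typical}). For the upper bound, each $x \in T(n,\epsilon)$ satisfies $p(x) \geq 2^{-n(H(X)+\epsilon)}$, so from $1 \geq \sum_{x \in T(n,\epsilon)} p(x) \geq \lvert T(n,\epsilon)\rvert\, 2^{-n(H(X)+\epsilon)}$ I get $\lvert T(n,\epsilon)\rvert \leq 2^{n(H(X)+\epsilon)}$. For the lower bound I feed in part (i): the typical set carries probability at least $1-\delta$, while each typical $x$ has $p(x) \leq 2^{-n(H(X)-\epsilon)}$, giving $1-\delta \leq \lvert T(n,\epsilon)\rvert\, 2^{-n(H(X)-\epsilon)}$ and hence $\lvert T(n,\epsilon)\rvert \geq (1-\delta)\,2^{n(H(X)-\epsilon)}$.

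The real work is in part (iii), and this is where I expect the main obstacle, since $S(n)$ may be chosen adversarially to hoard the most probable outcomes. The key move is to exploit the strict gap $R < H(X)$ by fixing any $\epsilon$ with $0 < \epsilon < H(X)-R$, and then splitting $S(n)$ along the typical set:
\begin{equation}
\sum_{x \in S(n)} p(x) = \sum_{x \in S(n)\cap T(n,\epsilon)} p(x) \; + \sum_{x \in S(n)\setminus T(n,\epsilon)} p(x).
\end{equation}
The second sum is dominated by the total atypical mass, which part (i) forces below $\delta/2$ for large $n$. For the first sum, every typical term obeys $p(x) \leq 2^{-n(H(X)-\epsilon)}$ and there are at most $2^{nR}$ terms, so it is bounded by $2^{nR}\,2^{-n(H(X)-\epsilon)} = 2^{-n(H(X)-\epsilon-R)}$, whose exponent is strictly negative by the choice of $\epsilon$; it therefore also falls below $\delta/2$ for $n$ large. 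Adding the two bounds gives the claim. The crux is precisely the choice of $\epsilon$ strictly inside the gap $H(X)-R$: this is what keeps the exponent $H(X)-\epsilon-R$ positive and makes the typical contribution vanish no matter how cleverly $S(n)$ is populated.
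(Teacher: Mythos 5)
Your proposal is correct and follows essentially the same route as the paper's proof: the law of large numbers applied to $-\log_{2} p(X_{i})$ for part (i), counting against the bounds in (\ref{typical}) for part (ii), and for part (iii) fixing $\epsilon$ strictly inside the gap $H(X)-R$, splitting $S(n)$ along the typical set, and letting the exponential factor $2^{-n(H(X)-\epsilon-R)}$ kill the typical contribution while part (i) controls the atypical mass. Your write-up of (iii) is in fact slightly cleaner than the paper's, which imposes an unnecessary extra condition $0<\epsilon<\delta/2$ on the choice of $\epsilon$.
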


\begin{proof} (See e.g. \cite{nielsen2002quantum}.)
\begin{enumerate}[noitemsep, topsep=0pt, label=\roman*)]
	\item Given that $X_{i}$ are a set of \textit{i.i.d} random variables, this implies $-\log p(X_{i})$ are also a set of \textit{i.i.d} random variables.  Using the law of large numbers (Theorem (\ref{lawlarge})), we have for any $\epsilon > 0$ and $\delta > 0$ for sufficiently large $n$ that
	\begin{equation}\label{typicalformula}
	p\Bigg( \biggl\lvert \sum_{i=1}^{n} \frac{-\log p(X_{i})}{n} - \mathbb{E}(-\log_{2} p(X)) \, \biggl\rvert \leq \epsilon \Bigg) \geq \, 1-\delta.
	\end{equation}    
	Using (\ref{Shannon}), we can substitute $H(X)$ for $\mathbb{E}(-\log_{2} p(X))$. Furthermore using the product property of logarithms, we have that $\sum_{i=1}^{n} {\log p(X_{i})} = \log(p(X_{1}, \dots, X_{n}))$.  This modifies (\ref{typicalformula}) to give the desired result that the probability a sequence is $\epsilon$-typical is at least $1-\delta$: 
	\begin{equation}\label{typicalprob}
	p\Bigg( \biggl\lvert \frac{1}{n} \log \frac{1}{p(X_{1}, \ldots , X_{n})} - H(X) \, \biggl\rvert  \leq \, \epsilon \Bigg) \geq 1-\delta.
	\end{equation}
	\item The sum of the probabilities of the typical sequences cannot be greater than one.  Along with (\ref{typical}), we see that
	\begin{align}
	1 &\geq   \sum_{x \in T(n,\epsilon)} p(x) \\
	&\geq   \sum_{x \in T(n,\epsilon)} 2^{-n(H(X) + \epsilon)} \\
	&= \lvert T(n, \epsilon) \rvert  2^{-n(H(X) + \epsilon)}. 
	\end{align}
	Therefore, we obtain that $\lvert T(n, \epsilon) \rvert \leq 2^{n(H(X) + \epsilon)}$.  Conversely, from (\ref{typicalprob}), we can also deduce that the sum of the probabilities of typical sequences must be at least $1-\delta$.  Under this requirement, along with (\ref{typical}), we can write 
	\begin{align}
	1-\delta  &\leq  \sum_{x \in T(n,\epsilon)} p(x) \\
	&\leq   \sum_{x \in T(n,\epsilon)} 2^{-n(H(X) - \epsilon)} \\
	&= \lvert T(n, \epsilon) \rvert  2^{-n(H(X) - \epsilon)}. 
	\end{align} 
	Hence, we can compute that $\lvert T(n, \epsilon) \rvert \geq (1-\delta)2^{n(H(X) - \epsilon)}$. 
	\item Fix an $\epsilon$ such that $R < H(X) - \epsilon$, and $0 < \epsilon < \delta/2$.  The total probability for $\epsilon$-atypical sequences in $S(n)$ can be made small, ie less than $\delta/2$, for large enough $n$.  The total number of $\epsilon$-typical sequences is at most $2^{nR}$ since that is the upper bound for the total number of sequences in $S(n)$.  Furthermore, each $\epsilon$-typical sequence has probability at most $2^{-n(H(X)-\epsilon)}$. Therefore, the total probability of $\epsilon$-typical sequences in $S(n)$ is $2^{-n(H(X)-\epsilon-R)}$. Given  $R < H(X)- \epsilon$, we can see that $2^{-n(H(X)-\epsilon-R)} \rightarrow 0$ as $n \rightarrow \infty$. Hence the total probability of sequences in set $S(n)$ is less than $\delta$ for sufficiently large $n$.        
	\end{enumerate}  
\end{proof}

\textbf{d) Application of theorem of typical sequences:}  The usefulness of Theorem (\ref{typicaltheorem}) becomes apparent when proving the main result:
\begin{theorem}\label{Shannoncoding}
	\textbf{(Shannon's noiseless channel coding theorem)} 
	Consider an \textit{i.i.d.} information source represented by $\{X_{i}\}$, with entropy rate $H(X)$:
	\begin{enumerate}[noitemsep, topsep=0pt, label=\roman*)]
		\item  If $R>H(X)$, then there exists a reliable compression scheme of rate $R$ for the information source.
		\item  Conversely, if $R<H(X)$, then any compression scheme will not be reliable. 
	\end{enumerate} \hfill$\Box$ 
\end{theorem}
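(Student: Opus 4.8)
The plan is to prove both directions by leveraging the theorem of typical sequences (Theorem \ref{typicaltheorem}), which has already absorbed all of the analytic work; what remains in each direction is essentially a counting argument. The guiding insight is that the set $T(n,\epsilon)$ of typical sequences is simultaneously small enough to be indexed by roughly $nH(X)$ bits, yet large enough to carry almost all of the probability mass. Matching the rate $R$ against $H(X)$ then decides whether this indexing succeeds or fails.

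For the direct part (i), I would first use the hypothesis $R > H(X)$ to fix an $\epsilon > 0$ with $H(X) + \epsilon < R$. By part (ii) of Theorem \ref{typicaltheorem}, the number of typical sequences satisfies $\lvert T(n,\epsilon) \rvert \leq 2^{n(H(X)+\epsilon)} < 2^{nR}$, so there are enough distinct binary strings of length $nR$ to assign each typical sequence its own index. The compression scheme $C^{n}$ then maps each typical sequence to its index and sends every atypical sequence to an arbitrary fixed codeword, while the decompression $D^{n}$ inverts the indexing on typical sequences. A decoding error can therefore occur only when the source emits an atypical sequence, and by part (i) of Theorem \ref{typicaltheorem} the probability of that event tends to zero as $n \rightarrow \infty$. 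Hence $p(D^{n}(C^{n}(x)) = x) \rightarrow 1$, so the scheme is reliable.

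For the converse (ii), I would argue that no rate-$R$ scheme can be reliable when $R < H(X)$. The key observation is that a compression scheme of rate $R$ produces codewords of length $nR$, of which there are only $2^{nR}$; since $D^{n}$ is a deterministic map, at most $2^{nR}$ source sequences can be recovered correctly. Let $S(n)$ denote this set of correctly decoded sequences, so that $\lvert S(n) \rvert \leq 2^{nR}$. By part (iii) of Theorem \ref{typicaltheorem}, the hypothesis $R < H(X)$ forces $\sum_{x \in S(n)} p(x) \leq \delta$ for any $\delta > 0$ and all sufficiently large $n$. Since this sum is exactly the probability of successful decompression, that probability cannot approach one, which contradicts reliability.

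Because the bulk of the difficulty has already been discharged into the theorem of typical sequences, I expect the main obstacle to be conceptual rather than computational: in the converse one must carefully justify that at most $2^{nR}$ sequences can be decoded correctly no matter how cleverly $C^{n}$ and $D^{n}$ are chosen. This rests only on the pigeonhole fact that $D^{n}$ cannot produce more distinct outputs than it has distinct inputs, namely the $2^{nR}$ available codewords. Identifying $S(n)$ with the decodable set and aligning the rate condition precisely with the hypothesis of part (iii) is the one place where genuine care is required.
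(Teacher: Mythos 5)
Your proposal is correct and follows essentially the same route as the paper's proof: the direct part fixes $\epsilon$ with $H(X)+\epsilon < R$ and indexes the typical set using parts (i) and (ii) of Theorem \ref{typicaltheorem}, while the converse invokes part (iii) on the at most $2^{nR}$ recoverable sequences. If anything, your converse is slightly more explicit than the paper's, since you spell out the pigeonhole identification of $S(n)$ with the set of correctly decoded sequences, which the paper leaves implicit.
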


\begin{proof} (See e.g. \cite{nielsen2002quantum}.)
	\begin{enumerate}[noitemsep, topsep=0pt, label=\roman*)]
		\item Consider the case $R>H(X)$.  We choose an $\epsilon$ such that $H(X) + \epsilon < R$.  From Theorem (\ref{typicaltheorem}), we have that for any $\delta > 0$ and for sufficiently large $n$, there are at most $2^{n(H(x)+\epsilon)} < 2^{nR}$ $\epsilon$-typical sequences produced by the information source.  Given that there are most $2^{nR}$ of such sequences, it only requires $nR$ bits to uniquely identify a particular $\epsilon$-typical output.  Hence we can compress the $\epsilon$-typical output, using some scheme, to a string of $nR$ bits which can be decompressed later.  Furthermore, using Theorem (\ref{typicaltheorem}), we have that the probability of producing such an $\epsilon$-typical sequences is at least $1-\delta$.  If on the other hand, we have an  $\epsilon$-atypical sequence, we declare an error and give up on compression.
		\item Consider the case $R<H(X)$.  There are at most $2^{nR}$ outputs for the combined compression-decompression scheme.  Using Theorem (\ref{typicaltheorem}), the probability, for sufficiently large $n$, of the information output belonging to a subset of the $2^{nR}$ sequences tends to zero.  Hence any compression scheme for this case will not be reliable.        
	\end{enumerate}  
\end{proof}

\textbf{e) Comments:} 

\begin{enumerate}[noitemsep, topsep=0pt, label=\roman*)]
	\item The entropy can be operationally defined as the minimum physical resource required to reliably store the output of a classical information source.
	\item The idea is that we only need to compress typical sequences, as they are the outputs that are overwhelmingly likely to occur in the asymptotic limit.
\end{enumerate}  

\section{Classical Computing}

The wide proliferation of digital computers across the globe has led to a period in human history known as the `Information Age.'  However, the conception of these physical devices stemmed from abstract work in the foundations of mathematics \cite{alan1936turing}.  This investigation brought about a mathematical model of computation known as Turing machine, which has since had a profound influence across different spheres of thought\cite{downey2014turing}.

Surprisingly, there are a number of different models of computation which are equivalent to the Turing machine.  One such example is the circuit model which we briefly cover in this section.  We also look at how one can probe at the resources required for a model to solve a computational problem; this can quantitatively captured by a framework known as the asymptotic notation.  For a broader survey on the theory of computation, we refer the reader to \cite{moore2011nature}.
 
\subsection{Circuit model}

An enormous range of computations can be performed by using a combination of circuits.  Circuits are abstractions which can be physically instantiated, most commonly through classical electrical systems.  They are composed of three primary elements.  The first is that they encode the information in a bit, whose state is either a $0$ or a $1$.  The second element is that circuits are made up of `wires' which carry that information through space or time.  The final piece is that circuits contain logic gates which are a particular application of Boolean logic; more precisely a logic gate is a function $f: \{0,1\}^{k} \rightarrow \{0,1\}^{l}$ where $k$ and $l$ respectively denote the number of input and output bits.  

We briefly describe various elementary logic gates as follows:

\textbf{a) NOT: }The NOT gate inverts the input value
\begin{equation}
f(a)=1 \oplus a
\end{equation}
where $\oplus$ represents modulo $2$ addition.

\textbf{b) AND: } The AND gate outputs bit $1$ if both input values are $1$.

\textbf{c) OR: } The OR gate produces output $1$ if at least one of the input values are $1$.

\textbf{d) XOR: } The XOR gate outputs bit $1$ if only one of the input values are $1$.
	
\textbf{e) NAND :} The NAND gate produces the negation of an AND gate.
	
\textbf{f) NOR :} The NOR gate produces the negation of an OR gate.
	
Using a combination of these gates, one can construct integrated circuits to solve computational problems with sophisticated mathematical structures.  The particular step by step procedure to do so are collectively known as an algorithm for that problem.  

However, a related issue to consider are what are the minimal number of gates required to solve a particular problem of interest?  More broadly speaking, how does one quantify the resources required by a specific algorithm? Furthermore, is there a limit to the computational capabilities provided by classical resources?

\subsection{Asymptotic notation}

Computational resources can be measured in a multitude of forms depending on the nature of the  problem in question.  Common examples include the number of evaluations of a function, space requirements (say in the form of memory), time requirements (in terms of run time of an algorithm) or even energy.  

For an appropriate framework to analyze specific algorithms, an important consideration is that one cares only about how the resource consumed scales with the `size' of the corresponding problem. Roughly speaking, each problem has a quantity of interest that can be used to describe the problem, and the magnitude of that quantity represents the size of the problem.  As an example, $n$ could be the number of input bits for an algorithm which takes $30n + \log_{2} n$ gates to execute.  The only term that dominates for large sizes is $30n$ hence we say that the number of operations required scales like $n$.  The asymptotic notation captures this idea.

Suppose $f(n)$ and $g(n)$ are two functions where $n$ is a non-negative integer.  With this in mind, one can define the three tools provided by the asymptotic notation.

\textbf{a) The `big $O$': }The first tool in the asymptotic notation is the $O$ notation.  It quantifies the upper bound on the behaviour of a function.  A function $f(n)$ is $O(g(n))$ if there are constants $c$ and $n_{0}$ such that for all values of $n$ greater than $n_{0}$, $f(n) \leq c g(n)$. 

\textbf{b) The `big Omega': }Conversely, the $\Omega$ notation provides a lower bound.  A function $f(n)$ is in $\Omega(g(n))$ if there are constants $c$ and $n_{0}$ such that for all values of $n$ greater than $n_{0}$, $c g(n) \leq f(n)$. 

\textbf{c) The `big Theta': }The final tool is the $\Theta$ notation which corresponds to the notion that $f(n)$ and $g(n)$ are similar in the asymptotic regime.  More precisely, $f(n)$ is in $\Theta(g(n))$ if it is both $O(g(n))$ and $\Omega(g(n))$. 

The asymptotic notation provides a way to quantify the resources used by an algorithm for a specific problem.  By harnessing this framework, it allows superior algorithms to be quantitatively expressed in that they use fewer resources than previous ways of solving the relevant problem.  The design of such powerful algorithms is one of the central aims in the field of classical computation.

\section{Classical Blockchain}

Information security systems harness concepts from both communication and computing.  One prominent example of this class of technologies is the classical blockchain system which stores data securely over time.  Furthermore, this task is accomplished among computer nodes in a communication network that do not necessarily trust each other.  

The pioneering invention of the blockchain system was first described pseudonymously in \cite{nakamoto2008bitcoin}.  However, many of the individual subsystems draw their inspiration from a large body of disconnected theoretical research \cite{narayanan2017bitcoin}.  Over recent years, countless variants have been proposed \cite{cachin2017blockchain}, but we devote this section to describing the original design, with an emphasis on the mathematical concepts.

The aim of a blockchain system is to have a single database of records about the past that every node in the network can agree on.  Furthermore, it should not require a centralized management node.  We start with describing the two primary elements of such a system. The first is the blockchain data structure which encodes the classical information using an algorithm.  The second component involves a communication network to provide the decentralization feature.  We conclude this section by conveying the essential ideas of public key cryptography; this is used in various tasks within the blockchain system.

\subsection{Blockchain data structure}

Records about the past, which occurred at around the same time, are received and collected into a data block.  These blocks are time-stamped to ensure that the data existed at the specified time. Furthermore, the blocks are linked in chronological order through mathematical functions known as cryptographic hash functions \cite{wenbo2004modern}.  We provide a more careful treatment of the linked blocks as follows.

A cryptographic hash function, $h$, is a deterministic function that maps a string of arbitrary length to a string of fixed length (eg 256 bits).  The output is known as the hash digest, $h(x)$.  This computing task can be accomplished by various cryptographic hash algorithms (eg SHA-$256$).

The function, $h$, satisfies the following properties:

\textbf{a) Preimage resistant: } It is infeasible through classical computation that given output $d = h(x)$, one can derive the input string $x$.  This gives the implication that the function is one-way.  This is based on the assumption that the search space of outputs is large.

\textbf{b) Second preimage resistant: } It is infeasible through classical computation given that given input $x$, one can find $y \neq x$, such that $h(x)=h(y)$.  

\textbf{c) Collision resistant: } It is infeasible through classical computation to find any two inputs, $x$ and $y$, that produce the same digest $h(x)=h(y)$.  

\textbf{d) Efficient: } It requires polynomial (ideally linear) computational resources to compute the digest, $d$, given the size of input $x$.  

\textbf{e) Pseudo-random: }  If one modifies any of the bits in the input, $x$, it has a significant unpredictable change in the output of $h(x)$. 

Using these mathematical properties, each block, with its string of bits, is mapped using the hash function to a specific digest.  More crucially, each block's data contains the hash digest of the previous block.  This latter property provides the required notion of a `chain' of blocks, resulting in the term blockchain.

This \textit{interdependence of the time-stamped blocks}, through the cryptographic hash functions, provides the necessary sensitivity for the role of securing the records in a blockchain.  Any party that attempts to falsify the past records in a block would need to find a way to alter the data such that it does not change the digest of that block. This task, as we have mentioned, is computationally infeasible.  Hence, the resulting change in the digest of the tampered block would cause all subsequent blocks to have different digests.  This is due to the design that each block's data contains the digest of the previous block.  Hence, the consequence of this sensitivity is that altering the data in a block would tamper all subsequent blocks and hence invalidate them.  Furthermore, given that only future blocks following the tampered block are invalidated, this implies that the the older the time stamp on the block, the more secure it is in the blockchain.  In summary, the blockchain data structure provides a tamper proof system for storing records, precisely because tampering with it can easily be detected.

\subsection{Network consensus protocol}
Along with a blockchain data structure, the second part to the system is a classical communication network.  Each node on the network carries a local copy of the blockchain data structure.  This provides the mechanism if one local copy is destroyed, other nodes with local copies would serve to provide replication.

However, the primary objective of the network component is to add valid blocks to each local copy without a centralized management node.  The challenge of the task is that it must be accomplished without the assumption that all the nodes are `honest.' Typically, this involves invoking a node on the network to confirm the validity of records in a new block, and then communicating that block to other nodes on the network. The different nodes accept the block if the block is valid and they can successfully link it to their own local copy of the blockchain data structure through the cryptographic hash functions. For this procedure to maintain ongoing accuracy, the validating node gets chosen at random for each block; this prevents preplanned node-specific attacks. Furthermore, the validating node is also incentivised through the network for carrying out these tasks. Despite some dishonest nodes, this is all successfully accomplished through a non-trivial consensus protocol.

In the original design, the consensus protocol is coined `proof-of-work' or is also known as the Nakamoto consensus.  In this scenario, the node that successfully validates the block has to expend a specific amount of computational resource.  This resource is used to solve a tractable problem involving the hash digest associated to the new block in question.  After the node verifies the validity of the block, it is rewarded by an economic incentive.

However, the consensus protocols in the blockchain systems do not fit into the traditional framework of fault-tolerant distributed computing \cite{narayanan2017bitcoin, bano2017consensus}.  More specifically, it is not rigorously clear that `proof-of-work' satisfies a security standard known as BFT (Byzantine Fault Tolerance) \cite{lamport1982byzantine}.  In this setting, byzantine nodes refer to computer nodes that may take arbitrary actions such as sending faulty messages, as opposed to crash failure nodes which fail by stopping.  An well known example of a BFT protocol in the fault-tolerant literature is PBFT (practical Byzantine fault tolerance) \cite{castro1999practical}.

\subsection{Public key cryptography}

Public key cryptography forms the security backbone of the classical information infrastructure of the modern world.  In the specific case of blockchain technologies, it is most notably implemented for digitally signing the records in a block \cite{yaga2019blockchain}.  The subject of public key cryptography is infeasible to cover in a short section, and hence we refer the reader to \cite{galbraith2012mathematics} for a deeper mathematical coverage.  We limit our discussion to the RSA (Rivest--Shamir--Adleman) public key cryptosystem which relies on ideas extracted from number theory.  Furthermore, our aim is to articulate the essential concepts by focusing within the simplified context of two parties wishing to communicate in private.

\textbf{a) Number-theoretic preliminaries: } We briefly digress to results regarding prime numbers and modular arithmetic.  Two integers $a$ and $b$ are defined as co-prime if their greatest common divisor is one.  The Euler $\varphi (n)$ function is defined to be the number of positive integers less than $n$ which are co-prime to $n$.  

Suppose that $n$ has prime factorization $n = p_{1}^{\alpha_{1}} \cdots p_{k}^{\alpha_{k}}$ where $p_{1}, \cdots, p_{k}$ represent the distinct prime numbers, and $\alpha_{1},\cdots, \alpha_{k}$ are positive integers.  Then one can derive the formula
\begin{equation}
\varphi(n) = \prod_{j=1}^{k}p_{j}^{\alpha_{j}-1}(p_{j}-1).
\end{equation}
Furthermore, it can be proven that if $a$ is co-prime to $n$, then 
\begin{equation}\label{coprime}
a^{\varphi(n)} = 1 \, (\text{mod } n).
\end{equation}
\textbf{b) Communication problem: } Suppose a party, say `Alice', wants to transmit a message to another party, say `Bob', over a classical communication channel.  More crucially, they want to ensure that no other party can access the contents of the message.  This can be accomplished, with a significant degree of confidence, by invoking the mathematical notions of a public and private key. 

\textbf{c) Encrypting the message: }  The message Alice wants to transmit is denoted $m$.  She is said to have encrypted her message to $c$ if she performs the computation
\begin{equation}
c = m^{e} \, (\text{mod } n),
\end{equation}
where the values $n$ and $e$ are collectively known as the public key.  These values are generated by Alice.  She first selects two large prime numbers $p$ and $q$.  Then she computes $n=pq$.  From this, Alice picks an $e \in \mathbb{N}$ such that $e$ is co-prime to $n$ and also satisfies $1 < e < \varphi(n)$ where 
\begin{equation}
\varphi(n) = (p-1) (q-1).
\end{equation}

\textbf{d) Decrypting the message: } Bob receives the encrypted message $c$ over the communication channel.  He is said to have decrypted the message back to $m$ if he performs the computation
\begin{equation}
m = c^{d} (\text{mod }n),
\end{equation}
where the values $n$ and $d$ are collectively known as the private key.  The value $d \in \mathbb{N}$ is generated by
\begin{equation}\label{privatekey}
d = \frac{1 \, \text{mod }\varphi (n)}{e}.
\end{equation}

The decryption procedure can seen more clearly by considering the specific case that $m$ is co-prime to $n$ (although this can be generalized to the case when $m$ is not co-prime to $n$).  From (\ref{privatekey}), we have $ed = 1 + k\varphi(n)$ for some $k \in \mathbb{N}$.  Using result (\ref{coprime}), we find that $m^{k \varphi(n)} = 1 \, (\text{mod }  n)$.  Substituting this into the decryption procedure results in
\begin{align}
(c)^{d} &= (m^{e})^{d} \, (\text{mod } n) \\
&= m^{ed} \, (\text{mod } n) \\
&= m^{1 + k\varphi(n)} \, (\text{mod } n) \\
&= m \cdot m^{k\varphi(n)} \, (\text{mod } n) \\
&= m \, (\text{mod } n).
\end{align}

Using the symmetry property of modular arithmetic, this implies the desired result that $m = \, c^{d} (\text{mod }n)$. 

\textbf{e) Breaking encryption: }  The private key is kept in secret by the intended party.  This is in contrast with the public key which is available to anyone.  Despite this wide access, there is no increase in the security vulnerability as we shall describe below.  The outside party that aims to eavesdrop to the transmission between Alice and Bob is commonly referred to as `Eve'.  If Eve has access to the private key, she can extract the message $m$ from $c$.  One way to obtain the private key would be if she could derive $p$ and $q$ by factoring $n=pq$.  She would then be able to compute $\varphi(n) = (p-1) (q-1)$, and consequently obtain the private key $(d, n)$.

However, the problem of prime factorization with classical computation is currently believed to require exponential resources (but this hypothesis is not formally proven).  More accurately, the best known classical algorithm for this task is the NFS (Number Field Sieve) algorithm which has a performance of $\exp (\Theta \, (n^{1/3} \log^{2/3} n))$ operations for an $n$-bit integer.  It is precisely the on-going computational difficulty of this problem that ensures durability of this information security system.


\chapter{Quantum Information}\label{chap: QInfo}

\begin{chapquote}{Philip Ball, \textit{Quantum teleportation is even weirder than you think}}
	``Is it, for example, information about some underlying reality, or about the effects of our intervention in it? Information universal to all observers, or personal to each? And can it be meaningful to speak of quantum information as something that flows, like liquid in a pipe, from place to place? No one knows (despite what they might tell you).''
\end{chapquote}

\textsf{QUANTUM INFORMATION SCIENCE} is the theoretical and experimental study of quantum information and its applications.  The field is largely concerned with designing quantum systems to perform information tasks.  This novel exploration has the following consequences that make the subject fundamental:
\begin{enumerate}[noitemsep, topsep=0pt, label=\roman*)]
	\item It reconceptualizes the probability amplitude of quantum theory as a quantity that can be harnessed for representing and transforming information; it is precisely this quantity that is termed `quantum information.'
	\item Analogous to the study of classical information, a generalized framework is developed that abstracts away from the physical (quantum mechanical) systems that could be used to store the quantum information.
	\item It distils questions on the nature of quantum physics to distinctions between quantum information and classical information.
\end{enumerate}
In this chapter, we look at three theoretical tools of quantum information science.

\section{Review of Linear Algebra}

Prior to examining the three main topics in this chapter, we provide a brief overview of linear algebra with an emphasis on the use of the Dirac notation.  

\subsection{Vector spaces}

The vector space that is commonly used in quantum information science is $\mathbb{C}^{n}$.  An element of the space, namely a vector, can be denoted $\ket{\psi}$ (referred to as a ket), where $\psi$ is simply a label for the vector.  The vector can have a column matrix representation of its $n$-tuples of complex numbers.  Vector addition in $\mathbb{C}^{n}$ proceeds as
\begin{equation}
\begin{pmatrix}
a_{1}  \\
\vdots  \\
a_{n} 
\end{pmatrix} + 
\begin{pmatrix}
b_{1}  \\
\vdots  \\
b_{n} 
\end{pmatrix} \equiv 
\begin{pmatrix}
a_{1} + b_{1}  \\
\vdots \\
a_{n} + b_{n}
\end{pmatrix}.
\end{equation}
Scalar multiplication is computed as
\begin{equation}\label{scalarmultiplication}
\alpha 
\begin{pmatrix}
a_{1}  \\
\vdots  \\
a_{n} 
\end{pmatrix} \equiv
\begin{pmatrix}
\alpha a_{1}  \\
\vdots  \\
\alpha a_{n} 
\end{pmatrix}.
\end{equation}
Note that it does not make a difference if a scalar stands on the left or the right of a ket, $\alpha\ket{\psi} = \ket{\psi}\alpha$.  We exclude the use of the ket notation for the zero vector and rather denote it as $0$.  A vector subspace of a vector space is a subset of the vector space such that the subset is also a vector space.

\subsection{Basic definitions}

A spanning set for a vector space is a set of vectors $\ket{v_{1}}, \dots ,\ket{v_{n}}$ such that any vector $\ket{v}$ in the vector space can be written as $\ket{v}=\sum_{i} \alpha_{i}\ket{v_{i}}$.  Another core concept is that a set of non-zero vectors $\ket{v_{1}}, \dots ,\ket{v_{n}}$ is said to be linearly dependent if the equation
\begin{equation}
\alpha_{1} \ket{v_{1}} + \alpha_{2} \ket{v_{2}} + \cdots + \alpha_{n} \ket{v_{n}} = 0, 
\end{equation}
has a solution where $\alpha_{i} \neq 0$ for at least one value of $i$.  A set of vectors is linearly independent if it is not linearly dependent.  A set of vectors that spans the vector space and is linearly independent is called a basis for the vector space.  The dimension of the vector space is the number of elements in a basis set.  With the exception of \autoref{chap: RQI}, this thesis is only concerned with finite dimensional vector spaces.  

An example of a basis for $\mathbb{C}^{2}$ is the computational basis set 
\begin{equation}\label{compbasis}
\ket{0} \equiv
\begin{pmatrix}
1  \\
0 
\end{pmatrix}, \quad
\ket{1} \equiv
\begin{pmatrix}
0  \\
1 
\end{pmatrix}.
\end{equation} 
Another basis for the space is
\begin{equation}\label{plusbasis}
\ket{+} \equiv \frac{1}{\sqrt{2}}
\begin{pmatrix}
1  \\
1 
\end{pmatrix}, \quad
\ket{-} \equiv \frac{1}{\sqrt{2}}
\begin{pmatrix}
1  \\
-1 
\end{pmatrix}.
\end{equation} 

\subsection{Operators and Matrices}

Suppose $V$ and $W$ are vector spaces.  A linear operator between $V$ and $W$ is defined to be any function $A: V \rightarrow W$ which is linear in inputs
\begin{equation}
A \, \Biggl(\sum_{i}\alpha_{i} \ket{v_{i}}\Biggl) = \sum_{i} \alpha_{i} \, A(\ket{v_{i}}).
\end{equation}
We can write $A\ket{v_{i}}$ to denote $A(\ket{v_{i}})$.  We say a linear operator $A$ is defined on a vector space $V$ if $A:V\rightarrow V$.   The identity operator $I$ maps all vectors to their respective self, $I\ket{v}=\ket{v}$.  The zero operator $0$ maps any vector to the zero vector, $0\ket{v}=0$.  The composition of two operators, say $A$ and $B$, on a vector is defined as $(AB)(\ket{v}) \equiv A(B\ket{v})$.  

Operator addition is commutative, $A+B=B+A$, and associative $A+(B+C) = (A+B+C)$.  However operator multiplication is not commutative $AB \neq BA$ but is associative $A(BC) = (AB)C = ABC$.  

Operators have an equivalent matrix representation.  A $m$ by $n$ complex matrix $A$ with entries $A_{ij}$ can be thought as a linear operator that maps vectors from $\mathbb{C}^{n}$ to $\mathbb{C}^{m}$ under matrix multiplication.  Conversely to view operators as matrices, suppose $V$ and $W$ are vector spaces with operator $A: V \rightarrow W$.  More crucially, let $\ket{v_{1}}, \dots ,\ket{v_{m}}$ be a basis for $V$, and let $\ket{w_{1}}, \dots ,\ket{w_{n}}$ be a basis for $W$. Then for every $j$ between $1$ and $m$, there exists complex coefficients $A_{1j}$ through $A_{nj}$ such that 
\begin{equation}
A\ket{v_{j}} = \sum_{i}A_{ij} \ket{w_{i}}.
\end{equation}  
The complex numbers $A_{ij}$ form the matrix representation of the operator $A$.

Of critical importance are the topics of eigenvectors and eigenvalues.  An eigenvector of operator $A$ is a non-zero vector $\ket{v}$ that satisfies the equation $A\ket{v} = \lambda \ket{v}$, where $\lambda$ is a complex number known as the eigenvalue corresponding to $\ket{v}$.  The solution to the characteristic equation $c(\lambda) = 0$, where $c(\lambda) \equiv \text{det}\lvert A-\lambda I\rvert$, are the eigenvalues of operator $A$.  The eigenspace corresponding to eigenvalue $\lambda$, is a vector subspace on which $A$ acts, that contains all the eigenvectors which have $\lambda$ as its eigenvalue.  When the dimension of the eigenspace is greater than one, we say it is degenerate.

\subsection{Types of products}

One can go beyond the basic abstraction of a vector space with its scalar multiplication; we will discuss four types of products that occur between vectors. 

\textbf{a) Inner product: }  An inner product maps two vectors, say $\ket{v}$ and $\ket{w}$, to a complex number.  We denote this complex number as $\braket{v|w}$.  The notation $\bra{v}$ is referred to as the dual vector (or a bra).  A vector space with an inner product is called an inner product space.  An inner product satisfies properties:
\begin{align}
\bra{v} \Bigg(\sum_{i} \alpha_{i} \ket{w_{i}}\Bigg) &= \sum_{i} \alpha_{i} \braket{v|w_{i}},  \\
\braket{v|w} &= \braket{w|v}^{*}, \\
\braket{v|v} &\geq 0. 
\end{align}
One can define the following inner product for $\mathbb{C}^{n}$: For two vectors with respective column matrix entries $(a_{1}, \dots, a_{n})$ and  $(b_{1}, \dots, b_{n})$, an inner product is given by $\sum_{i} a_{i}^{*} b_{i}$.  In the case of finite dimensional complex vector spaces, an inner product space is also referred to as a Hilbert space.

Using the inner product, one can develop several useful notions.  Vectors $\ket{v}$ and $\ket{w}$ are said to be orthogonal if $\braket{v|w}= 0$. The norm of a vector $\ket{v}$ is defined as $\lvert \lvert \ket{v} \rvert \rvert = \sqrt{\braket{v|v}}$.  A unit vector has a norm of value one; any vector with this property is said to be normalized.  Furthermore, for any non-zero vector $\ket{v}$, its normalized form is given by $\ket{v}/\lvert \lvert \ket{v} \rvert \rvert$.  A set of vectors $\ket{v_{i}}$ with index $i$ is said to be orthonormal if $\braket{v_{i}|v_{j}}$ = $\delta_{ij}$.  The Gram-Schmidt procedure transforms an arbitrary basis of a vector space with an inner product, to an orthonormal basis; suppose $\ket{w_{1}}, \dots ,\ket{w_{d}}$ is an arbitrary basis; then an orthonormal basis $\ket{v_{1}}, \dots ,\ket{v_{d}}$ is computed first by     $\ket{v_{1}} \equiv \ket{w_{1}}/\lvert \lvert \ket{w_{1}} \rvert \rvert$, and then the rest inductively obtained through formula,
\begin{equation}
\ket{v_{k+1}} \equiv \frac{\ket{w_{k+1}} - \sum_{i=1}^{k} \braket{v_{i}|w_{k+1}} \ket{v_{i}}}{\lvert \lvert \ket{w_{k+1}} - \sum_{i=1}^{k} \braket{v_{i}|w_{k+1}} \ket{v_{i}} \rvert \rvert}.
\end{equation}    
An orthonormal basis has the advantage of simplifying various computations.  Let $\ket{i}$ be an orthonormal basis, with the following vectors, $\ket{w} = \sum_{i} w_{i} \ket{i}$ and $\ket{v} = \sum_{i} v_{i} \ket{i}$.  Then the inner product is given by
\begin{equation}
\braket{v|w} = \sum_{ij}v_{i}^{*}w_{j}\delta_{ij} = \sum_{i}v_{i}^{*}w_{i} = \begin{pmatrix}
v_{1}^{*} \dots v_{n}^{*} \\
\end{pmatrix} 
\begin{pmatrix}
w_{1}  \\
\vdots  \\
w_{n} 
\end{pmatrix}.
\end{equation} 		
The dual vector can be interpreted as a row vector whose elements are complex conjugates of the components of the column vector form of $\ket{v}$.		

\textbf{b) Outer product: }  Suppose $\ket{v}$ and $\ket{w}$ are vectors from respective inner product spaces, $V$ and $W$.  Then the outer product $\ket{w}\bra{v}$ is a linear operator from $V \rightarrow W$ which is defined by $(\ket{w}\bra{v})\ket{u} \equiv \ket{w} \braket{v|u} = \braket{v|u} \ket{w}$.  This is a valid operation as long as we are dealing with `legal' products.  This property is also referred to as the associative axiom \cite{sakurai1995modern} as it is an extension of the associativity of operator multiplication.  More generally,
\begin{equation}
\Bigg(\sum_{i}\alpha_{i} \ket{w_{i}}\bra{v_{i}}\Bigg)\ket{u} = \sum_{i}\alpha_{i} \ket{w_{i}}\braket {v_{i}|u}.
\end{equation} 
An application of the outer product is the completeness relation: If $\ket{i}$ is an orthonormal basis, then the identity operator can be written as $I =\sum_{i}\ket{i}\bra{i}$.  Using this property, one can obtain an outer product representation of operator $A:V \rightarrow W$:
\begin{align}
A &= I_{W} \, A \, I_{V} \\
& = \sum_{ij}\ket{w_{j}}\bra{w_{j}} A \ket{v_{i}}\bra{v_{i}} \\
&= \sum_{ij} \braket{w_{j}|A|v_{i}}\ket{w_{j}}\bra{v_{i}}.
\end{align}
The quantity $ \braket{w_{j}|A|v_{i}}$ is the matrix element in the $j$th row and $i$th column; the matrix representation is with respect to basis $\ket{v_{i}}$ and $\ket{w_{j}}$.  The completeness relation is also used to prove the Cauchy-Schwarz inequality which states that for any two vectors in a Hilbert space, $\ket{v}$ and $\ket{w}$, we have $\lvert \braket{v|w} \rvert^{2} \leq \braket{v|v} \braket{w|w}$. 

Suppose $\ket{i}$ is an orthonormal set of eigenvectors for operator $A$ with corresponding eigenvalues $\lambda_{i}$.  Then a diagonal representation (or an orthonormal decomposition) for $A$ is given by $A = \sum_{i} \lambda_{i} \ket{i}\bra{i}$.  An operator that has a diagonal representation is said to be diagonalizable.  

\textbf{c) Tensor product: } One can construct a larger vector space from two or more different vector spaces.  The mathematical machinery for such a construction is named the tensor product.  To be more precise, suppose $V$ and $W$ are Hilbert spaces with respective dimensions $m$ and $n$.  Then $V \otimes W$ is a vector space with dimension $mn$.  The elements of $V \otimes W$ are linear combinations of $\ket{v} \otimes \ket{w}$, which is a tensor product of elements $\ket{v}$ of $V$, and $\ket{w}$ of $W$.  For the case that $\ket{i}$ and $\ket{j}$ are respective orthonormal bases for $V$ and $W$, $\ket{i} \otimes \ket{j}$ forms a basis for $V \otimes W$.  The tensor product has the following properties:
\begin{align}
&z(\ket{v} \otimes \ket{w}) = (z\ket{v}) \otimes \ket{w} = \ket{v} \otimes (z\ket{w}), \\
&(\ket{v_{1}} + \ket{v_{2}}) \otimes \ket{w} = \ket{v_{1}} \otimes \ket{w} + \ket{v_{2}} \otimes \ket{w}, \\
&\ket{v} \otimes (\ket{w_{1}} + \ket{w_{2}}) = \ket{v} \otimes \ket{w_{1}}  + \ket{v} \otimes \ket{w_{2}}, 
\end{align}   
where $z$ is an arbitrary scalar, and the rest are vectors from their respective vector spaces.  One can extend the tensor product to operators; suppose $\ket{v}$ and $\ket{w}$ are vectors in $V$ and $W$, and $A$ and $B$ are linear operators respectively on $V$ and $W$; then one can define a linear operator $A \otimes B$ which acts on $V \otimes W$ as
\begin{equation}
(A \otimes B)(\ket{v} \otimes \ket{w}) \equiv A\ket{v} \otimes B\ket{w}.
\end{equation}
More generally, one has
\begin{equation}
(A \otimes B) \, \Bigg(\sum_{i} \alpha_{i} \ket{v_{i}} \otimes \ket{w_{i}}\Bigg) \equiv \sum_{i} \alpha_{i} A\ket{v_{i}} \otimes B\ket{w_{i}}.
\end{equation}
The inner product on $V \otimes W$ is defined as follows; suppose we have two vectors $\sum_{i} \alpha_{i} \ket{v_{i}} \otimes \ket{w_{i}}$ and $\sum_{j} \beta_{j} \ket{v'_{j}} \otimes \ket{w'_{j}}$, then the inner product is defined as
\begin{equation}
\sum_{ij} \alpha_{i}^{*}\beta_{j}\braket{v_{i}|v'_{j}} \braket{w_{i}|w'_{j}}.
\end{equation}
The tensor product can also be computed in terms of matrices.  If $A$ is an $m$ by $n$ matrix, and $B$ is an $p$ by $q$ matrix, then we have 
\begin{equation}
A \otimes B \equiv
\begin{pmatrix}
A_{11}B & A_{12}B & A_{13}B & \dots  & A_{1n}B \\
A_{21}B & A_{22}B & A_{23}B & \dots  & A_{2n}B \\
\vdots & \vdots & \vdots & \ddots & \vdots \\
A_{m1}B & A_{m2}B & A_{m3} & \dots  & A_{mn}B
\end{pmatrix}.
\end{equation}
For tensor product $\ket{v} \otimes \ket{w}$, one can use equivalent notations $\ket{v}\ket{w}$, or $\ket{v,w}$, or simply $\ket{v w}$.  Additionally, one often writes $\ket{\psi}^{\otimes n}$ to signify that $\ket{\psi}$ is tensored with itself $n$ times. 

\textbf{d) Illegal products: } Certain products are nonsensical in the Dirac notation \cite{sakurai1995modern} and should be avoided.  Unlike the tensor product, if vectors $\ket{v}$ and $\ket{w}$ belong to the same vector space, then the product $\ket{v}\ket{w}$ is illegal; a similar condition holds for the dual vectors.  Furthermore, operators always stand on the left of a ket and to the right of a bra; hence, the products $\ket{v}B$ and $A \bra{v}$ are illegal.

\subsection{Common operations}   

\textbf{a) Hermitian conjugate: }  If $A$ is a linear operator on $V$, then the Hermitian conjugate (or adjoint) of $A$ is denoted $A^{\dagger}$ and it satisfies
\begin{equation}
\braket{v|A^{\dagger}|w} = \braket{w|A|v}^{*}, 
\end{equation}  
for all vectors $\ket{v}$, $\ket{w}$ in $V$.
In terms of a matrix representation of operator $A$, the Hermitian conjugation can be defined as $A^{\dagger} \equiv (A^{*})^{T}$ where $*$ represents complex conjugation and $T$ represents the transpose operation.  For the case of a scalar, the Hermitian conjugate reduces to the complex conjugate.  For the case of a vector, we have $\ket{v}^{\dagger} \equiv \bra{v}$.  We list a number of further properties:
\begin{align}
&(AB)^{\dagger} = B^{\dagger} A^{\dagger}, \\
&(A\ket{v})^{\dagger} = \bra{v}A^{\dagger}, \\
& (\ket{w}\bra{v})^{\dagger} = \ket{v}\bra{w}, \\
&(A^{\dagger})^{\dagger}=A, \\
&\Bigg(\sum_{i}\alpha_{i} A_{i} \Bigg)^{\dagger} = \sum_{i}\alpha_{i}^{*} A_{i}^{\dagger}.
\end{align}   

\textbf{b) Function of an operator: }  Suppose we have a function $f: \mathbb{C}\rightarrow \mathbb{C}$.  If linear operator $A$ has a diagonal representation $A = \sum_{i} \lambda_{i} \ket{i}\bra{i}$, then the corresponding operator function is defined as 
\begin{equation}
f(A) \equiv \sum_{i} f(\lambda_{i}) \ket{i}\bra{i}.
\end{equation}

\textbf{c) Trace: }  The trace of a matrix is the sum of its diagonal elements.  Furthermore, the trace of an operator is defined as the trace of any matrix representation of the operator. Hence, for the case of an operator $A$, we have
\begin{equation}
\text{tr}(A) = \sum_{i} A_{ii}.
\end{equation}   
This operation has the following properties:
\begin{align}
&\text{tr}(AB) = \text{tr}(BA), \\
&\text{tr}(A+B) =\text{tr}(A) + \text{tr}(B), \\
&\text{tr}(\alpha A) = \alpha \text{tr}(A), \\
&\text{tr}(A\ket{v}\bra{v}) = \braket{v|A|v}.
\end{align}

\textbf{d) Commutator: }  The commutator of two operators, $A$ and $B$ is defined as
\begin{equation}
[A,B] \equiv AB - BA.
\end{equation} 
The anti-commutator for the two operators is computed as $\{A,B\} \equiv AB + BA$.  The important case of $[A,B]=0$, is expressed by saying $A$ commutes with $B$.

\subsection{Types of Operators}

Using the Hermitian conjugate, operators can be classified into certain classes.

\textbf{a) Hermitian: }  A Hermitian (or self-adjoint) operator is an operator that is equal to its Hermitian conjugate, $A^{\dagger} = A$.  One of the most useful theorems regarding Hermitian operators is,
\begin{theorem}\label{diagonalization}
	\textbf{(Simultaneous diagonalization theorem)}  Suppose $A$ and $B$ are two Hermitian operators.  Then $[A, B] = 0$ if and only if there exists an orthonormal basis such that both $A$ and $B$ are diagonal with respect to that basis. (See e.g. \cite{nielsen2002quantum}.)
\end{theorem}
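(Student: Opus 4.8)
The plan is to prove the two directions of the biconditional separately, with the easy direction first. First I would establish the ``if'' direction: suppose there exists an orthonormal basis $\ket{i}$ in which both $A$ and $B$ are diagonal, so that $A = \sum_{i} a_{i} \ket{i}\bra{i}$ and $B = \sum_{i} b_{i} \ket{i}\bra{i}$ using the diagonal representation discussed in the section on outer products. Then a direct computation gives $AB = \sum_{i} a_{i} b_{i} \ket{i}\bra{i} = BA$, since the cross terms vanish by orthonormality ($\braket{i|j} = \delta_{ij}$). Hence $[A,B] = AB - BA = 0$. This direction is essentially a one-line calculation once the diagonal representations are written out.

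The substantive content is the ``only if'' direction: assuming $[A,B] = 0$, I would construct the common eigenbasis. The approach is to use the spectral decomposition of $A$, which exists because $A$ is Hermitian. Write $A = \sum_{\lambda} \lambda P_{\lambda}$, where $\lambda$ ranges over the distinct eigenvalues of $A$ and $P_{\lambda}$ is the projector onto the corresponding eigenspace $V_{\lambda}$. The key structural observation is that $B$ preserves each eigenspace $V_{\lambda}$: if $\ket{v}$ satisfies $A\ket{v} = \lambda \ket{v}$, then applying $A$ to $B\ket{v}$ and using $AB = BA$ gives $A(B\ket{v}) = B(A\ket{v}) = \lambda (B\ket{v})$, so $B\ket{v}$ also lies in $V_{\lambda}$. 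Thus $B$ restricts to a well-defined operator on each $V_{\lambda}$, and this restriction is itself Hermitian on that subspace.

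The final step is to diagonalize $B$ within each eigenspace. Since the restriction of $B$ to $V_{\lambda}$ is Hermitian, it admits an orthonormal basis of eigenvectors of $V_{\lambda}$ (invoking that Hermitian operators are diagonalizable, which is available from the spectral/diagonal representation machinery in the linear algebra review). Every vector in such a basis is automatically an eigenvector of $A$ with eigenvalue $\lambda$, since it lives in $V_{\lambda}$. Collecting these bases across all the distinct eigenvalues $\lambda$, and noting that eigenspaces corresponding to different eigenvalues of the Hermitian operator $A$ are mutually orthogonal, I obtain an orthonormal basis for the whole space in which every basis vector is simultaneously an eigenvector of both $A$ and $B$. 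In that basis both operators are diagonal, completing the proof.

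I expect the main obstacle to be the careful handling of degeneracy. When the eigenspaces $V_{\lambda}$ are one-dimensional the argument is trivial, but the nontrivial case is exactly when $A$ has degenerate eigenvalues, since then the eigenvectors of $A$ are not uniquely determined and one must actively choose them to align with $B$. The invariance argument ($B$ maps $V_{\lambda}$ into itself) is what makes this choice possible, and the fact that the restricted operator remains Hermitian is what guarantees an orthonormal eigenbasis exists inside each degenerate block. Keeping track of orthogonality both within a fixed eigenspace and across distinct eigenspaces is the delicate bookkeeping that the proof must make explicit.
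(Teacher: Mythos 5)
Your proof is correct and follows essentially the same route as the paper's cited source (the standard argument in Nielsen and Chuang): the trivial computation from diagonal representations for the ``if'' direction, and for the ``only if'' direction the spectral decomposition of $A$, the observation that $[A,B]=0$ makes each eigenspace $V_{\lambda}$ invariant under $B$, and the diagonalization of the Hermitian restriction of $B$ within each (possibly degenerate) eigenspace. Your closing remark correctly identifies degeneracy as the only nontrivial case, and the invariance-plus-restriction argument handles it exactly as the reference does.
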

Hence, for simultaneous diagonalizable Hermitian operators, $A$ and $B$, we express them as $A=\sum_{i}\alpha_{i} \ket{i}\bra{i}$ and $B=\sum_{i}\beta_{i} \ket{i}\bra{i}$ for some common orthonormal set of eigenvectors $\ket{i}$.
 
\textbf{b) Projectors: }  A particular subset of Hermitian operators are known as projectors (or projection operators).  Suppose we have vector space $V$, along with a vector subspace $W$ that has orthonormal basis $\ket{1}, \dots , \ket{k}$.  Then a projector onto $W$ is defined as 
\begin{equation}
P \equiv \sum_{i=1}^{k}\ket{i}\bra{i}.
\end{equation}   
A projector satisfies the property $P^{2}=P$.  Furthermore, all the eigenvalues of a projector are all either $0$ or $1$.

\textbf{c) Positive: }  Another class of Hermitian operators are known as positive operators.  An operator $A$ is said to be a positive if for every vector $\ket{v}$ we have $\braket{v|A|v} \geq 0$.  An even stricter case is that an operator is said to be positive definite if $\braket{v|A|v} > 0$ for every non-zero vector $\ket{v}$.  An interesting property is that if $A$ is any operator, then $A^{\dagger}A$ is positive.

\textbf{d) Unitary: } A operator $U$ is unitary if $U U^{\dagger}=U^{\dagger}U=I$.  Alternatively an operator is unitary if and only if each of its matrix representations are unitary matrices.  Furthermore, all the eigenvalues of a unitary matrix take the form $e^{i\theta}$ for some real $\theta$.  Of importance is the result that any unitary operator $U$ can be formulated as 
\begin{equation}\label{unihermi}
U=\exp(iA)
\end{equation}
for some Hermitian operator $A$.  Aside from the algebraic properties, unitary operators are geometrically significant in that they preserve the inner product between vectors; as an example the inner product between  $U\ket{v}$ and $U\ket{w}$ is computed as $\braket{v|U^{\dagger}U|w} = \braket{v|I|w} = \braket{v|w}$.

\textbf{e) Normal: } An operator $A$ is said to be normal if $AA^{\dagger}=A^{\dagger}A$.  Both Hermitian and unitary operators are normal.  One of the most important results in linear algebra is the spectral decomposition theorem:
\begin{theorem}\label{spectral}
	\textbf{(Spectral decomposition)}  Any normal operator $M$ on a vector space $V$ is diagonal with respect to some orthonormal basis for V.  Conversely, any diagonalizable operator is normal (See e.g. \cite{nielsen2002quantum}.)
\end{theorem}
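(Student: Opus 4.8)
The converse is the quick half: if $M$ is diagonalizable, write $M = \sum_i \lambda_i \ket{i}\bra{i}$ in an orthonormal eigenbasis $\{\ket{i}\}$, so that $M^\dagger = \sum_i \lambda_i^* \ket{i}\bra{i}$, and using $\braket{i|j} = \delta_{ij}$ one computes directly that $M M^\dagger = \sum_i \lvert \lambda_i \rvert^2 \ket{i}\bra{i} = M^\dagger M$; hence $M$ is normal. The substance lies in the forward direction, which I would prove by induction on $d = \dim V$. The base case $d = 1$ is immediate, since every operator on a one-dimensional space is already diagonal.

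For the inductive step, fix $d \geq 2$ and assume the result in all smaller dimensions. Since $V$ is a complex vector space, the characteristic equation $c(\lambda) = \det\lvert M - \lambda I \rvert = 0$ has a root, so $M$ admits at least one eigenvalue $\lambda$. Let $P$ be the projector onto the $\lambda$-eigenspace and let $Q \equiv I - P$ be the projector onto its orthogonal complement, so that $PQ = QP = 0$ and $P + Q = I$. Expanding $M = (P+Q)M(P+Q)$ produces the four blocks $PMP$, $PMQ$, $QMP$, and $QMQ$. Because $M$ acts as $\lambda$ on the eigenspace we have $MP = \lambda P$, whence $PMP = \lambda P$ and $QMP = \lambda Q P = 0$. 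The whole argument therefore reduces to showing that the remaining off-diagonal block $PMQ$ also vanishes; this is the step I expect to be the crux, and it is exactly where normality (rather than mere diagonalizability) becomes indispensable.

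The key lemma is that an eigenvector of a normal operator is automatically an eigenvector of its adjoint with the conjugate eigenvalue. To obtain it, observe that $N \equiv M - \lambda I$ is again normal, since the cross terms in $N N^\dagger$ and $N^\dagger N$ cancel once $M M^\dagger = M^\dagger M$ is used; and for any normal $N$ one has $\lVert N\ket{v}\rVert^2 = \braket{v|N^\dagger N|v} = \braket{v|N N^\dagger|v} = \lVert N^\dagger \ket{v}\rVert^2$. Applying this to an eigenvector $\ket{v}$ in the $\lambda$-eigenspace, $N\ket{v} = 0$ forces $N^\dagger \ket{v} = 0$, that is $M^\dagger \ket{v} = \lambda^* \ket{v}$. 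Consequently $M^\dagger P = \lambda^* P$; taking the Hermitian conjugate and using $P^\dagger = P$ gives $PM = \lambda P$, and therefore $PMQ = (PM)Q = \lambda P Q = 0$.

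With both off-diagonal blocks gone, $M = \lambda P + QMQ$, where $QMQ$ is an operator on the subspace $QV$ of dimension strictly less than $d$. A short check — using $P+Q=I$, $PQ=0$, and the fact that $QV$ is invariant under both $M$ and $M^\dagger$ (the latter because $PM^\dagger = (MP)^\dagger = \lambda^* P$) — confirms that the restriction of $QMQ$ to $QV$ is itself normal, so the induction hypothesis supplies an orthonormal basis of $QV$ that diagonalizes it. Adjoining any orthonormal basis of the $\lambda$-eigenspace, on which $M$ is simply $\lambda I$, yields an orthonormal basis of all of $V$ with respect to which $M$ is diagonal, completing the induction. The only genuinely delicate point is the adjoint-eigenvector lemma above; everything else is bookkeeping with orthogonal projectors.
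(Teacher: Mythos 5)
Your proof is correct, and since the paper itself offers no proof of this theorem — it simply defers to the cited reference \cite{nielsen2002quantum} — there is nothing to contrast: your induction on dimension, splitting $M = (P+Q)M(P+Q)$ and using the adjoint-eigenvector lemma $N\ket{v}=0 \Rightarrow N^{\dagger}\ket{v}=0$ for normal $N$ to kill the block $PMQ$, is precisely the standard argument given in that reference. You have correctly identified the crux (normality entering only through $PMQ=0$ and the normality of the restricted block $QMQ$), and the bookkeeping with the projectors is sound.
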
\hfill$\Box$

More explicitly, this can be expressed as
\begin{equation}\label{spectralformula}
M = \sum_{i} \lambda_{i} \ket{i}\bra{i},
\end{equation}
where $\lambda_{i}$ are the eigenvalues of $M$ with each $\ket{i}$ signifying the corresponding eigenvector.  Furthermore, the set of eigenvectors form an orthonormal basis for the vector space.  One can also derive the projectors $P_{i}=\ket{i}\bra{i}$ which results in $M = \sum_{i} \lambda_{i} P_{i}$.  The set of projectors in this `spectral expansion' of $M$ satisfy both $\sum_{i} P_{i} = I$ and $P_{i}P_{j} = \delta_{ij}P_{i}$.

\section{Qubits}

The postulates of quantum theory \cite{sakurai1995modern, shankar2012principles} are most commonly framed through state vectors.  An information-theoretic view of these mathematical objects results in the quantum circuit model for qubits. At a coarse level, this framework can be viewed as a quantum analogue of the classical circuit model described in Chapter \ref{chap: classical}.  There are four concepts to the quantum circuit model; we provide a description of each concept, their difference to the classical counterpart, and their inception from the postulates of quantum theory.  We conclude this section with noting implications that portray further distinctions between quantum information and classical information.  

\subsection{Single qubit}

\textbf{a) Description: } A bit can be physically manifested by a classical two state system.  A qubit is a quantum analogue of a bit.  It corresponds to an abstraction, that relates to a classical bit, which can be physically instantiated by a two-level quantum system.  More precisely, a qubit is a unit vector in a two-dimensional Hilbert space which takes the general form,
\begin{equation}\label{qubit}
\ket{\psi} = \alpha\ket{0} + \beta\ket{1},
\end{equation}  
where we have used the computational basis set (\ref{compbasis}), and where $\alpha, \beta \in \mathbb{C}$.  It is these complex numbers that are referred to as \textit{quantum information}.  Given $\braket{\psi|\psi}=1$, known as the normalization condition, it can easily be shown that
\begin{equation}
\lvert \alpha \rvert^{2} + \lvert \beta \rvert^{2}=1.
\end{equation}  
Depending on the values of $\alpha$ and $\beta$, a qubit is in one of the orthogonal computational basis vectors ($\ket{0}$ or $\ket{1}$), or in some linear combination of those vectors (\ref{qubit}) which is referred to as a superposition.  For the former case, a qubit would then map to the notion of a classical bit.  This alludes to the idea that orthogonal vectors can be thought of as the different states of classical information.

\textbf{b) Difference to classical information: } The classical information of a bit, namely $0$ or $1$, can directly correspond to some physical feature of the classical system such as the voltage value of an electrical circuit.  This is in vast contrast to quantum information, such as $\alpha$ and $\beta$ in (\ref{qubit}), which does not have a direct correspondence with the physical properties of the quantum system.  The fundamental mystery \cite{bokulich2010philosophy,infeld1971evolution} is: What do these complex numbers physically represent?  We do not exactly know what quantum information is!  Nevertheless, these values do carry direct experimental consequences.  From a historical view, this problem is known as the issue of the interpretation of quantum mechanics.

\textbf{c) Quantum-theoretic origin: } The relationship between a two-level quantum system and a qubit (\ref{qubit}) stems from a postulate of quantum theory which states that: Associated to any isolated quantum system is a Hilbert space known as the state space of the system; the system is completely described by its state vector (also known as the quantum state), which is a unit vector in the system's state space.  In regards to terminology, if a state vector is represented as $\sum_{i}\alpha_{i}\ket{v_{i}}$ where it is a linear combination of basis states $\ket{v_{i}}$, then the complex coefficients $\alpha_{i}$ are referred to as its probability amplitudes.  The central tenet of quantum theory is that to describe the state of a system, one needs to assign one amplitude for each possible configuration that you would the find the system in upon measuring it.  For the case of a two-level quantum system its state vector (or its quantum state) is adapted as a qubit, and its amplitudes are referred to as its quantum information.  The power of the quantum circuit framework can be seen in that a qubit can be physically instantiated by a diverse range of two-level quantum systems \cite{nielsen2002quantum, flamini2018photonic}.  A few examples include the spin of a spin-$1/2$ particle, the polarization of a photon or the energy levels of a two-state atom.  

\subsection{Multiple qubits}

\textbf{a) Description: } A `string' of qubits is connected by a tensor product structure.  As an example, a two qubit system can be in one of the four computational basis vectors $\ket{0} \otimes \ket{0}$, $\ket{0} \otimes \ket{1}$, $\ket{1} \otimes \ket{0}$, $\ket{1} \otimes \ket{1}$, or in some linear combination of these vectors
\begin{equation}
\ket{\psi} = \alpha_{00}\ket{00} + \alpha_{01}\ket{01} + \alpha_{10}\ket{10} + \alpha_{11}\ket{11}.
\end{equation}
The vector $\ket{\psi}$ satisfies the normalization condition $\sum_{x \in \{0,1\}^{2}}\lvert \alpha_{x} \rvert^{2} = 1$, where $\{0,1\}^{2}$ refers to `the set of strings of length two with each letter being either $0$ or $1$.' More generally for a system of $n$ qubits, the associated state vector $\ket{\psi}$ is referred to as its quantum state, with the computational basis states of the form $\ket{x_{1}x_{2}\dots x_{n}}$ with $x_{i}\in \{0,1\}$.  The number of complex coefficients involved is $2^{n}$ and it is these coefficients that are the quantum information. 

\textbf{b) Difference to classical information: } The superposition property of a qubit provides the key distinction from a classical bit.  Moreover, it has a remarkable consequence for multiple qubits; for a relatively small number of qubits such as $n=500$, the superposition property gives $2^{n}$ values of quantum information.  These are more complex numbers than can be stored on any classical computer that could ever feasibly be built. Fortunately, this exponential relationship between the number of qubits and the amount of quantum information makes quantum systems a compelling platform to design information technologies on.

\textbf{c) Quantum-theoretic origin: } The idea that the tensor product is the appropriate mathematical machinery for multiple qubits comes from a postulate of quantum theory concerning composite systems.  It assumes that the state space of a composite quantum system is the tensor product of the states spaces of the component quantum systems.  Furthermore, if we have systems numbered $1$ through to $n$, and system number $i$ is in state $\ket{\psi_{i}}$, then the joint state vector of the total system is given by $\ket{\psi_{1}} \otimes \ket{\psi_{2}} \otimes \dots \ket{\psi_{n}}$.

\subsection{Transforming qubits}

\textbf{a) Description: }Information as an abstraction is useful when it can be transformed.  Classical gates transform the classical information through the algebra of boolean logic.  The quantum circuit model introduces the concept of a quantum gate as a means to transform quantum information.  The only constraint on the notion of a quantum gate is that it be a unitary operator, $U^{\dagger}U=U^{\dagger}U=I$.  These gates are applied to qubits as operators acting on vectors.  The most important single qubit gates are the Pauli operators.  With respect to basis set (\ref{compbasis}), they represented as
\begin{equation}\label{paulioperators}
\sigma_{x} \equiv \sigma_{1} \equiv X \equiv
\begin{pmatrix}
0 & 1 \\
1 & 0
\end{pmatrix}; \quad
\sigma_{y} \equiv \sigma_{2} \equiv Y \equiv
\begin{pmatrix}
0 & -i \\
i & 0
\end{pmatrix}; \quad
\sigma_{z} \equiv \sigma_{3} \equiv Z \equiv
\begin{pmatrix}
1 & 0 \\
0 & -1
\end{pmatrix}.
\end{equation}
It is also standard to include the identity operator as part of this set which we label as $\sigma_{0}$. In terms of outer products, the Pauli operators are expressed as
\begin{align}
&\sigma_{0} = \ket{0}\bra{0} + \ket{1}\bra{1}, \\
&\sigma_{x} = \ket{0}\bra{1} + \ket{1}\bra{0}, \\
&\sigma_{y} = -i\ket{0}\bra{1} + i\ket{1}\bra{0}, \\
&\sigma_{z} = \ket{0}\bra{0} - \ket{1}\bra{1}. 
\end{align}
The commutators between the different Pauli operators equate to
\begin{equation}
[X,Y] = 2iZ; \quad [Y,Z]=2iX; \quad [Z,X]=2iY.
\end{equation}
The Hadamard gate, phase gate, and $\pi/8$ gate (denoted $T$) are respectively
\begin{equation}\label{hadamard}
H \equiv \frac{1}{\sqrt{2}}
\begin{pmatrix}
1 & 1 \\
1 & -1
\end{pmatrix}; \quad
S \equiv
\begin{pmatrix}
1 & 0 \\
0 & i
\end{pmatrix}; \quad
T \equiv
\begin{pmatrix}
1 & 0 \\
0 & e^{(i\pi/4)}
\end{pmatrix}.
\end{equation}
The Hadamard gate turns the computational basis states into particular superposition states as follows
\begin{equation}
H\ket{0} = \frac{\ket{0} + \ket{1}}{\sqrt{2}} = \ket{+}, \quad H\ket{1} = \frac{\ket{0} - \ket{1}}{\sqrt{2}} = \ket{-}.
\end{equation}
The states $\ket{+}$ and $\ket{-}$ have column vector representation (\ref{plusbasis}).  The quantum gates mentioned so far satisfy the following well known identities
\begin{align}
&XYX=-Y, \\
&HXH=Z, \\
&HYH=-Y, \\
&HZH=X, \\
&H=\frac{(X+Z)}{\sqrt{2}}, \\
&S= T^{2}.
\end{align}
Another important set of quantum gates, which are derived from the Pauli operators, are known as the rotation operators:
\begin{align}
R_{x}(\theta) &\equiv e^{-i\theta X/2} = \cos \frac{\theta}{2} \, I - i \sin \frac{\theta}{2} \, X = \begin{pmatrix}
\cos \frac{\theta}{2}  & -i \sin \frac{\theta}{2} \\
-i \sin \frac{\theta}{2} & \cos \frac{\theta}{2}
\end{pmatrix}, \\
R_{y}(\theta) &\equiv e^{-i\theta Y/2} = \cos \frac{\theta}{2} \, I - i \sin \frac{\theta}{2} \, Y = \begin{pmatrix}
\cos \frac{\theta}{2}  & -\sin \frac{\theta}{2} \\
\sin \frac{\theta}{2} & \cos \frac{\theta}{2}
\end{pmatrix}, \\
R_{z}(\theta) &\equiv e^{-i\theta Z/2} = \cos \frac{\theta}{2} \, I - i \sin \frac{\theta}{2} \, Z = \begin{pmatrix}
e^{-i\theta/2}  & 0 \\
0 & e^{i\theta/2}
\end{pmatrix}.
\end{align}
The significance of these rotation operators is that we can express an arbitrary single qubit quantum gate $U$ as
\begin{equation}
U = e^{i\alpha}R_{z}(\beta)\, R_{y}(\gamma)\, R_{z}({\delta}),
\end{equation} 
for some real numbers $\alpha$, $\beta$, $\gamma$, and $\delta$.  

For the case of two qubits, an important quantum gate is the controlled-NOT operator.  This unitary operator has the matrix representation
\begin{equation}\label{CNOT}
U_{CN} \equiv
\begin{pmatrix}
1 & 0 & 0 & 0 \\
0 & 1 & 0 & 0 \\
0 & 0 & 0 & 1 \\
0 & 0 & 1 & 0
\end{pmatrix}.
\end{equation}
The action of the operator on a quantum state $\ket{a, b}$ is to transform it into $\ket{a, b\oplus a}$ where $\oplus$ denotes addition modulo two.  A more explicit description is this gate acts on two registers where the first qubit is known as the control qubit and the second as the target qubit; if the control qubit is in state $\ket{0}$, then the target qubit is left unchanged; however if the control qubit is in state $\ket{1}$, then an $X$ (NOT) operator is applied to the target qubit.  One can generalize the essence of the controlled-NOT operator to any other gate in that the $X$ operator is replaced by the appropriate gate.

The importance of the controlled-NOT operator can be stated by the result that any multiple qubit quantum gate may be composed from controlled-NOT gates and single qubit gates.

\textbf{b) Difference to classical information: } The mathematical difference between boolean functions and unitary operators is clearly self-evident.  However the non-trivial differences between classical and quantum gates are subtle. Some classical gates such as the NAND gate or the XOR gate are non-invertible; it is not possible to derive the input given the output.  In contrast, all quantum gates are invertible as the inverse of a unitary matrix is also a unitary matrix, hence a valid quantum gate.  In the classical case, the only non-trivial single bit gate is the NOT gate; in the quantum model, we have several important single qubit gates.  It is interesting to note that there are some subtle similarities.  The Pauli $X$ operator can be thought of as a quantum analogue of classical NOT gate since it inverts the computational basis states
\begin{equation}
X\ket{0} = \ket{1}, \quad X\ket{1} = \ket{0}.
\end{equation}

\textbf{c) Quantum-theoretic origin: } In the quantum circuit model, we have seen the use of unitary operators as a means to transform qubits.  It turns out that this is directly connected to a postulate of quantum theory regarding dynamics.  Namely that the continuous time evolution of a state vector of a closed quantum system is governed by the Schr\"{o}dinger equation
\begin{equation}\label{schrodinger}
i\hbar \frac{\partial \ket{\psi(t)}}{\partial t} = H \ket{\psi(t)}.
\end{equation}
The Hamiltonian $H$ is a Hermitian operator which specifies the physics of the system.  The solution to (\ref{schrodinger}) is 
\begin{equation}
\ket{\psi(t_{2})} = \exp \Bigg[ \frac{-iH(t_{2}-t_{1})}{\hbar} \Bigg]\ket{\psi(t_{1})}.
\end{equation}
Using relationship (\ref{unihermi}), one can naturally define a unitary operator
\begin{equation}\label{unitarytransformation}
U(t_{1}, t_{2}) \equiv \exp \Bigg[ \frac{-iH(t_{2}-t_{1})}{\hbar} \Bigg].
\end{equation}
Hence a discrete time transformation of states is provided by unitary operators.

\subsection{Measuring qubits}

\textbf{a) Description: } The final element of the quantum circuit model is measuring the qubits to extract their values.  One way to mathematically represent the measurement of qubits is using any orthonormal bases.  We have seen a qubit (\ref{qubit}) represented using the computational basis states (\ref{compbasis}).  More generally, suppose a qubit is represented using an arbitrary orthonormal basis, $\ket{x}$ and $\ket{y}$
\begin{equation}\label{generalqubit}
\ket{\psi} = \alpha_{x}\ket{x} + \beta_{y}\ket{y},
\end{equation}
where $\alpha_{x}, \beta_{y} \in \mathbb{C}$.  Then by measuring the qubit, with respect to the $\ket{x}$, $\ket{y}$ basis, we find the qubit is in state $\ket{x}$ or $\ket{y}$; we never find it in the superposition state (\ref{generalqubit}); hence measurement is said to instantaneously `collapse' the state into one of the basis states.  Furthermore, the probability of finding the qubit in state $\ket{x}$ is given by the modulus square of its coefficient, $\lvert \alpha_{x} \rvert^{2}$; similarly the probability of finding it in state $\ket{y}$ is given by $\lvert \beta_{y} \rvert^{2}$.  Due to the normalization condition, these `quantum' probabilities (that are derived from quantum information) sum to one.  As an example, if the qubit is in state $\ket{+} = (1/\sqrt{2})\ket{0} + (1/\sqrt{2})\ket{1}$, then the probability of finding it in state $\ket{0}$ upon measurement is $1/2$, and the probability of finding it in state $\ket{1}$ is also $1/2$.  One can generalize this technique to multiple qubits by using an arbitrary orthonormal basis of the respective Hilbert space.

\textbf{b) Difference to classical information: }Unlike classical information, quantum information such as $\alpha_{x}, \beta_{y}$ in (\ref{generalqubit}) can be described as `hidden.'  No single measurement allows us to directly extract those values.  On a related matter, measurement can generally be viewed as a process that converts quantum information into classical information in the form of an orthogonal basis state.  This presents us with another fundamental mystery: Why does this `collapse' occur? Or perhaps can we derive `measurement' from a unitary process.  Like with the first mystery, a deep answer still unknown. This issue from a quantum-theoretic perspective is referred to as the measurement problem \cite{penrose2007road}.

\textbf{c) Quantum-theoretic origin: } In quantum theory, an alternative way of evolving a state forward in time is through the measurement of quantum states (as opposed to the Schr\"{o}dinger equation).  We describe the relevant postulate: Quantum measurements are denoted by a set of measurement operators $\{M_{m}\}$ which satisfy 
\begin{equation}\label{completemeasure}
\sum_{m}M_{m}^{\dagger}M_{m} = I.
\end{equation}
This is known as the completeness relation. These operators act on the relevant state space.  The index $m$ refers to the outcome obtained from the measurement; if the quantum system is in state $\ket{\psi}$, then probability that result $m$ occurs upon measurement is given by 
\begin{equation}
p(m) = \braket{\psi{|M_{m}^{\dagger}M_{m}|\psi}}.
\end{equation} 
These quantum probabilities sum to one
\begin{equation}
\sum_{m} p(m) = \sum_{m}\braket{\psi{|M_{m}^{\dagger}M_{m}|\psi}} = 1
\end{equation}
Furthermore, given result $m$, the post-measurement state can be written as
\begin{equation}
\frac{M_{m}\ket{\psi}}{\sqrt{ \braket{\psi{|M_{m}^{\dagger}M_{m}|\psi}}}}.
\end{equation}
The quantity $e^{i\theta}$ known as a global phase factor, where $\theta \in \mathbb{R}$, is irrelevant with respect to measurement.  The state $\ket{\psi}$ and $e^{i\theta}\ket{\psi}$ are equivalent from the perspective of observation since
\begin{equation}
\braket{\psi|{e^{-i\theta}M_{m}^{\dagger}M_{m}e^{i\theta}|\psi}} = \braket{\psi{|M_{m}^{\dagger}M_{m}|\psi}}.
\end{equation}
A special case of these measurement operators are projective measurements.  Projective measurements satisfy (\ref{completemeasure}) as well as carry the property that $M_{m}$ are Hermitian and that $M_{m}M_{m'}=\delta_{m, m'}M_{m}$.  A projective measurement corresponds to an observable (a physical quantity that can be measured).  An observable is represented as a Hermitian operator $M$ on the state space.  Using spectral decomposition (\ref{spectralformula}), one can state this more precisely as
\begin{equation}\label{projectivemeasure}
M=\sum_{m}m \, P_{m},
\end{equation}
where $P_{m}$ represents the orthogonal projector onto the eigenspace of $M$ with eigenvalue $m$.  The measurement outcomes are the eigenvalues of $M$.  Furthermore, the probability of obtaining result $m$ if system was in state $\ket{\psi}$ before measurement, is given by
\begin{equation}\label{projectiveprob}
p(m) =\braket{\psi|P_{m}|\psi}.
\end{equation}  
This is often referred to as the Born rule.  The post-measurement state after obtaining outcome $m$ becomes
\begin{equation}\label{projectivepost}
\frac{P_{m}\ket{\psi}}{\sqrt{\braket{\psi|P_{m}|\psi}}}.
\end{equation}
It is convention to sometimes not emphasize the observable, but rather focus on the orthogonal projectors in (\ref{projectivemeasure}) or its associated kets.  To be more precise, the phrase `measure in basis $\ket{m}$' means to measure any observable that has $\ket{m}$ as its eigenbasis.  The corresponding projectors are $P_{m}=\ket{m}\bra{m}$ which satisfy $\sum_{m}P_{m}=I$ and $P_{m}P_{m'} = \delta_{m,m'}P_{m}$.  The Born rule can be re-written as
\begin{equation}\label{Bornruleeigenbasis}
p(m) =\lvert \braket{m|\psi}\rvert^{2}.
\end{equation}
Hence it can be seen that every orthonormal basis of the Hilbert space corresponds to a quantum measurement, and has outcome probabilities given by the Born rule.  In the quantum circuit model, we use projective measurements, and therefore equations (\ref{projectiveprob}) and (\ref{projectivepost}) are implicit in our explanation regarding the measurement of qubits.  Furthermore, it is often the case that we measure in the computational basis states (\ref{compbasis}).

\subsection{Further distinctions from classical information}

From our treatment on the qubits, a number of implications arise.  These results further emphasize the non-trivial distinctions between quantum information and classical information. 

\textbf{a) No-cloning: }  An essential task of classical information systems is to copy bits.  For an unknown quantum state, this operation is impossible to carry out.  The no-cloning theorem \cite{dieks1982communication, wootters1982single, mcmahon2007quantum} states that there exists no unitary operator that can clone an unknown quantum state.  To see this as true, suppose such a unitary operator $U$ did exist, and we have two unknown quantum states, $\ket{\psi}$ and $\ket{\phi}$.  Furthermore, let $\ket{s}$ denote a blank state to copy in.  We have mappings of the form
\begin{align}
U(\ket{\psi} \otimes \ket{s}) &= \ket{\psi} \otimes \ket{\psi}, \\
U(\ket{\phi} \otimes \ket{s}) &= \ket{\phi} \otimes \ket{\phi}.
\end{align} 
We then compute the inner product of the left hand side of both equations.  We carry the same task on the right hand side.  By equating the quantities, we obtain 
\begin{equation}
\braket{\psi|\phi} = (\braket{\psi|\phi})^{2}.
\end{equation}
This implies that either the two unknown states are orthogonal ($\braket{\psi|\phi}=0$), or they are equal to each other ($\ket{\psi}=\ket{\phi}$).  Hence a general cloning machine is impossible.  However, it is not surprising that a set of orthogonal states can be copied since these can be viewed as different states of classical information.  The no-cloning theorem is a basic result in quantum information science, and is related to other fundamental constraints in physics such as in regards to closed timelike curves \cite{ahn2013quantum, brun2013quantum}.  Having introduced the no-cloning theorem, it seems appropriate to say that other related no-go theorems exist including a no-deletion theorem \cite{pati2000impossibility}.  Both no-cloning and no-deletion collectively allude to the conservation of quantum information.

\textbf{b) Indistinguishability: } In principle, one can always distinguish classical bits from each other.  This is an impossible task for non-orthogonal quantum states.  As an example, there is no single measurement process that can reliably distinguish states $\ket{0}$ or $\ket{+}$.  We provide a rough argument.  Consider the simpler case of a fixed set of orthogonal quantum states denoted $\ket{\psi_{i}}$.  One can define general measurement operators consisting of $M_{i} \equiv \ket{\psi_{i}}\bra{\psi_{i}}$ as well as the positive square root of $I-\sum_{i \neq 0} \ket{\psi_{i}}\bra{\psi_{i}}$.  In this case, if state $\ket{\psi_{i}}$ is prepared, then $p(i) = \braket{\psi_{i}|M_{i}|\psi_{i}} = 1$.  This implies that this set of states can be reliably distinguished.  If on the other hand $\ket{\psi_{i}}$ denotes a set of non-orthogonal states, then a crucial property is that say state $\ket{\psi_{2}}$ can be broken into a component parallel to say state $\ket{\psi_{1}}$ as well as a component orthogonal to $\ket{\psi_{1}}$.  Due to the component of $\ket{\psi_{2}}$ that is parallel to $\ket{\psi_{1}}$, there is a non-zero probability of mistaking $\ket{\psi_{1}}$ as the state when in fact it was $\ket{\psi_{2}}$ that was prepared.  Thus these non-orthogonal states cannot be reliably distinguished.  This means a measurement has a limit on its ability to exact information which conveys additional support to the notion that quantum information is hidden.  Surprisingly, quantum indistinguishability is also related to the constraints regarding closed timelike curves \cite{brun2009localized}.  

\textbf{c) Uncertainty: }  Unlike classical bits, qubits have a probabilitic property that is intrinsic to them (\ref{projectiveprob}).  One can view these quantum probabilities in terms of an expectation value (\ref{expectation}). More precisely, the expectation value of an observable, $M$, (with respect to quantum state $\ket{\psi}$) in (\ref{projectivemeasure}), is defined as
\begin{align}\label{quantumexpectation}
\langle M \rangle &\equiv \sum_{m}m \, p(m) \\
&= \sum_{m} m \braket{\psi|P_{m}|\psi} \\
&= \bra{\psi}\Bigg(\sum_{m} m \, P_{m}\Bigg) \ket{\psi} \\
&= \braket{\psi|M|\psi}.
\end{align} 
In the derivation we have used quantum probabilities (\ref{projectiveprob}).  We can invoke further classical probabilistic concepts, and introduce the standard deviation (\ref{standarddeviation}) of an observable,
\begin{equation}
\Delta(M) = \sqrt{\langle M^{2} \rangle - \langle M \rangle^{2}}.
\end{equation}
This quantity is also known as the uncertainty of the observable, and it represents a statistical measure of the spread of measurements about the expectation value.  The Heisenberg uncertainty principle states that for observables $A$ and $B$ we have
\begin{equation}\label{heisenberguncertainty}
\Delta(A)\Delta(B) \geq \frac{\lvert \bra{\psi} [A,B] \ket{\psi}\rvert}{2}.
\end{equation}
This result has a shocking implication; suppose we had two observables that do not commute and we performed a large number of these measurements on systems which are in identical states $\ket{\psi}$; then if we make the uncertainty on the results of $B$ decrease, then the uncertainty of the results of $A$ must increase, regardless of the sophistication of the measurement.  The uncertainty principle is also related to constraints regarding closed timelike curves \cite{pienaar2013open}.

\section{Density Operators}

Density operators are a widely used mathematical tool in the study of open quantum systems and quantum statistical mechanics \cite{gardiner}.  Within quantum information science, it provides a natural framework for quantifying the information concerning subsystems.  In this section, we briefly describe quantum information in the language of density operators, while relaying its relationship to the quantum circuit model.

\subsection{Single density operator}

Associated to any isolated quantum system is a Hilbert space with an operator known as the density operator, $\rho$, which acts on the space.  The density operator is a positive operator with $\text{tr}(\rho)=1$.  The relationship to the quantum circuit model is as follow:  If a system is known to be in state $\ket{\psi_{i}}$ with associated classical probability $p_{i}$ then the density operator of the system is given by 
\begin{equation}
\rho = \sum_{i}p_{i}\rho_{i},
\end{equation} 
where $\rho_{i}\equiv \ket{\psi_{i}}\bra{\psi_{i}}$.  The set $\{\ket{\psi_{i}}, p_{i}\}$ is referred to as an ensemble.  For the limited case where $\ket{\psi}$ is the only member of an ensemble (like in the circuit model), we have $\rho=\ket{\psi}\bra{\psi}$ which is then called a pure state.  Otherwise it is known as a mixed state, which means we do not know with certainty what quantum state it is in.  In terms of computations, a pure state satisfies 
\begin{equation}
\text{tr}(\rho^{2})=1,
\end{equation}
whereas a mixed state results in 
\begin{equation}
\text{tr}(\rho^{2})<1.
\end{equation}
The largest statistical ignorance is expressed by the maximally mixed state 
\begin{equation}\label{completelymixedstate}
\rho = (1/n)\, I_{n},
\end{equation}
where $n$ is the dimension of the Hilbert space.  The density operator is a broader framework than the quantum circuit model.  It captures both the quantum information with its quantum probabilities, as well as the classical probabilities related to our ignorance of that quantum information.

\subsection{Multiple density operators}

\textbf{a) Composite system: } In the quantum circuit model, we have employed the tensor product as a means to describe multiple qubits.  Similarly, in the density operator language, a composite system is represented as a tensor product of the Hilbert spaces of the component systems.  If system number $i$ is in state $\rho_{i}$, then the composite system is described by density operator 
\begin{equation}
\rho_{1} \otimes \rho_{2} \otimes \dots \rho_{n}.  
\end{equation}

\textbf{b) Subsystem: } A central role of the density operator in quantum information science is as an information tool to describe subsystems.  This particular task is carried out by the reduced density operator.  Suppose $\rho^{AB}$ describes a composite system made up of system $A$ and system $B$.  Then the reduced density operator for $A$ is defined as
\begin{equation}\label{reduceddensityoperator}
\rho^{A} \equiv \text{tr}_{B}(\rho^{AB}),
\end{equation}  
where $\text{tr}_{B}$ is known as the partial trace over system $B$.  The partial trace is defined as
\begin{equation}
\text{tr}_{B}(\ket{a_{1}}\bra{a_{2}} \otimes \ket{b_{1}}\bra{b_{2}}) \equiv \ket{a_{1}}\bra{a_{2}}\text{tr}(\ket{b_{1}}\bra{b_{2}})
\end{equation}
where we have used the usual trace operation on the right hand side; the vectors $\ket{a_{1}}$ and $\ket{a_{2}}$ are any vectors from the Hilbert space associated to system $A$; similarly $\ket{b_{1}}$ and $\ket{b_{2}}$ are any vectors from the Hilbert space associated to system $B$.  As an example, one can apply this operation to the trivial composite system $\rho^{AB} =\rho_{1} \otimes \rho_{2}$, and obtain as expected
\begin{equation}
\rho^{A} = \text{tr}_{B}(\rho_{1} \otimes \rho_{2}) = \rho_{1} \text{tr}(\rho_{2}) = \rho_{1}.
\end{equation}  

\subsection{Transforming density operators}

\textbf{a) Closed system: }  Like in the quantum circuit model, density operators associated with closed systems also transform according to unitary operators.  This is computed as
\begin{equation}\label{closeddensity}
\rho' = U \rho U^{\dagger},
\end{equation}
where $U$ is a unitary operator.

\textbf{b) Open system: } For open quantum systems, a generalized framework for dynamics known as quantum operations is employed.  The density operators transform as 
\begin{equation}\label{quantumoperation}
\rho' = \Phi(\rho) = \sum_{k} A_{k} \rho A_{k}^{\dagger},
\end{equation}
where $\Phi$ is known as a quantum operation.  The operators $A_{k}$ are known as operation elements or as the Krauss operators.  These are not necessarily unitary, but rather satisfy the condition  
\begin{equation}
\sum_{k}A_{k}^{\dagger}A_{k}\leq I.
\end{equation}
The mapping (\ref{closeddensity}) can be regarded as a quantum operation where $\Phi(\rho) = U \rho U^{\dagger}$.  But the utility of the framework is best captured when considering open systems such in the case of environmental noise on a qubit.  Suppose a qubit flips from $\ket{0}$ to $\ket{1}$ (or vice versa) with classical probability $1-p$.  The associated operation elements are $A_{0} = \sqrt{p}I$, and $A_{1} = \sqrt{1-p}X$.  The quantum operation, known as the bit flip channel, is written as
\begin{equation}
\Phi(\rho) = p \rho + (1-p)X\rho X.
\end{equation}
More generally, quantum operations have been used to quantify a broad range of noise-related phenomenon on qubits.

\subsection{Measuring density operators}

\textbf{a) General measurement: } The measurement postulates of quantum theory can be reformulated for density operators.  Quantum measurements are described by a collection of measurement operators $\{M_{m}\}$ which satisfy the completeness relation (\ref{completemeasure}).  If the quantum system is in state $\rho$ before measurement, then the probability of obtaining result $m$ is given by 
\begin{equation}\label{generalmeasuredensityprob}
p(m) = \text{tr}(M_{m}^{\dagger}M_{m} \rho).
\end{equation} 
The post-measurement operator is expressed as
\begin{equation}\label{generalmeasuredensitypost}
\frac{M_{m}\rho M_{m}^{\dagger}}{\text{tr}(M_{m}^{\dagger}M_{m} \rho)}.
\end{equation}

\textbf{b) POVM: } POVM stands for positive operator valued measure and is a formalism that is usually expressed with the language of density operators.  Given general measurement operators $\{M_{m}\}$, one can define 
\begin{equation}\label{povmgeneral}
E_{m} \equiv M_{m}^{\dagger}M_{m}.
\end{equation}
These positive operators, $E_{m}$, are known as the POVM elements.  It can be shown that
\begin{equation}
\sum_{m} E_{m} = I.
\end{equation}
If the density operator prior to measurement is denoted $\rho$, then probability of obtaining outcome $m$ is given by
\begin{equation}\label{povmprob}
p(m) = \text{tr}(E_{m} \rho).
\end{equation}
One example of a POVM are projection measurements which are described by projectors, $P_{m}$, and satisfy $E_{m} =  P_{m}^{\dagger}P_{m} = P_{m}$.  For this specific case, (\ref{povmprob}), equates to
\begin{equation}\label{povmproprob}
p(m) = \text{tr}(P_{m} \rho).
\end{equation}
which is just a reformulation of the Born rule (\ref{projectiveprob}).

\textbf{c) Tomography: }  We have seen that a single measurement on a qubit does not allow us to obtain the quantum information.  This means in general it is impossible to characterize an unknown state $\rho$ if we are given a single copy.  Quantum state tomography is a procedure to estimate the unknown quantum state with many measurements.  Suppose we have many copies of the density operator $\rho$ of an unknown qubit.  Using the Pauli operators (\ref{paulioperators}), one can express $\rho$ as    
\begin{equation}\label{quantumtomography}
\rho = \frac{\text{tr}(\rho)I + \text{tr}(X\rho)X + \text{tr}(Y\rho)Y + \text{tr}(Z\rho)Z}{2}.
\end{equation}
For large sample sizes, one can obtain a reasonable estimation of the values of $\text{tr}(X\rho)$, $\text{tr}(Y\rho)$ and $\text{tr}(Z\rho)$ and identify the quantum information of the qubit.  Generalizing this procedure to $n$ qubits results in the expression,
\begin{equation}\label{quantumtomographyN}
\rho = \sum_{\vec{v}}\frac{\text{tr}(\sigma_{v_{1}} \otimes \sigma_{v_{2}} \otimes  \cdots \otimes \sigma_{v_{n}}\rho)  \sigma_{v_{1}} \otimes \sigma_{v_{2}} \otimes  \cdots \otimes \sigma_{v_{n}}}{2^{n}},
\end{equation}
where $\vec{v} = (v_{1}, \dots, v_{n})$ with entries $v_{i}$ chosen from the set $0,1,2,3$

\subsection{Further properties}

\textbf{a) No-broadcasting: }  The no-broadcast theorem \cite{Broadcast96}  generalizes the no-cloning theorem to the case of mixed states.  It states that given state $\rho_{1}$, it is not possible to create a composite system $\rho^{AB}$ such that $\text{tr}_{A}(\rho^{AB}) = \rho_{1}$ and $\text{tr}_{B}(\rho^{AB}) = \rho_{1}$.

\textbf{b) Antidistinguishability: }  In the previous section, we looked at the general case of distinguishing non-orthogonal quantum states.  In terms of density operators, distinguishability can be stated as the existence of a POVM $E_{j}$ for set of states $\rho_{k}$ such that 
\begin{equation}
\text{tr}(E_{j}\rho_{k}) = \delta_{ij},
\end{equation}
for all $j$ and $k$.  A related property is the notion of antidistinguishability \cite{caves2002conditions, heinosaari2018antidistinguishability}.  A set of states $\rho_{k}$ is antidistinguishable if there exists a POVM $E_{j}$ such that for each $j$,  
\begin{equation}\label{antidisting}
\text{tr}(E_{j}\rho_{k}) = 0.
\end{equation} 
Distinguishability lets us know that a particular state was definitely prepared.  This in contrast to antidistinguishability which lets us know that a particular state was definitely not prepared.

\textbf{c) Distance measures: } To quantitatively capture the idea of how `close' two quantum states are, there are two useful tools that we proceed to describe.  The first is the trace distance between two density operators $\rho$ and $\sigma$, which is defined as
\begin{equation}
D(\rho, \sigma) \equiv \frac{1}{2} \text{tr}\lvert \rho - \sigma \rvert,
\end{equation}
where
\begin{equation}
\lvert A \rvert \equiv \sqrt{A^\dagger A}.
\end{equation}
The second method is known as the fidelity which is given by
\begin{equation}\label{fidelity}
F(\rho, \sigma) \equiv \text{tr}\sqrt{\rho^{1/2}\sigma \rho^{1/2}}
\end{equation}
for density operators $\rho$ and $\sigma$.  The fidelity is invariant under unitary transformations
\begin{equation}
F(U\rho U^{\dagger}, U\sigma U^{\dagger} ) = F(\rho, \sigma).
\end{equation}
If both density operators represent pure states, $\rho = \ket{\psi_{\rho}}\bra{\psi_{\rho}}$ and $\sigma = \ket{\psi_{\sigma}}\bra{\psi_{\sigma}}$, the fidelity reduces to
\begin{equation}\label{fidelitypure}
F(\rho, \sigma) = \lvert \braket{\psi_{\rho}|\psi_{\sigma}} \rvert^{2} 
\end{equation}
The quantity (\ref{fidelitypure}) measures the probability of confusing the two states if one is only able to carry out only one measurement on one system which is prepared in one of the two states.  If the two states are orthogonal, then the fidelity is computed to be zero and the states can be fully distinguished.

\section{Entropy}

An alternative approach to view quantum information science is based on entropy.  In chapter \ref{chap: classical}, we introduced the Shannon entropy of a random variable as a means of describing classical information.  In this section, we define the von Neumann entropy of a quantum density operator.  A limited perspective is that the Shannon entropy applies only in the classical realm, whereas the von Neumann entropy strictly conveys quantum properties.  

Rather in the modern setting of quantum information, we'll see that the Shannon entropy can employed with respect to classical probabilities (in a mixed state) as well quantum probabilities (derived from quantum information).  Moreover, the von Neumann entropy can represent classical ignorance (in the case of a mixed state) as well as signify a reliable storage of quantum information (through the quantum analogue of data compression).  All of the material in this section reformulates or builds on concepts seen in the previous sections.

\subsection{Indistinguishability using Shannon entropy}

Although we have treated quantum indistinguishability in the qubit and the density operator frameworks, a description through entropy is most insightful \cite{han2018residual}.  

\textbf{a) Scenario: }Suppose a state is prepared from an ensemble $\xi$ of density operators $\{\rho_{j}: j= 1, 2, \dots, N\}$ with a prior classical probability distribution $\{\eta_{j}: j= 1, 2, \dots, N\}$.  Hence, the resulting operator can be written as
\begin{equation}\label{residual0}
\rho = \sum_{j=1}^{N}\eta_{j}\rho_{j},
\end{equation}

with $\text{tr}(\rho)=1$.  The task of distinguishability is to identify which state was prepared through a single measurement.  We perform this measurement using POVM elements which we denote by $\Pi \equiv \{\Pi_{k}: k= 1, 2, \dots, M \}$, where $M \geq N$.

\textbf{b) Probabilistic quantities: }To develop an entropic model of this task, we proceed to derive several quantities.  The joint probability that the state $\rho_{j}$ is prepared and that the outcome obtained is $\Pi_{k}$, is given by
\begin{equation}\label{residual1}
P(\rho_{j}, \Pi_{k}) = \eta_{j} \, \text{tr}(\Pi_{k}\rho_{j}),
\end{equation}
where we have used (\ref{povmprob}).  The total probability of obtaining outcome $\Pi_{k}$ is computed as
\begin{equation}
P_{\Pi_{k}} = \sum_{j=1}^{N}\eta_{j} \, \text{tr}(\Pi_{k}\rho_{j}) = \text{tr}(\Pi_{k}\rho).
\end{equation}
Summing over $k$ in (\ref{residual1}) results in 
\begin{equation}
\sum_{k=1}^{M} P(\rho_{j}, \Pi_{k}) = \eta_{j}.
\end{equation}
\textbf{b) Entropic quantities: }The quantities $\eta_{j}$ signify a probability distribution in (\ref{residual0}).  Therefore, we can evaluate the Shannon entropy (\ref{Shannon}) of this distribution,
\begin{equation}\label{residual4}
H(\xi) = -\sum_{j=1}^{N}\eta_{j} \log_{2} \eta_{j}.
\end{equation} 
Recall the mutual information (\ref{mutualinformation}) and its property $H(X:Y)=H(X) - H(X|Y)$.  Using (\ref{residual4}), we have the following mutual information associated to the measurement process
\begin{equation}\label{residualmutual}
H(\xi : \Pi) = H(\xi) - \sum_{k=1}^{M} P_{\Pi_{k}} H(\xi | \Pi_{k}).
\end{equation} 
It quantifies how much information is gained about inferring the state that was prepared through the measurement.  Moreover, the quantity $H(\xi | \Pi_{k})$ signifies the conditional entropy (\ref{conditionalentropy}) of the remaining ignorance after outcome $\Pi_{k}$ is obtained.  Therefore, a reasonable goal for this task is to choose a measurement that maximizes $H(\xi : \Pi)$.

\textbf{c) Accessible information: }Of crucial importance is the accessible information which is defined as the maximum mutual information attainable over all possible POVM measurements,
\begin{equation}
I_{\text{acc}} = H(\xi) - \underset{\text{min}}{{\text{all}  \, \Pi}}\sum_{k=1}^{M} P_{\Pi_{k}} H(\xi | \Pi_{k}). 
\end{equation}
The accessible information is a marker of how well a measurement can do at identifying the state prepared.  Moreover, it has an upper bound known as the Holevo bound \cite{nielsen2002quantum}.  From this point of view, the accessible information quantitatively captures the notion that quantum information has a hidden nature.

\textbf{d) Subsequent measurements: } Suppose measurement outcome $\Pi_{k}$ is obtained. After our first measurement, there may be subsequent measurements performed to extract further accessible information.  To compute the relevant entropic quantity, recall (\ref{povmgeneral}); each of the POVM elements corresponds to a general measurement operator $M_{k}$, where $\Pi_{k} = M_{k}^{\dagger}M_{k}$.  Then with respect to state $\rho_{j}$, the normalized postmeasurement states (\ref{generalmeasuredensitypost}) are written as
\begin{equation}\label{subsequent}
\rho_{j}^{(k)} = \frac{M_{k}\rho_{j}M_{k}^{\dagger}}{\text{tr}(\Pi_{k}\rho_{j})}.
\end{equation}
The respective new probabilities (\ref{generalmeasuredensityprob}) are found as
\begin{equation}\label{subsequent2}
\eta_{j}^{(k)} = \frac{\eta_{j}\text{tr}(\Pi_{k} \rho_{j})}{\text{tr}(\Pi_{k}\rho)}.
\end{equation}
Moreover, we let $\xi^{(k)}$ denote the postmeasurement ensemble consisting of states (\ref{subsequent}) with respective probabilities (\ref{subsequent2}).  The Shannon entropy of $\xi^{(k)}$ using (\ref{subsequent2}) is equal to value of $H(\xi | \Pi_{k})$ in (\ref{residualmutual}).  

If one performs a optimal subsequent POVM on $\xi^{(k)}$, this reduces the remaining ignorance for distinguishability in $\xi$ to $H(\xi^{(k)}) - I_{\text{acc}}(\xi^{k})$.  Hence the maximum mutual information between the original ensemble $\xi$ and the outcomes of optimal subsequent measurements is given by 
\begin{equation}
I'_{\text{max}}(\xi, \Pi) = H(\xi) - \sum_{k=1}^{M} P_{\Pi_{k}} [H(\xi^{(k)}) - I_{\text{acc}}(\xi^{k})].
\end{equation} 

\textbf{e) Efficiency of a measurement: } Using the computed quantities, one can characterize a quantum measurement using the following framework.  The amount of extracted information from a measurement is defined as 
\begin{equation}
\bar{E} \equiv \frac{H(\xi : \Pi)}{I_{\text{acc}}(\xi)}.
\end{equation}
The residual information is defined as the information that can be potentially extracted from subsequent measurements
\begin{equation}
\bar{R} \equiv \frac{I'_{\text{max}}(\xi, \Pi) - H(\xi : \Pi)}{I_{\text{acc}}(\xi)}.
\end{equation}
This leaves us with a definition of the destroyed information, which quantifies the reduction of the accessible information due to measurement $\Pi$:
\begin{equation}
\bar{D} \equiv \frac{I_{\text{acc}}(\xi) - I'_{\text{max}}(\xi, \Pi)}{I_{\text{acc}}(\xi)}
\end{equation}
The conservation of the total accessible information can thus be expressed as
\begin{equation}
\bar{E} + \bar{R} + \bar{D} = 1.
\end{equation}
For the task of distinguishability, these entropic quantities express the idea of the `efficiency' of a single quantum measurement.

\subsection{Uncertainty using Shannon entropy}

In the context of a large number of measurements, an unavoidable consequence of quantum information is the Heisenberg uncertainty principle (\ref{heisenberguncertainty}).  However from an information-theoretic perspective, the entropy is a preferred quantity over the standard deviation to measure uncertainty.  Indeed, it can be seen therefore, that the uncertainty principle can be reformulated in terms of the Shannon entropy \cite{deutsch1983uncertainty, kraus1987complementary, maassen1988generalized, coles2017entropic}.  

\textbf{a) Entropic uncertainty relation: } In the uncertainty principle, the standard deviation of observables, $X$ and $Z$, must satisfy (\ref{heisenberguncertainty}).  Using the spectral expansion (\ref{spectralformula}), one obtains the corresponding eigenvectors and their eigenvalues
\begin{align}
X &= \sum_{x} x \ket{x}\bra{x}, \\ 
Z &= \sum_{z} z \ket{z}\bra{z}. 
\end{align}
Suppose we measure either one of these observables on a system represented by density operator $\rho$.  Through (\ref{povmproprob}), one obtains a distribution for the quantum probabilities, denoted $p(x)$, associated with the measurement of $X$; likewise, one obtains a probability distribution, denoted $q(z)$, associated with the measurement of $Z$.  The Shannon entropy (\ref{Shannon}) is a function of only a probability distribution.  Hence it is not too difficult to see that we can derive an entropy from $p(x)$, as well as entropy from $q(z)$; these are respectively labelled $H(X)$ and $H(Z)$.  The entropic uncertainty relation states that
\begin{equation}\label{entropicuncertainty}
H(X) + H(Z) \geq \log \frac{1}{c},
\end{equation}
where $c$ is the maximum value of the possible quantities, $c_{xz} = \lvert \braket{x|z} \rvert^{2}$.  Moreover, for a system with an associated Hilbert space of dimension $d$ we have the following bounds,
\begin{equation}\label{entropicuncertaintybounds}
0 \leq \text{log }\frac{1}{c} \leq \text{log }d.
\end{equation}

\textbf{b) Guessing game: } One can view the entropic uncertainty relation through the lens of a game.  Suppose we have two players whom we name Alice and Bob.  The initial role of Bob is to prepare a system in state $\rho$, and send it to Alice.  Alice proceeds to measure either observable $A$ or $B$ with equal probability; the measurement choice is stored in bit $\Theta$ whereas the outcome is stored in bit $K$. The final step of the game is that Alice reveals the choice $\Theta$ to Bob.  The aim of the game is for Bob to guess $K$, given the value of $\Theta$.  

It can be shown \cite{coles2017entropic}, that regardless of the state $\rho$ prepared, the entropic uncertainty relation (\ref{entropicuncertainty}) implies that Bob will not be able to perfectly guess $K$ if $\log (1/c) > 0$.

\textbf{c) Temporal version: } Recently \cite{coles2019entropic}, it was shown that an entropic uncertainty relation can be formulated for energy and time.  The Hamiltonian, $H$, in (\ref{unitarytransformation}) corresponds to the energy of a system.  However, capturing the temporal aspect is non-trivial as there does not exist a Hermitian time operator.  Hence, an entropic uncertainty relation was formulated through the construction of a `quantum clock.' The uncertainty about time corresponds to how well one can `read off' the time from measuring this clock. 

It would be illuminating to view this in terms of a guessing game.  Bob prepares a quantum clock in state $\rho$.  He then sends this to Alice.  In this modified scenario, Alice either measures the clock's energy or randomly sets the clock's time; the choicce is made with equal probability; the latter task is accomplished by applying $\exp(-iHt)$ using a random chosen $t$ from a set of values.  Depending on what Alice chose to do, Bob's task is either to guess the clock's energy or estimate the value of $t$ by reading the clock.  The entropic energy-time uncertainty relation limits Bob's ability to win this guessing game.

\subsection{The von Neumann entropy}

We have witnessed the application of the Shannon entropy in settings involving quantum information.  An alternative entropic tool is the von Neumann entropy.  The usual treatment of this quantity is found in the subject of quantum statistical mechanics.  The approach taken by quantum information science is to describe this quantity in relation to the concepts in classical information theory.

\textbf{a) Single system: } The von Neumann entropy of a quantum density operator $\rho$ is defined as
\begin{equation}\label{vonneumann}
S(\rho) \equiv -\text{tr}(\rho \log_{2} \rho).
\end{equation}
One finds that this entropy can re-written as
\begin{equation}\label{vonneumanneigen}
S(\rho) = -\sum_{x}\lambda_{x} \log_{2} \lambda_{x},
\end{equation}
where $\lambda_{x}$ are the eigenvalues of $\rho$.  With the latter form, $S(\rho)$ can be seen as a Shannon entropy (\ref{Shannon}) where the eigenvalues are substituted for the probabilities.  We also take the convention that $0 \log_{2} 0 \equiv 0$.  

The bounds of the von Neumann entropy are $0 \leq S(\rho) \leq \log_{2}d$, where $d$ is the dimension of the Hilbert space.  Moreoever, the case of $S(\rho) = 0$ corresponds to a pure state, whereas for a completely mixed state (\ref{completelymixedstate}) we have $S(\rho) = \log_{2}d$.  Hence a non-zero von Neumann entropy signifies an ignorance (through classical probabilities) as to what the state of the system is.

\textbf{b) Multiple systems: } Suppose we have composite system with two components denoted $A$ and $B$.  This system is collectively described by density operator $\rho^{AB}$.  Analogous to (\ref{(joint)}), we define the von Neumann joint entropy of this system as
\begin{equation}
S(\rho^{AB}) \equiv -\text{tr}(\rho^{AB}\log_{2}(\rho^{AB})).
\end{equation}
Following (\ref{conditionalentropy}), we can define the von Neumann conditional entropy as 
\begin{equation}\label{quantumconditionalentropy}
S(A|B) \equiv S(A,B) - S(B).
\end{equation}
In classical communications, the quantity $H(X|Y)$ can interpreted as the number of additional bits that need to be transmitted to have full knowledge of $X$, after knowing $Y$.  In an analogous manner, it was recently \cite{witten2018mini} shown that $S(A|B)$ can be interpreted as a number of qubits that needs to be transmitted to make the task of quantum teleportation (which we'll discuss in the next chapter) possible.    
The von Neumann mutual information is defined as 
\begin{equation}
S(A:B) \equiv S(A) + S(B) - S(A,B),
\end{equation}
and resembles the form of (\ref{mutualinformation}).  Furthermore, it can shown that
\begin{align}
S(A:B) &= S(A) - S(A|B) \\
&= S(B) - S(B|A).
\end{align}
By considering (\ref{relativeentropy}), we are then led to define the von Neumann relative entropy (of $\rho$ to $\sigma$) as
\begin{equation}\label{quantumrelativeentropy}
S(\rho || \sigma) \equiv \text{tr}(\rho \log_{2} \rho) - \text{tr}(\rho \log_{2} \sigma),
\end{equation}
where it can be derived that $S(\rho || \sigma) = 0$ if and only if $\rho = \sigma$.

\textbf{c) Transformation: } For a density operator, recall that a unitary transformation is given by
\begin{equation}
\rho' = U\rho U^{\dagger}.
\end{equation}
The von Neumann entropy is invariant under this unitary transformation, hence
\begin{equation}
S(\rho') = S(U\rho U^{\dagger}).
\end{equation}

\textbf{d) Measurement: }  Suppose we have a system in state $\rho$ that we would like to perform a projective measurement on.  Let $P_{i}$ be the corresponding orthogonal projectors for that measurement.  If we never learn the result of the measurement, the post-measurement state can be represented as
\begin{equation}
\rho' = \sum_{i} P_{i}\rho P_{i}.
\end{equation}
It can be shown that this procedure in general increases the entropy,
\begin{equation}
S(\rho') \geq S(\rho),
\end{equation}
with equality if and only if $\rho = \rho'$.

\textbf{e) Properties: } The first important property regarding von Neumann entropies is the subadditivity inequality 
\begin{equation}
S(A,B) \leq S(A) + S(B).
\end{equation}
A related property is the triangle inequality which is written as
\begin{equation}
S(A,B) \geq \lvert S(A) - S(B) \rvert. 
\end{equation}
Of considerable importance is the strong subadditivity inequality 
\begin{equation}\label{strongsubadditivity}
S(A, B, C) + S(B) \leq S(A,B) + S(B,C)
\end{equation}
which applies for a system composed of three components denoted $A$, $B$, and $C$.  For the conditional entropy associated to a trio of systems, we have the result 
\begin{equation}
S(A|B,C) \leq S(A,B).
\end{equation}
In regards to mutual information, one finds that
\begin{equation}
S(A:B) \leq S(A:B,C).
\end{equation}
The monotocity of the relative entropy is a result regarding subsystems
\begin{equation}
S(\rho^{A} || \sigma^{A}) \leq S(\rho^{AB}|| \sigma^{AB}),
\end{equation}
where $\rho^{AB}$ and $\sigma^{AB}$ are any two density operators of a system $AB$.    Another significant result is the concavity of the von Neumann entropy which is expressed as
\begin{equation}
S\Bigg(\sum_{i} p_{i} \rho_{i} \Bigg) \geq \sum_{i}p_{i}S(\rho_{i})
\end{equation}
for probabilities $p_{i}$ (which sum to unity) and their corresponding density operators $\rho_{i}$.

\subsection{Quantum data compression}

Data compression plays a fundamental role in classical information theory; the noiseless channel coding theorem (Theorem \ref{Shannoncoding}) forms the basis for an operational definition of the Shannon entropy.  In this subsection, we provide a brief overview of the quantum noiseless channel coding theorem \cite{schumacher1995quantum}, which provides an operational definition of the von Neumann entropy.  A large part of the development towards the theorem relies on the mathematical machinery associated with classical data compression.  However, the pioneering nature of the work stems from the conceptual shift of treating the states of quantum theory as information in the most genuine manner.  Hence, the significance of this quantum coding theorem cannot be understated for the development of quantum information theory, which is also referred to as the quantum Shannon theory \cite{wilde2017quantum}. 

\textbf{a) Defining an information source: }  As in the classical case, the first step is to construct a valid notion of an information source.  We define a \textit{i.i.d} quantum information source,$\{H, \rho\}$, as one that can described by a Hilbert space $H$, and a density operator $\rho$ on that Hilbert space.  Furthermore, we utilize the framework of quantum operations (\ref{quantumoperation}) to help us define a compression scheme of rate $R$.  The compression operation, $C^{n}$ maps states in $H^{\otimes n}$ to states in a $2^{nR}$-dimensional state space.  Conversely, $D^{n}$ represents a decompression operation which takes states in the compressed space back to states in the original Hilbert space.

\textbf{b) Defining typical states: }  It will be necessary to recall the definition of a typical sequence (\ref{typical2}) that was described in Chapter \ref{chap: classical}.  To harness this result, we note that the density operator associated with our information source has a spectral expansion
\begin{equation}
\rho = \sum_{x} p(x) \ket{x}\bra{x},
\end{equation}
where $\ket{x}$ are the eigenvectors with associated eigenvalues $p(x)$.  Of crucial importance is that the eigenvalues, in this case, behave like a probability distribution in that they are non-negative and sum to unity.  Thus, $S(\rho)$ can be viewed as the Shannon entropy of the set of eigenvalues.  Therefore, by using (\ref{typical2}) we obtain the $\epsilon$-typical sequence $x_{1}, x_{2}, \dots, x_{n}$ where
\begin{equation}\label{quantumtypical2}
\biggl\lvert \frac{1}{n} \log \frac{1}{p(x_{1})p(x_{2}) \ldots , p(x_{n})} - S(\rho) \, \biggl\rvert  \leq \, \epsilon.
\end{equation}
We define an $\epsilon$-typical state $\ket{x_{1}}\ket{x_{2}}\dots\ket{x_{n}}$ as one for which $x_{1}, x_{2}, \dots, x_{n}$ is an $\epsilon$-typical sequence. Related to this concept is the definition of an $\epsilon$-typical subspace, denoted $T(n, \epsilon)$; this is a subspace spanned by all $\epsilon$-typical states, $\ket{x_{1}}, \dots, \ket{x_{n}}$.  Moreover, to project onto the subspace $T(n, \epsilon)$, we can use the operator,
\begin{equation}
P(n, \epsilon) = \sum_{x \,\, \epsilon-\text{typical}} \ket{x_{1}}\bra{x_{1}} \otimes \ket{x_{2}}\bra{x_{2}} \otimes \dots \ket{x_{n}}\bra{x_{n}}.
\end{equation}  

\textbf{c) Application of Theorem \ref{typicaltheorem}: } One can use the classical theorem regarding typical sequences to prove the following quantum theorem: 

\begin{theorem}\label{typicalsubspacetheorem}
	\textbf{(Typical subspace theorem)}   
	\begin{enumerate}[noitemsep, topsep=0pt, label=\roman*)]
		\item Fix $\epsilon > 0$.  Then for any $\delta > 0$, for sufficiently large $n$, 
		\begin{equation}
		\text{tr}(P(n, \epsilon)\rho^{\otimes n}) \geq 1-\delta.
		\end{equation}
		\item For any fixed $\epsilon > 0$ and $\delta > 0$, for sufficiently large $n$, the dimension of the subspace, $\lvert T(n,\epsilon) \rvert = \text{tr}(P(n, \epsilon))$, satisfies  
		\begin{equation}
		(1-\delta)\, 2^{n(S(\rho) - \epsilon)} \leq \lvert T(n,\epsilon) \rvert \leq 2^{n(S(\rho) + \epsilon)}.
		\end{equation}
		\item Let $S(n)$ be a projector onto any subspace of $H^{\otimes n}$ of dimension at most $2^{nR}$, where $R<S(\rho)$ is fixed.  Then for any $\delta > 0$ and for sufficiently large $n$, 
		\begin{equation}
		\text{tr}(S(n)\rho^{\otimes n}) \leq \delta.  
		\end{equation}
	\end{enumerate}  
	(See e.g. \cite{nielsen2002quantum}.)\hfill$\Box$
\end{theorem}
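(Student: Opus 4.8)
The plan is to reduce the entire quantum statement to the already-proven classical Theorem \ref{typicaltheorem} by exploiting the spectral decomposition $\rho = \sum_{x} p(x)\ket{x}\bra{x}$. The crucial observation, already flagged in the text, is that the eigenvalues $p(x)$ are non-negative and sum to one, so they constitute a genuine probability distribution whose Shannon entropy equals $S(\rho)$. Writing $\rho^{\otimes n} = \sum_{x_1,\dots,x_n} p(x_1)\cdots p(x_n)\,\ket{x_1\dots x_n}\bra{x_1\dots x_n}$, every quantity in the theorem becomes a statement about this product distribution, for which the classical results apply verbatim.

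First I would establish part (i). Since $P(n,\epsilon)$ and $\rho^{\otimes n}$ are simultaneously diagonal in the eigenbasis, taking the trace simply sums the product eigenvalues over typical index strings:
\begin{equation}
\text{tr}(P(n,\epsilon)\rho^{\otimes n}) = \sum_{x \in T(n,\epsilon)} p(x_1)\cdots p(x_n).
\end{equation}
This is exactly the classical probability that a length-$n$ sequence is $\epsilon$-typical, which is at least $1-\delta$ by Theorem \ref{typicaltheorem}(i). Part (ii) is equally direct: the dimension of the typical subspace equals the number of typical strings, $\lvert T(n,\epsilon)\rvert = \text{tr}(P(n,\epsilon))$, so the stated bounds are precisely the counting bounds of Theorem \ref{typicaltheorem}(ii).

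The substantive work is part (iii), and this is where I expect the main obstacle. Unlike the classical setting, the competitor subspace projected onto by $S(n)$ need not be spanned by eigenvectors of $\rho^{\otimes n}$; it may be tilted arbitrarily relative to the eigenbasis, so a direct counting argument over sequences is unavailable. The remedy is to split $\text{tr}(S(n)\rho^{\otimes n})$ using $I = P(n,\epsilon) + (I - P(n,\epsilon))$ and bound each piece separately. For the atypical piece, because $\rho^{\otimes n}$ and $P(n,\epsilon)$ commute their product is positive, and $0 \leq S(n) \leq I$ yields $\text{tr}(S(n)\rho^{\otimes n}(I-P(n,\epsilon))) \leq \text{tr}(\rho^{\otimes n}(I-P(n,\epsilon)))$, which is small by part (i).

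For the typical piece I would use that every typical state carries eigenvalue at most $2^{-n(S(\rho)-\epsilon)}$, that is $P(n,\epsilon)\rho^{\otimes n}P(n,\epsilon) \leq 2^{-n(S(\rho)-\epsilon)}P(n,\epsilon)$, so that
\begin{equation}
\text{tr}(S(n)\rho^{\otimes n}P(n,\epsilon)) \leq 2^{-n(S(\rho)-\epsilon)}\,\text{tr}(S(n)) \leq 2^{-n(S(\rho)-\epsilon-R)}.
\end{equation}
Fixing $\epsilon$ with $R < S(\rho)-\epsilon$ forces this term to zero as $n\to\infty$, and summing the two contributions bounds $\text{tr}(S(n)\rho^{\otimes n})$ below any prescribed $\delta$ for large $n$. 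The only care needed throughout is the monotonicity step $A \leq B \Rightarrow \text{tr}(S(n)A) \leq \text{tr}(S(n)B)$ for $S(n) \geq 0$, which holds because the trace of a product of positive operators is non-negative; this is the technical point that replaces the cruder classical bookkeeping and constitutes the heart of the converse.
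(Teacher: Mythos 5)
Your proof is correct and follows exactly the route the thesis intends: the paper states the theorem with only a pointer to \cite{nielsen2002quantum}, prefaced by the remark that the classical Theorem \ref{typicaltheorem} suffices once the eigenvalues of $\rho$ are read as a probability distribution, and your reduction of parts (i) and (ii) to the classical counting statements, together with the $I = P(n,\epsilon) + (I-P(n,\epsilon))$ split and the operator-monotonicity bound for part (iii), is precisely that standard argument. No gaps: the commutation of $\rho^{\otimes n}$ with $P(n,\epsilon)$, the eigenvalue bound $P(n,\epsilon)\rho^{\otimes n}P(n,\epsilon) \leq 2^{-n(S(\rho)-\epsilon)}P(n,\epsilon)$, and the step $0 \leq S(n) \leq I \Rightarrow \text{tr}(S(n)M) \leq \text{tr}(M)$ for positive $M$ are all justified as you state them.
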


\textbf{d) Application of typical subspace theorem: }  The utility of Theorem \ref{typicalsubspacetheorem} manifests by its use in proving the quantum analogue of noiseless channel coding theorem (Theorem \ref{Shannoncoding}).  For the sake of briefness, we simply state the end result:

\begin{theorem}\label{Schumachercoding}
	\textbf{(Schumacher's noiseless channel coding theorem)} 
	Let $\{H, \rho\}$ be an \textit{i.i.d.} quantum information source:
	\begin{enumerate}[noitemsep, topsep=0pt, label=\roman*)]
		\item  If $R>S(\rho)$ then there exists a reliable compression scheme of rate $R$ for the information source
		\item  Conversely, if $R<S(\rho)$, then any compression scheme will not be reliable. 
	\end{enumerate} 
(See e.g. \cite{nielsen2002quantum}.) \hfill$\Box$
\end{theorem}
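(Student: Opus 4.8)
The plan is to follow closely the template of the classical proof of Shannon's noiseless channel coding theorem (Theorem \ref{Shannoncoding}), replacing the Theorem of Typical Sequences by its quantum counterpart, the Typical Subspace Theorem (Theorem \ref{typicalsubspacetheorem}). The guiding intuition is that $\rho^{\otimes n}$ concentrates almost all of its weight on the $\epsilon$-typical subspace $T(n,\epsilon)$, whose dimension scales like $2^{nS(\rho)}$; compressing then amounts to retaining only this subspace, and reliability is quantified by a fidelity that must tend to one as $n \rightarrow \infty$.

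For achievability (part (i)), I would assume $R > S(\rho)$ and fix $\epsilon$ with $S(\rho) + \epsilon < R$. First I would invoke part (ii) of Theorem \ref{typicalsubspacetheorem} to guarantee that for sufficiently large $n$ the dimension $\lvert T(n,\epsilon) \rvert \leq 2^{n(S(\rho)+\epsilon)} < 2^{nR}$, so the typical subspace embeds isometrically into the $2^{nR}$-dimensional compressed space. I would then construct the compression operation $C^{n}$ as follows: perform the two-outcome measurement $\{P(n,\epsilon),\, I - P(n,\epsilon)\}$; on the typical outcome, apply the isometry into the compressed space; on the atypical outcome, output a fixed standard state. The decompression $D^{n}$ applies the inverse isometry. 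To prove reliability I would lower-bound the fidelity between $\rho^{\otimes n}$ and $(D^{n} \circ C^{n})(\rho^{\otimes n})$ using part (i) of Theorem \ref{typicalsubspacetheorem}, namely $\text{tr}(P(n,\epsilon)\rho^{\otimes n}) \geq 1-\delta$; since almost all of the weight of $\rho^{\otimes n}$ lies in $T(n,\epsilon)$, the projection disturbs the state only slightly and the fidelity exceeds $1 - \delta'$ for arbitrarily small $\delta'$.

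For the converse (part (ii)), I would assume $R < S(\rho)$. Any rate-$R$ scheme has a decompression whose output is effectively supported on a subspace of dimension at most $2^{nR}$; let $S(n)$ be a projector onto such a subspace. I would then apply part (iii) of Theorem \ref{typicalsubspacetheorem}, which gives $\text{tr}(S(n)\rho^{\otimes n}) \leq \delta$, so the source's overlap with any admissible output subspace vanishes as $n \rightarrow \infty$. Since the recovered state lives in a subspace carrying negligible weight of $\rho^{\otimes n}$, the fidelity is bounded away from one, and no rate-$R$ compression can be reliable.

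The main obstacle will be the fidelity analysis in the quantum setting. Unlike the classical case, where reliability was simply the probability that $D^{n}(C^{n}(x)) = x$, here the projection onto $T(n,\epsilon)$ genuinely disturbs the quantum state, so I would need a gentle-measurement--type estimate converting the high subspace probability $\text{tr}(P(n,\epsilon)\rho^{\otimes n}) \geq 1-\delta$ into a lower bound on the (entanglement or ensemble-averaged) fidelity. Selecting the appropriate fidelity measure and controlling the accumulated disturbance from both the measurement and the fallback state is the technically delicate step; by contrast, the dimension counting supplied by Theorem \ref{typicalsubspacetheorem} is essentially immediate once the typical subspace is in hand.
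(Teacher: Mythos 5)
Your proposal is correct, but note that the paper itself gives no proof of this theorem: it explicitly states the result ``for the sake of briefness'' and defers to the cited reference \cite{nielsen2002quantum}, after setting up exactly the machinery you use (the typical subspace theorem, Theorem \ref{typicalsubspacetheorem}). Your sketch is precisely the standard Schumacher argument from that reference --- achievability via projection onto $T(n,\epsilon)$ with a fidelity bound of the form $\bar{F} \geq \lvert \text{tr}(\rho^{\otimes n} P(n,\epsilon))\rvert^{2} \geq 1-2\delta$, and the converse via part (iii) applied to the support of each Kraus path through the $2^{nR}$-dimensional compressed space --- so it matches the intended proof in both structure and key lemmas.
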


\textbf{e) Comments:} 

\begin{enumerate}[noitemsep, topsep=0pt, label=\roman*)]
	\item From Theorem \ref{Schumachercoding}, the von Neumann entropy can be operationally defined as the minimum physical resource required to reliably store the output of a quantum information source.  Recall that the Shannon entropy is the minimum physical resource required to reliably store the output of a classical information source.  Hence in this precise manner, the von Neumann entropy can be considered a quantum generalization of the Shannon entropy.  More importantly, we see that entropies in both information theories play the role of signifying optimal data compression.
	\item We have seen that most of the quantum results rely on the mathematics of classical data compression.  This is part of a broader framework in which quantum information theory can be seen as a generalization of classical information theory.
\end{enumerate}


\chapter{Quantum Entanglement}\label{chap: QEnt}

\begin{chapquote}{Albert Einstein, co-inventor of quantum theory}
	``I cannot seriously believe in [the quantum theory] because it cannot be reconciled with the idea that physics should represent a reality in time and space, free from spooky actions at a distance.''
\end{chapquote}

\textsf{THE INTERDEPENDENCE }among classical information systems is developed on the violation of probabilistic independence (\ref{classicalindependence}) described in Chapter \ref{chap: classical}.  We portrayed this property of independence only after introducing the case of a single variable followed by the consideration of multiple variables. Our presentation of quantum information science will proceed in an analogous manner.  In Chapter \ref{chap: QInfo}, we examined single and multiple quantum information systems through a variety of theoretical tools.  Hence, in this chapter we are led to introduce the mathematical description of `independent' quantum information systems; the notion of interdependence arises naturally in a form known as entanglement; it turns out that entanglement exists across spatial distances (entanglement in space) as well as across temporal intervals (entanglement in time).  In Chapter \ref{chap: classical}, we also described three applications namely classical communication, classical computing, and classical blockchain.  In this chapter, we introduce their quantum information analogues using entanglement.  Both quantum communications and quantum computing rely on an entanglement in space.  The quantum blockchain is one of the first novel applications of an entanglement in time.

\section{Entanglement in Space}

Entanglement, or more precisely entanglement in space, was first theoretically discovered in the Einstein-Podolsky-Rosen (EPR) paradox \cite{einstein1935can}.  They attempted to dismiss the framework of quantum theory by assuming that such an effect could not reasonably exist in the physical world, due to the bizarre implications associated with it.  However, the effect has been experimentally well established, most recently to spatial distances exceeding a $1000$ kilometers \cite{yin2017satellite}. Entanglement in space has also been historically described as \textit{the} single property that radically distinguishes quantum physics from classical physics  \cite{schrodinger1935discussion}.  From a modern perspective, such a statement has manifested itself in that the property plays a central and pervasive role in quantum information science.  It can be seen as an interdependence among two or more spatially separated quantum information systems that would be impossible to replicate by classical information systems.  

In this thesis, we observe that \textit{the interdependence in any entanglement in space is shocking due to the absense of a time interval involved}.   Introducing a time interval in the relevant scenario will only make the effect clash less harshly with our classical intuition.  This observation was first described in \cite{schrodinger1935discussion}, where it was crucially noted that ``\textit{The [EPR] paradox would be shaken, though, if an observation did not relate to a definite moment}."  

Our description of entanglement in space will be introduced through the theoretical tools of qubits, density operators and entropy.  Each provides a different perspective into the perplexing nature of the spatial interdependence.  For detailed reviews on the subject of entanglement in space, we refer the reader to \cite{horodecki2009quantum, guhne2009entanglement, vedral2006introduction, brunner2014bell} whose material we follow closely.  For the rest of this section, we use the term entanglement to solely mean an entanglement in space.

\subsection{Through qubits}

\textbf{a) Bipartite definition: } The entanglement among pure states can easily be described using the quantum circuit model.  We constrain our focus even further by considering the bipartite case which is a system composed of two quantum information subsystems.  These can be respectively labelled $A$ and $B$.  The Hilbert space associated to each subsystem is written as $H_{A}$ with dimension $d_{A}$, and $H_{B}$ with dimension $d_{B}$.  Then any state vector, representing the composite system, in the Hilbert space $H= H_{A} \otimes H_{B}$ is given by
\begin{equation}
\ket{\psi} = \sum_{i,j=1}^{d_{A}, d_{B}} c_{ij} \ket{a_{i}} \otimes \ket{b_{j}},
\end{equation}       
with a $d_{A} \times d_{B}$ matrix $C$ consisting of complex numbers $c_{ij}$. 

A pure state $\ket{\psi} \in H$ is known as \textit{separable}, or as a \textit{product state}, if we can obtain states $\ket{\phi^{A}} \in H_{A}$ and $\ket{\phi^{B}} \in H_{B}$ such that
\begin{empheq}[box=\widefbox]{align}\label{bipartiteseparable}
\ket{\psi} = \ket{\phi^{A}} \otimes \ket{\phi^{b}}. 
\end{empheq}
Otherwise the state $\ket{\psi}$ is referred to as \textit{entangled} or as \textit{nonseparable}.

Quantum separability can be seen to be comparable in some respects to the definition of classical independence (\ref{classicalindependence}).  By looking ahead, we can generalize separability to multipartite systems which consist of multiple subsystems.  

\textbf{b) Multipartite definition: } Consider a pure $N$-partite state $\ket{\psi}$.  We refer to the state $\ket{\psi}$ as \textit{fully separable} if it can be written as 
\begin{empheq}[box=\widefbox]{align}\label{fullyseparable}
\ket{\psi} = \bigotimes_{i=1}^{N} \ket{\phi_{i}}. 
\end{empheq}
If a state does not satisfy the condition of fully separable, then it contains some entanglement.  A pure state is called \textit{m-separable} where $1 < m < N$, if there exists a division of the $N$ parties into $m$ parts $P_{1}, \dots , P_{m}$ such that
\begin{empheq}[box=\widefbox]{align}\label{mseparable}
\ket{\psi} = \bigotimes_{i=1}^{m} \ket{\phi_{i}}_{P_{i}}. 
\end{empheq}
The $m$-separable state may still contain some entanglement.  A state is referred to as truly \textit{N-partite entangled} when it is neither fully separable, nor $m$-separable, for any $m>1$.
  
As an example, consider the case of $N=3$ where the respective quantum information subsystems are labelled $A$, $B$, and $C$.  The pure three-qubit state are fully separable if they can be written as
\begin{equation}
\ket{\phi^{fs}}_{A|B|C} = \ket{\alpha}_{A} \otimes \ket{\beta}_{B} \otimes \ket{\gamma}_{C}.
\end{equation}
Let $m=2$ to consider the associated biseparable states:
\begin{align}
\ket{\phi^{bs}}_{A|BC} &= \ket{\alpha}_{A} \otimes \ket{\delta}_{BC}, \\
\ket{\phi^{bs}}_{B|AC} &= \ket{\beta}_{B} \otimes \ket{\delta}_{AC}, \\
\ket{\phi^{bs}}_{C|AB} &= \ket{\gamma}_{C} \otimes \ket{\delta}_{AB}. 
\end{align}
Note that the state $\ket{\delta}$ may contain entanglement.

\textbf{c) Implications: } We proceed to describe some properties of well known entangled pure states starting with the bipartite case.  

The simplest entangled states are the four \textit{Bell states} (also known as EPR states or EPR pairs)
\begin{empheq}[box=\widefbox]{align}
\ket{\Phi_{AB}^{\pm}} &= \frac{1}{\sqrt{2}}(\ket{0_{A}0_{B}} \pm \ket{1_{A}1_{B}} ), \label{Bellstatebasis} \\
\ket{\Psi_{AB}^{\pm}} &= \frac{1}{\sqrt{2}}(\ket{0_{A}1_{B}} \pm \ket{1_{A}0_{B}} ). \label{Bellstatebasis2}
\end{empheq}
We can describe the generation of these states using the quantum circuit model.  Consider starting with the computational basis state $\ket{0_{A}0_{B}}$.  After applying the Hadamard gate to the first qubit, we obtain state $(\ket{0_{A}} + \ket{1_{A}})\ket{0_{B}}/\sqrt{2}$.  The next step of applying the CNOT gate results in the desired output $\ket{\Phi_{AB}^{+}}$.  Similar procedures can produce the remaining Bell states. 

An entanglement (in space) has an associated interdependence among quantum information systems across spatial distances.  This can be portrayed in the following scenario.  Suppose we have a bipartite system in Bell state 
\begin{equation}\label{Bellstateshock}
\ket{\Phi_{AB}^{+}} = \frac{1}{\sqrt{2}}(\ket{0_{A}0_{B}} + \ket{1_{A}1_{B}} ),
\end{equation}
where subsystem $A$ can be arbitrarily far from subsystem $B$.  The state $\ket{\Phi_{AB}^{+}}$ has the property that if we make a projective measurement \textit{only} on subsystem $A$ (in the computational basis), then the post-measurement result for the system is either $\ket{0_{A}0_{B}}$ or $\ket{1_{A}1_{B}}$ (each occuring with probability $1/2$).  The point we want to stress is that the state of subsystem $B$ will equate to whatever binary state that subsystem $A$ `collapses' to.  The measurement outcomes are correlated.  It also is important to emphasize that prior to the measurement on $A$, both subsystems are in a superposition in (\ref{Bellstateshock}) and neither can be described to be in a definite state.  (Note that a similar analysis occurs for the inverted case where the measurement is on subsystem $B$).  This is remarkable in that subsystem $B$, who is arbitrarily far away from system $A$, \textit{instantaneously} takes whatever value that subsystem $A$ is measured to be found in.  How is it that subsystem $B$ instantaneously `knows' the measurement outcome of subsystem $A$ and follows accordingly?  This property is what Einstein referred \cite{born2005born} to as ``spooky action at a distance."  This interdependence of quantum information systems across space is ``spooky" precisely due to the instantaneous aspect of it.  In other words, it is the \textit{lack of a time interval involved that makes this spatial interdependence shocking}.  However, it is important to note that the measurement outcomes $\ket{0_{A}0_{B}}$ or $\ket{1_{A}1_{B}}$ occur randomly.  Hence such an effect cannot be used to send classical information instantaneously across vast distances.  

It turns out the measurements results are always interdependent.  We have witnessed the case of correlated results.  Consider the Bell state 
\begin{equation}
\ket{\Psi^{-}} = \frac{1}{\sqrt{2}}(\ket{01} - \ket{10}),
\end{equation}
where the measurements are anti-correlated with respect to the computational basis states.  If $\vec{v}$ is any real three-dimensional unit vector, then we can define the observable,
\begin{equation}
\vec{v}\cdot \vec{\sigma} \equiv v_{1}\sigma_{x} + v_{2}\sigma_{y} + v_{3}\sigma_{z},
\end{equation}
which is referred to as a measurement of spin along the $\vec{v}$ axis.  Let the eigenvectors of the observable be denoted $\ket{a}$ and $\ket{b}$.  Then it can be shown that 
\begin{equation}
\frac{1}{\sqrt{2}}(\ket{01} - \ket{10}) = \frac{1}{\sqrt{2}}(\ket{ab} - \ket{ba}),
\end{equation}
up to a global phase factor which we can ignore.  This quantitatively shows that the measurement outcomes, for this Bell state, are always anti-correlated.

The Bell states, (\ref{Bellstatebasis}) and (\ref{Bellstatebasis2}), also form an orthonormal basis for a two qubit four dimensional Hilbert space.  Hence, one can perform a joint quantum measurement of two qubits that determine which of the four Bell states the two qubits are in. This is known as a Bell state measurement.  On a related matter, an important class of operations are LOCC which is an acronym for local operations and classical communications.  This means that operations can only be performed locally on the individual subsystems and the subsystems can communicate classically with each other.  An example of this is the local application of the Pauli operators (\ref{paulioperators}) to change between any of the Bell states
\begin{align}
(\sigma_{x} \otimes I)\ket{\Phi_{AB}^{\pm}} = \ket{\Psi_{AB}^{\pm}}, \\
(\sigma_{x} \otimes I)\ket{\Psi_{AB}^{\pm}} = \ket{\Phi_{AB}^{\pm}}, \\
(\sigma_{z} \otimes I)\ket{\Phi_{AB}^{\pm}} = \ket{\Phi_{AB}^{\mp}}, \\
(\sigma_{z} \otimes I)\ket{\Psi_{AB}^{\pm}} = \ket{\Psi_{AB}^{\mp}}.
\end{align} 
In contrast to Bell state measurements, the ability to distinguish the four Bell states using LOCC is an impossible task and its violation is related to notions of closed timelike curves \cite{moulick2016timelike}.   

Moving from the bipartite case, we proceed to briefly list some well known examples of multipartite entangled pure states.  The first of these are the \textit{GHZ (Greenberger-Horne-Zeilinger) states} which are perhaps the most well studied.  The GHZ state for $N$ qubits is defined as
\begin{empheq}[box=\widefbox]{align}\label{GHZ}
\ket{GHZ_{N}} = \frac{1}{\sqrt{2}}(\ket{0}^{\otimes N} + \ket{1}^{\otimes N}).
\end{empheq}
The second example we wish highlight are the graph states which are defined as follows.  Let $G$ be a graph with a set of $N$ vertices and certain number of edges connecting them.  For each vertex $i$, let neigh($i$) be defined as the neighborhood of $i$, which is the set of vertices that are connected to $i$ by an edge.  Then for each vertex $i$, one can construct what is known as as a stabilizer operator,
\begin{equation}
g_{i} = X_{i} \bigotimes_{j \in \text{neigh}(i)} Z_{j},
\end{equation}
where $X_{i}$, $Y_{i}$, and $Z_{i}$ represent Pauli matrices (\ref{paulioperators}) applied to the $i$-th qubit.  Using this notation, the \textit{graph state} $\ket{G}$ associated with graph $G$ is the unique common eigenvector to all stabilizing operators $g_{i}$,
\begin{empheq}[box=\widefbox]{align}
g_{i}\ket{G} = \ket{G}, \quad \text{for }i \in \{1, \dots, N\}.
\end{empheq}
Notice the important property that
\begin{equation}
\braket{G|g_{i}|G} = 1 \quad \text{for }i \in \{1, \dots, N\}.
\end{equation}
An important subset of graph states are \textit{cluster states} which are based on square lattice graphs.  An example of this is the four qubit cluster state
\begin{equation}
\ket{CL_{4}} = \frac{1}{2}(\ket{0000} + \ket{0011} + \ket{1100} - \ket{1111}).
\end{equation}

Our third and final example of multipartite entangled pure states are the \textit{Dicke states} which are physically associated with the light emission of a cloud of atoms in excited states.  In relation to quantum information science, the most important are the \textit{symmetric Dicke states}, which for $N$ qubits and and $k$ excitations is given by
\begin{empheq}[box=\widefbox]{align}
\ket{D_{k, N}} = \begin{pmatrix}
N \\
k 
\end{pmatrix}^{-\frac{1}{2}} \sum_{j} P_{j} \Bigg\{ \ket{1}^{\otimes k} \otimes \ket{0}^{\otimes N-k}  \Bigg\},
\end{empheq}
where $\sum_{j} P_{j} \{\dots \}$ represents the sum over all possible permutations of the qubits.  An example of such a Dicke state is the \textit{$W$ state} which is the symmetric state of $N$ particles with a single excitation,
\begin{equation}
\ket{W_{n}} = \frac{1}{\sqrt{n}}(\ket{0 \dots 01} + \dots \ket{10 \dots 0}).
\end{equation}

\textbf{d) Detection: } An important question is how do we show that a state is entangled?  For bipartite systems, we consider two types of entanglement detection.

The first is known as the Schmidt decomposition.  Suppose we have the pure state
\begin{equation}
\ket{\psi} = \sum_{i,j=1}^{d_{A}, d_{B}} c_{ij} \ket{a_{i}b_{j}},
\end{equation}  
which is a state vector in the space $H_{A} \otimes H_{B}$.  Moreover, we have an associated  $d_{A} \times d_{B}$ matrix $C$ consisting of the complex numbers $c_{ij}$. Then the Schmidt decomposition states that there exists an orthonormal basis $\ket{\alpha_{i}}$ of $H_{A}$ and an orthonormal basis $\ket{\beta_{j}}$ of $H_{B}$ such that
\begin{equation}
\ket{\psi} = \sum_{k=1}^{R} \lambda_{k} \ket{\alpha_{k}\beta_{k}},
\end{equation}
where $\lambda_{k}$ are positive real coefficients.  The values of $\lambda_{k}$ are the unique square roots of the eigenvalues of the matrix $CC^{\dagger}$.  The number $R \leq \text{min}\{d_{A}, d_{B}\}$ is known as the Schmidt rank of $\ket{\psi}$.  Pure product states correspond to states of Schmidt rank one.  If it is greater than one, then the state is entangled.

The second method is known as the Bell inequality or more precisely the CHSH inequality \cite{clauser1969proposed}.  Suppose we have a bipartite system, composed of $A$ and $B$, in which each subsystem can be measured in two quantities;  for system $A$, this is denoted by $A_{1}$ and $A_{2}$ and similarly for system $B$, we have $B_{1}$ and $B_{2}$; each can take either value $+1$ or $-1$.  The CHSH inequality states that 
\begin{equation}\label{CHSH}
\langle A_{1}B_{1} \rangle  + \langle A_{2}B_{1} \rangle + \langle A_{2}B_{2} \rangle - \langle A_{1}B_{2} \rangle \leq 2.
\end{equation}  
We will see in Chapter \ref{chap: QFound} that the violation of this result has profound implications for fundamental physics.  However from an operational perspective, the violation of this inequality (and its generalization) detects all pure entangled states.  More precisely, for any entangled pure state it is possible to find local measurements such that it violates the CHSH inequality.  Furthermore, the only states that do not violate it are product states.  To see an explicit example, consider the entangled state 
\begin{equation}\label{CHSHviolationBell}
\ket{\Psi^{-}} = \frac{1}{\sqrt{2}}(\ket{01} - \ket{10}), 
\end{equation}
and let 
\begin{equation}
A_{1} = \sigma_{z}, \quad A_{2} = \sigma_{x}, \quad B_{1} = \frac{-\sigma_{z} - \sigma_{x}}{\sqrt{2}}, \quad B_{2} = \frac{\sigma_{z} - \sigma_{x}}{\sqrt{2}}. 
\end{equation}
From this we can compute the expectation values of each observable through (\ref{quantumexpectation}).  We find that the violation of (\ref{CHSH}) occurs by the left hand side of the inequality equating to $2\sqrt{2}$.  

For the case of multiple subsystems, it can be shown that all pure entangled $N$-partite states violate a generalization of this Bell inequality \cite{brunner2014bell, popescu1992generic}.

\subsection{Through density operators}

\textbf{a) Bipartite definition: } Expressing the definition of entanglement through density operators allows the property to be extended to mixed states.  We begin by constraining our attention to the bipartite case, with the subsystems labelled $A$ and $B$.  Suppose we have the density operator
\begin{equation}
\sigma = \sum_{i}p_{i}\ket{\phi_{i}}\bra{\phi_{i}},
\end{equation}
where the state of the the system is known to be in one of $\ket{\phi_{i}} \in H = H_{A} \otimes H_{B}$ with respective classical probabilities $p_{i}$.  In the literature regarding entanglement, it is often the case that the probabilities which satisfy 
\begin{equation}
p_{i} \geq 0, \quad \quad  \sum_{i}p_{i} = 1,
\end{equation}
are referred to as convex weights; this terminology stems from a geometric interpretation.  Moreover, a convex combination of density operators $\sigma_{i}$ refers to the quantity
\begin{equation}
\sum_{i} p_{i} \sigma_{i}.
\end{equation}
We say that $\sigma$ is a \textit{product state} if there exists state $\sigma^{A}$ for subsystem $A$, and state $\sigma^{B}$ for subsystem $B$, such that
\begin{empheq}[box=\widefbox]{align}\label{bipartitedensityproductstate}
\sigma = \sigma^{A}\otimes \sigma^{B}.
\end{empheq}
The density operator $\sigma$ is called \textit{separable} if there exists convex weights $p_{i}$ and product states $\sigma_{i}^{A} \otimes \sigma_{i}^{B}$ such that
\begin{empheq}[box=\widefbox]{align}\label{bipartitedensityseparable}
\sigma = \sum_{i} p_{i} \sigma_{i}^{A}\otimes \sigma_{i}^{B}.
\end{empheq}
Otherwise the density operator $\sigma$ is referred to as \textit{entangled}.

\textbf{b) Multipartite definition: } For an $N$-partite system, a density operator $\sigma$ is \textit{fully separable} if it can be written as a convex combination of pure fully separable states
\begin{empheq}[box=\widefbox]{align}\label{multidensityseparable}
\sigma = \sum_{i} p_{i} \ket{\phi_{i}^{fs}}\bra{\phi_{i}^{fs}},
\end{empheq}
which can also be written as
\begin{empheq}[box=\widefbox]{align}\label{multidensityseparable2}
\sigma = \sum_{k} p_{k} \sigma_{k}^{(1)} \otimes \sigma_{k}^{(2)} \otimes \dots \sigma_{k}^{(N)}. 
\end{empheq}
A density operator is called \textit{m-separable}, where $1 < m < N$, if it can be written as a convex combination of pure $m$-separable states.  The density operator is said to be \textit{N-partite entangled} when it is neither fully separable, nor $m$-separable for any $m > 1$.

\textbf{c) Implications: } Through the qubit framework, we witnessed some non-trivial properties regarding entanglement best exemplified through the Bell state (\ref{Bellstateshock}).  Other than extending the definition of entanglement to mixed states, density operators provide a widely different perspective on the puzzling nature of entanglement.  To elaborate on this point, consider once again the Bell state (\ref{Bellstateshock}).  This can be expressed through the density operator
\begin{align}
\rho &= \Bigg( \frac{\ket{00} + \ket{11}}{\sqrt{2}} \Bigg)\Bigg( \frac{\bra{00} + \bra{11}}{\sqrt{2}} \Bigg) \\
&= \frac{\ket{00}\bra{00} + \ket{11}\bra{00} + \ket{00}\bra{11} + \ket{11}\bra{11}}{2}.
\end{align}
One can compute the reduced density operator (\ref{reduceddensityoperator}) of the first qubit as
\begin{align}
\rho^{1} &= \text{tr}_{2}(\rho) \\
&= \frac{\text{tr}_{2}(\ket{00}\bra{00}) + \text{tr}_{2}(\ket{11}\bra{00}) + \text{tr}_{2}(\ket{00}\bra{11}) + \text{tr}_{2}(\ket{11}\bra{11})}{2} \\
&= \frac{\ket{0}\bra{0}\braket{0|0} + \ket{1}\bra{0}\braket{0|1} + \ket{0}\bra{1}\braket{1|0} + \ket{1}\bra{1}\braket{1|1}}{2} \\
&= \frac{\ket{0}\bra{0} + \ket{1}\bra{1}}{2} \\
&= \frac{1}{2}I.
\end{align}
The result is that we obtain a maximally mixed state (\ref{completelymixedstate}) for its subsystem.  We can verify this by computing $\text{tr}((I/2)^{2}) = 1/2 < 1$.  Of more interest is the interpretation of this computation.  This result is truly perplexing in that the joint state of the system is known \textit{exactly} ($\rho$ is a pure state), and yet at the \textit{at the same time}, we do not have maximal knowledge about its subsystem ($\rho^{A}$ is a mixed state)!  If there was a time interval involved, then perhaps such a property could be explained by a loss or transfer of information among the systems during some period of time.  Hence, it is precisely the \textit{lack of a time interval involved that makes this interdependence among the system and its subsystems shocking}.  

More broadly speaking, a pure bipartite state is said to be \textit{maximally entangled} if the reduced density matrix on either system is maximally mixed.

\textbf{d) Detection: } Detecting entanglement in mixed states is non-trivial.  One way to articulate this is that the the test of the Bell inequality or CHSH inequality (\ref{CHSH}) fails for some entangled mixed states; they do not violate the inequality.  An example of such mixed states are a subset of the Werner states
\begin{equation}\label{Wernerstates}
\sigma_{W} = F\ket{\Psi^{-}}\bra{\Psi^{-}} + \frac{1-F}{3}(\ket{\Psi^{+}}\bra{\Psi^{+}} + \ket{\Phi^{+}}\bra{\Phi^{+}} + \ket{\Phi^{-}}\bra{\Phi^{-}}),
\end{equation}
where we have used Bell states (\ref{Bellstatebasis}) and (\ref{Bellstatebasis2}), and where $0 \leq F \leq 1$.  When $F> 0.5$, the density operator $\sigma_{w}$ is entangled, and yet these mixed states only violate the Bell inequality when $F>0.78$.

From such an example, it becomes readily apparent that one needs a new set of theoretical tools.  However the question of whether a given density operator is separable or entangled has no known general solution.  This problem is called the separability problem.  The challenge in mixed states is in detecting the quantum interdependence while ignoring the classical interdependence.  Nevertheless we introduce two methods that succeed for certain scenarios.

For the case of bipartite entanglement, there is a tool known as the PPT criterion which is also known as the Peres-Horodecki criterion.  Suppose we have a density operator for a composite system and this is expanded in terms of a product basis such that
\begin{equation}
\sigma = \sum_{i,j}^{N}\sum_{k,l}^{M}\sigma_{ij, kl}\ket{i}\bra{j} \otimes \ket{k}\bra{l}.
\end{equation}
We define the partial transposition of $\sigma$ as the transposition with respect to one of its subsystems.  An example is that the partial transposition with respect to subsystem $A$ is written as
\begin{equation}\label{PPT}
\sigma^{T_{A}} =  \sum_{i,j}^{N}\sum_{k,l}^{M}\sigma_{ji, kl}\ket{i}\bra{j} \otimes \ket{k}\bra{l},
\end{equation}
where we have exchanged the indices $i$ and $j$.  In a similar manner, one can define $\sigma^{T_{B}}$ by exchanging $k$ and $l$.  Moreover, a density operator $\sigma$ is said to have a PPT (positive partial transpose) if its partial transposition has no negative eigenvalues.  It is important to note that the spectrum of the density matrix does not depend on what product basis the density operator was expanded in.  

The PPT criterion states that if $\sigma$ is a bipartite separable state, then $\sigma$ is PPT.  Hence, this provides us with a method to detect entanglement.  If for a given density matrix, we compute the partial transpose with its spectrum and obtain negative eigenvalues, then the state is entangled.  However, this method does not provide a general sufficient criteria for separability.  Nevertheless, we can see its utility on detecting the entanglement such as for the case of Werner states (\ref{Wernerstates}).  We have 
\begin{equation}
\sigma_{W} = F\ket{\Psi^{-}}\bra{\Psi^{-}} + \frac{1-F}{3}(\ket{\Psi^{+}}\bra{\Psi^{+}} + \ket{\Phi^{+}}\bra{\Phi^{+}} + \ket{\Phi^{-}}\bra{\Phi^{-}}),
\end{equation}
and we can compute the partial tranposition with respect to subsystem $B$ in the following manner.  For the Bell states we obtain
\begin{align}
\ket{\Phi^{+}}\bra{\Phi^{+}}^{T_{B}} &= \frac{1}{2}(\ket{00}\bra{00} + \ket{01}\bra{10} + \ket{11}\bra{00} + \ket{11}\bra{11}), \\
\ket{\Phi^{-}}\bra{\Phi^{-}}^{T_{B}} &= \frac{1}{2}(\ket{00}\bra{00} - \ket{01}\bra{10} - \ket{10}\bra{01} + \ket{11}\bra{11}), \\
\ket{\Psi^{+}}\bra{\Psi^{+}}^{T_{B}} &= \frac{1}{2}(\ket{01}\bra{01} + \ket{00}\bra{11} + \ket{11}\bra{00} + \ket{10}\bra{10}), \\
\ket{\Psi^{-}}\bra{\Psi^{-}}^{T_{B}} &= \frac{1}{2}(\ket{01}\bra{01} - \ket{00}\bra{11} - \ket{11}\bra{00} + \ket{10}\bra{10}).
\end{align}
From this we can compute
\begin{equation}
\frac{1-F}{3}(\ket{\Psi^{-}}\bra{\Psi^{-}} + \ket{\Psi^{+}}\bra{\Psi^{+}})^{T_{B}} = \frac{1-F}{3}(2\ket{01}\bra{01} + 2\ket{10}\bra{10}),
\end{equation}
and 
\begin{align}
\Bigg(\frac{3F}{3}\ket{\Phi^{-}}\bra{\Phi^{-}} + \frac{1-F}{3}\ket{\Psi^{+}}\bra{\Psi^{+}}\Bigg)^{T_{B}} &= \frac{1}{3}((2F+1)(\ket{00}\bra{00} + \ket{11}\bra{11}) \\
&+ (4F-1)(\ket{01}\bra{01} + \ket{10}\bra{01})). \nonumber
\end{align}
Combining these quantities, the partial transpose of $\sigma$ in a matrix can be obtained as
\begin{equation}
\sigma^{T_{B}} = \frac{1}{3}
\begin{pmatrix}
	2F+1 & 0 & 0 & 0 \\
	0 & 2-2F & 4F-1 & 0 \\
	0 & 4F-1 & 2-2F & 0 \\
	0 & 0 & 0 & 2F+1 
\end{pmatrix}
\end{equation}
with eigenvalues equating to $(2F+1)$ and to $(3-6F)$. Therefore, we can correctly identify that entanglement occurs when $F> 0.5$, as this results in $(3-6F)$ becoming negative.  This is in contrast to the Bell inequality which is only violated when $F>0.78$.

Another partial solution to the separability problem are through what are known as entanglement witnesses.  These are widely used in experimental settings.  Theoretically, these are Hermitian operators (observables) that assist in determining whether a density operator is entangled or not.  More formally, an observable $W$ is defined as an entanglement witness if 
\begin{align}
\text{tr}(W \sigma_{s}) &\geq 0 \quad \text{for all separable }\sigma_{s}, \\
\text{tr}(W \sigma_{e}) &< 0 \quad \text{for at least one entangled }\sigma_{e}.
\end{align}
The underlying mathematical reasoning is based on the Hahn-Banach theorem regarding Hilbert spaces.  Physically what is important is that for any entangled state $\sigma_{e}$ there always exists an entanglement witness that detects it.  However, constructing an entanglement witness is a difficult problem.  One construction of an entanglement witness is given by
\begin{equation}
W = \alpha I - \ket{\psi}\bra{\psi},
\end{equation}     
where $\ket{\psi}$ represents an entangled pure state, and where the value of $\alpha$ is specific to the case in question.  As an example, in the tripartite case an entanglement witness for GHZ is given by
\begin{equation}
W_{GHZ_{N}} = \frac{3}{4}I - \ket{GHZ_{3}}\ket{GHZ_{3}},
\end{equation}
where for mixed states $\sigma$ we have
\begin{align}
\text{tr}(W_{GHZ_{N}}\sigma) &< 0 \quad \rightarrow \quad \sigma \text{ is in the GHZ class}, \\
\text{tr}(W_{GHZ_{N}}\sigma) &\geq 0 \quad \rightarrow \quad \sigma \text{ is not detected.} 
\end{align}

\subsection{Through entropy}

\textbf{a) Definition: } Another interpretation of the von Neumann entropy (\ref{vonneumann}) is in relation to entanglement.  More precisely, suppose we have a bipartite system with the subsystems labelled $A$ and $B$.  Moreover, let $\ket{AB}$ denote a pure state of this composite system.  Then $\ket{AB}$ is \textit{entangled} if and only if 
\begin{empheq}[box=\widefbox]{align}\label{vonneumannentangled}
S(A|B) < 0,
\end{empheq}
where we have used the conditional von Neumann entropy (\ref{quantumconditionalentropy}), which we rewrite here as 
\begin{equation}\label{quantumconditionalentropy2}
S(A|B) \equiv S(A,B) - S(B). 
\end{equation}

\textbf{b) Implications: } We aim to examine two properties regarding entangled states from the perspective of entropy. 

The first is that the inequality (\ref{vonneumannentangled}) implies that for entangled states
\begin{equation}
S(B) > S(A, B),
\end{equation} 
which means the uncertainty about the subsystem $B$ is greater than the uncertainty of the composite system $AB$.  This characteristic was expressed earlier through our analysis via density operators.  However, the implications of this entropic inequality are far more interesting when we consider the strong subadditivity inequality (\ref{strongsubadditivity}).  For a tripartite system this can be written as 
\begin{equation}
S(A, B, C) + S(B) \leq S(A, B) + S(B, C),
\end{equation}
which can be shown to be equivalent to
\begin{equation}
S(A) + S(B) \leq S(A,C) + S(B,C).
\end{equation}
For entangled systems, it is possible to obtain counter-intuitive results such as $S(A) > S(A, C)$ or $S(B) > S(B,C)$.  However we see that the strong subadditivity constrains this freedom in that both of these cases cannot be true \textit{at the same time}.  Hence the \textit{lack of a time interval in this tripartite scenario makes the interdependence among these three quantum systems extremely non-trivial}.

The second property we wish to consider is how entanglement may influence the entropic uncertainy relation (\ref{entropicuncertainty}).  We refer the reader to \cite{coles2017entropic} for a detailed analysis.  To briefly see this, we rewrite the entropic uncertainty relation as
\begin{equation}\label{entropicuncertaintyrelation2}
H(X) + H(Z) \geq \log \frac{1}{c}.
\end{equation}
More specific to the scenario is how would the uncertainty relation be modified if one is able to have access to entangled states.  These would serve as memory or side information that assists in predicting the results of the measurement of $X$ and $Z$.  To answer this we need to introduce what is known as a classical-quantum state which is a classical register $X$ correlated with a quantum memory $B$, modelled by density operator
\begin{equation}
\rho_{XB} = \sum_{x} p(x) \ket{x}\bra{x} \otimes \rho_{B}^{x}.
\end{equation}
Note that $p(x)$ refers to the probability distribution associated with $X$, and $\rho_{B}^{x}$ is the quantum state of the memory conditioned on the classical register taking value $X=x$.  From this quantity, we can compute the classical-quantum entropy which is the von Neumann entropy of $X$ conditioned on $B$,
\begin{equation}\label{classicalquantumentropy}
S(X|B) \equiv S(\rho_{XB}) - S(\rho_{B}),
\end{equation}
where 
\begin{equation}
\rho_{B} = \text{tr}_{X}(\rho_{XB}) = \sum_{x}p(x)\rho_{B}^{x}.  
\end{equation}
The classical-quantum entropy (\ref{classicalquantumentropy}) is a specific form of the conditional von Neumann entropy (\ref{quantumconditionalentropy2}).  From these constructions, one can prove the following entropic uncertainty relation 
\begin{equation}\label{entropicuncertaintyrelationsentangled}
S(X|B) + S(Z|B) \geq \text{log }\frac{1}{c} + S(A|B),
\end{equation}
for bipartite quantum state $\rho^{AB}$, for observables $X$ and $Z$, and 
where $c$, as in (\ref{entropicuncertaintyrelation2}), is the maximum value of the possible quantities, $c_{xz} = \lvert \braket{x|z} \rvert^{2}$, where 
\begin{equation}
0 \leq \text{log }\frac{1}{c} \leq \text{log }d.
\end{equation}
Both classical-quantum conditional entropies $S(X|B)$ and $S(Z|B)$ quantify the uncertainty of $X$ and $Z$ given that one has access to quantum memory $B$.  For a maximally entangled state it can be shown that $S(A|B) = -\log d$ where $d$ is the dimensionality of the respective Hilbert space.  Hence we have 
\begin{equation}
\log \frac{1}{c} + S(A|B) = \log \frac{1}{c} - \log d \leq 0.
\end{equation} 
To interpret this result, recall the guessing game between Alice and Bob associated with (\ref{entropicuncertainty}). If we allow Bob access to a maximally entangled quantum memory, then it can be shown that Bob can win the game with probability one.  This highlights how entanglement allows one to perform tasks that would be impossible to carry out with only classical resources.

Finally suppose we have a tripartite system $ABC$ represented by density operator $\rho_{ABC}$.  Moreover we have associated observables $X$ and $Z$.  Then it can be shown that 
\begin{equation}
S(X|B) + S(Z|C) \geq \log d,
\end{equation}  
where $d$ is the dimension of the Hilbert space associated with subsystem $A$.  More generally, one can obtain    
\begin{equation}\label{tripartiteentropicrelation}
S(X|B) + S(Z|C) \geq \text{log }\frac{1}{c},
\end{equation}  
where $c$ is defined as in (\ref{entropicuncertaintyrelation2}).

\textbf{c) Measures: } Through the qubit and density operator framework, we were introduced to methods that detected whether a state was entangled or not.  Using entropic concepts, we can develop tools to quantify the amount of entanglement in an entangled state.  Such tools are known as entanglement measures.  We expect for a density operator, $\sigma$, an entanglement measure, denoted $E(\sigma)$, satisfies the following properties:
\begin{enumerate}[noitemsep, topsep=0pt, label=\roman*)]
	\item For a separable state $\sigma$, we have $E(\sigma) = 0$.
	\item It is invariant under unitary transformation, that is 
	\begin{equation}
	E(\sigma) = E(U_{A} \otimes U_{B} \sigma U_{A}^{\dagger} \otimes U_{B}^{\dagger})
	\end{equation}
	for a unitary transformation of the form $U_{A} \otimes U_{B}$.
	\item $E(\sigma)$ should not increase under an LOCC operations.
\end{enumerate}
We briefly list four common entanglement measures discussed in the literature.  Our focus is on the bipartite case (labelled $AB$), and how they are related to entropic concepts.  

The first of these is the entanglement of formation which for density operator $\sigma$ is written as
\begin{equation}
E_{F}(\sigma) \equiv \text{min }\sum_{i}p_{i} S(\sigma_{i}^{A}),
\end{equation}
where we use von Neumann entropy
\begin{equation}
S(\sigma^{A}) = -\text{tr }\sigma^{A} \log \sigma^{A}.
\end{equation}
The minimum is over all possibilities of state
\begin{equation}
\sigma^{AB} = \sum_{j}p_{j}\ket{\psi_{j}}\bra{\psi_{j}},
\end{equation}
where 
\begin{equation}
\sigma_{i}^{A} = \text{tr}_{B}(\ket{\psi_{i}}\bra{\psi_{i}}).
\end{equation}
It can be interpreted as the minimum number of maximally entangled states that is required to to obtain a certain number of copies of the given state by LOCC.

The second quantity is known as the entanglement of distillation which for a pure state is given by the von Neumann entropy of the reduced state $\sigma_{A}$,
\begin{equation}
E_{D}(\ket{\psi}) = S(\sigma_{A}) = -\text{tr} (\sigma_{A} \log \sigma_{A}).
\end{equation}
It can be interpreted as the number of maximally entangled states that can be derived from an initial number of non-maximally entangled states using LOCC.

Another useful measure is known as the relative entropy of entanglement which is defined as
\begin{equation}
E_{R}(\sigma) \equiv \underset{\varrho \in D}{\text{min }} S(\sigma || \varrho),
\end{equation}
where $S(\sigma || \varrho)$ is the von Neumann relative entropy (\ref{quantumrelativeentropy}), and $D$ is the set of all disentangled states.  It quantifies the amount of entanglement through a distance measure.

Finally the concurrence for a pure state is given by
\begin{equation}\label{concurrence}
C(\ket{\psi}) = \sqrt{2(1-\text{tr}(\sigma_{A}^{2}))},
\end{equation} 
where $\sigma_{A}$ is the reduced subsystem of $\ket{\psi}$.  For the two qubit case, the concurrence is related to the entanglement of formation
\begin{equation}
E_{F}(\sigma) = h\Bigg(\frac{1 + \sqrt{1-C^{2}(\sigma)}}{2}\Bigg),
\end{equation}
where we use binary version of the Shannon entropy, $h(p) = -p \log p - (1-p) \log (1-p)$.

\section{Application: Quantum Communication}

Entanglement in space can be seen as a resource in quantum information in that it allows the ability to perform information tasks that would be impossible or very difficult to do with only classical information.  The three different communication protocols described in this section serve to illustrate this point.  Each protocol is described in the context of two parties, named Alice and Bob, who are some arbitrary distance apart.  More crucially, each share a qubit from a spatial Bell state.  It is also common in these protocols to design a code that relates the classical and quantum information.  These applications are instrumental for the construction of a useful quantum communications network \cite{simon2017towards,wehner2018quantum}.  

\subsection{Superdense coding}

\textbf{a) Protocol: }  This information task requires Alice to send two bits of classical information to Bob using a single qubit \cite{bennett1992communication}.  The protocol starts by assuming Alice and Bob share the spatial Bell state
\begin{equation}
\ket{\Phi^{+}} = {1\over\sqrt2}(\ket{0}\ket{0}+\ket{1}\ket{1}).
\end{equation}
Moreover, they have agreed to encode the classical information in the following way:  The bit string $xy$, where $xy=00, 01, 10, 11$ corresponds to Bell state
\begin{equation}\label{superdensecode}
\ket{\beta_{xy}} = \frac{1}{\sqrt{2}}(\ket{0}\ket{y} + (-1)^{x}\ket{1}\ket{\bar{y}}), 
\end{equation}
where $\bar{y}$ is the negation of $y$.

The protocol is as follows: If Alice wants to send bit string $00$, she simply sends her qubit to Bob.  However if Alice wants to send string $01$, she applies the $X$ operator on her qubit before sending it to Bob
\begin{equation}
(X \otimes I)\ket{\psi} = {1\over\sqrt2}(\ket{1}\ket{0}+\ket{0}\ket{1}) = \ket{\beta_{01}}.
\end{equation}
For the case of sending bits $10$, she applies a $Z$ operator,
\begin{equation}
(Z \otimes I)\ket{\psi} = {1\over\sqrt2}(\ket{0}\ket{0}-\ket{1}\ket{1}) = \ket{\beta_{10}}.
\end{equation}
And for the last case of $11$, she applies the $iY$ gate before sending her qubit to Bob
\begin{equation}
(iY \otimes I)\ket{\psi} = {1\over\sqrt2}(\ket{0}\ket{1}-\ket{1}\ket{0}) =\ket{\beta_{11}}.
\end{equation}
Once Bob receives the qubit from Alice, he performs a projective measurement, in the Bell basis on both qubits.  From that, he is able to recover bit string $xy$ from identifying state $\ket{\beta_{xy}}$.

\textbf{b) Comments: }
\begin{enumerate}[noitemsep, topsep=0pt, label=\roman*)]
	\item This information task would be impossible to perform, in the classical case, had Alice only transmitted a single classical bit.  
	\item Superdense coding has recently been experimentally demonstrated within an optical fiber infrastructure \cite{williams2017superdense}.
\end{enumerate}

 \subsection{Quantum teleportation}
 
\textbf{a) Protocol: } The following task \cite{bennett1993teleporting} requires Alice to send a particular set of quantum information to Bob without that information traversing the space between them.  More precisely, Alice wants to send Bob a qubit $\ket{\psi} = \alpha\ket{0} + \beta\ket{1}$, where the values of $\alpha$ and $\beta$ are unknown to both parties.  They both share the Bell state 
\begin{equation}\label{teleportationbellstate}
\ket{\beta_{00}} = {1\over\sqrt2}(\ket{00}+\ket{11}),
\end{equation} 
as well as have access to a classical communications channel which transmits bits.  The initial state of this scenario can written as
\begin{equation}
\ket{\psi_{0}}= \ket{\psi}\ket{\beta_{00}} = {1\over\sqrt2} [\alpha \ket{0}(\ket{00}+\ket{11}) + \beta \ket{1} (\ket{00}+\ket{11})],
\end{equation}
where the first two qubits are in Alice's possession, while the third qubit belongs to Bob.  The first step is that Alice applies a CNOT gate (\ref{CNOT}) to both of her qubits, in which case the state transforms to
\begin{equation}
\ket{\psi_{1}}= {1\over\sqrt2} [\alpha \ket{0}(\ket{00}+\ket{11}) + \beta \ket{1} (\ket{10}+\ket{01})].
\end{equation}
From there, she proceeds to apply a Hadamard gate to her first qubit.  This produces the overall state
\begin{equation}
\ket{\psi_{2}}= \frac{1}{2} [\alpha (\ket{0} + \ket{1})(\ket{00}+\ket{11}) + \beta (\ket{0} - \ket{1}) (\ket{10}+\ket{01})],
\end{equation} 
which can re-written as
\begin{eqnarray}\label{Teleportationpremeasure}
\ket{\psi_{2}}= \frac{1}{2} \Big( \ket{00}(\alpha\ket{0} + \beta\ket{1}) + \ket{01} (\alpha \ket{1} + \beta \ket{0})
\nonumber
\\
+ \ket{10}(\alpha\ket{0} - \beta\ket{1}) + \ket{11} (\alpha \ket{1} - \beta \ket{0}) \Big).
\end{eqnarray}
When Alices measures her qubits, in her computational basis states, she gets one of the results on the left in  (\ref{Alice}).  Bob would then apply the corresponding Pauli operator (\ref{paulioperators}) on his qubit to obtain $\ket{\psi}$:    
\begin{eqnarray}\label{Alice}
00&  \rightarrow &\text{Does nothing},
\nonumber
\\
01& \rightarrow &\text{Applies } \sigma_{x} = \ket{0}\bra{1} + \ket{1}\bra{0}, 
\nonumber
\\
10& \rightarrow &\text{Applies }  \sigma_{z} = \ket{0}\bra{0} - \ket{1}\bra{1},
\nonumber
\\
11& \rightarrow &\text{Applies }   \sigma_{z}\sigma_{x}.
\end{eqnarray}
Bob receives the two bits from Alice in (\ref{Alice}) through the classical channel.  In this case, one view (\ref{Alice}) as a code to relate the classical and quantum information.

\textbf{b) Comments: }
\begin{enumerate}[noitemsep, topsep=0pt, label=\roman*)]
	\item The notion of teleportation can be seen in that the quantum information disappears from Alice's location and re-appears in Bob's location. Of crucial necessity to perform this task is the initial Bell state (\ref{teleportationbellstate}).  This emphasizes the point that entanglement in space can be regarded as a resource in quantum information; it allows us to carry out information tasks that would be impossible to do with only classical resources. 
	\item In Chapter \ref{chap: QInfo} we saw measurement as a process that converts quantum information into classical information.  This protocol alludes to a more general property in that one can also convert the classical information back to the quantum information as long as the quantum measurement does not reveal any information about the state being measured.
	\item Throughout the duration of the protocol, there is always at most one copy of $\ket{\psi}$.  Hence at no time is the no-cloning theorem violated.
	\item When Alice performs a measurement on her qubits in (\ref{Teleportationpremeasure}), the quantum information residing in Bob's qubit is instantaneously affected.  The lack of a time interval involved in this process suggests a violation of relativity.  However, such a concern can be largely alleviated in that Bob still requires the two bits from the classical channel (whose transmission is limited by the speed of light) to obtain $\ket{\psi}$.  The density operator framework clearly illustrates this point:  Each of the outcomes in (\ref{Alice}) from measuring (\ref{Teleportationpremeasure}) occur with probability $1/4$.  Hence the density operator of the system after Alice's measurement is given by 	
\begin{eqnarray}\label{Teleportationpremeasuredensity}
\rho =& \frac{1}{4} \Big( \ket{00}\bra{00}(\alpha\ket{0} + \beta\ket{1})(\alpha^{*}\bra{0} + \beta^{*}\bra{1})   \\ \nonumber
+& \ket{01}\bra{01}(\alpha\ket{1} + \beta\ket{0})(\alpha^{*}\bra{1} + \beta^{*}\bra{0}) \\ \nonumber
+& \ket{10}\bra{10}(\alpha\ket{0} - \beta\ket{1})(\alpha^{*}\bra{0} - \beta^{*}\bra{1}) \\ \nonumber
 +& \ket{11}\bra{11}(\alpha\ket{1} - \beta\ket{0})(\alpha^{*}\bra{1} - \beta^{*}\bra{0}) \Big).
\end{eqnarray}
By using (\ref{reduceddensityoperator}), we obtain the reduced density operator of Bob's system which can be computed as
\begin{eqnarray}\label{TeleportBobsystem}
\rho^{B} =& \frac{1}{4} \Big( (\alpha\ket{0} + \beta\ket{1})(\alpha^{*}\bra{0} + \beta^{*}\bra{1})   \\ \nonumber
+& (\alpha\ket{1} + \beta\ket{0})(\alpha^{*}\bra{1} + \beta^{*}\bra{0}) \\ \nonumber
+& (\alpha\ket{0} - \beta\ket{1})(\alpha^{*}\bra{0} - \beta^{*}\bra{1}) \\ \nonumber
+& (\alpha\ket{1} - \beta\ket{0})(\alpha^{*}\bra{1} - \beta^{*}\bra{0}) \Big).
\end{eqnarray}
One can simplify this expression to
\begin{align}
\rho^{B} &= \frac{2(\lvert \alpha \rvert^{2} + \lvert \beta \rvert^{2})\ket{0}\bra{0} + 2(\lvert \alpha \rvert^{2} + \lvert \beta \rvert^{2})\ket{1}\bra{1}}{4} \\
&= \frac{\ket{0}\bra{0} + \ket{1}\bra{1}}{2} \\
&= \frac{I}{2}.
\end{align}
This means that prior to receiving the classical measurement results from Alice, the state appears totally random to Bob.  Nevertheless, there is an instantaneous effect across space on the quantum information held by Bob when Alice makes a measurement.  This remaining issue would be resolved where a time interval introduced into that process. 

\item  Quantum teleportation has been demonstrated experimentally, most recently from a ground station to a space-based satellite \cite{ren2017ground}
\end{enumerate}

\textbf{c) Monty Hall teleportation: } The teleportation protocol has been extended to probabilistic scenarios~\cite{li2000probabilistic, lu2000teleportation, agrawal2002probabilistic}.  In this section, we present a probabilistic version \cite{rajan2019quantum} of quantum teleportation that is \textit{part of the original component of this thesis} (which was done in collaboration with my supervisor).  We  modify the standard teleportation protocol into the  Monty Hall game which was described in detail in Chapter \ref{chap: classical}.  Alice can be viewed as Monty, and Bob as the contestant.  The four doors are respectively labelled $(00, 01, 10, 11)$.
This coincides with Alice's possible measurement results in (\ref{Alice}); the prize door is Alice's actual result, whose bits we denote $ab$, and what Bob would need get the desired state $\ket{\psi}$. The contestant's initial choice of door would be equivalent to what Bell state was used at the start of the protocol. In this modification, the contestant is allowed to choose any of the four doors $(00, 01, 10, 11)$, which we denote $xy$.  This event coincides with using Bell state
\begin{equation}
\ket{\beta_{xy}} = \frac{1}{\sqrt{2}}(\ket{0}\ket{y} + (-1)^{x}\ket{1}\ket{\bar{y}}), 
\end{equation}
where $\bar{y}$ is the negation of $y$.
As an example, if the contestant chooses door $10$, then a way to implement this is that Bob applies the operator $(\sigma_{0} \otimes \sigma_{z})\ket{\beta_{00}} = \ket{\beta_{10}}$, and communicates that to Alice; the last step would be analogous to Monty being aware of what door the contestant chooses.
In this modified protocol, the initial state is represented as
\begin{align}
\ket{\psi}\ket{\beta_{xy}} =& \frac{1}{\sqrt{2}}(\alpha\ket{0} + \beta\ket{1})  (\ket{0}\ket{y} + (-1)^{x}\ket{1}\ket{\bar{y}})  \\ \nonumber
=& \frac{\alpha(\ket{00y} + (-1)^{x}\ket{01\bar{y}}) + \beta(\ket{10y} + (-1)^{x}\ket{11\bar{y}})}{\sqrt{2}}.
\end{align}
After Alice applies a CNOT gate to her qubits, the state can be found in
\begin{equation}
\frac{\alpha(\ket{00y} + (-1)^{x}\ket{01\bar{y}}) + \beta(\ket{11y} + (-1)^{x}\ket{10\bar{y}})}{\sqrt{2}}. 
\end{equation} 
This is equivalent to 
\begin{equation}
\frac{\alpha\ket{0}(\ket{0y}+(-1)^{x}\ket{1\bar{y}})}{\sqrt{2}} + \frac{\beta\ket{1}(\ket{1y}+(-1)^{x}\ket{0\bar{y}})}{\sqrt{2}}.
\end{equation}
Alice proceeds to apply the relevant Hadamard gate which provides the result
\begin{eqnarray}
\frac{1}{2} \Big( \ket{00}(\alpha\ket{y} + \beta(-1)^{x}\ket{\bar{y}}) + \ket{01} (\alpha (-1)^{x} \ket{\bar{y}} + \beta \ket{y})
\nonumber
\\
+ \ket{10}(\alpha\ket{y} - \beta(-1)^{x}\ket{\bar{y}}) + \ket{11} (\alpha (-1)^{x} \ket{\bar{y}} - \beta \ket{y}) \Big).
\nonumber
\end{eqnarray}
At this step, Alice measures her qubits to get her result.  If Alice's result is $ab=xy$, meaning it coincides with the Bell state used $\ket{\beta_{xy}}$, then Bob has to do nothing and he has the desired state $\ket{\psi}$ (the exception is if the initial Bell state used was $\ket{\beta_{11}}$ in which case Bob has to apply operator ($-\sigma_{0}$) to get $\ket{\psi}$ if result is $11$).  This is why the contestant's initial choice relates to the Bell state used. As an example, if the initial Bell state was $\ket{\beta_{01}}$ and Alice's measurement outcome was bits $01$, then Bob's state is automatically in $\ket{\psi}=\alpha\ket{0} + \beta\ket{1}$.

In this Monty Hall protocol, Alice sends Bob two bits as in (\ref{Alice}) with the following modification: she chooses two bits denoted $cd$  (ie goat door) that are not $xy$ (ie contestant's initial choice) and are not $ab$ (ie prize door). Should Bob do nothing, or apply one of the possible operators (which depend on what Bell state was used) to get $\ket{\psi}$ ie should the contestant stick or switch?

To answer this, let $B_{xy}$ be the door chosen by contestant.  For this example, assume we use $\ket{\beta_{00}}$, hence $P(B_{00})=1$.  Let $A_{ab}$ be the prize door and due to Born probabilities we have $P(A_{ab})= 1/4$.  Let $C_{cd}$ be the goat door opened by Monty whose probabilities, from the protocol description, work out as:   
\begin{equation}
P(C_{cd} \, | \, B_{00} , A_{ab}) = 
\begin{cases}
\frac{1}{3},& \text{if } 00 = ab \neq cd,\\
\frac{1}{2},& \text{if } 00 \neq ab \neq cd, \\
0, & \text{otherwise}.
\end{cases}
\end{equation}
If Bob always does nothing (ie, stick strategy), then 
\begin{equation}
P(\text{win if stick}) = \sum_{ab = 00 \neq cd} P(A_{ab} , B_{00} , C_{cd}) =  \frac{2}{8}.
\end{equation}
Suppose Bob decides to always apply one of the two operators (ie, switch strategy). 
Then there are one of two possibilities which we denote $ef$ and given its a random choice, each occur with probability $1/2$.  Let $D_{ef}$ represent that door, and $P(\text{win if switch})$ is
\begin{equation} \sum_{ab = ef \neq cd \neq 00} P(A_{ab} , B_{00} , C_{cd} , D_{ef}) =  \frac{3}{8}.
\end{equation}
This means Bob should apply one of the two operators (switch) rather than do nothing (stick) to get state $\ket{\psi}$.

\textbf{d) Unreliable teleportation: }  The effect of noise has been widely analyzed for teleportation~\cite{fortes2015fighting, fortes2016probabilistic, knoll2014noisy, carlo2003teleportation, kumar2003effect}.  In this part, we present a second modification \cite{rajan2019quantum} of quantum teleportation, involving noise, that is \textit{part of the original component of this thesis} (which was done in collaboration with my supervisor).  Consider the standard teleportation protocol with the following unreliability: one of the two bits (either the first or second) Alice sends to Bob in (\ref{Alice}) is received but the other is lost; each event occurs with probability $1/2$.  If the initial Bell state is $\ket{\beta_{00}}$ and Alice's result is $00$, then Bob can do nothing. But in this scenario, if Bob receives the single bit as $1$, then the possible options are $01, 10, \text{or }11$; in this case he should apply one of the operators (switch).  If Bob receives bit $0$, then his options are $00, 01, 10$.  Should he stick (to $00$) or switch (to $01$ or $10$)?  To answer this, let us use the notation developed in the Monty Hall protocol.

We have $P(B_{00})= 1$ and $P(A_{ab}) = 1/4$ . Let $d$ in $C_{d}$ be the single bit received by Bob; based on the scenario described above, we have $P(C_{0} \, | \, B_{00} \, A_{00}) = 1,$ $P(C_{0} \, | \, B_{00} , A_{01}) = 1/2$, and $P(C_{0} \, | \, B_{00} , A_{10}) = 1/2$.  We can compute the probability that Bob receives bit $0$:
\begin{equation*}
P(\text{received bit 0}) = \sum_{ab \neq 11}  P(C_{0} , B_{00} , A_{ab}) = \frac{1}{2}.
\end{equation*} 
If Bob decides to always do nothing then this would be like a sticking strategy.  The probability that bit $0$ is received and Bob wins by sticking is given by $P(A_{00} , B_{00} , C_{0}) = 1/4$.  Hence we can compute the conditional probability:
\begin{equation}
P(\text{win if stick} \, | \, \text{received bit 0}) = \frac{1/4}{1/2} = \frac{1}{2}.
\end{equation}  
If an always switching strategy is adopted, then there are two possibilities ($01$ or $10$) each occuring with probability $1/2$.  In this case probability of winning if switched and bit $0$ is received is given by $P(A_{01} , B_{00} , C_{0} , D_{01}) + P(A_{10} , B_{00} , C_{0} , D_{10}) = 1/8$. With that we compute,
\begin{equation}
P(\text{win if switch} \, | \, \text{received bit 0}) = \frac{1/8}{1/2} = \frac{1}{4}. 
\end{equation}
It is an advantage to stick ie Bob should do nothing.  This strategy may serve to be useful in an error-correcting design for reliability issues in practical quantum networks~\cite{simon2017towards, ren2017ground}

\subsection{Quantum cryptography}

\textbf{a) Preliminaries: }  The secure exchange of messages in classical communications is mainly carried out using public key cryptography, which was described in Chapter \ref{chap: classical}.  However, as we shall show in the next section on quantum computing, a dramatic result is that a scalable quantum computer would be able to break public key cryptography by solving prime factorization.  This discovery has radically changed the landscape of cryptographic research.  There are various investigations that aim to build information security systems based on mathematical problems that many believe a quantum computer would not be able to solve.  These are collectively referred to as post-quantum cryptography \cite{chen2016report, alagic2019status}.  Perhaps the greatest drawback of this set of solutions is that their durability can be questioned; it may be the case that one finds a way for a quantum computer to solve such problems in the future; more precisely stated, there are no formal proofs that such solutions are secure against a quantum computing attack.

Besides public key cryptography, another classical method to encrypt and decrypt messages is through private key cryptography like the one-time pad.  In this scenario, Alice and Bob each possess an identical copy of a random string of bits known as the private key.  More crucially, only they are aware of the key values and keep them in secret.  As long as the key is kept in secrecy, this method is is shown to be provably secure.  When Alice wants to transmit a secure message to Bob, she encrypts the message using this private key by adding the random key bits to the message.  When Bob receives the encrypted message, he decrypts it by subtracting the key bits, using his own copy of the private key.  Hence the problem of transmitting secure messages can be reduced to the problem of how can Alice and Bob acquire pre-established perfectly correlated random keys that an adversary would not be able to acquire.  This subroutine can be accomplished using quantum information, and is known a quantum key distribution (QKD) or quantum cryptography \cite{gisin2002quantum}.  The keys generated by this task are guaranteed to be secure through the properties of quantum information, and hence through the laws of physics; this is in vast contrast to the security of public key cryptography which is based on the difficulty of solving certain mathematical problems.  We discuss two protocols that implement QKD, where the second involves an entanglement in space. 

\textbf{b) BB84 protocol: } In this scenario \cite{bennett1984quantum, mcmahon2007quantum}, our task is for Alice and Bob to acquire identical private keys. There are two communication channels between Alice and Bob. The first is a quantum communication channel that transmits qubits, while the second is a public classical channel for transmitting bits.  In this protocol, we consider the two bases, $\{\ket{0}, \ket{1}\}$ and $\{\ket{+}, \ket{-}\}$.  They are related to one another in the following way: 
\begin{equation}\label{bb84basis1}
\ket{0} = \frac{\ket{+}+\ket{-}}{\sqrt{2}}, \quad \ket{1} = \frac{\ket{+}-\ket{-}}{\sqrt{2}}, 
\end{equation}  
\begin{equation}\label{bb84basis2}
 \ket{+} = \frac{\ket{0}+\ket{1}}{\sqrt{2}}, \quad \ket{-} = \frac{\ket{0}-\ket{1}}{\sqrt{2}}.
\end{equation} 
We also use the encoding that logical $0$ is represented by states $\ket{0}$ and $\ket{+}$, whereas logical $1$ is represented by states $\ket{1}$ and $\ket{-}$.  Alice creates a random string of classical bits.  She encodes this into a corresponding string of qubits using the code.  She sends these qubits through the quantum channel to Bob.  From there, Bob chooses to measure each qubit in either the $\{\ket{0}, \ket{1}\}$ basis or the $\{\ket{+}, \ket{-}\}$ basis; he makes this choice, for each qubit, randomly.  

All the four states, in (\ref{bb84basis1}) and (\ref{bb84basis2}), are not mutually orthogonal, and therefore there is no quantum measurement to distinguish each of them with certainty.  This creates two cases.  If Alice and Bob used the same bases for their respective tasks, then their results are perfectly correlated.  As an example if Alice prepared state $\ket{0}$ and Bob measures in basis $\{\ket{0}, \ket{1}\}$, then he will find state $\ket{0}$ with certainty.  If on the other hand they used different bases, there is a chance of an error.  As an example if Alice prepared state $\ket{0}$ and Bob measures in the $\{\ket{+}, \ket{-}\}$ basis, then there is a probability of $1/2$ of Bob obtaining the incorrect state $\ket{1}$.  From these measurements and the code, Bob obtains a corresponding string of classical bits.   

From there, Alice and Bob proceed to employ the classical channel to tell each other what basis was used at each position.  They discard the bits in their strings where they used a different basis for their respective quantum tasks.  As a consequence, both Alice and Bob end up with perfectly correlated classical private keys whose values are only known to them.

\textbf{c) Security analysis: } Suppose an adversary, named Eve, is attempting to obtain information about the private key.  There are a number of features of quantum information that make this impossible.  We have seen that Alice and Bob use the classical communication channel to share what basis was used at each position.  This information can be public since it cannot be used to infer what the prepared and measured value of the qubit was at the respective position. In regards to the quantum communication channel, Eve cannot copy the qubits transmitted due to the no-cloning theorem.  Even more striking is that it is impossible for Eve gain any information on non-orthogonal qubits without introducing a disturbance on the signal.  This is why we have used non-orthogonal states in the protocol.  More broadly speaking,
\begin{proposition}\label{BB84nodisturbance}
	\textbf{(Information gain implies disturbance)} 
	In any attempt to distinguish between two non-orthogonal quantum states, information gain is only possible at the expense of introducing disturbance to the signal.
	(See e.g. \cite{nielsen2002quantum}.) 
\end{proposition}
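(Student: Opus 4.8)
The plan is to prove the contrapositive: if an interaction designed to extract information about the identity of the transmitted state causes no disturbance, then in fact no information could have been gained. First I would set up the most general model of such an attempt. By the measurement postulates together with the principle that any measurement can be realized through a unitary interaction with an ancilla (probe) followed by a projective measurement on that probe, I may assume the adversary prepares a fixed probe state $\ket{u}$ and applies a unitary $U$ to the joint system of signal-plus-probe. Writing the two candidate signal states as the non-orthogonal pair $\ket{\psi}$ and $\ket{\phi}$, the interaction produces
\begin{align}
U(\ket{\psi} \otimes \ket{u}) &= \ket{\psi} \otimes \ket{v}, \\
U(\ket{\phi} \otimes \ket{u}) &= \ket{\phi} \otimes \ket{v'},
\end{align}
where the ``no disturbance'' hypothesis is precisely the requirement that the signal register is returned unchanged in each case, while the probe acquires a (possibly state-dependent) record $\ket{v}$ or $\ket{v'}$ that the adversary would subsequently measure.

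The next step is to exploit unitarity. Since $U$ preserves inner products and $\ket{u}$ is normalized, taking the inner product of the two lines above gives
\begin{equation}
\braket{\psi|\phi} = \braket{\psi|\phi}\braket{v|v'}.
\end{equation}
Here the non-orthogonality assumption does the essential work: because $\braket{\psi|\phi} \neq 0$, I may divide through to conclude $\braket{v|v'} = 1$. As both probe states are unit vectors (unitarity again preserves the norm of $\ket{u}$), saturation of the Cauchy--Schwarz inequality forces $\ket{v} = \ket{v'}$. Thus the probe ends in exactly the same state regardless of whether $\ket{\psi}$ or $\ket{\phi}$ was present, so any measurement on the probe yields outcome statistics independent of the signal's identity; no information about which state was sent can be extracted. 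Taking the contrapositive yields the proposition.

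The main obstacle I anticipate is not the algebra, which is short, but justifying the modelling step: that the clause ``any attempt to distinguish'' is fully captured by adjoining an ancilla and acting with a unitary, and that ``introducing no disturbance'' is correctly formalized as returning the signal register unchanged for \emph{both} inputs. I would argue that a genuinely non-disturbing interaction cannot leave the signal in a state that depends on the input, since any such dependence is itself a disturbance, and this is what licenses writing the same $\ket{\psi}$ and $\ket{\phi}$ on the right-hand sides. A secondary point worth making explicit is that one could relax the setup to allow the returned signal states to differ from the originals and then quantify the trade-off between information gain and disturbance via the fidelity (\ref{fidelitypure}); the sharp dichotomy proved here is the idealized limit of exactly zero disturbance.
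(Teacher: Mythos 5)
Your proof is correct, and it is essentially the standard argument: the paper itself offers no proof of this proposition, deferring to the cited reference \cite{nielsen2002quantum}, where exactly your unitary-plus-ancilla argument appears (write the non-disturbing interaction as $U(\ket{\psi}\otimes\ket{u}) = \ket{\psi}\otimes\ket{v}$, $U(\ket{\phi}\otimes\ket{u}) = \ket{\phi}\otimes\ket{v'}$, use preservation of inner products and $\braket{\psi|\phi}\neq 0$ to force $\braket{v|v'}=1$, hence $\ket{v}=\ket{v'}$). Your attention to the modelling step — that non-disturbance must mean the signal register is returned unchanged for both inputs, and that the general fidelity trade-off is the non-idealized version — is well placed and matches the treatment in the reference.
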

Hence at the end of the protocol, Alice and Bob select a subset of bits from their final strings to compare the values.  If more than an acceptable number disagree, they abort the protocol and try again.  

\textbf{d) E91 protocol: }  Having introduced the need for correlations in the BB84 protocol, it seems appropriate to ask whether the \textit{interdependence in an entanglement in space} can be used for a QKD protocol?  It turns out that such an answer was first developed in \cite{ekert1991quantum}.  This is known as the E91 protocol. It utilizes the Bell state
\begin{equation}\label{E91Bellstate}
\ket{\beta_{00}} = {1\over\sqrt2}(\ket{00}+\ket{11}),
\end{equation} 
which can be re-written in terms of the $\{\ket{+}, \ket{-}\}$ basis as
\begin{equation}\label{E91Bellstate2}
\ket{\beta_{00}} = {1\over\sqrt2}(\ket{++}+\ket{--}).
\end{equation} 
Once again the task is for Alice and Bob to acquire identical private keys made up of random values.  In this protocol, one qubit from state $\ket{\beta_{00}}$ is held by Alice and the other qubit by Bob.  By considering both (\ref{E91Bellstate}) and (\ref{E91Bellstate2}), it appears that if Alice and Bob measure in the same basis, then their results are perfectly correlated and yet random.  Furthermore, to obtain an appropriate key length, we suppose that Alice and Bob share many copies of $\ket{\beta_{00}}$ and repeat the measurement procedure over many rounds; they use a public classical channel to randomly agree to measure in either basis $\{\ket{0}, \ket{1}\}$ or $\{\ket{+}, \ket{-}\}$.

In the BB84 protocol, the key is fundamentally produced by Alice and sent to Bob (before measurement and the removal of some results).  In the E91 protocol, Alice and Bob can measure their qubits simultaneously and obtain their respective keys.  \textit{The lack of a time interval involved} suggests that the key is not fundamentally distributed, in any way, from one location to another like in BB84.  Rather identical keys are generated at same time at two different locations, and whose values cannot be pre-determined by Alice nor Bob!  

\textbf{e) Security analysis: } We provide a brief outline of a security proof \cite{coles2017entropic, berta2010uncertainty} for E91.  In order to do accomplish this result, we have to show that the following two statements are mutually exclusive:
\begin{enumerate}[noitemsep, topsep=0pt, label=\roman*)]
	\item The measurement results between Alice and Bob agree on most rounds. 
	\item An adversary, whom we can name Eve, possesses a large amount of information on the results of either Alice or Bob.      
\end{enumerate}
In the protocol, we assumed that Alice and Bob share state $\ket{\beta_{00}}$.  However, it may be the case that Eve interfered.  Hence, let $\rho_{ABE}$ represent a density operator where $A$ represents Alice's qubit, $B$ represents Bob's qubit and $E$ signifies any quantum subsystem acquired by Eve.  Let $\Theta$ be a mixed state whose role is to be a binary register which signifies whether the qubits are to be measured in basis $\{\ket{0}, \ket{1}\}$ or $\{\ket{+}, \ket{-}\}$.  Furthermore, let $Y$ denote the measurement results of Alice, and let $\bar{Y}$ denote the measurement results Bob obtains.  Alice and Bob measure their system in the basis as indicated by $\Theta$; we assume that Eve also holds state $\Theta$.  We first consider an analysis on Alice's results.  Using entropic concepts, we obtain
\begin{align}
S(Y|B\Theta) =& \frac{1}{2} \, S(X|B) + \frac{1}{2} \,S(Z|B), \\
S(Y|E\Theta) =& \frac{1}{2}\,  S(X|E) + \frac{1}{2} \, S(Z|E).
\end{align} 
Applying the tripartite entropic uncertainty principle (\ref{tripartiteentropicrelation}) with quantum memory results in,
\begin{equation}\label{entropicuncertaintycrypto}
S(Y|B\Theta) + S(Y|E\Theta) \geq 1,
\end{equation}
given $q_{MU}=1$ for bases $\{\ket{0}, \ket{1}\}$ and $\{\ket{+}, \ket{-}\}$.  In Bob's case, the following result can be derived 
\begin{equation}
S(Y|B\Theta) \leq H(Y|\bar{Y}). 
\end{equation}
This brings us to the final result 
\begin{equation}\label{E91security}
S(Y|E\Theta) \geq 1 - H(Y|\bar{Y}).
\end{equation}
The interpretation of (\ref{E91security}) is that the von Neumann entropy relating to Eve's uncertainty is small given that the conditional entropy between Alice and Bob $H(Y|\bar{Y})$ is large; this provides the necessary expression to show that the two statements of the proof are mutually exclusive as required.  The argument can be extended to multiple rounds.

\textbf{f) Comments: }
\begin{enumerate}[noitemsep, topsep=0pt, label=\roman*)]
	\item QKD protocols are rigorously proven to be secure using the laws of physics.  A formal definition for security along with a proof can be found in \cite{nielsen2002quantum}. Hence, unlike classical cryptography, quantum cryptography provides a guaranteed level of protection against a quantum computing attack.
	\item It is of noteworthy interest that the initial idea of using quantum physics for cryptography was first formulated in a design for money bank notes that would be impossible to forge.  However this work was rejected for publication.  For a historic and broad review of the subfield of quantum cryptography, we refer the reader to \cite{gisin2002quantum}.
	\item Other than Bell states, the more general GHZ states have also been employed in cryptographic settings.  One notable example is in a quantum version of the classical secret sharing protocol \cite{hillery1999quantum}.
	\item Quantum key distribution systems have moved from theory to commercial reality.  There are a number of quantum cryptographic companies deploying these systems in the public and private sector.
\end{enumerate}

\section{Application: Quantum Computing}

We have witnessed the use of entanglement in space in quantum information to perform communication tasks that would be classically unimaginable.  In this section, we introduce the notion of a quantum computer \cite{benioff1980computer, yu1980computable, feynman1982simulating} that harnesses quantum information, which includes an entanglement in space resource \cite{jozsa2003role, vidal2003efficient, vedral2006introduction}, to solve computational problems. There are a number of different models for quantum computation such as the gate model \cite{nielsen2002quantum}, the adiabatic model \cite{farhi2000quantum, johnson2011quantum}, the topological model \cite{pachos2012introduction} and the one-way measurement model \cite{raussendorf2003measurement}.  Our sole focus is on the gate model which is based on the quantum circuit model presented in Chapter \ref{chap: QInfo}.  Moreover, we present three quantum algorithms that remarkably outperform the best known classical algorithms for the same task.  However, it is not formally proven that quantum computers are more powerful than classical computers; it may very well be the case that we find classical algorithms that are equivalent in computational performance.  Nevertheless, small-scale quantum computing systems have been experimentally realized and shown \cite{arute2019quantum} to drastically outpeform the world's best classical supercomputers.  For a broader survey of quantum algorithms, we refer the reader to \cite{montanaro2016quantum, childs2010quantum}.      

\subsection{Quantum search}

\textbf{a) Grover's algorithm: } The classical computational problem of search involves finding $M$ solutions in a search space of $N$ elements, where $1 \leq M \leq N$. To solve this problem using a quantum computer \cite{grover1997quantum}, we encode each of these $N$ elements into a quantum state $\ket{x}$, and create the following state that is in an equal superposition
\begin{equation}\label{groverinitial}
\ket{\psi} = \frac{1}{N^{\frac{1}{2}}}\sum_{x=0}^{N-1}\ket{x}.
\end{equation}   
Suppose a single solution is marked as $x'$.  Then the goal of the quantum computer is to transform state $\ket{\psi}$ into the state $\ket{x'}$ using the fewest number of operations and measurements.  To perform this task, we simply need to construct an operator known as the Grover operator,
\begin{equation}
G = (2\ket{\psi}\bra{\psi} - I)O,
\end{equation}
where $(2\ket{\psi}\bra{\psi} - I)$ can be constructed using (\ref{groverinitial}).  The operator $O$ is known as the oracle; the action of the oracle is given by 
\begin{equation}
O\ket{x} = (-1)^{f(x)}\ket{x},
\end{equation}
where $f(x')=1$, and otherwise the function evaluates to zero for all other $x$.  It is important to note that the oracle can only recognize the solution to the search problem.  There is a clear distinction between recognizing the solution and knowing the solution.  The former does not mean the latter.  

The quantum algorithm is straightforward in that it consists of repeatedly applying the Grover operator $\pi \sqrt{N}/4$ times on state $\ket{\psi}$.  This transforms the quantum information in $\ket{\psi}$ such that when measured gives with high probability the result $\ket{x'}$.  To see this to be the case, let $\sum_{x}^{'}$ represent the sum over all $x$ which are solutions, and $\sum_{x}^{''}$ represent the sum over all $x$ which are not solutions.  We can construct the following normalized quantum states
\begin{align}\label{groveralphabeta}
\ket{\alpha} &= \frac{1}{\sqrt{N-M}}\sum_{x}^{''}\ket{x}, \\
\ket{\beta} &= \frac{1}{\sqrt{M}}\sum_{x}^{'}\ket{x}.
\end{align} 
We can re-express initial state $\ket{\psi}$ of the quantum computer, represented in (\ref{groverinitial}), in terms of $\ket{\alpha}$ and $\ket{\beta}$,
\begin{equation}\label{initialinalphabeta}
\ket{\psi} = \sqrt{\frac{N-M}{N}}\ket{\alpha} + \sqrt{\frac{M}{N}}\ket{\beta}.
\end{equation}
Furthermore, let
\begin{equation}
\cos \frac{\theta}{2} = \sqrt{\frac{N-M}{M}},
\end{equation}
so that we can write (\ref{initialinalphabeta}) as
\begin{equation}
\ket{\psi} = \cos \frac{\theta}{2}\ket{\alpha} + \sin \frac{\theta}{2}\ket{\beta}.
\end{equation}
The effect of applying the Grover operator on the initial state results in
\begin{equation}
G\ket{\psi} = \cos \frac{3 \theta}{2}\ket{\alpha} + \sin \frac{3 \theta}{2}\ket{\beta}.
\end{equation}
The repeated iteration of the operator on the state computes to
\begin{equation}
G^{k}\ket{\psi} = \cos\Bigg({\frac{(2k+1)\theta}{2}}\Bigg)\ket{\alpha} + \sin\Bigg({\frac{(2k+1)\theta}{2}}\Bigg)\ket{\beta}.
\end{equation}
This has the requirement of transforming the state $\ket{\psi}$ to $\ket{\beta}$. More precisely, the number of iterations required is upper bounded by 
\begin{equation}
\frac{\pi}{4}\sqrt{\frac{N}{M}}.
\end{equation}
After this repeated application of $G$ on initial state $\ket{\psi}$, a measurement in the computational basis provides the answer to the problem with a high probability. 

\textbf{b) Analysis of algorithm: }  In Chapter \ref{chap: classical}, we provided a brief overview of the asymptotic notation.  Using those tools, it can be said that Grover's algorithm requires $O(\sqrt{N/M})$ operations for an $N$ item search problem with $M$ solutions.  For the case of a single solution, this equates to $O(\sqrt{N})$.  A classical computer for the same single solution case requires $O(N)$ operations.  To highlight the shocking aspect of this situation, consider a search space of a million items; a classical computer would need to, at worst, go through all million of them whereas a quantum computer simply needs to search through, at worst, a thousand of them; this is remarkable as there does not seem to be any geometric structure in the problem to offer such a quadratic speed up.

\subsection{Quantum factoring}

\textbf{a) Preliminaries: } Perhaps the most influential result in quantum information science is Shor's algorithm \cite{shor1994algorithms, shor1999polynomial}:  

\begin{enumerate}[noitemsep, topsep=0pt, label=\roman*)]
	\item This is a quantum algorithm that can efficiently derive the prime factorization of an integer.  In Chapter \ref{chap: classical}, we saw that the reliability of public key cryptography is based on the hypothesis that prime factorization cannot be computed in any reasonable time.  Hence, Shor's algorithm has dramatic consequences on the information infrastructure of the modern world.
	\item Shor's algorithm provided an concrete instantiation of the notion that quantum computer could be far more powerful than classical computers on real-world problems.  The faith in this idea led to a drastic growth in the theoretical and experimental progress of the quantum computation.
	\item Shor's algorithm has to some degree directed the attention of the subject of cryptography, rooted in number theory and abstract algebra \cite{galbraith2012mathematics, katz2014introduction} towards quantum physics.  Hence, broader development of quantum information science has, in some part, been predicated on cryptographic aims.
\end{enumerate}

Before stating the computational steps of Shor's algorithm, we aim to discuss two of its subroutines.  The first is the implementation of a quantum version of a discrete Fourier transform.  Suppose we have a quantum computer represented by a Hilbert space with an orthonormal basis $\ket{0}, \dots, \ket{N-1}$.  Then the action of the quantum Fourier transform on arbitrary state of the computer is given by
\begin{equation}
\sum_{j=0}^{N-1} x_{j}\ket{j} \rightarrow \sum_{k=0}^{N-1} y_{k}\ket{k},
\end{equation}
where
\begin{equation}
y_{k} \equiv \frac{1}{\sqrt{N}}\sum_{j=0}^{N-1}x_{j}e^{2\pi ijk/N}.
\end{equation}
The quantum information $y_{k}$ is the discrete Fourier transform of the quantum information $x_{j}$.  The quantum Fourier transform can be expressed in terms of a sequence of qubit gates and can be shown to be unitary.  For the case of a single basis state and where $N=2^{n}$, the action of the quantum Fourier transform can be written as
\begin{equation}\label{QFTranform}
\ket{j} \rightarrow \frac{1}{2^{n/2}}\sum_{k=0}^{2^{n}-1}e^{2\pi ijk/2^{n}}\ket{k}.
\end{equation}
We adopt the following two notations: For a state $\ket{j}$, we express it in terms of binary representation $j=j_{1}j_{2}\dots j_{n}$ meaning $j=j_{1}2^{n-1} + j_{2}2^{n-2} + \dots j_{n}2^{0}$; we also use the notation $0.j_{l}j_{l+1}\dots j_{m}$ to represent $j_{l}/2 + j_{l+1}/4 + \dots j_{m}/2^{m-l+1}$.  

Hence, we can expand the output in (\ref{QFTranform}) as
\begin{align}
\ket{j} &\rightarrow \frac{1}{2^{n/2}}\sum_{k=0}^{2^{n}-1}e^{2\pi ijk/2^{n}}\ket{k} \\
&= \frac{1}{2^{n/2}} \sum_{{k}_{1}=0}^{1} \cdots \sum_{{k}_{n}=0}^{1}e^{2\pi ij(\sum_{l=1}^{n}k_{l}2^{-l})}\ket{k_{1}\dots k_{n}} \\
& = \frac{1}{2^{n/2}} \sum_{{k}_{1}=0}^{1} \cdots \sum_{{k}_{n}=0}^{1} \bigotimes_{l=1}^{n}e^{2\pi ijk_{l}2^{-l}}\ket{k_{l}} \\
&= \frac{1}{2^{n/2}} \bigotimes_{l=1}^{n} \Bigg( \sum_{{k}_{l}=0}^{1} e^{2\pi ijk_{l}2^{-l}}\ket{k_{l}}\Bigg) \\
&= \frac{1}{2^{n/2}}  \bigotimes_{l=1}^{n} \Bigg(\ket{0} + e^{2\pi ij 2^{-l}}\ket{1} \Bigg) \\
\end{align}
This can be expanded into what is known as the product representation of the quantum Fourier transform
\begin{equation}\label{QFTproductform}
\frac{(\ket{0} + e^{2\pi i0.j_{n}}\ket{1})(\ket{0} + e^{2\pi i0.j_{n-1}j_{n}}\ket{1})\cdots (\ket{0} + e^{2\pi i0.j_{1}j_{2}\dots j_{n}}\ket{1})}{2^{n/2}}.
 \end{equation}  
The second subroutine in Shor's algorithm is known as phase estimation; the computational task is that given a unitary operator $U$ with an eigenvector $\ket{u}$, find the unknown value $\varphi$ in the corresponding eigenvalue $e^{2\pi i \varphi}$.  For simplicity, assume that $\varphi$ can be written in $t$ bits as $\varphi = 0.\varphi_{1}\dots \varphi_{t}$.  The quantum computer starts with in the state
\begin{equation}
\ket{0}^{\otimes t}\ket{u},
\end{equation}
where the first register contains $t$ qubits in state $\ket{0}$ and the second register contains the eigenvector.  We proceed to apply the Hadamard transform to the first register to obtain
\begin{equation}
\Bigg(\frac{\ket{0} + \ket{1}}{\sqrt{2}} \Bigg)^{\otimes t} \ket{u} = \frac{1}{2^{t/2}}\sum_{j=0}^{2^{t}-1}\ket{j}\ket{u}.
\end{equation}
Recall that $U \ket{u} = e^{2\pi i \varphi} \ket{u}$. This implies that when we apply a controlled-$U$ operation on the second register with $U$ raised to successive powers of two, the state results in 
\begin{equation}
\frac{(\ket{0} + e^{2\pi i0.\varphi_{t}}\ket{1})(\ket{0} + e^{2\pi i0.\varphi_{t-1}\varphi_{t}}\ket{1})\cdots (\ket{0} + e^{2\pi i0.\varphi_{1}\varphi_{2}\dots \varphi_{t}}\ket{1})}{2^{t/2}} \ket{u}.
\end{equation}
After this step, we apply the inverse  of the quantum Fourier transform (\ref{QFTproductform}) to obtain the desired output $\ket{\varphi_{1} \dots \varphi_{t}}$.

\textbf{b) Shor's algorithm: } The classical computational problem is to find the prime factorization of an integer $N$.  This problem is equivalent to the order-finding problem which can be described as follows.  Suppose $x$ and $N$ are positive integers with no common factors and where $x < N$.  The order of $x$ modulo $N$ is the smallest positive integer, $r$, such that 
\begin{equation}
x^{r} = 1 \, (\text{mod }N).
\end{equation}
The order-finding problem is that given $x$ and $N$, determine $r$.  Showing the mathematical equivalence of these two problems is beyond the scope of this thesis.  Assuming this equivalence, Shor's algorithm can be seen as a classical algorithm with a quantum subroutine for order-finding.  Every step in the following algorithm can be performed efficiently on a classical computer except the quantum subroutine.  Over the course of repeating the algorithm, the complete prime factorization of $N$ can be computed.

The first step of the algorithm is check if $N$ is even, and if so return the factor $2$.  The second step is to use a known classical algorithm determine whether $N=a^{b}$ for integers $a\geq 1$ and $b \geq 2$, and if so return the factor $a$.  The third step is to randomly choose an $x$ in the range $1$ to $N-1$.  If $\text{gcd}(x, N)>1$, then return the factor $\text{gcd}(x, N)$.  The fourth step is the quantum order-finding subroutine to derive the order $r$ of $x$ modulo $N$. The last step is if $r$ is even and $x^{r/2}\neq -1(\text{mod} N)$, then compute both $\text{gcd}(x^{r/2}-1, N)$ and  $\text{gcd}(x^{r/2}+1, N)$; check if any one of these is a factor and output that factor; otherwise the algorithm fails.  

Shor's algorithm crucially depends on the quantum subroutine for order-finding, which we now describe.  We encode the order-finding problem into the quantum computer as unitary operator,
\begin{equation}
U\ket{y} \equiv \ket{xy(\text{mod N})},
\end{equation}
where $y \in \{0,1\}^{L}$.  The eigenvectors of $U$ can expressed as
\begin{equation}
\ket{u_{s}} \equiv \frac{1}{\sqrt{r}}\sum_{k=0}^{r-1}\text{exp}\Bigg(\frac{-2\pi i s k}{r}\Bigg) \ket{x^{k}\text{mod} \, N},
\end{equation}
for integer $0 \leq s \leq r-1$. Then the eigenvalues can be written in the following equation as
\begin{equation}
U\ket{u_{s}} = \text{exp}\Bigg(\frac{2\pi i s}{r}\Bigg) \ket{u_{s}}.
\end{equation}
After the encoding, we can use phase estimation to obtain $s/r$ in the eigenvalues $\text{exp}(2\pi i s/r)$.  After that, we can use a procedure known as the continued fractions algorithm to efficiently obtain the order $r$.

\textbf{c) Analysis of algorithm: } The best known classical algorithm for the task of prime factorization of an $n$-bit integer is the number field sieve which requires $\text{exp}(\Theta (n^{1/3} \text{log}^{2/3} n))$ operations.  Shor's algorithm is exponentially faster than this as it can be shown that it performs the same task in $O(n^{2} \text{log} \, n \, \text{log} \, \text{log} \, n)$ operations.  The concrete output at the end of this algorithm makes the fundamental question of what is quantum information unavoidable to easily dismiss; it is natural to ask, in this case, where do the quantum computations in Shor's algorithm physically take place \cite{deutsch1998fabric}?  In this way, quantum information science can be seen to provide a resurgence on the historical inquiry \cite{infeld1971evolution} regarding the fundamental nature of quantum physics as a whole.

\subsection{Quantum machine learning}

\textbf{a) Preliminaries: } Machine learning \cite{goodfellow2016deep} is a relatively new field of computer science with the goal to significantly advance artificial intelligence.  The central tenet of the field is that computational machines can `learn' from large data sets to perform tasks (traditionally assigned to only humans) as opposed to being explicitly programmed to do so.  The area has found many real-world applications, as well as exhibiting its progress in the domain of games; recently \cite{silver2016mastering, silver2017mastering} a machine learning system defeated the world champion in the game of Go.  Machine learning can be crudely separated into supervised and unsupervised learning. In the former, each piece of data with a corresponding labelled category, collectively known as the training set, is provided to the machine; using this, the machine is supposed to carry out the task of correctly labelling data that exists outside of the training set.  This is in contrast to unsupervised learning where the training set does not contain any categories; rather the machine is supposed to find natural categories in which the data could be indexed under; furthermore the machine is then tasked with classifying data outside of that training set. 

Within quantum information science, there has been an effort to investigate whether quantum computers could outperform classical computers to implement machine learning \cite{biamonte2017quantum, schuld2019quantum, havlivcek2019supervised}.  One prominent example of this is in relation to a supervised learning algorithm known as a support vector machine.  A quantum support vector machine was designed in \cite{rebentrost2014quantum} with a drastic improvement over the classical case. Central to that work as well as many other quantum machine learning algorithms is the HHL algorithm.  This is a quantum algorithm that efficiently performs matrix inversion on data such as a training set.

\textbf{b) HHL algorithm: } The classical computational problem is that given $N \times N$ complex matrix $A$ and a vector  $\vec{b} \in \mathbb{C}^{n}$, solve for $\vec{x}\in \mathbb{C}^{n}$ in the equation $A\vec{x} = \vec{b}$.  In other words, derive $A^{-1}$ to compute $\vec{x} = A^{-1}\vec{b}$.  For our discussion, assume that $A$ is also Hermitian.  However, this can easily be generalized if we let
\begin{equation}
C= \begin{pmatrix}
0 & A \\
A^{\dagger} & 0
\end{pmatrix},
\end{equation}
where one can solve the equation
\begin{equation}
C\vec{y} = 
\begin{pmatrix}
\vec{b}  \\
0 
\end{pmatrix}.
\end{equation}
This results in the solution 
\begin{equation}
\vec{y} = 
\begin{pmatrix}
0  \\
\vec{x} 
\end{pmatrix}.
\end{equation}
Of noteworthy importance is result that matrix $A$ has an eigenvalue $\lambda$ if and only if $A^{-1}$ has eigenvalue $\lambda^{-1}$.  Hence, if $A$ is diagonalizable then computing the inverse of the eigenvalues allows us to construct $A^{-1}$ in an analogous way.  

We want to encode this classical problem onto a quantum computer.  Let $A$ be the corresponding Hermitian operator with eigenbasis $\ket{u_{j}}$ with corresponding eigenvalues $\lambda_{j}$. Moreover, we encode the $N$ variable vector $\vec{b}$ into a quantum state,
\begin{equation}
\ket{b} = \sum_{i=1}^{N} b_{i} \ket{i},
\end{equation}
using $\log_{2} N$ qubits.  Our goal is to construct 
\begin{equation}
\ket{x} = A^{-1}\ket{b},
\end{equation}   
where $\ket{x}$ encodes the solution $\vec{x}$ over $\log_{2} N$ qubits.

The first step of the algorithm is to use a version of phase estimation to decompose the state $\ket{b}$ into the eigenbasis of $A$ and obtain the eigenvalues of $A$.  Roughly this amounts to applying the unitary operator $e^{iAt}$ on state $\ket{b}$ for a superposition of different times $t$.  After this phase estimation stage, we can represent $\ket{b}$ as
\begin{equation}
\ket{b} = \sum_{j=1}^{N} \beta_{j} \ket{u_{j}},
\end{equation}     
and the total state can be, informally, written as
\begin{equation}\label{HHLinformalstate}
 \sum_{j=1}^{N} \beta_{j} \ket{u_{j}}\ket{\lambda_{j}}.
\end{equation}
For a more precise description of this step, we introduce the state 
\begin{equation}
\ket{\Psi_0} := \sqrt{\frac{2}{T}}\sum_{\tau=0}^{T-1}\sin{\frac{\pi(\tau + \frac{1}{2})}{T}}\ket{\tau},
\end{equation}
for some large $T$; this is to minimize a quadratic loss function which we will not concern us here.  Hence we can express the process discussed more accurately as
\begin{equation}
\sum_{\tau=0}^{T-1} \ket{\tau}\bra{\tau} \otimes e^{\frac{iA\tau t_0}{T}}  (\ket{\Psi_0} \otimes \ket{b}),
\end{equation}
where $t_{0}$ is dependent on the condition number (ratio between $A$'s largest and smallest eigenvalues) and the additive error achieved in output state $\ket{x}$.  This is followed by a Fourier transform on the first register which gives the state   
 \begin{equation}
\sum_{j=1}^{N}\sum_{k=0}^{T-1}\alpha_{k|j}\beta_j\ket{k}\ket{u_j},
\end{equation}
where states $\ket{k}$ represent the Fourier basis states; the value $\lvert \alpha_{k|j} \rvert$ is large if and only if $\lambda_j \approx (2 \pi k)/(t_0)$.  We proceed to define $\bar{\lambda_k} \equiv (2 \pi k)/(t_0)$ and re-express our $\ket{k}$ register as
		\begin{equation}\label{HHLfirststepaccurate}
			\sum_{j=1}^{N}\sum_{k=0}^{T-1}\alpha_{k|j}\beta_j\ket{\bar{\lambda_k}}\ket{u_{j}}
			\end{equation}
The second step of the algorithm is acquire the inverse of the eigenvalues into the quantum information; this is the critical step as it allows us construct $A^{-1}$.   This is roughly accomplished by performing a linear map taking state $\ket{\lambda_{j}}$ in (\ref{HHLinformalstate}) to 
\begin{equation}\label{HHLrough2nd}
C\lambda_{j}^{-1}\ket{\lambda_{j}},
\end{equation}
where $C$ is some normalization constant.   A more precise description of this step can be described by adding an extra qubit in state $\ket{0}$ to (\ref{HHLfirststepaccurate}).  This extra qubit will be rotated conditioned on state $\ket{\bar{\lambda_k}}$ to produce 	
		\begin{equation}\label{HHLexact2nd}
			\sum_{j=1}^{N}\sum_{k=0}^{T-1}\alpha_{k|j}\beta_j\ket{\bar{\lambda_k}}\ket{u_j}\Biggl(\sqrt{1-{\frac{C^2}{\bar{\lambda_{k}^2}}}}\ket{0} + \frac{C}{\bar{\lambda_k}}\ket{1}\Biggr),
		\end{equation}
where $C$ is chosen based on the condition number of the matrix. This procedure is not unitary so it does have a probability of failure.

The third step of the algorithm is to uncompute the $\ket{\lambda_{j}}$ register in (\ref{HHLrough2nd}) and the quantum computer outputs a state proportional to
\begin{equation}
\sum_{j=1}^{N} \beta_{j} \lambda_{j}^{-1}\ket{u_{j}} = A^{-1}\ket{b} = \ket{x}.
\end{equation}
The more precise description, of this third step, following (\ref{HHLexact2nd}) is to undo phase estimation to uncompute $\ket{\bar{\lambda_k}}$.  For the case where phase estimation is perfect, we have the value $\alpha_{k|j} = 1$ if $\bar{\lambda_k} = \lambda_{j}$, and $0$ otherwise.  Supposing this case, we can write the resulting state 
\begin{equation}
\sum_{j=1}^{N}\beta_j\ket{u_j}\Biggl(\sqrt{1-{\frac{C^2}{\lambda_{j}^2}}}\ket{0} + \frac{C}{\lambda_j}\ket{1}\Biggr),
\end{equation}
and from there measure the last qubit.  Conditioned on seeing the result $1$, we have the final state 
\begin{equation}
\sqrt{\frac{1}{\sum_{j=1}^{N}C^{2}\lvert \beta_{j} \rvert^{2}/ \lvert \lambda_{j} \rvert^{2}}} \sum_{j=1}^{N}\beta_{j}\frac{C}{\lambda_{j}}\ket{u_{j}},
\end{equation}
which corresponds to the state
\begin{equation}
\ket{x} = \sum_{j=1}^{n}\beta_j\lambda_{j}^{-1}\ket{u_j},
\end{equation}
up to a normalization.

The final step is that we can make measurement $M$ which provides us with expectation value $\bra{x}M\ket{x}$; this could be used to estimate features of $\vec{x}$ that we may be interested in.
	
\textbf{c) Analysis of algorithm: } The best known classical algorithm for the task of finding $\vec{x}$ is $O(N \log N)$, where $N$ is the number of variables.  By contrast the HHL quantum algorithm is exponentially better in that it requires takes $O((\log N)^{2})$ steps to find $\ket{x}$.  However, due to the hidden nature of quantum information, we can only output some expectation value $\braket{x|M|x}$ for a measurement $M$, rather than $\ket{x}$ itself.  Nevertheless, the broad applicability of quantum information to develop artificial intelligence is proving to be a promising area of research.  Perhaps more exciting are the investigations on whether quantum information could form the direct basis for biological intelligence \cite{hameroff1996orchestrated, stuart1998quantum, penrose2000large}.

\section{Entanglement in Time}

To introduce an entanglement in time, it is perhaps useful to consider the properties of an entanglement in space.  The latter entanglement involves an interdependence of quantum information systems across a spatial distance.  This trivially implies that the systems cannot be in the same location.  In an entanglement in time, the interdependence of quantum information systems is across a temporal interval.  In its strictest form, this implies that the entanglement is between systems that do not coexist!  Rather remarkably, such an entanglement has recently been experimentally realized \cite{megidish2013entanglement, megidish2012resource, megidish2017quantum}.  

Our aim is to describe this entanglement in time.  Moreover, we will highlight examples of it through a temporal Bell state, a temporal GHZ state and a temporal graph state.  Despite the experimental realization of such states, the role of this temporal effect in quantum information science is largely unknown.  

We aim to provide some insight by arguing for the following overarching theme that was derived solely on the basis of examining the collected literature. \textit{The interdependence in any entanglement in space is shocking due to the absense of a time interval involved.  The interdependence in any entanglement in time is shocking due to the existence of a time interval involved.}  The latter statement is rather non-trivial as an interdependence across time exists for even classical information systems through a causal dependence.  Nevertheless, we will articulate that the interdependence across time for this entanglement is stronger than could ever exist between classical information systems.

We proceed to describe the experimental realization of this temporal effect through the theoretical tools of qubits and density operators.  For a general survey on the experimental procedures involving temporal quantum information systems, we refer reader to \cite{victora2019time}.

\subsection{Preliminaries}

\textbf{a) Entanglement swapping: } To generate an entanglement in time, between subsystems that do not coexist, one uses a modification of an entanglement in space procedure known as entanglement swapping \cite{zukowski1993event}.  The spatial entanglement is essentially transferred from a composite system to a different composite system.  This procedure can also be viewed as a teleportation protocol for a spatially entangled state \cite{zeilinger2017light}.  It has been generalized to multiple swappings \cite{goebel2008multistage}, and entropic analysis of the procedure can be found in \cite{witten2018mini}. It also allows one to extend the range of quantum communication networks through repeaters \cite{briegel1998quantum}.  Of noteworthy importance is a delayed choice version of this procedure \cite{peres2000delayed}, which has recently been experimentally realized \cite{ma2012experimental}.  In this delayed choicce experiment, the choice to transfer the entanglement to a desired composite system is made after the desired system has already been measured.  This results in a portrayal of the total system exhibiting retrocausality (the influence of future actions on past events).  However all the subsystems in this procedure coexisted; the entanglement in time that we will describe between subsystems that do not coexist will prove to be far more bizarre.  For a broad overview on delayed choice experiments, we refer the reader to \cite{ma2016delayed}.

\textbf{b) Quantum optics: } To mathematically describe the optical experimental generation of this entanglement in time, we provide a brief `dictionary' that explains the experimental physics in terms of the theoretical physics: 
\begin{enumerate}[noitemsep, topsep=0pt, label=\roman*)]
	\item A photon with horizontal (h) and vertical (v) polarization states can be seen as a physical instantiation of a qubit,
	\begin{equation}\label{photonqubit}
	\ket{\psi} = \alpha \ket{h} + \beta \ket{v}.
	\end{equation}
	where $\lvert \alpha \rvert^{2} + \lvert \beta \rvert^{2}=1$.  The left and right circularly polarized states are respectively,
	\begin{align}\label{lrstates}
	\ket{l} &\equiv \frac{1}{\sqrt{2}} (\ket{h} + i\ket{v}), \\
	\ket{r} &\equiv \frac{1}{\sqrt{2}} (\ket{h} - i\ket{v}).
	\end{align}
	\item Wave plates are devices that mathematically transform (\ref{photonqubit}) to
	\begin{equation}
	\ket{\psi} = \alpha \ket{h} + e^{i\phi}\beta \ket{v},
	\end{equation}
	where $\phi = \pi$ represents a half-wave plate (HWP) and $\phi = \pi/2$ represents a quarter-wave plate (QWP).
	\item A beam splitter (BS) transforms the two spatially separated inputs $\ket{T}$ and $\ket{B}$ (which refer to top and bottom respectively) to two spatially separated outputs 
	\begin{align}
	\ket{T} &\rightarrow \frac{1}{\sqrt{2}} (\ket{T} + \ket{B}), \\
	\ket{B} &\rightarrow \frac{1}{\sqrt{2}} (\ket{T} - \ket{B}).
	\end{align}  
	\item A polarizing beam splitter (PBS) directs the $\ket{h}$ photons in one direction and directs the $\ket{v}$ photons in another direction.  After the PBS, it is typical that there are two detectors after that measure the photons of the two different polarizations.
	\item Spontaneous parametric down-conversion (SPDC) is a method where a photon with frequence $v$ is converted to two photons with respective frequencies $v_{1}$ and $v_{2}$ such that $v = v_{1} + v_{2}$.  
	\item Post-selection is also known as conditional detection.  The process of SPDC is not certain to happen given that the photon detectors do not have perfect efficiency.  Hence the event is recorded only if two photons are detected.  
\end{enumerate}

\subsection{Through qubits}

\textbf{a) Temporal bipartite systems: }  The aim is to manifest the intended entanglement in time through a bipartite system.  More precisely, the experiment \cite{megidish2013entanglement} generates a temporal version of a Bell state between a pair of photons that do not coexist.  Our review of the experiment begins by considering a PDC to create polarized photons in any of the four spatial Bell states 
\begin{align}
\ket{\phi^{\pm}} &= \frac{1}{\sqrt{2}}(\ket{h_{a}h_{b}} \pm \ket{v_{a}v_{b}} ), \\
\ket{\psi^{\pm}} &= \frac{1}{\sqrt{2}}(\ket{h_{a}v_{b}} \pm \ket{v_{a}h_{b}}).
\end{align}  
The labels $h$ and $v$ represent the respective polarization states, and the spatial modes are denoted $a$ or $b$.  These can simply be viewed as a physical instantiation of the Bell states described in (\ref{Bellstatebasis}) and (\ref{Bellstatebasis2}).  

In order to observe the intended effect, the experiment generates two pairs of photons (1-2 and 3-4) separated by a well-defined time interval $\tau$. Hence, there are a total of four photons across a span of time. The quantum state of such a system can be described as
\begin{equation}
\ket{\psi^{-}}_{a,b}^{0,0} \otimes \ket{\psi^{-}}_{a,b}^{\tau,\tau} = \frac{1}{{2}}(\ket{h_{a}^{0} v_{b}^{0}} - {\ket{v_{a}^{0}h_{b}^{0}}}) \otimes (\ket{h_{a}^{\tau} v_{b}^{\tau}} - {\ket{v_{a}^{\tau}h_{b}^{\tau}}}).
\end{equation}
In this case, the spatial modes are located in the subscripts and the time labels of the photons are in the superscripts.

The aim is to perform a Bell state projection on the second photon of the first pair and the first photon of the second pair.  To achieve this the former particle is delayed in a delay line.  The same delay is also used on the second photon of the second pair.  The resulting state can be reordered and written as
\begin{align}\label{temporalBellstateproject}
	\ket{\psi^{-}}_{a,b}^{0,\tau} \ket{\psi^{-}}_{a,b}^{\tau,2\tau} &= \frac{1}{2}(\ket{\psi^{+}}_{a,b}^{0,2\tau} \ket{\psi^{+}}_{a,b}^{\tau,\tau} - \ket{\psi^{ -}}_{a,b}^{0,2\tau} \ket{\psi^{-}}_{a,b}^{\tau,\tau} \\ \nonumber
	&- \ket{\phi^{+}}_{a,b}^{0,2\tau} \ket{\phi^{+}}_{a,b}^{\tau,\tau} + \ket{\phi^{ -}}_{a,b}^{0,2\tau} \ket{\phi^{-}}_{a,b}^{\tau,\tau}).
\end{align}

We now describe the explicit sequence of measurements undertaken by the experimentalists.  They measure the first photon of the first pair (1) immediately after it is created, while the second photon of that pair (2) is delayed by temporal interval $\tau$ in a free-space delay line.  The length of the delay line is chosen so that there is adequate time for the measurement of the first photon (1) before the second pair of photons is created (3-4).

After the second pair (3-4) of photons is generated, the first photon of that pair (3) is projected onto a Bell state with the delayed photon of the first pair (2).  The last photon (4), which is the second photon of the second pair, is delayed by an interval $\tau$ through the same delay line.  Moreoever, the last photon (4) is measured only after that delay period. 

\begin{figure}[!htbp]
	\begin{center}
		\includegraphics[scale=0.8]{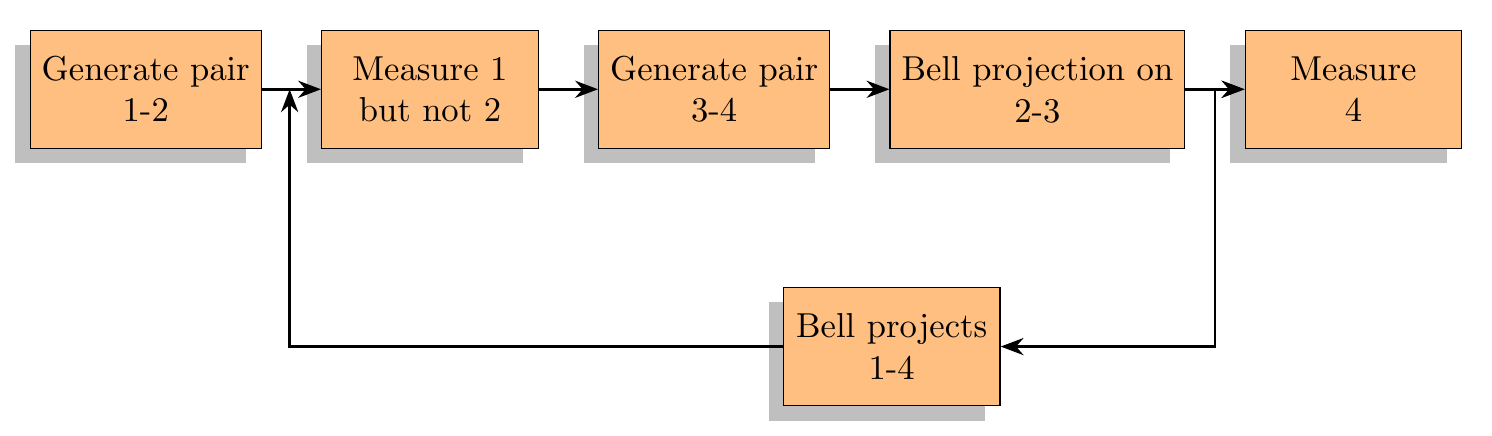}
		\caption{Time is increasing to the right}
		\label{TemporalBellDiagram}
	\end{center}
\end{figure}
When the photons at time $\tau$ (2-3) are projected onto any Bell state, the first and last photons (1-4), which share no prior interdependence, also collapse into the same state and the entanglement is `swapped.'  The first photon (1) and the last photon (4) become entangled.  It is important to emphasize that the first photon (1) was measured before the last photon (4) was even created.  We can write these \textit{temporal Bell states} more explicity as
\begin{empheq}[box=\widefbox]{align}\label{temporalBellstate}
\ket{\phi^{\pm}}_{a,b}^{0, 2\tau} &= \frac{1}{\sqrt{2}}(\ket{h_{a}^{0}h_{b}^{2\tau}} \pm \ket{v_{a}^{0}v_{b}^{2\tau}} ), \\
\ket{\psi^{\pm}}_{a,b}^{0, 2\tau} &= \frac{1}{\sqrt{2}}(\ket{h_{a}^{0}v_{b}^{2\tau}} \pm \ket{v_{a}^{0}h_{b}^{2\tau}}).
\end{empheq}
This is an \textit{entanglement in time} between subsystems that do not coexist.  The mathematical description as in (\ref{temporalBellstate}) describe the state in terms of the particle that existed just prior to $t=0$ and the particle that existed just prior to $t=2\tau$; we emphasize that the particle at $t=0$ did not coexist with the particle at $t=2\tau$.  

We see that the time at which each photon is measured has no effect on the final outcome.  Hence the timing of each photon simply serves an additional label to differentiate between the various photons.  From the view of this thesis, it can be alternatively described as that the quantum information (held in these photons) is labelled in both space ($a$ and $b$) and time ($0$ and $2\tau$). 

However, a Bell state measurement using linear optical elements can only simultaneously discriminate between two of the four Bell states.  Relaying this restriction back to the experiment, we can describe the procedure in more detail as follows.  The delayed photon of the first pair (2) and the first photon of the second pair (3) are projected onto a Bell state by combining them at a PBS.  Moreover, postselection is carried out in that the photons must exit the PBS at different ports and that the photons must be indistinguishable.  Conditioned on that requirement, the photons are rotated by HWPs to the polarization basis $\ket{p/m} = 1/\sqrt{2}(\ket{h} \pm \ket{v})$.  If the the polarization of the middle photons (2-3) were measured to be correlated (hh or vv), then they were projected onto a $\ket{\phi^{+}}_{a,b}^{\tau,\tau}$ state.  This results in the first and last photon being entangled through the temporal Bell state $\ket{\phi^{+}}_{a,b}^{0,2\tau}$.  However, if the polarization of the middle photons (2-3) were anti-correlated (hv or vh), then they were were projected onto the $\ket{\phi^{-}}_{a,b}^{\tau,\tau}$ state.  In an analogous manner, this resulted in the first and last photon being entangled through the temporal Bell state $\ket{\phi^{-}}_{a,b}^{0,2\tau}$.  In closing, this experimental procedure provided a means of generating temporal Bell states between subsystems that do not coexist.   

\textbf{b) Temporal multipartite systems:} We will now describe the generation of temporal multipartite entangled systems whose subsystems existed at different times.  This was experimentally realized in \cite{megidish2012resource}.  To describe the generation of such an effect, first consider two spatially entangled photons in the following Bell state
\begin{equation}
\ket{\phi_{12}^{+}} = \frac{1}{\sqrt{2}}(\ket{h_{1}h_{2}} + \ket{v_{1}v_{2}}).
\end{equation} 
Using two pairs of such states   
\begin{equation}
\ket{\phi_{12}^{+}} \otimes \ket{\phi_{34}^{+}} = \frac{1}{2}(\ket{h_{1}h_{2}} + \ket{v_{1}v_{2}}) \otimes (\ket{h_{3}h_{4}} + \ket{v_{3}v_{4}}),
\end{equation}
one can fuse these states, using a PBS, into a four-photon spatial GHZ state
\begin{equation}\label{spatialGHZ}
\ket{\Psi_{GHZ}^{(4)}} = \frac{1}{\sqrt{2}}(\ket{h_{1}h_{2}h_{3}h_{4}} + \ket{v_{1}v_{2}v_{3}v_{4}}).
\end{equation}
This state can be viewed as a specific physical instantiation of (\ref{GHZ}).  One can fuse further entangled photon pairs to create a growing GHZ state.

In the experimental work in \cite{megidish2012resource}, they were able to realize a temporal version of the above GHZ state using the same experimental setup that used in the temporal bipartite case \cite{megidish2013entanglement}.  Once again, they considered pairs of photons in spatial Bell states generated at consecutive intervals of time.  The first photon of a pair was directed to a PBS whereas the second photon enters a delay line of time $\tau$.  This second photon met the first photon of the next pair which was then fused at the PBS.  Using post-selection, this projected the two entangled pairs onto a temporal four-photon GHZ state.  This resulted in an entanglement between four photons that were across different spatial modes and existed at different times!  This \textit{entanglement in time }can be mathematically written as
\begin{empheq}[box=\widefbox]{align}\label{temporalGHZ4}
\ket{\Psi_{GHZ}^{(4)}} = \frac{1}{\sqrt{2}}(\ket{h_{1'}^{0}h_{2}^{\tau}h_{1}^{\tau}h_{2'}^{2\tau}} + \ket{v_{1'}^{0}v_{2}^{\tau}v_{1}^{\tau}v_{2'}^{2\tau}}).
\end{empheq}
From this, we see that there were two spatial modes ($1$ and $2$) after the projecting PBS and $1'$ and $2'$ before the projecting PBS) and three temporal modes ($0$, $\tau$, $2\tau$).  Contrasting this equation to the spatial case (\ref{spatialGHZ}), the different spatial modes that would exist for the different photons are replaced in the temporal case by different time slots for only two spatial modes.  In principle, one can create larger GHZ states using this technique, solving many of the scalability problems encountered in alternative experimental setups.  The most general \textit{temporal GHZ state}, from $n$ photon pairs, would take the form
\begin{empheq}[box=\widefbox]{align}\label{temporalGHZN}
\ket{\Psi_{GHZ}^{(2n)}} &= \frac{1}{\sqrt{2}}(\ket{h_{1'}^{0}h_{2}^{\tau}h_{1}^{\tau}\cdots h_{2}^{(n-1)\tau}h_{1}^{(n-1)\tau}h_{2'}^{n\tau}} \\ \nonumber
&+\ket{v_{1'}^{0}v_{2}^{\tau}v_{1}^{\tau}\cdots v_{2}^{(n-1)\tau}v_{1}^{(n-1)\tau}v_{2'}^{n\tau}}).
\end{empheq}
The experimental procedure was also able to generate temporal graph states, using polarization rotations on the respective photons.  One such example was a six-photon \textit{temporal graph state} with an `H-shape'
\begin{empheq}[box=\widefbox]{align}
 \ket{\Psi_{H}^{(6)}} &= \frac{1}{\sqrt{2}}(\ket{h_{1'}^{0}h_{2}^{\tau}h_{1}^{\tau}h_{2}^{2\tau}h_{1}^{2\tau}h_{2'}^{3\tau}} + \ket{h_{1'}^{0}h_{2}^{\tau}h_{1}^{\tau}v_{2}^{2\tau}v_{1}^{2\tau}v_{2'}^{3\tau}} \\ \nonumber
 &+ \ket{v_{1'}^{0}v_{2}^{\tau}v_{1}^{\tau}h_{2}^{2\tau}h_{1}^{2\tau}h_{2'}^{3\tau}} - \ket{v_{1'}^{0}v_{2}^{\tau}v_{1}^{\tau}v_{2}^{2\tau}v_{1}^{2\tau}v_{2'}^{3\tau}}).
\end{empheq} 

There have been other experiments that have generated an entanglement in time (given this definition); one remarkable achievement was generating a time-multiplexed cluster state containing more than $10,000$ entangled modes \cite{yokoyama2013ultra}.

\subsection{Through density operators}

\textbf{a) Temporal bipartite systems: } We saw the generation of temporal Bell states (\ref{temporalBellstate}) in the experimental setup described in \cite{megidish2013entanglement}.  An example of such a state was $\ket{\phi^{+}}_{a,b}^{0,2\tau}$ which is just one instantiation of the entanglement between the first and last photon of the experiment.  The density operator framework allows us to articulate this entanglement in an alternative way.

The experimentalists constructed a density matrix to characterize this temporal bipartite system (\ref{temporalBellstate}).  More precisely, the density matrix of the first and last photons was constructed, conditioned on the measurement outcome of the projection of the two photons at time $\tau$.  This was accomplished using a modification of quantum state tomography (\ref{quantumtomography}).  Moreoever, the modified tomography required projection measurements that used states such as $\ket{hv}$ as well as states such as $\ket{lr}$ described in (\ref{lrstates}).  For the experimental details, refer to  \cite{megidish2013entanglement}.  

Using such a density matrix, they were able to detect and measure entanglement.  This was accomplished a posteriori, meaning only after the measurement of all the photons in the experiment.  We briefly list only some of the experimental results of the measured matrices:
\begin{enumerate}[noitemsep, topsep=0pt, label=\roman*)]
	\item The fidelity (\ref{fidelity}) between the measured and theoretical density operators were $(77\pm 1)\%$.  Entanglement is said to be demonstrated when the fidelity exceeds $50\%$.
	\item The CHSH (\ref{CHSH}) value was $2.04 \pm 0.04$ which was a marginal violation to demonstrate entanglement.
	\item The PPT criterion (\ref{PPT}) was $-0.28 \pm 0.01$.  This aligns with the result as a negative value is needed to demonstrate entanglement.
	\item The concurrence (\ref{concurrence}) was $0.57 \pm 0.03$.  This once again demonstrates entanglement as the number needs to be positive for such an effect.
\end{enumerate}
Though the two photons in a temporal Bell state do not coexist, their quantum state is entangled.  This is experimentally expressed through the measured density matrix of the two photons conditioned on the result of the Bell state projection measurement concerning (\ref{temporalBellstateproject}).

However, if the Bell state projection is not carried out perfectly  on the photons at time $\tau$ (e.g. indistinguishability is introduced into the projected photons), then it can be measured that the first and last photon do not become entangled.  Rather they share classical correlations. This is an important observation as it emphasizes that prior to the Bell projection on the middle photons, the first and last photon do not somehow share any entanglement.

\textbf{b) Temporal multipartite systems: } Our aim is to describe the work in \cite{megidish2017quantum} which uses the density operator formalism to efficiently characterize temporal GHZ states (\ref{temporalGHZN}).  Moreover, we focus on the theoretical aspects.  As described earlier, these temporal multipartite states were experimentally generated in \cite{megidish2012resource}.  We recall the four photon case (\ref{temporalGHZ4}) whose generation involved two spatial Bell states with a delay line followed by a fusion at the PBS.  This process can be expressed with alternative initial Bell states as  
\begin{align}
\ket{\psi +}_{a,b}^{0,0} \otimes \ket{\psi +}_{a,b}^{\tau,\tau} &\xrightarrow{delay} \ket{\psi +}_{a,b}^{0,\tau} \otimes \ket{\psi +}_{a,b}^{\tau,2\tau} \\  \nonumber
&= \frac{1}{{2}}(\ket{h_{a}^{0} v_{b}^{\tau}} + {\ket{v_{a}^{0}h_{b}^{\tau}}}) \otimes (\ket{h_{a}^{\tau} v_{b}^{2\tau}} + {\ket{v_{a}^{\tau}h_{b}^{2\tau}}}) \\ &\xrightarrow{PBS} \frac{1}{2}(\ket{h_{a}^{0} v_{b}^{\tau}v_{a}^{\tau}h_{b}^{2\tau}} + \ket{v_{a}^{0} h_{b}^{\tau}h_{a}^{\tau}v_{b}^{2\tau}}) = \ket{GHZ}_{1,2,3,4}.
\nonumber
\end{align}
We adopt the labels $1,2,3,4$ to allow for a more compact notation.  The final four photon state can be re-expressed in terms of a density operator 
\begin{equation}
\rho_{1,2,3,4} = E(\rho_{1,2}\otimes \rho_{3,4})E^{\dagger},
\end{equation}
where $\rho_{i,j}$ is the density matrix of the $i$th and $j$th photons, and $E$ is the operator that represents the four-photon entangling process.  Recall that at time $\tau$, only the photons $2$ and $3$ are interacting at the PBS.  This implies that we can decompose $E$ as
\begin{equation}
E = \sigma_{0}^{1}F_{2,3}\sigma_{0}^{4},
\end{equation}
where
\begin{align}
F_{2,3} &= (\ket{h_{2}h_{3}}\bra{h_{2}h_{3}} + \ket{v_{2}v_{3}}\bra{v_{2}v_{3}}) \\
&= \frac{1}{2}(\sigma_{0}^{2}\sigma_{0}^{3} + \sigma_{3}^{2}\sigma_{3}^{3}).
\end{align}
Here $\sigma_{0}^{i}$ and $\sigma_{3}^{i}$ are the identity and Pauli $z$ operator which are applied to the $i$th photon.  The recursive nature of the experimental set up implies that all the entangled pairs originate from the same source and the fusion process operation is also identical.  This means that by measuring $\rho_{1,2}$ and $F_{2,3}$, the density matrix of any temporal GHZ state can be computed by combining identical two-photon states with identical projections
\begin{align}
\rho_{1,2, \dots , n} &= \sigma_{0}^{1}F_{2,3} \cdots F_{n-2,n-1}\sigma_{0}^{n} \\ \nonumber
&(\rho_{1,2} \otimes \cdots \otimes \rho_{n-1, n}) \\ \nonumber
&(\sigma_{0}^{1}F_{2,3} \cdots F_{n-2,n-1}\sigma_{0}^{n})^{\dagger}. 
\end{align}
One can obtain the entire information about a GHZ state containing any number of photons without getting their full statistics or even observing them.  This provides a far more efficient method to characterize the state than standard quantum state tomography (\ref{quantumtomographyN}).  For the experimental details, refer to \cite{megidish2017quantum}.

\subsection{Implications} 
 
\textbf{a) For the theme:} We shall now consider how properties of this entanglement, between subsystems that do not coexist, relate to the formulation of the overarching theme of this thesis.  Namely that \textit{the interdependence in any entanglement in time is shocking due to the existence of a time interval involved}.  For simplicity, we devote our analysis on the bipartite case \cite{megidish2013entanglement}.  Recall for spatial Bell states, a measurement on one of its subsystems instantaneously affects the other spatially distant subsystem. Similarly, an analogous effect exists for temporal Bell states, such as for $\ket{\phi^{+}}_{a,b}^{0,2\tau}$, where the state represents an entanglement between the first photon and last photon which do not coexist.  In \cite{megidish2013entanglement}, this was stated more directly and we quote, ``\textit{In the standard entanglement [in space] case, the measurement of any one of the particles instantaneously changes the physical description of the other. This result was described by Einstein as ‘‘spooky action at a distance.’’ In the scenario we present here, measuring the last photon affects the physical description of the first photon in the past, before it has even been measured. Thus, the ‘‘spooky action’’ is steering the system’s past. Another point of view that one can take is that the measurement of the first photon is immediately steering the future physical description of the last photon. In this case, the action is on the future of a part of the system that has not yet been created}.''  These interpretations are shocking because of a time interval between the measurement of the first photon and generation of the last photon. If one were to diminish the time interval to zero, then this entanglement in time loses its perplexing character. 

Our thesis aims to provide a finer level of analysis on these matters.  It needs to be articulated that the collapses of states are a mathematical description rather than a physical observable.  Denying the quantum state a physical reality allows for a pragmatic approach in terms of measurement correlations.  The measurement of the first photon yielded a definite outcome, and the measurement of the last photon is affected by it in a way that is stronger than could ever exist between classical systems.  The inability to replicate these correlations using classical systems is quantitatively captured using the CHSH (Bell) inequalities.

However if one assumes that the quantum state has a physical reality, then the interpretation quoted above does indeed highlight that quantum physics allows for influences to propagate into the past.  Perhaps this is seen far more concretely in that prior to the Bell state measurements of the photons at time $\tau$ (photons 2 and 3), the outcome of the Bell state measurement is unknown.  Using linear optical elements, the two possible outcomes are $\ket{\phi^{+}}_{a,b}^{\tau,\tau}$ or $\ket{\phi^{-}}_{a,b}^{\tau,\tau}$.  This implies that two possible states that the first and last photon can collectively collapse to are either $\ket{\phi^{+}}_{a,b}^{0,2\tau}$ or $\ket{\phi^{-}}_{a,b}^{0,2\tau}$.  But this is determined only much after the measurement of the first photon (see Figure \ref{TemporalBellDiagram})!

\textbf{b) For relativity: } Similar to the spatial case, the individual measurement result of a subsystem in the entanglement in time is probabilistically random.  Hence this does not violate causality (in a strict sense).  But there may be consequences for approaching the quantum physics of gravitation (the problem of quantum gravity), or at the very least of viewing relativity with an alternative perspective.  It is often the case that spacetimes which contain closed timelike curves (CTC) are regarded as pathological.  This experimentally verified effect of entanglement in time can be interpreted to exhibit some similar properties to CTCs.  Hence, for the pursuit towards quantum gravity, the premature dismissal of such pathological spacetimes may prove to be erroneous.

\textbf{c) For the nature of quantum physics: } As mentioned in Chapter \ref{chap: QInfo}, a most fundamental mystery is what is quantum information?  With respect to its historical origins, this problem is referred to as the interpretation issue of quantum mechanics.  We shall briefly describe how the above effect of entanglement in time gives support to a subset of the proposed interpretations. One of the proposals is known as the transactional interpretation \cite{cramer1986transactional} which involves sending signals back in time.  In this experiment of entanglement between subsystems that do not coexist, we have so far seen that it possible to some extent to non-classically influence the past.  Hence, this effect certainly adds considerable weight to furthering the transactional interpretation or some modification of it.  An alternative interpretation of quantum mechanics is known as the sum over paths approach \cite{feynman2005feynman}.  One of the insights (gained from my supervisor) is that if the the framework of quantum mechanics is taken seriously along with the sum over paths approach, then it can be seen that an event has multiple histories and that there is no single definite past.  The latter statement aligns with the description presented in this entanglement in time where the past is seen to be unsettled with respect to quantum properties.  The final interpretation that aligns well with the experiment is the two-state vector formalism \cite{twostatevector, twostatevector2} where the present is affected by both the past and the future.  In fact this was shown \cite{nowakowski} to be related to the entangled histories formalism, which was then used to describe an alternative temporal version of a GHZ state.

\textbf{d) For the nature of time: }  One of the most fundamental areas in the philosophy of time \cite{callender2011oxford} is in regards to answering the question:  Is there more to the world than the present moment?  The classification of answers within this particular sub-branch of metaphysics can be crudely categorized in three groups.  The first are the ``eternalists'' who believe that the past, present, and future are real.  The second call themselves the ``possibilists'' who claim that the past and present are real, but the future is not.  The remaining category are known as ``presentists'' who hold the position that only the present is real.  From the perspective of modern physics, it can easily be seen that the theories of relativity (which we shall review in Chapter \ref{chap: RQI}) are in conflict, to a large degree, with presentism.  However, a defense \cite{zimmerman2011presentism} was put forth that included the dismissal of the relativistic attack since those theories are challenged by quantum physics.  Hence it is surprising that in our investigation of entanglement in time, it is precisely a phenomena in quantum physics that shows from the present one can non-classically affect the past or immediately affect the future that has not yet been created.  This alludes to denying presentism, and rather taking the eternalists' view in that the past and future are as real as the present.

\section{Application: Quantum Blockchain}

A central aim in the field of quantum information science is the creation of new applications.  Both quantum communications and quantum computers are established applications of entanglement in space.  An open question in the field was whether entanglement in time is also a resource in quantum information?  More concretely, would it enhance the advantage of established applications?  Or more astonishingly,  would it lead to the development of novel applications and thereby open the door to new areas of research?

There have been various proposals to modify parts of quantum communications and quantum computers to adopt an entanglement in time.  For example, it was noted in \cite{megidish2013entanglement} that a memory system in a communication network would benefit from using this entanglement in time.  The entanglement in time which we will review in Chapter \ref{chap: QFound} has found applications for its use in a communication protocol \cite{taylor2004entanglement} as well as in the analysis of computing \cite{morikoshi2006information}.  In Chapter \ref{chap: RQI}, we examine yet another entanglement in time with proposals for memory systems \cite{sabin2012extracting}, teleportation  \cite{olson2012extraction}, and quantum key distribution\cite{ralph2015quantum}.

\textit{In this thesis, we make an original contribution} (which was done in collaboration with my supervisor) by designing one of the first quantum information applications of entanglement in time, namely a quantum blockchain \cite{rajan2019qblock}.  Our primary innovation is in encoding the blockchain into a temporal GHZ state.  It will be shown that the entanglement in time, as opposed to an entanglement in space, provides the crucial quantum advantage over a classical blockchain. More shockingly, the information encoding procedure in this quantum blockchain can be interpreted as non-classically influencing the past, and hence the system can be viewed as a `quantum time machine.' Furthermore, all the subsystems of this design have already been shown to be experimentally realizable \cite{megidish2013entanglement, megidish2012resource, megidish2017quantum, mccutcheon2016experimental}.  

However, the scope of our original research into the quantum blockchain is limited to only specifying a conceptual design.  This can be seen as the major step before providing a fully detailed design.  On a coarse level, there are three phases to the design of a quantum information system.  The first phase is to identify and extract the most essential task that characterizes the information system under consideration.  As an example in certain classical cryptographic systems, the most fundamental aspect is to have identical private keys at two different locations.  The second phase is to articulate the particular quantum system to be used to represent that information task along with an appropriate encoding method.  As an example in the E91 quantum protocol, it was realized that a spatial Bell state was the desirable structure for generating private keys at two different locations.  The third phase is to specify the dynamics of the quantum information system.  In the E91 protocol this is a clear sequence of steps carried out by Alice and Bob in terms of quantum operations, quantum measurements, classical communication and classical processing to generate that key.  For our conceptual design we will concern ourselves with the first two phases and merely provide an outline for how it could be used for developing the third phase.

\subsection{Preliminaries}

\textbf{a) Quantum computing attacks: } Before describing the quantum blockchain, we want to convey that one instantiation of the interplay between blockchains and quantum information arises as a security threat.  Quantum computers pose a significant threat to the security features of a classical blockchain, thereby potentially invalidating it as an information security system.  We refer the reader to \cite{aggarwal2017quantum} for an in-depth analysis on this topic whose crucial points we briefly outline.  In Chapter \ref{chap: classical} on classical blockchains, it was emphasized that the certain quantities associated with cryptographic hash functions would be infeasible to classically compute.  The security of the system crucially depends on such properties.  However, a quantum computer running Grover's search algorithm can perform quadratically fewer computations for this problem than is needed by classical computing.  Therefore over time,  this vulnerability will pose an imminent threat.  The second risk to classical blockchains posed by quantum computers derives from Shor's factoring algorithm.  The classical blockchains require public key cryptography for various operations involving digital signatures.  Hence Shor's algorithm and its attack on public key cryptography poses a dramatic risk to the classical system.  

\textbf{b) Post-quantum cryptography: } One can address the second risk of Shor's algorithm by developing classical blockchains which replace the public key cryptography with a post-quantum cryptographic component \cite{chen2016report, alagic2019status}.  Recall that these protocols are classical cryptographic systems which utilize mathematical problems that many believe a quantum computer would not be able to solve.  However, the durability of these solutions can be questioned in that there are no formal proofs supporting this hypothesis.  Nevertheless, these post-quantum blockchains have been proposed \cite{gao2018secure, torres2018post, li2018new}. 

\textbf{c) Quantum cryptography: } In a previous section, we described how quantum key distribution provides an alternative solution towards the risk posed by Shor's algorithm.  In \cite{kiktenko2018quantum}, a classical blockchain with a quantum key distribution subroutine was proposed.  It was also experimentally realized among a small number of network nodes.  In addition to this work, classical blockchains with various added quantum features have also been put forward in \cite{jogenfors2016quantum, sapaev2018quantum, behera2018quantum, tessler2017bitcoin, ikeda2018qbitcoin, kalinin2018blockchain}.

\textbf{d) Design methodology: } Recall that a classical blockchain system stores data securely over time and in a decentralized manner.  It is composed of two parts, namely the temporal blockchain data structure and a decentralized network consensus algorithm.  Our aim is to redesign the classical blockchain system into a full quantum information application to not only protect it from a quantum computing attack, but highlight further superior advantages as an information security system.  To do so our conceptual design focuses on creating quantum analogues of the blockchain data structure as well as the network consensus algorithm.  However, a number of low level design gaps do exist, but the intention was to open up a novel area where at least the core functionalities are covered.

\subsection{Quantum data structure}

\textbf{a) Description: } In this section, our aim is to replace the data structure component of the classical blockchain with a quantum information system which harnesses an entanglement in time.  In the classical case, records are chained in a chronological order through cryptographic hash functions.  In the quantum information case, we will capture the notion of the chain through the non-separability (entanglement) of quantum systems. For a spatially bipartite system $\ket{\psi}_{AB}$, this means that
\begin{equation}
\ket{\psi}_{AB} \ne \ket{a}_{A}\ket{b}_{B}, 
\end{equation}
for all single qubit states $\ket{a}$ and $\ket{b}$; the subscripts refer to the respective Hilbert spaces.  In particular multipartite GHZ states  are ones in which all subsystems contribute to the shared entangled property.  This enables us to create the concept of a chain.  However we need a method to encode the records into the chain, and develop a temporal structure to identify the chronological order.

To create the appropriate code to utilize this chain, it is helpful to use a concept from superdense coding~\cite{bennett1992communication}.  In this protocol, recall that a code (\ref{superdensecode}) converts classical information into spatially entangled Bell states; two classical bits, $xy$, where $xy = 00, 01, 10$ or $11$, are encoded to the state

\begin{equation}
\ket{\beta_{xy}} = \frac{1}{\sqrt{2}}(\ket{0}\ket{y} + (-1)^{x}\ket{1}\ket{\bar{y}}), 
\end{equation}
where $\bar{y}$ is the negation of $y$.  Given that Bell states are orthonormal, they can be distinguished by quantum measurements.  This decoding process allows one to extract the classical bit string, $xy$, from $\ket{\beta_{xy}}$.  

We still need a temporal structure to encode the chronological order.  This can be accomplished using an entanglement in time rather than a spatial entanglement.  For our conceptual design, we temporarily simplify the data characterizing the records in the classical block to a string of two bits.  Our encoding procedure converts each block with its classical record, say $r_{1}r_{2}$, into a temporal Bell state , generated at a particular time, say $t=0$:

\begin{equation}
\ket{\beta_{r_{1}r_{2}}}^{0, \tau} = \frac{1}{\sqrt{2}}(\ket{0^{0}}\ket{r_{2}^{\tau}} + (-1)^{r_{1}}\ket{1^{0}}\ket{\bar{r_{2}}^{\tau}}).
\end{equation}
From the entanglement in time section, it was seen that the superscripts in the kets signify the time at which the photon is absorbed; notice that the first photon of a block is absorbed immediately. For our purposes, this provides a way to do time stamps for each block. 

Recall such temporal Bell states were experimentally generated in the work by \cite{megidish2013entanglement} which we described in the last section.  In their procedure, spatially entangled qubits were represented through polarized photons,
	
\begin{equation}
	\ket{\phi \pm} = \frac{1}{\sqrt{2}}(\ket{h_{a}h_{b}} \pm {\ket{v_{a}v_{b}}}), \quad \ket{\psi \pm} = \frac{1}{\sqrt{2}}(\ket{h_{a}v_{b}} \pm {\ket{v_{a}h_{b}}}),
\end{equation}       
	
	where $h_{a}$ ($v_{a}$) represent the horizontal (vertical) polarization in spatial mode $a$ ($b$).  To create the temporally entangled states, consecutive pairs of spatially entangled pairs were generated at well-defined times separated by time interval $\tau$:
	
	\begin{equation}
	\ket{\psi -}_{a,b}^{0,0} \otimes \ket{\psi -}_{a,b}^{\tau,\tau} = \frac{1}{{2}}(\ket{h_{a}^{0} v_{b}^{0}} - {\ket{v_{a}^{0}h_{b}^{0}}}) \otimes (\ket{h_{a}^{\tau} v_{b}^{\tau}} - {\ket{v_{a}^{\tau}h_{b}^{\tau}}}),
	\end{equation}
	
	where the added superscripts provide the time labels for the photons.  In the experiment, a delay line of time $\tau$ is introduced to one of the photons of each entangled pair.  This resulting state equated to
	
	\begin{align}
	\ket{\psi -}_{a,b}^{0,\tau} \ket{\psi -}_{a,b}^{\tau,2\tau} &= \frac{1}{2}(\ket{\psi +}_{a,b}^{0,2\tau} \ket{\psi +}_{a,b}^{\tau,\tau} - \ket{\psi -}_{a,b}^{0,2\tau} \ket{\psi -}_{a,b}^{\tau,\tau} \\ \nonumber
	&- \ket{\phi +}_{a,b}^{0,2\tau} \ket{\phi +}_{a,b}^{\tau,\tau} + \ket{\phi -}_{a,b}^{0,2\tau} \ket{\phi -}_{a,b}^{\tau,\tau}).
	\end{align} 
	
	When Bell projection was carried out on two photons at time $t= \tau$, entanglement is created between the photon absorbed at $t=0$ and the photon absorbed at $t=2\tau$; this is despite the fact that the latter two photons have never coexisted.

Going back to our design, as records are generated, the system encodes them as blocks into temporal Bell states; these photons are then created and absorbed at their respective times.  A specific example of such blocks would be:
\begin{align}
\ket{\beta_{00}}^{0, \tau} &= \frac{1}{\sqrt{2}}(\ket{0^{0}}\ket{0^{\tau}} + \ket{1^{0}}\ket{{1}^{\tau}}), \\
 \ket{\beta_{10}}^{\tau, 2\tau} &= \frac{1}{\sqrt{2}}(\ket{0^{\tau}}\ket{0^{2\tau}} - \ket{1^{\tau}}\ket{{1}^{2\tau}}), \\
\ket{\beta_{11}}^{2\tau,3\tau} &= \frac{1}{\sqrt{2}}(\ket{0^{2\tau}}\ket{1^{3\tau}} - \ket{1^{2\tau}}\ket{{0}^{3\tau}}),
\end{align}
and so forth.  To create the desired quantum design, the system should chain the bit strings of the Bell states together in chronological order, through an entanglement in time. Such a task can be accomplished by using a fusion process~\cite{megidish2012resource}, described in the last section, in which temporal Bell states are recursively projected into a growing temporal GHZ state.  Physically, the fusion process is carried out through the entangled photon-pair source, a delay line and a polarizing beam splitter (PBS).  As an example, two Bell states can be fused into the following four-photon GHZ state:
\begin{align}
	\ket{\psi +}_{a,b}^{0,0} \otimes \ket{\psi +}_{a,b}^{\tau,\tau} &\xrightarrow{delay} \ket{\psi +}_{a,b}^{0,\tau} \otimes \ket{\psi +}_{a,b}^{\tau,2\tau} \\  \nonumber
	&= \frac{1}{{2}}(\ket{h_{a}^{0} v_{b}^{\tau}} + {\ket{v_{a}^{0}h_{b}^{\tau}}}) \otimes (\ket{h_{a}^{\tau} v_{b}^{2\tau}} + {\ket{v_{a}^{\tau}h_{b}^{2\tau}}}) \\ &\xrightarrow{PBS} \frac{1}{2}(\ket{h_{a}^{0} v_{b}^{\tau}v_{a}^{\tau}h_{b}^{2\tau}} + \ket{v_{a}^{0} h_{b}^{\tau}h_{a}^{\tau}v_{b}^{2\tau}}) = \ket{GHZ}^{0,\tau, \tau, 2\tau}.
	\nonumber
	\end{align}
Recall that in this GHZ state, entanglement exists between the four photons that propagate in different spatial modes and exist at different times.  Implementing  this procedure in our design, the state of the \textit{quantum blockchain}, at $t=n\tau$ (from $t=0$) is given by
\begin{empheq}[box=\widefbox]{align}\label{QBLOCK}
&\ket{GHZ_{r_{1}r_{2}\ldots r_{2n}}}^{0,\tau, \tau, 2\tau, 2\tau \ldots , (n-1)\tau, (n-1)\tau, n\tau}   \\ \nonumber
&= \frac{1}{\sqrt{2}}(\ket{0^{0}r_{2}^{\tau}r_{3}^{\tau}\ldots r_{2n}^{n\tau}} + (-1)^{r_{1}}\ket{1^{0}\bar{r}_{2}^{\tau}\bar{r}_{3}^{\tau}\ldots \bar{r}_{2n}^{n\tau}}).
\end{empheq}
The subscripts on the LHS of (\ref{QBLOCK}) denote  the concatenated string of all the blocks, while superscripts refer to the time stamps.  The time stamps allow each blocks' bit string to be differentiated from the binary representation of the temporal GHZ basis state.  Note that at $t=n\tau$, there is only one photon remaining.

The dynamics of this procedure can be illustrated with our example above.  Out of the first two blocks, $\ket{\beta_{00}}^{0, \tau}$ and $\ket{\beta_{10}}^{\tau, 2\tau}$, the system creates the (small) blockchain,
\begin{equation}
\ket{\beta_{00}}^{0, \tau} \otimes \ket{\beta_{10}}^{\tau, 2\tau} \rightarrow \ket{GHZ_{0010}}^{0, \tau, \tau, 2\tau}
\end{equation}
Concatenating the third block $\ket{\beta_{11}}^{2\tau,3\tau}$  produces 
\begin{equation}
\ket{GHZ_{001011}}^{0, \tau, \tau, 2\tau, 2\tau, 3\tau} = \frac{1}{\sqrt{2}}(\ket{0^{0}0^{\tau}1^{2\tau}0^{2\tau}1^{2\tau}1^{3\tau}} + \ket{1^{0}1^{\tau}0^{2\tau}1^{2\tau}0^{2\tau}0^{3\tau}}). 
\end{equation}
The decoding process extracts the classical information, $r_{1}r_{2}\ldots r_{2n}$, from the state (\ref{QBLOCK}).  As mentioned in the previous section, it was shown \cite{megidish2017quantum} how to characterize any such temporally generated GHZ state efficiently compared to standard tomography techniques.  This can be accomplished without measuring the full photon statistics, or even detecting them.

\textbf{b) Security analysis: } Recall that in the classical blockchain system the relevant performance metric is nontampering for the data structure.  This is accomplished by the data structure being extremely sensitive to tampering through the \textit{interdependence of classical blocks achieved by cryptographic hash functions}.  If one attempts to modify even a single block, the extreme sensitivity is such that is invalidates all future blocks following the tampered block.  This provides a tamper proof system for storing records because tampering with it can easily be detected.  In the quantum blockchain, the sensitivity to tampering is achieved through \textit{the interdependence of the quantum blocks in an entanglement in time}.  

To elaborate, for the quantum blockchain we have replaced the important functionality of time stamped blocks and hash functions linking them, by a temporal GHZ state with an entanglement in time.  The quantum advantage is that the sensitivity towards tampering is significantly amplified, meaning that the blockchain is destroyed if one tampers with a single block (due to entanglement); on a classical blockchain only the blocks after the tampered block are destroyed (due to cryptographic hash functions) which leaves it open to vulnerabilities.  For the classical case, it is often stated that the farther back the block was time stamped in, the more "secure" it is; this is precisely because of the above invalidation.  Even if we had used an entanglement in space (with all the photons coexisting) that would still have provided an advantage since if an attacker tries to tamper with any photon, the full blockchain would be invalidated immediately; this already provides a benefit over the classical case where only the future blocks of the tampered block are invalidated.  The temporal GHZ blockchain (\ref{QBLOCK}) adds a far greater benefit in that the attacker cannot even attempt to access the previous photons since they no longer exist.  They can at best try to tamper with the last remaining photon, which would invalidate the full state.  Hence in this application of quantum information, we see that the entanglement in time provides a far greater security benefit than an entanglement in space. There still needs to be a careful case by case analysis of potential tampering with the ultimately classical measurement results, but that would entail full security proofs which is left for future work.  

\textbf{c) Comments: }
\begin{enumerate}[noitemsep, topsep=0pt, label=\roman*)]
	\item The temporal GHZ state, that we use in our design, involve an entanglement between photons that do not share simultaneous coexistence, yet they share non-classical interdependence.  This temporal interdependence, between two entangled photons that existed at different times, was interpreted in~\cite{megidish2013entanglement} as follows: ``\textit{...measuring the last photon affects the physical description of the first photon in the past, before it has even been measured. Thus, the ``spooky action" is steering the system's past}".  Stated more shockingly, in our quantum blockchain, we can interpret our encoding procedure as linking the current records in a block, not to a record \textit{of} the past, but linking it to the actual record \textit{in} the past, a record which does not exist anymore. Hence the system can be viewed as a `quantum time machine.'
	\item Much of the performance of the quantum blockchain data structure is simply due to the properties of a temporal GHZ state.  The non-trivial aspect was in obtaining the appropriate quantum structure and finding an efficient encoding method.  This phase of design is comparable to realizing that a spatial Bell state was a useful structure for key generation in E91
	\item We imagine that future designs of quantum blockchains may harness the other entanglement in time effects discussed in Chapter \ref{chap: QFound} and Chapter \ref{chap: RQI}.
	\item This conceptual design presented the case that a security advantage exists given that the previously existing photons are not able to be accessed (since they no longer exist).  However a full security proof of this remains to be worked out.  Of great interest would be whether from an operational point of view there is a security advantage between using photons that do not coexist as opposed to using photons from a single Bell-pair in which a measurement of one particle is simply delayed with respect to the other.  Such an understanding may provide the necessary basis for the development of further temporal based quantum information protocols.
	
\end{enumerate}

\subsection{Quantum consensus protocol}

\textbf{a) Description: } Our aim in this section is to develop a quantum analogue of the network consensus protocol.  The $\theta$-protocol~\cite{mccutcheon2016experimental} was originally designed to verify GHZ entanglement in a quantum network.  Given that we have encoded a quantum blockchain into a temporal GHZ state we can harness the $\theta$-protocol as a consensus algorithm for blocks.     
  
To provide some elaboration, recall that a classical blockchain system has a number of different components.  A blockchain data structure, a copy of this data structure at each node of a classical network, and a consensus network algorithm to verify the correctness of new blocks (before adding that new block to a blockchain).  In our design, we replace the classical network with a quantum network and with that, digital signatures would be covered by a quantum key distribution (QKD) protocol.  In fact, others have used this way of reasoning when introducing new quantum protocols.  For example in the $\theta$-protocol~\cite{mccutcheon2016experimental} the authors also simply assume a QKD layer before moving onto their original work.  We quote their paper, ``\textit{it is assumed that the verifier and each of the parties share a secure private channel for the communication.  This can be achieved by using either a one-time pad or a quantum key distribution.}" Furthermore, in this design, each node on the quantum network would host a copy of the quantum blockchain (\ref{QBLOCK}); hence if a node tampers with its own local copy, it does not affect the copies at the other nodes analogous to the classical case.   New blocks (that come from a sender) need to be verified for their correctness, before being copied and added to each node's blockchain.  Since correct blocks are GHZ entangled states, one needs a verification test to do it.

At this stage of the design, we assume that newly generated blocks are spatial GHZ states (converting this to the related temporal case is at this stage of the design process unnecessary, and is left for future work).  As in the classical case, the objective is to add valid blocks in a decentralized manner. The challenge is that the network can consist of dishonest nodes, and the generated blocks can come from a dishonest source.  To solve this problem, the quantum network uses the $\theta$-protocol~\cite{mccutcheon2016experimental}, which is a consensus algorithm where a random node in the quantum network can verify that the untrusted source created a valid block (ie spatial GHZ state).  More crucially, this is accomplished in a decentralized way by using other network nodes, who may also be dishonest (ie Byzantine nodes). 

To start off this verification protocol, we need to pick a randomly chosen verifier node (analogous to proof-of-stake or proof-of-work); this can be accomplished through a low level sub-algorithm involving a quantum random number generator.  The untrusted source shares a possible valid block, an $n$-qubit state, $\rho$. Since it knows the state, it can share as many copies of the block as is needed without running afoul of the no-cloning theorem.  For verification, it distributes each of the qubits to each node, $j$. The verifying node generates random angles $\theta_{j} \in \left[0, \pi \right) $ such that $\sum_{j}\theta_{j}$ is a multiple of $\pi$.  The (classical) angles are distributed to each node, including the verifier. They respectively measure their qubit in the basis,
\begin{eqnarray}
\ket{+_{\theta_{j}}} = \frac{1}{\sqrt{2}}\left(\ket{0}+e^{i\theta_{j}}\ket{1}\right), 
\\
\ket{-_{\theta_{j}}} = \frac{1}{\sqrt{2}}\left(\ket{0}-e^{i\theta_{j}}\ket{1}\right).
\end{eqnarray}
The results, $Y_{j}=\{0,1\}$, are sent to the verifier. If the $n$-qubit state was a valid block, ie a spatial $n$-qubit GHZ state, the necessary condition 
\begin{equation}
\oplus_{j}Y_{j} = \frac{1}{\pi} \sum_{j}\theta_{j} \quad   (\text{mod 2}), 
\end{equation}
is satisfied with probability $1$. The protocol links the verification test to the state that is used; the paper~\cite{mccutcheon2016experimental} explicitly mentions this and we quote, ``\textit{It is important to remark that our verification protocols go beyond merely detecting entanglement; they also link the outcome of the verification tests to the state that is actually used by the honest parties of the network with respect to their ideal target state. This is non-trivial and of great importance in a realistic setting where such resources are subsequently used by the parties in distributed computation and communication applications executed over the network.}''  Hence the block can be copied and distributed to each node on the network to be added onto their blockchain.

\textbf{b) Security analysis: } We refer the reader to \cite{mccutcheon2016experimental} for an in-depth security analysis whose results we briefly outline.  Let $P(\rho)$ denote probability of passing the verification test.  Furthermore, let the fidelity of the shared state $\rho$ with respect to an ideal GHZ state be computed as
\begin{equation}
F(\rho) = \braket{GHZ_{n}|\rho|GHZ_{n}}.
\end{equation}
One can obtain a lower bound on the passing the verification test.  It can be proven that if the $n$ parties are honest, then we have the relationship,
\begin{equation}
F(\rho) \geq 2 P(\rho)-1.
\end{equation}
When the protocol is performed in the presence of dishonest nodes (byzantine nodes), then the results are modified.  Suppose we have $n-k$ nodes that apply local or joint unitary operation $U$ to their state.  This encodes the various ways these nodes may attempt to cheat the system.  The modified fidelity for this scenario is given by
\begin{equation}
F'(\rho) = \text{max}_{U} F((I_{k} \otimes U_{n-k})\rho(I_{k} \otimes U_{n-k}^{\dagger})),
\end{equation} 
and the associated lower bound for pass probability can be computed to be
\begin{equation}
F'(\rho) \geq 4 P(\rho)-3.
\end{equation}
Compared to other quantum verification protocols, the $\theta$-protocol can be shown to be more sensitive to detecting dishonest nodes.

\textbf{c) Comments: }
\begin{enumerate}[noitemsep, topsep=0pt, label=\roman*)]
	\item Combining the data structure component with the network consensus protocol provides us with the conceptual design of a quantum blockchain. 
	\item Our work provided a conceptual design.  This is the major step before providing a fully detailed protocol design.  The latter is left for future work.  However there are some challenges that we foresee: Standard blockchain protocols do not easily fit into the traditional framework of distributed computing \cite{narayanan2017bitcoin} and proof of their security functionalities in a rigorous manner is not well articulated. Hence developing a detailed quantum blockchain protocol with security proofs would be predicated on also undertaking many research problems from the classical case.
	\item Given the rise of classical blockchains and the development of a quantum network, we hope this conceptual design may potentially open the door to a new research frontier in quantum information science.
\end{enumerate}


\chapter{Quantum Foundations}\label{chap: QFound}

\begin{chapquote}{Charles Bennett, co-inventor of quantum teleportation}
	``Quantum information is more like the information in a dream.''
\end{chapquote}

\textsf{QUANTUM INFORMATION SCIENCE} is based on the framework of quantum theory.  In an almost paradoxical manner, quantum theory provides an extraordinary degree of applicability, and yet its fundamental structures remain deeply mysterious.  Quantum foundations is a field that is devoted to examining the nature of these structures.  Perhaps the two great mysteries are:  What is the nature of the quantum state? And how does the quantum state `collapse' upon measurement?  The first question stems as a generalization concerning the unknown physical representation of quantum information.  Whereas the second question arises from the unarticulated notion that an undefined observer causes an instant transformation from quantum information to classical information.

In this thesis, we highlight how concepts from the entanglements in both space and time can progress us towards these two questions. To elaborate, we will focus our study on a particular \textit{interdependence} witnessed in the entanglements, known as non-locality. We will see that it a stricter form of non-classical interdependence than entanglement.  Our focus in this chapter is to describe non-locality across space as well as across time.  We aim to show how these properties can shed at least a partial understanding on the two questions.

\section{Quantum Measurements}

Our focus will initially be on the question of quantum measurement.  The approach will involve deriving the condition for non-locality in space, and using it to unravel issues regarding measurements.  This procedure requires two points: 

\begin{enumerate}[noitemsep, topsep=0pt, label=\roman*)]
	\item The probabilistic aspect of quantum theory.
	\item A concept known as realism.
\end{enumerate}
Before moving to non-locality in space, we provide an aid by illuminating results concerning these two points.  Gleason's theorem is a result concerning the first point, and the Kochen–Specker theorem elaborates on the second point.  For a comprehensive overview on these results, refer to \cite{hemmick2011bell}.  

\subsection{Gleason's theorem}

The probabilistic aspect of quantum theory is conveyed through the measurement postulate such as the Born rule (\ref{povmproprob}).  Roughly speaking, Gleason's theorem \cite{gleason1957measures} states that if one is given the non-probabilistic structure of quantum theory (e.g. Hilbert spaces, projection operators) and one also assumes that the theory requires a probabilistic character, then that character must be expressed in no other way than the Born rule.  An alternative view is that if one requires non-Born rule quantum probabilities, then one must give up using projection operators to describe measurements.  In this sense, Gleason's theorem can be interpreted as a `derivation' of the Born rule.  However, it is important to emphasize that the assumption of a probabilistic aspect is still needed and the underlying nature of it is currently unknown.  Thus at present, the Born rule cannot be derived solely from the non-probabilistic postulates of quantum theory.

\textbf{a) Preliminaries: } Within quantum foundations as well certain areas of pure mathematics, an extensive investigation of Gleason's theorem has been carried out \cite{cooke1985elementary,hellman1993gleason,billinge1997constructive, pitowsky1998infinite, richman1999constructive, richman2000gleason, busch2003quantum, caves2004gleason, buhagiar2009gleason, wright2018gleason, hamhalter2013quantum, cohen2012introduction}.  The theorem addresses the minimal, (in fact, quite surprisingly minimal), assumptions required to deduce the existence of a quantum density matrix, (a unit trace Hermitian matrix encoding the notion of quantum probability), and as mentioned underlies the theoretical justification for adopting the Born rule.  Early proofs of Gleason’s theorem were implicit and non-constructive, and for some time there was controversy as to whether a constructive proof was even possible \cite{hellman1993gleason, billinge1997constructive, richman1999constructive, richman2000gleason}. With hindsight, disagreement on what methods are legitimately to be deemed “constructive” is the key point of the constructivist debate. Even with modern constructive (in principle) proofs, the construction is not particularly explicit, and often very little is said as to what the quantum density matrix actually looks like. Traditionally the analysis stops, and the theorem is complete, once the existence of the quantum density matrix is established.

In this subsection, we will now present work \cite{rajan2019explicit} this is \textit{part of the original component of this thesis} (which was done in collaboration with my supervisor).  It has very little to say about the theorem and proof themselves, focusing more on the implications: We shall say a little more about the density matrix itself — and shall provide two constructions (one implicit, one explicit) for the density matrix.

\textbf{b) Gleason's theorem: } An explicit statement of the theorem runs thus \cite{gleason1957measures}:
\begin{theorem}\label{gleasonstheorem}
	\textbf{(Gleason's theorem)} \\
	Suppose $H$ is a separable\footnote{A Hilbert space is separable if and only if it has a countable orthonormal basis.} Hilbert space, (either real or complex). \\[5pt]
	A measure on $H$ is defined to be a function $v(\cdot)$ that assigns a nonnegative real number to each closed subspace of $H$ in such a way that:
	If   $\{A_{i}\}$ is any countable collection of mutually orthogonal subspaces of $H$, and the closed linear span of this collection is $B$, then $v(B)=\sum _{i}v(A_{i})$. Furthermore we normalize to  $v(H)=1$. \\[5pt]
	Then if the Hilbert space $H$ has dimension at least three, (either real or complex), every measure $v(\cdot)$ can be written in the form $v(A)=\mathrm{tr}(\rho \, P_{A})$, where $\rho$ is a positive semidefinite trace class operator with $\mathrm{tr}(\rho)=1$, and 
	$P_{A}$ is the orthogonal projection onto~$A$.\hfill$\Box$
\end{theorem}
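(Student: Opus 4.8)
The plan is to translate the measure $v(\cdot)$ on closed subspaces into a \emph{frame function} on the unit sphere and then to establish that every such frame function is necessarily a quadratic form. First I would define, for each unit vector $\psi \in H$, the quantity $f(\psi) \equiv v(\mathrm{span}\{\psi\})$. The countable additivity together with the normalization $v(H)=1$ translate into the defining property of a frame function of weight one: for every orthonormal basis $\{e_{i}\}$ of $H$ one has $\sum_{i} f(e_{i}) = 1$, together with $f \geq 0$. The goal then reduces to showing that any nonnegative frame function arises as $f(\psi) = \langle \psi, \rho\, \psi \rangle$ for a positive semidefinite trace-class operator $\rho$ with $\mathrm{tr}(\rho)=1$; the formula $v(A) = \mathrm{tr}(\rho\, P_{A})$ for a general closed subspace $A$ then follows by summing $f$ over an orthonormal basis of $A$ and invoking additivity.

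The heart of the argument is the real three-dimensional case, that is, frame functions on the two-sphere $S^{2}$. I would proceed in two stages. The analytically delicate stage is to prove that every nonnegative frame function on $S^{2}$ is continuous: nonnegativity immediately yields the bound $0 \leq f \leq 1$, but upgrading mere boundedness to continuity requires Gleason's intricate geometric estimates controlling how much $f$ can vary as one rotates an orthonormal triple through nearby configurations. Once continuity is secured, the second stage is comparatively clean: expand $f$ in spherical harmonics and observe that the frame condition, summed over the three axes of an arbitrary orthogonal frame, annihilates every harmonic component except those of degree $0$ and degree $2$. The surviving pieces are exactly the components of a quadratic form, so $f(\psi) = \langle \psi, T\psi\rangle$ for some symmetric operator $T$ on the real three-space.

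With the three-dimensional result in hand, I would extend to arbitrary dimension and to the complex field. For a separable Hilbert space of dimension at least three, restricting $f$ to each real three-dimensional subspace yields a local quadratic form, and these forms patch consistently across their two-dimensional overlaps to define a single bounded quadratic form on all of $H$, hence a bounded self-adjoint operator $T$ with $f(\psi) = \langle \psi, T\psi\rangle$. Setting $\rho = T$, the hypothesis $f \geq 0$ forces $\rho \geq 0$, while evaluating $v(H)=1$ on any orthonormal basis forces $\mathrm{tr}(\rho)=1$, which in particular makes $\rho$ trace class. The complex case is handled by viewing the complex Hilbert space over its real scalars (or by complexifying the real result on each real three-dimensional slice) and verifying that the resulting Hermitian form is genuinely sesquilinear.

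The main obstacle is unquestionably the continuity lemma for frame functions on $S^{2}$: this is the one step where no soft functional-analytic argument suffices and where Gleason's original delicate spherical-geometry estimates are indispensable. Everything downstream — the spherical-harmonic reduction, the patching to higher dimensions, and the passage to the complex field — is essentially bookkeeping once regularity of the frame function is established. I would also flag the dimension hypothesis $\dim H \geq 3$ as genuinely necessary rather than technical: in two real dimensions the frame condition constrains only pairs of orthogonal vectors and admits pathological non-quadratic frame functions, which is precisely why the whole argument must be anchored in the three-dimensional case.
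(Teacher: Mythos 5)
Your outline is a faithful reconstruction of Gleason's original argument --- translation of the measure into a nonnegative frame function, the regularity (continuity) lemma on $S^{2}$ as the analytic core, the harmonic-analysis reduction to degree~$0$ and degree~$2$ components, and the patching to higher dimensions and the complex field --- and as a strategy it is sound, including your correct flagging of $\dim H \geq 3$ as essential (the paper itself exhibits explicit non-quadratic valuations in two dimensions in its Kochen--Specker discussion). However, it necessarily differs from the paper, because the paper does not prove the theorem at all: it states the result with a citation to \cite{gleason1957measures}, explicitly disclaims any new proof (``we have not attempted to provide a new proof of Gleason's theorem''), and instead works in the converse direction, asking what the density matrix looks like once its existence is granted. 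Concretely, the paper gives an implicit construction of $\rho$ by Rayleigh--Ritz-style maximization of $v(\cdot)$ over unit vectors, extracting the eigenvectors $n_{i}$ and eigenvalues $v(Q_{n_{i}})$ one at a time, and a fully explicit construction via the polarization identity
\begin{equation}
\langle x|\rho|y\rangle = \frac{f(x+y)-f(x-y)}{4} - i\,\frac{f(x+iy)-f(x-iy)}{4},
\qquad f(x) = \|x\|^{2}\, v\!\left(\frac{x}{\|x\|}\right),
\end{equation}
following \cite{richman1999constructive}, which determines $\rho$ from the valuation on only $2d^{2}-d$ unit vectors. So your proposal supplies exactly what the paper omits (existence), while the paper supplies what your proposal treats as afterthought (an explicit formula for $\rho$ in terms of $v$). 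One caveat on your sketch: the complex-field step is less ``bookkeeping'' than you suggest --- verifying that the form assembled from real three-dimensional slices is genuinely sesquilinear rather than merely real-bilinear is precisely where the equivalences the paper quotes from \cite{richman1999constructive} (e.g.\ $\langle ax|\rho|by\rangle = \bar{a}\,b\,\langle x|\rho|y\rangle$ and Hermiticity) do nontrivial work, and a complete write-up would need to address it rather than wave at complexification.
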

(Physicists would almost immediately focus on complex Hilbert spaces; but some of the mathematical literature also works with real Hilbert spaces.)  The original theorem gives one very little idea of what the density matrix might look like, and it is this topic we shall address. Indeed, the original theorem spends many pages proving that the valuation $v(P)$ uniformly continuous;  while this is certainly an extremely useful result, most physicists, (and applied mathematicians for that matter), would simply assume continuity on physical grounds.

\textbf{c) Elementary observations: }  Our first observation is that since $\rho$ is Hermitian we can diagonalize it and define
\begin{equation}
\rho = \sum_i \lambda_i \; Q_i.
\end{equation}
Here the $Q_i$ are taken to be 1-dimensional subspaces, and the $\lambda_i$ are to be repeated with the appropriate multiplicity.
Per Gleason's theorem, 
\begin{equation}
v (Q_j) = \text{tr}(\rho Q_j) = \text{tr}\left( \left[\sum_i \lambda_i  Q_i\right]\; Q_j \right) =  \sum_i \lambda_i \;\text{tr}\left( Q_i \; Q_j\right) =  \lambda_j.
\end{equation}
So actually
\begin{equation}
\rho = \sum_i v(Q_i) \; Q_i,
\end{equation}
which does not (yet) help unless you can somehow extract the $Q_i$ in terms of the underlying valuation function $v(\cdot)$. Furthermore note that for each 1-dimensional subspace $Q_i$ we can identify
\begin{equation}
Q_i \sim |\psi_i\rangle \; \langle \psi_i|
\end{equation}
where $|\psi_i\rangle$ is any arbitrary vector in the 1-dimensional subspace $Q_i$. Then
\begin{equation}
v(Q_i) =  \langle \psi_i| \rho  |\psi_i\rangle.
\end{equation}
Now let $P_i$ be any arbitrary collection of orthogonal 1-dimensional projection operators
\begin{equation}
v\left( \sum_i P_i\right) =  \sum_i  v(P_i) = 1.
\end{equation}
Using Gleason's theorem, we can calculate
\begin{equation}
v(P_j) = \text{tr}(\rho P_j) = \text{tr}\left( \left[\sum_i v(Q_i)  Q_i\right]\; P_j \right) =  \sum_i v(Q_i) \; \text{tr}\left( Q_i \; P_j\right) =  \sum_i v(Q_i) \; S_{ij},
\end{equation}
with $S_{ij}= \text{tr}\left( Q_i \; P_j\right) $ a bi-stochastic matrix (which is a square matrix of non-negative real numbers with each row and column summing to unity). That is,  Gleason's theorem implies
\begin{equation}
v(P_j) = \sum_i v(Q_i) \;  S_{ij}; 
\qquad\qquad\hbox{with} \qquad\qquad 
S_{ij} = | \langle q_i | p_j \rangle |^2 = |U_{ij}|^2.
\end{equation}
So we see that the matrix $S_{ij}$ is actually unitary-stochastic (which is a bi-stochastic matrix whose entries are the squares of the absolute values of the entries of some unitary matrix); both unitary and unitary-stochastic matrices drop out automatically.

\def\ss{\hbox{\large $\displaystyle\$ $}}
Now pick some random basis $P_i$ and construct
\begin{equation}
\rho_P = \sum_i v(P_i) \; P_i.
\end{equation}
This is not $\rho$ itself, but it is what you get from $\rho$ by hitting it with $\ss_P$, the decoherence super-scattering operator with respect to the basis $P_i$ \cite{alonso2017coarse}.  (At a basic level, a super-scattering operator can be viewed as a trace-preserving linear mapping from density matrices to density matrices.) To see this note
\begin{equation}
\ss_P \;\rho =  \sum_i P_i \;\text{tr}( P_i \,\rho) =    \sum_i P_i \; v( P_i) = \rho_P.
\end{equation}
Finally consider what happens if you average over the $P_i$:
\begin{equation}
\left\langle \ss_P\right\rangle \;\rho =  \left\langle \sum_i P_i \;\text{tr}( P_i \,\rho) \right\rangle 
=    \left\langle \sum_i P_i \,\; v( P_i) \right\rangle = \left\langle \rho_P \right\rangle.
\end{equation}
In $d$ dimensions for a uniform average over the $(P_i)_{ab}$ we have
\begin{equation}
\left\langle \sum_i (P_i)_{ab} \; (P_i)_{cd} \right\rangle = {\delta_{ac}\delta_{bd} + \delta_{ab} \delta_{cd}\over d+1}.
\end{equation}
This arises from symmetry plus the normalization condition $\langle I_{d\times d}\rangle = I_{d\times d}$. But then we can reconstruct
\begin{equation}
\rho = (d+1) \left\langle \rho_P \right\rangle - I_{d\times d}. 
\end{equation}
(Note this does have the correct trace, $\text{tr}(\rho)=1$.)
So if you know all possible ways in which the density matrix decoheres $\rho \to \rho_P$, and uniformly average over all choices of decoherence basis, then one can reconstruct the full density matrix.  While certainly an elegant result, this is by no means explicit. 

\textbf{d) Implicit construction: }  Let us now set up a reasonably explicit construction of the density matrix $\rho$ directly from the valuation function $v(P)$.  To construct $\rho$ proceed as follows: First for any 1-dimensional subspace note $Q\sim |n\rangle\; \langle n|$ where $n$ can be taken to be a unit vector in $S^{d-1}$. This defines a valuation $v(n)$ on $S^{d-1}$. Then find a ${n_1}$ such that $v(Q_{n_1})= \max_{n\in S^{d-1}} \{v(P_n)\} =  \max_{n\in S^{d-1}} \langle n|\rho|n\rangle$. 

Now consider the $S^{d-2}$ perpendicular to $n_1$: Proceed as follows --- find a $n_2$ such that $v(Q_{n_2})= \max_{n\in S^{d-2}} \{v(P_n)\}$. By construction $n_1 \perp n_2$ and $P_{n_1} P_{n_2} = 0$. Iterate this construction: Consider the $S^{d-i}$ perpendicular to $n_1$, $n_2$, \dots, $n_{i-1}$: Find a $n_i$ such that $v(Q_{n_{i}})= \max_{n\in S^{d-i}} \{v(P_n)\}$.  By construction the $n_j$ for $j\in\{1,2,\cdots, i\}$ are mutually perpendicular, and $P_{n_j} P_{n_k} = 0$ for $j\neq k$ and $j,k\in\{1,2,\cdots, i\}$.  Ultimately we have $n_d = \max_{n\in S^{0}} \{v(P_n)\} = \min_{n\in S^{d-1}} \{v(P_n)\}$. The construction terminates after $d$ steps with an orthonormal basis $n_1$, $n_2$, \dots, $n_d$, and the corresponding valuations $v(Q_{n_i})$. Now construct
\begin{equation}
\rho = \sum_{i=1}^d v(Q_{n_i}) \; Q_{n_i}.
\end{equation}
This is the density matrix you want. \hfill $\Box$
\begin{proof}
It is clearly a density matrix; it only remains to check that it is the density matrix.  But this is obvious from the construction --- the $n_i$ are the simply eigenvectors of $\rho$, with the corresponding projection operators $Q_{n_i}$, and the $v(Q_{n_i})$ are the eigenvalues. (Basically the construction above is just an application of the  Rayleigh--Ritz min-max variational theorem for  finding eigenvectors/eigenvalues of Hermitian matrices.) The density matrix is constructed in terms of the values, $v(Q_{n_i})$, and locations, $n_i$, of the maximum, minimum, and extremal points of the valuation function $v(\cdot)$.  
\end{proof}
Note the construction is still rather implicit. Once Gleason's theorem guarantees the existence of the density matrix, this construction implicitly allows one to determine the density matrix. The more purist of constructivist mathematicians might not call this constructive, but most others would.  On the other hand, as we shall now show, much better can be done in terms of a fully explicit construction.

\textbf{e) Explicit construction: } This second construction is completely explicit but considerably more subtle.  We assert that within the framework of Gleason's theorem, for any arbitrary basis on complex Hilbert space we can write:
\begin{eqnarray}
&&\rho =  \sum_j |n_j\rangle \; v(n_j) \; \langle n_j|
\\
&&+{1\over2} \sum_{j\neq k}  |n_j\rangle
\left\{ {v\left(n_j+n_k\over\sqrt{2}\right) - v\left(n_j-n_k\over\sqrt{2}\right)} -i \; {v\left(n_j+in_k\over\sqrt{2}\right)+iv\left(n_j-in_k\over\sqrt{2}\right)}\right\}     
\; \langle n_k|.
\nonumber
\end{eqnarray}
That is, to reconstruct the full density matrix we need only determine the valuations $v(\cdot)$, which is a collection of real numbers, on the specific set of unit vectors
\begin{equation}
n_j; \qquad \left(n_j\pm n_k\over\sqrt{2}\right); \qquad \left(n_j\pm in_k\over\sqrt{2}\right).
\end{equation}
There are a total of $d + d(d-1) + d(d-1) = 2d^2-d$ such unit vectors to deal with.  This formula for the density matrix can also be rearranged as follows
\begin{eqnarray}
\rho &=&  \sum_j  v(n_j) |n_j\rangle \; \langle n_j| 
\nonumber\\
&&+{1\over2} \sum_{j< k}  
\left\{ {v\left(n_j+n_k\over\sqrt{2}\right) - v\left(n_j-n_k\over\sqrt{2}\right)} \right\}   
\left( \vphantom{\Big{|}} |n_j\rangle \; \langle n_k| +  |n_k\rangle \; \langle n_j| \right)  
\nonumber\\
&&-{i\over2} \sum_{j< k}  
\left\{  \; {v\left(n_j+in_k\over\sqrt{2}\right)+v\left(n_j-in_k\over\sqrt{2}\right)}\right\}    
\left( \vphantom{\Big{|}} |n_j\rangle \; \langle n_k| -  |n_k\rangle \; \langle n_j| \right).  
\end{eqnarray}
In this form, Hermiticity of the density matrix is manifest.  The situation for a real Hilbert space is considerably simpler:
\begin{equation}
\rho =  \sum_j |n_j\rangle \; v(n_j) \; \langle n_j|
+{1\over2} \sum_{j\neq k}  |n_j\rangle
\left\{ {v\left(n_j+n_k\over\sqrt{2}\right) - v\left(n_j-n_k\over\sqrt{2}\right)} \right\}     
\; \langle n_k|.\qquad
\end{equation}
There are now only a total of $d + d(d-1) = d^2$ unit vectors to deal with.  This formula for the (real) density matrix can also be rearranged as follows
\begin{eqnarray}
\rho &=&  \sum_j  v(n_j) |n_j\rangle \; \langle n_j| 
\nonumber\\
&&+{1\over2} \sum_{j< k}  
\left\{ {v\left(n_j+n_k\over\sqrt{2}\right) - v\left(n_j-n_k\over\sqrt{2}\right)} \right\}   
\left( \vphantom{\Big{|}} |n_j\rangle \; \langle n_k| +  |n_k\rangle \; \langle n_j| \right).
\end{eqnarray}
In this form, symmetry of the (real) density matrix is manifest.  To start the construction, following \cite{richman1999constructive},  we extend the valuation $v(P)\longleftrightarrow v(n)$ from $S^{d-1}$ to all of $H$ as follows:
\begin{equation}
f(n) =  ||n||^2 \; v\left( n\over ||n||\right),
\end{equation}
Now, again following \cite{richman1999constructive},
\begin{equation}
\langle x| \rho| y \rangle = {f(x+y) - f(x-y)\over 4} -i \; {f(x+iy)-f(x-iy)\over 4},
\end{equation}
which in the real case reduces to
\begin{equation}
\langle x| \rho| y \rangle = {f(x+y) - f(x-y)\over 4}.
\end{equation}
In \cite{richman1999constructive}, it asserts the equivalence of:
\begin{itemize}[noitemsep, topsep=0pt]
	\item $ \langle a x| \rho| b y \rangle  = \overline{a}\,b\, \langle x| \rho| y \rangle $.
	\item $\langle x| \rho| y \rangle  = \overline{ \langle y| \rho| x \rangle }$.
	\item $\langle x| \rho| y_1+y_2 \rangle = \langle x| \rho| y_1 \rangle + \langle x| \rho| y_2 \rangle$.
\end{itemize}
where the overline signifies the complex conjugation. This is needed to verify that $\langle x| \rho| y \rangle $ actually represents a bilinear form.  Then the density matrix $\rho$ can itself be defined by
\begin{equation}
\rho =  \sum_j \sum_k |n_j\rangle \; \langle n_j| \rho| n_k \rangle \; \langle n_k|.
\end{equation}
So
\begin{equation}
\rho =  \sum_j \sum_k |n_j\rangle \; 
\left\{ {f(n_j+n_k) - f(n_j-n_k)\over 4} -i \; {f(n_j+in_k)-f(n_j-in_k)\over 4}\right\}     
\; \langle n_k|.
\end{equation}
Whence, splitting the sum into diagonal and off-diagonal pieces, and noting that both $||n_j\pm n_k||^2 = 2 = ||n_j\pm in_k||^2$, while $ \widehat{n_j\pm n_k} = (n_j\pm n_k)/\sqrt 2$, and finally $ \widehat{n_j\pm i n_k} =  (n_j\pm i n_k)/\sqrt 2$, we have:
\begin{eqnarray}
&&\rho =  \sum_j |n_j\rangle \; v(n_j) \; \langle n_j|
\\
&&+{1\over2} \sum_{j\neq k}  |n_j\rangle
\left\{ {v\left(n_j+n_k\over\sqrt{2}\right) - v\left(n_j-n_k\over\sqrt{2}\right)} -i \; {v\left(n_j+in_k\over\sqrt{2}\right)+iv\left(n_j-in_k\over\sqrt{2}\right)}\right\}     
\; \langle n_k|.
\nonumber
\end{eqnarray}
That is, in terms of the decohered density matrix $\rho_P$ we have:
\begin{eqnarray}
&&
\rho =  \rho_P
\\
&&
+{1\over2} \sum_{j\neq k}  |n_j\rangle
\left\{ {v\left(n_j+n_k\over\sqrt{2}\right) - v\left(n_j-n_k\over\sqrt{2}\right)} -i \; {v\left(n_j+in_k\over\sqrt{2}\right)+iv\left(n_j-in_k\over\sqrt{2}\right)}\right\}     
\; \langle n_k|.
\nonumber
\end{eqnarray}
For a real Hilbert space this reduces to
\begin{equation}
\rho =  \rho_P
+{1\over2} \sum_{j\neq k}  |n_j\rangle
\left\{ {v\left(n_j+n_k\over\sqrt{2}\right) - v\left(n_j-n_k\over\sqrt{2}\right)} \right\}     
\; \langle n_k|.
\end{equation}
One aspect of the ``miracle'' of Gleason's theorem is  that this construction is actually independent of the specific basis chosen.  To see why this construction works, note that from Gleason's theorem, for unit vectors 
\begin{equation}
\hat x \sim |\hat x\rangle ={|x\rangle\over ||x||}\sim {x\over||x||},
\end{equation}
we have
\begin{equation} 
v(\hat x) = \langle \hat x | \rho |\hat x\rangle = { \langle  x | \rho |x\rangle\over ||x||^2},
\end{equation} 
or more prosaically
\begin{equation} 
\langle  x | \rho |x\rangle = ||x||^2 v(\hat x).
\end{equation} 
But then
\begin{equation} 
\langle  x+y | \rho |x+y\rangle = ||x+y||^2 v(\widehat{x+y})  = 
\langle x | \rho |x\rangle +  \langle  y | \rho |y\rangle 
+ ( \langle  x | \rho |y\rangle  + \langle  y | \rho |x\rangle ),
\end{equation} 
and 
\begin{equation} 
\langle  x-y | \rho |x-y\rangle = ||x-y||^2 v(\widehat{x-y})  = 
\langle x | \rho |x\rangle +  \langle  y | \rho |y\rangle 
- ( \langle  x | \rho |y\rangle  + \langle  y | \rho |x\rangle ),
\end{equation} 
whence
\begin{equation} 
\langle  x | \rho |y\rangle  + \langle  y | \rho |x\rangle
= {1\over2} \left\{||x+y||^2 v(\widehat{x+y}) - ||x-y||^2 v(\widehat{x-y})  \right\}.
\end{equation} 
(In a real Hilbert space we could stop here since then $\langle  x | \rho |y\rangle  = \langle  y | \rho |x\rangle$.)  Similarly, in a complex Hilbert space, 
\begin{equation} 
\langle  x+iy | \rho |x+iy\rangle = ||x+iy||^2 v(\widehat{x+iy})  = 
\langle x | \rho |x\rangle +  \langle  y | \rho |y\rangle 
+ i ( \langle  x | \rho |y\rangle  - \langle  y | \rho |x\rangle ),
\end{equation} 
and
\begin{equation} 
\langle  x-iy | \rho |x-iy\rangle = ||x-iy||^2 v(\widehat{x-iy})  = 
\langle x | \rho |x\rangle +  \langle  y | \rho |y\rangle 
- i( \langle  x | \rho |y\rangle  - \langle  y | \rho |x\rangle ),
\end{equation} 
whence
\begin{equation} 
\langle  x | \rho |y\rangle  - \langle  y | \rho |x\rangle
= -{i\over2} \left\{||x+iy||^2 v(\widehat{x+iy}) - ||x-iy||^2 v(\widehat{x-iy})  \right\}.
\end{equation} 
Combining these results
\begin{eqnarray} 
\langle  x | \rho |y\rangle &=&
+  {1\over4} \left\{||x+y||^2 v(\widehat{x+y}) - ||x-y||^2 v(\widehat{x-y})  \right\}
\nonumber
\\
&& -{i\over4} \left\{||x+iy||^2 v(\widehat{x+iy}) - ||x-iy||^2 v(\widehat{x-iy})  \right\}.
\end{eqnarray} 
This finally justifies our construction of the density matrix $\rho$ as presented above.

\textbf{f) Two dimensions: } Although Gleason's theorem does not apply in two dimensions, there are improved versions of Gleason's theorem based on POVMs, see \cite{busch2003quantum, caves2004gleason}, that do apply to 2-dimensional Hilbert space. In this case the formalism simplifies even further: Let $\hat x$ and $\hat y$ be any orthonormal basis for the 2-dimensional Hilbert space. Then in terms of the valuation $v(\cdot)$ the density matrix is
\begin{eqnarray}
\rho &=&  v(\hat x) \; |\hat x\rangle \; \langle \hat x| +  v(\hat y) \; |\hat y\rangle \; \langle \hat y| 
\nonumber\\
&&+{1\over2} 
\left\{ {v\left(\hat x+\hat y\over\sqrt{2}\right) - v\left(\hat x-\hat y\over\sqrt{2}\right)} \right\}   
\left( \vphantom{\Big{|}} |\hat x\rangle \; \langle \hat y| +  |\hat y\rangle \; \langle \hat x| \right)  
\nonumber\\
&&-{i\over2}  
\left\{  \; {v\left(\hat x+i\hat y\over\sqrt{2}\right)-v\left(\hat x-i\hat y\over\sqrt{2}\right)}\right\}    
\left( \vphantom{\Big{|}} |\hat x\rangle \; \langle \hat y| -  |\hat y\rangle \; \langle \hat x| \right).  
\end{eqnarray}
If desired one can further rewrite this in terms of the Pauli $\sigma$ matrices
\begin{eqnarray}
\rho &=&  {v(\hat x) + v(\hat y)\over2} \; I_{2\times2} + {v(\hat x) - v(\hat y)\over2}\; \sigma_z 
\nonumber\\
&+&{1\over2} 
\left\{ {v\left(\hat x+\hat y\over\sqrt{2}\right) - v\left(\hat x-\hat y\over\sqrt{2}\right)} \right\}   \sigma_x
-{i\over2}  
\left\{  \; {v\left(\hat x+i\hat y\over\sqrt{2}\right)-v\left(\hat x-i\hat y\over\sqrt{2}\right)}\right\}    \sigma_y.
\qquad
\end{eqnarray}
For real 2-dimensional Hilbert space this further simplifies to
\begin{eqnarray}
\rho &=&  v(\hat x) \; |\hat x\rangle \; \langle \hat x| +  v(\hat y) \; |\hat y\rangle \; \langle \hat y| 
\nonumber\\
&&+{1\over2} 
\left\{ {v\left(\hat x+\hat y\over\sqrt{2}\right) - v\left(\hat x-\hat y\over\sqrt{2}\right)} \right\}   
\left( \vphantom{\Big{|}} |\hat x\rangle \; \langle \hat y| +  |\hat y\rangle \; \langle \hat x| \right).  
\end{eqnarray}
(For completeness, note that for one dimension the valuation trivializes to $v(\cdot)\equiv 1$, and so the density matrix trivializes to $\rho \equiv I_{1\times 1}$.)

\textbf{g) Comments: } 
\begin{enumerate}[noitemsep, topsep=0pt, label=\roman*)]
	\item We have not attempted to provided a new proof of Gleason's theorem. We have in mind a much more modest attempt at trying to understand what the density matrix actually looks like directly in terms of the probability valuations $v(\cdot)$ on a limited number of subspaces of the Hilbert space.  
	\item Gleason's theorem is profound that it shapes the probabilistic nature of quantum theory resulting in the Born rule.  It places strong constraints on any attempts to modify this probabilistic formalism.  However, it still requires the assumption of a probabilistic aspect for its derivation.
	\item Future work regarding this explicit construction of the density operator may involve applications to quantum information science.  This may reveal interesting links between quantum foundations, and to the fundamental quantum information results such as no-cloning or no-broadcasting.
\end{enumerate}

\subsection{Kochen–Specker theorem}

In this subsection, we want to articulate a concept known as realism.  Realism is the view that physical properties have definite values which exist independent of observation.  (This of course seems obvious to classical intuition.)  It is also known as value definiteness \cite{zeilinger2017quantum} where it is said that the properties of physical objects always have definite values even if they are not measured or accessible for any observer.  In quantum theory, values of physical objects are revealed at the moment of measurement; prior to that we only have access to the quantum state and are not given a physical picture of the world.  The crucial question is whether there could be a value definite structure underlying quantum theory?

The Kochen-Specker theorem (sometimes called the Bell–Kochen–Specker theorem) can be crudely stated that if a theory reproduces the results of quantum theory and also has value definiteness, then that theory must be contextual.  Contextuality is the property that the result of a measurement can depend on what combination of measurements we chose to do!  In other words, the outcome of a question depends on what other questions we are simultaneously trying to answer alongside it.  For a classical analogy, suppose one is trying to measure a person's height.  Then contextuality in this scenario implies one gets a different value for height if one measured the person's weight along with it than one would get if one measured the person's shoe size along with it!  To avoid a contextual characteristic to a theory, the alternative method is to give up value definiteness.  In this case the values do not exist before one does a measurement!  

\textbf{a) Preliminaries: }  Our aim is to state the Kochen-Specker theorem and provide a proof.  We shall phrase our discussion in terms of real Hilbert spaces, noting that a complex Hilbert space can always be viewed as a real Hilbert space of double the dimensionality $\mathbb{C}^n \sim \mathbb{R}^{2n}$.  One view of the Kochen–Specker theorem is that it demonstrates the impossibility of consistently assigning $\{0, 1\}$ truth values to quantum propositions.  It was originally proved some fifty years ago by explicitly finding a set of $117$ distinct projection operators on $3$-dimensional Hilbert space \cite{kochen1975problem, bell1966problem}, and then showing that there was no way to consistently assign values in $\{0, 1\}$ to these projection operators. (That is, these 117 “quantum questions” that one might ask could not be consistently assigned yes-no answers.) A later version of the Kochen–Specker theorem reduced the number of projection operators to $33$ \cite{peres1991two}. This was further reduced to 24 \cite{peres1991two}, to 20 \cite{kernaghan1994bell}, and then to 18 \cite{cabello1997proof, cabello1996bell}, at the cost of slightly increasing the dimension of the Hilbert space to 4. Ultimately the number of projection operators was further reduced to 13 in an 8-dimensional Hilbert space in reference \cite{kernaghan1995kochen}. Interest in these foundational issues has continued unabated \cite{mermin1990s, mermin1990simple}, with at least two ``geometrical" proofs that avoid explicit construction of sets of projection operators \cite{gill1996geometric, calude2014kochen}.

In this subsection, we will now present work \cite{rajan2017kochen} this is \textit{part of the original component of this thesis} (which was done in collaboration with my supervisor).  We shall provide yet another even more simplified ``geometrical'' proof of the Kochen–Specker theorem, which, while it is still non-constructive, (proceeding by establishing an inconsistency), is utterly minimal in its technical requirements, and so hopefully instructive.

\textbf{b) Kochen–Specker theorem: }  An explicit statement of the Kochen--Specker theorem, (based on the discussion in the Stanford encyclopaedia of philosophy), runs thus:

\begin{theorem}\label{kockenspecker}
	\textbf{(Kochen-Specker -- mathematical version)} 
	
	Let $H$ be a Hilbert space of quantum state vectors of real dimension $d \geq  3$. Then there is a set $M$ of observables on $H$, containing $n$ elements, such that the following two assumptions are contradictory: 
		\begin{description}
		\item[KS1:]  All $n$ members of $M$ simultaneously have values, that is, they are unambiguously mapped onto real numbers (designated, for specific observables $A, B, C$, ..., by values $v(A)$, $v(B)$, $v(C)$, ...).
		\item[KS2:]  Values of observables conform to the following constraints:
		\vspace{-5pt}
		\begin{description}
			\itemsep-3pt
			\item[(a)] If $A, B, C$ are all compatible and $C = A+B$, then $v(C) = v(A)+v(B)$.
			\item[(b)] If  $A, B, C$ are all compatible and $C = AB$, then $v(C) = v(A)v(B)$.
			\item[(c)] $\exists$ at least one observable $X$ with $v(X)\neq 0$.\\
			(Here ``compatible'' means that the observables commute.)
		\end{description}
	\end{description}\hfill$\Box$ 
\end{theorem}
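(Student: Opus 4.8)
The plan is to strip away the generality of the observable set $M$ and reduce the theorem to a statement about two-colouring the rays of a three-dimensional space, and then to defeat that colouring by a connectivity (continuity-style) argument rather than by enumerating an explicit finite set of rays.

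First I would pass from arbitrary observables to projectors. By the spectral decomposition every observable may be written $A=\sum_i a_i P_i$ with mutually orthogonal (hence compatible) projectors $P_i$. Since $P_i=P_i^2$, assumption \textbf{KS2(b)} forces $v(P_i)=v(P_i)^2$, so $v(P_i)\in\{0,1\}$, and \textbf{KS2(a)} then gives $v(A)=\sum_i a_i\,v(P_i)$; thus any valuation is completely determined by its values on projectors. Taking a complete orthonormal basis $\{|e_i\rangle\}$, the projectors $P_i=|e_i\rangle\langle e_i|$ are compatible and sum to the identity, so $\sum_i v(P_i)=v(I)$; because $I=I^2$ and, by \textbf{KS2(c)}, not every observable can be valued zero, one finds $v(I)=1$. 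Consequently in every orthonormal frame exactly one projector is assigned $1$ and the rest $0$.

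Second, this recasts \textbf{KS1}+\textbf{KS2} as the following concrete problem: colour each ray (one-dimensional subspace) of $H$ with $0$ or $1$ so that in every orthonormal frame precisely one ray receives $1$. A complex space $\mathbb{C}^n$ is a real space $\mathbb{R}^{2n}$, and any $H$ of real dimension $d\geq 3$ contains a copy of $\mathbb{R}^3$, so it suffices to rule out such a colouring of the directions on $S^2$ with antipodes identified. I would record the two local consequences of the frame condition: if a direction is coloured $1$ then every orthogonal direction is coloured $0$; and if two orthogonal directions are both coloured $0$ then the unique third direction completing the triad is coloured $1$.

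Third --- the crux --- I would derive a contradiction geometrically. The fact $v(I)=1$ guarantees a $1$-coloured (``true'') direction, while the two remaining members of any frame guarantee $0$-coloured (``false'') directions, so the colouring is non-constant on the connected space $\mathbb{RP}^2$. The strategy is to prove a ``no sudden jumps'' lemma: using the two local rules to build short orthogonality chains linking nearby directions, one shows there is a fixed $\delta>0$ such that any two directions separated by an angle less than $\delta$ must receive the same colour. The colouring would then be locally constant, hence constant by connectedness, contradicting both its non-constancy and the frame condition directly, since a constant colouring makes every triad sum to $0$ or $3$ rather than $1$. Making this lemma precise with minimal machinery --- pinning down the geometric gadget that forces equality of colour for sufficiently close directions --- is the main obstacle; the reductions before it are essentially bookkeeping, and once the lemma is in hand the passages $\mathbb{C}^n\sim\mathbb{R}^{2n}$ and $\mathbb{R}^3\hookrightarrow H$ deliver the result for all $d\geq 3$.
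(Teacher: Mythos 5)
Your reductions are sound and essentially coincide with the paper's own bookkeeping: passing to projectors via the spectral decomposition, deducing $v(P_i)\in\{0,1\}$ and $v(I)=1$ from \textbf{KS2}, and recasting the theorem as the non-existence of a valuation $v:S^{d-1}\to\{0,1\}$ with $v(-n)=v(n)$ and exactly one $1$ per orthonormal frame. One small but real omission: ``$H$ contains a copy of $\mathbb{R}^3$, so it suffices'' is not automatic for $d>3$, because a triad inside your chosen $\mathbb{R}^3$ extends to a full frame by $d-3$ extra vectors, and the triad sums to $1$ only if those extra vectors are all valued $0$. The paper fixes this by first picking $d-3$ mutually orthogonal zero-valued directions (if none exist the valuation already fails) and working on the $2$-sphere orthogonal to them; you need that step too.

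The genuine gap is your ``no sudden jumps'' lemma, which you correctly identify as the crux but leave unproven --- and your stated hope of obtaining it ``by connectivity rather than enumerating an explicit finite set of rays'' cannot succeed as formulated. Connectedness plus your two local rules is not enough: the same rules restricted to a single great circle (the two-dimensional case, which the paper works out explicitly) admit valuations built from an \emph{arbitrary}, everywhere-discontinuous function $g:[0,\pi/2)\to\{0,1\}$, so nothing purely topological forces nearby rays to share a colour; any proof of a uniform-$\delta$ local constancy must use the three-dimensional orthogonality structure quantitatively. That quantitative content is exactly what the original Kochen--Specker argument supplies via an explicit finite orthogonality gadget (a chain of rays forcing $v(p)=1\Rightarrow v(q)=1$ for $p,q$ within a specific small angle) --- i.e., precisely the finite ray construction you wanted to avoid. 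The paper takes a different route around the same obstacle: instead of a uniform local lemma, it anchors a $1$ at the poles (forcing the equator to $0$), proves that zeros propagate toward the equator along meridians by a two-step great-circle descent ($r\in C(p)$, $q\in C(r)$, using that the apex and the equator-crossing point of a descent circle are orthogonal zeros, which force the whole circle to $0$), standardizes the prime meridian by rotations, and derives the contradiction by comparing the resulting four-segment colouring of $S^2$ against the same analysis based on a rotated meridian. Note that the paper's propagation is relative to a global anchor ($v(\hbox{poles})=1$), not a local $\delta$-statement, so its mechanism would not hand you your lemma for free. Until you either construct the gadget or switch to a descent-style argument, the proposal is a legitimate alternative skeleton (the classical KS route) whose central step is missing.
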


The statement {\bf KS1} essentially captures the notion of value definiteness (or realism).  The assumptions {\bf KS2a} and {\bf KS2b} are respectively referred to as the sum rule and product rule.  Both of these assumptions are based on what is known as the functional composition principle which is in turn a consequence of non-contextuality.  (The explicit connection among these various statements can be found in the Stanford encyclopaedia of philosophy).

There are several technical issues with the above presentation. Without condition {\bf KS2c} the theorem is actually false --- the trivial valuation where for all observables $X$ one sets $v(X)= 0$ provides an explicit counter-example. 
Without condition {\bf KS2c}, $v(I) = v(I^2) = v(I)^2$ only implies $v(I)\in\{0,1\}$.
With condition {\bf KS2c} we have the stronger statement that  $v(X) = v(IX) = v(I)v(X)$, which since $v(X)\neq 0$  implies $v(I)=1$.

A more subtle issue is this: 
Physically, we would like to have $v(zA)=z\, v(A)$, for any $z\in \mathbb{C}$.
But using the conditions  {\bf KS2a} and  {\bf KS2b}  we could only deduce this for rational numbers. 
Extending this to the complex numbers requires us to first construct the real numbers ``on the fly'' using Dedekind cuts, and then to formally construct the complex numbers as an algebraic extension of the field of real numbers --- while this is certainly possible, in a physics context it is rather pointless --- it would seem more reasonable to start with the complex numbers as being given, even if you then need slightly stronger axioms.

{\bf Improved KS2 axioms:}
\vspace{-7pt}
\begin{description}
	\itemsep-3pt
	\item[(a)] If $[A,B]=0$ and $a,b\in \mathbb{C}$, then $v(aA+bB) = a\,v(A)+b\,v(B)$.
	\item[(b)] If $[A,B]=0$ then $v(AB) = v(A)\,v(B)$.
	\item[(c)] $\exists$ at least one observable $X$ with $v(X)\neq 0$.
\end{description}
\vspace{-7pt}

If one accepts these improved {\bf KS2} axioms then immediately
\begin{equation}
v(I)=1; \qquad v(aI)=a; 
\end{equation}
and for any analytic function with a non-zero radius of convergence
\begin{equation}
v(f(A))= f(v(A)).
\end{equation}
Note that this last condition, $v(f(A))= f(v(A))$, is where physics discussions of the Kochen--Specker theorem often start. 
Indeed let us write $A=\sum_i  a_i P_i$ where the $a_i$ are real and the $P_i$ are projection operators onto 1-dimensional subspaces; so the projection operators $P_i = |\psi_i\rangle\,\langle\psi_i|$ can be identified with the vectors $|\psi_i\rangle$ which form a basis for the Hilbert space. Then
\begin{equation}
v(A) = v\left(\sum_i  a_i P_i\right) = \sum_i  a_i \; v(P_i).
\end{equation}
This now focusses attention on the valuations $v(P_i)$. Since $P_i^2=P_i$, condition {\bf KS2b} implies that $v(P_i)\in\{0,1\}$; the valuation must be a yes-no valuation. Now consider the identity operator $I = \sum_i P_i$ and note
\begin{equation}
\sum_i v(P_i) = v(I) = 1.
\end{equation}
It is customary to identify the projectors $P_i = |n_i\rangle\, \langle n_i | $ with the corresponding unit vectors $n_i$, (defined up to a sign), with the $n_i$ forming a basis for Hilbert space, and in $d$ dimensions write
\begin{equation}
\sum_{i=1}^d v(n_i)  = 1;    \qquad \qquad v(n_i)  \in\{0,1\}; \qquad\quad v(-n)=v(n). 
\end{equation}
It is the claimed existence of this function $v(n)$, having the properties stated above for any arbitrary basis of Hilbert space, which is the central point of the {\bf KS1} and {\bf KS2} conditions.  This discussion allows us to rephrase the Kochen--Specker theorem in terms of the non-existence of such a valuation.

\begin{theorem}\label{kockenspecker2}
	\textbf{(Kochen-Specker --- physics-based  version)} 
	
	For $d\geq3$ there is no valuation $v(n)  : S^{d-1} \to \{0, 1\}$, where $S^{d-1}$
	is the unit hypersphere, such that $v(-n) = v(n)$ for all $n$ and
	\begin{equation}
	\sum_{i=1}^d v(n_i)  = 1,
	\end{equation}
	for every basis (frame, $d$-bein) of orthogonal unit vectors  $n_i$. \hfill$\Box$ 
\end{theorem}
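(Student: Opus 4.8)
The plan is to assume, for contradiction, that a valuation $v(n):S^{d-1}\to\{0,1\}$ exists with $v(-n)=v(n)$ and $\sum_{i=1}^d v(n_i)=1$ on every orthonormal frame, and to reduce everything to the case $d=3$. Since the values lie in $\{0,1\}$ and sum to unity, every orthonormal frame contains \emph{exactly one} ``colored'' vector (value $1$); in particular the colored set is nonempty, so I would fix some $m$ with $v(m)=1$. Completing $m$ to a frame $\{m,e_2,\dots,e_d\}$ forces $v(e_j)=0$ for all $j\geq 2$. Restricting $v$ to the unit sphere of the subspace $V=\mathrm{span}(m,e_2,e_3)$, and using the all-uncolored completion $e_4,\dots,e_d$ of $V^\perp$, any frame $\{n_1,n_2,n_3\}$ of $V$ extends to a frame of $\mathbb{R}^d$, so the condition collapses to $\sum_{i=1}^3 v(n_i)=1$. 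Thus a valuation in dimension $d>3$ produces one in dimension $3$, and it suffices to rule out $d=3$.

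Second, I would reformulate the $d=3$ problem as a coloring of $S^2$ (equivalently of the real projective plane $\mathbb{RP}^2$, by antipodal symmetry): color $n$ exactly when $v(n)=1$, subject to ``every orthonormal triad contains exactly one colored vector.'' From the frame condition I would extract two local rules. First, every vector orthogonal to a colored vector is uncolored (extend $a,b$ with $v(a)=1$, $b\perp a$ to a frame $a,b,c$ and read off $v(b)=0$), so no two colored vectors are orthogonal. Second, for any \emph{uncolored} $p$, the great circle $C_p=p^\perp$ carries a coloring in which each orthonormal pair sums to $1$, so along any such circle the coloring $g(\phi)$ obeys $g(\phi)+g(\phi+\pi/2)=1$ together with $g(\phi+\pi)=g(\phi)$. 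These two rules are pure linear algebra and should be immediate.

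The third and decisive step is the global argument, and here I would stress the crucial subtlety: on a \emph{single} great circle the relation $g(\phi)+g(\phi+\pi/2)=1$ is perfectly satisfiable (for instance, color an open quarter-arc together with its antipode), so the valuation is genuinely discontinuous and the impossibility is an irreducibly two-dimensional phenomenon. One therefore cannot simply invoke ``a continuous $\{0,1\}$ function on a connected space is constant.'' Instead I would track the colored leg of a frame as a function $c:SO(3)\to\{1,2,3\}$ and run a continuity/winding argument: transporting a frame continuously around suitable closed loops in $SO(3)$ (rotations that cyclically move the legs), the identity of the colored leg must vary consistently, and I would show the local rules force the colored vector to be swept into a position orthogonal to a second colored vector, violating the first rule. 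Equivalently, one can terminate the argument by exhibiting a finite family of directions on which the two rules cannot be jointly met, recovering the classical explicit route.

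I expect this gluing step in $d=3$ to be the main obstacle. The dimensional reduction and the derivation of the local rules are routine; the genuine content is showing that the circle-by-circle data cannot be patched consistently across the whole sphere. The delicate point is that a generic great circle through two colored points meets them at non-right angles, so the cross-circle relations involve the spherical law of cosines rather than the clean quarter-flip identity, and making the continuity/winding argument fully rigorous — or pinning down a minimal inconsistent configuration — is where essentially all the work lies.
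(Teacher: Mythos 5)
Your dimensional reduction and your two local rules are sound and match the paper's: the reduction to $d=3$ via a frame of zeros orthogonal to a chosen 3-dimensional subspace is exactly the paper's final step (run in the opposite order), and the great-circle dichotomy --- orthogonal to a colored vector everything is uncolored; orthogonal to an uncolored vector the circle carries a 50\%--50\% coloring with the quarter-flip property $g(\phi)+g(\phi+\pi/2)=1$ --- is the paper's first Lemma. But the decisive global step is missing, and the route you sketch for it would fail. A ``continuity/winding'' argument tracking the colored leg $c:SO(3)\to\{1,2,3\}$ has nothing to bite on: the valuation $v$ is not assumed continuous anywhere (you yourself note it cannot be continuous even on one circle), so the colored-leg function is equally unconstrained, and nothing forces it to ``vary consistently'' under continuous transport of a frame. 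Indeed the paper remarks that if continuity were available the theorem would be trivial --- no continuous map from the connected space $S^2$ to the discrete set $\{0,1\}$ exists --- so the entire difficulty is precisely that $v$ may be wildly discontinuous, and any argument whose engine is continuity of derived data begs the question. Your fallback (``exhibiting a finite family of directions'') just defers to the classical explicit constructions without carrying one out. As you concede, ``essentially all the work lies'' here, which means the proof is not done.

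What actually closes the gap in the paper is a purely combinatorial propagation mechanism, not continuity. Orient so that $v(\mathrm{poles})=1$, hence $v\equiv 0$ on the equator. For a point $p$ with $v(p)=0$, the descent great circle $C(p)$ (the great circle with apex $p$) crosses the equator at a point $s$ that is \emph{perpendicular} to $p$ on that circle; since $v(p)=v(s)=0$ but a 50\%--50\% circle must satisfy $v(p)+v(s)=1$ for perpendicular pairs, the dichotomy lemma forces $v\equiv 0$ on all of $C(p)$. This is the cross-circle relation you worried about: one never needs generic angles or the spherical law of cosines, because the descent-circle geometry always hands you a perpendicular pair of zeros. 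A short spherical-trigonometry computation, $\cos^2(\phi_r-\phi_p)=\tan\theta_q/\tan\theta_p$, then shows that exactly two descents zig-zag from $p$ to any point $q$ on the same meridian closer to the equator, so zeros propagate equatorward along meridians, every meridian acquires a standardized two-arc form, and sweeping descent circles from a normalized prime meridian pins down $v$ on the whole sphere into four segments; comparing with a second meridian $\phi_*\neq 0$ yields the contradiction. If you want to repair your proposal, replace the winding argument with this zero-propagation along descent circles; the rest of your structure can stand.
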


It is this statement about bases in Hilbert space that is often more practical to work with, rather than the formulation at the start of this section --- of course without that initial formulation it would be less than clear why the basis formulation is physically interesting. 

We will start by looking in a non-traditional place, by considering one-dimensional and two-dimensional Hilbert spaces, before dealing with three-dimensional Hilbert space, (which then settles things for any higher dimensionality).  Since one is trying to prove an inconsistency result, there will be an infinite number of ways of doing so; the question is whether one learns anything new by coming up with a different proof. We shall do so with a modified and simplified ``descent'' argument, one that requires only two steps in the descent process.

\textbf{c) One dimensions: } There is no Kochen--Specker no-go result in one dimension, since in one dimension all operators are multiples of the identity, $A= a I$, and then
\begin{equation}
v(f(A)) = v(f(aI)) = v(f(a)I) = f(a)\, v(I) = f(a). 
\end{equation}
In particular, as long as $f(a)\not\equiv 0$,  (which is implied by the {\bf KS2c} axiom),  then for the (unique) normalized basis vector $n$ we have $v(n) = 1$.  Conversely if we are considering a one-dimensional subspace of a higher-dimensional Hilbert space then the {\bf KS2c} axiom tells us nothing;  for the (unique) normalized basis vector we merely have $v(n)\in\{0,1\}$, and we have no further constraint on the valuation. 

\textbf{d) Two dimensions: } There is no Kochen--Specker no-go result in two dimensions, but there are still quite interesting things to say. Consider the valuation $v : S^1 \to \{0, 1\}$ (where $S^1$ is the unit circle) such that $v(-n) = v(n)$ for all $n$ and
\begin{equation}
v(n_1) + v(n_2) = 1 
\end{equation}
for every dyad (every pair of orthogonal unit vectors) $n_1$, $n_2$.  Indeed in two dimensions we can construct such a valuation. Re-characterize $n_1$ and $n_2$ in terms of the angle they make with (say) the $x$ axis; then the constraints we want to impose are
\begin{equation}
v(\theta)=v(\theta+\pi); \qquad\qquad v(\theta) + v(\theta\pm\textstyle{\pi\over2}) = 1.
\end{equation}
But these conditions are easily solved: Let $g(\theta)$ be some arbitrary (not necessarily continuous) function mapping the interval $\left[0,\textstyle{\pi\over2}\right) \to \{0,1\}$,  and define
\begin{equation}
v(\theta) =  \left\{  \begin{array}{cl} 
\vphantom{\Big |} g(\theta) & \hbox{ for } \theta \in  [0,\textstyle{\pi\over2});\\
\vphantom{\Big |} 1-g(\theta- \textstyle{\pi\over2}) & \hbox{ for } \theta \in  [\textstyle{\pi\over2},\pi);\\
\vphantom{\Big |} g(\theta-\pi) & \hbox{ for }\theta \in  [\pi,\textstyle{3\pi\over2});\\
\vphantom{\Big |} 1- g(\theta - \textstyle{3\pi\over2}) & \hbox{ for }\theta \in  [\textstyle{3\pi\over2},2\pi).\\
\end{array}\right. 
\end{equation}
So the existence of a Kochen--Specker valuation is easily verified in two dimensions, and because points separated by $\pi/2$ radians must be given opposite valuations, the image $v(S^1)$ is automatically 50\%--50\% zero-one.  Note in particular that the function $v(\theta)$ cannot be everywhere continuous. (We will recycle these results repeatedly when we turn to three and higher dimensions.)

\textbf{e) Three dimensions: } It is in 3 dimensions that things first get interesting. We are interested in  valuations $v : S^2 \to \{0, 1\}$, (where $S^2$
is the unit 2-sphere), such that $v(-n) = v(n)$ for all $n$ and
\begin{equation}
v(n_1) + v(n_2) + v(n_3) = 1 
\end{equation}
for every triad (every triplet of orthogonal unit vectors) $n_1$, $n_2$, $n_3$.  In the argument below we shall make extensive use of the great circles $S^1$ in the unit 2-sphere $S^2$.

{\bf Lemma:}
On any great circle in $S^2$, under the conditions given above, the valuation is either  50\%--50\% zero-one (as in two dimensions), or is 100\% zero (identically zero). \hfill $\Box$
\begin{proof}
	Pick any great circle and for convenience align it with the equator. \\
	Now look at the poles:
	\begin{itemize}
		\itemsep-3pt
		\item 
		If $v(\hbox{poles})=1$, then $v(\hbox{equator})\equiv 0$ is identically zero.\\
		(Since points on the equator will be part of some triad that includes the unit vector pointing to the poles.)
		\item 
		If $v(\hbox{poles})=0$, then any dyad lying in the equator will satisfy the conditions of the two dimensional argument given above, and so will be 50\%--50\% zero-one.
	\end{itemize}
\end{proof}
Now bootstrap this to a modified ``great circle descent'' argument,  one that needs only two steps in the descent process.  We start with a purely geometrical result.  From the argument above if we arrange $v(\hbox{poles})=1$, then $v(\hbox{equator})\equiv 0$, and for each line of longitude $v(\hbox{meridian})$ will be 50\%--50\% zero-one. (See figure \ref{F:longitudes}.)
\begin{figure}[!htbp]
	\begin{center}
		\includegraphics[scale=1.00]{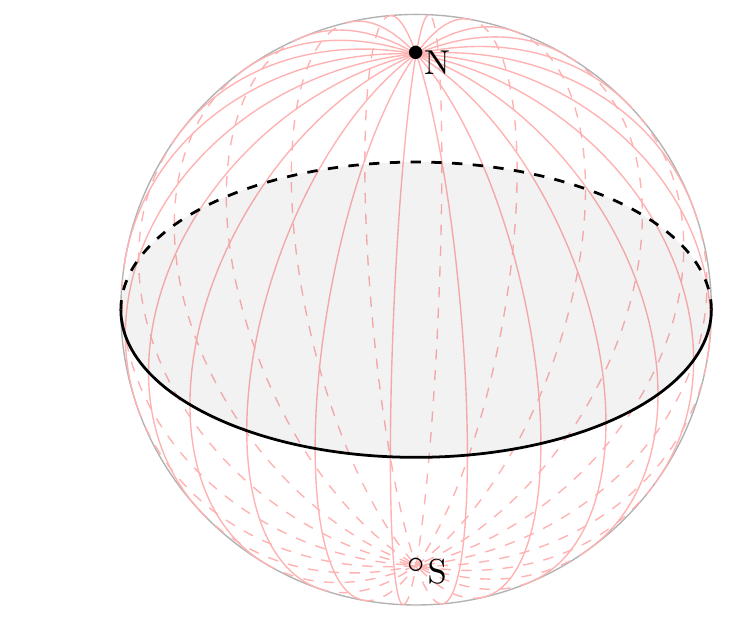}
		\caption{Setup with $v(\hbox{poles})=1$, $v(\hbox{equator})\equiv 0$, and $v(\hbox{meridians})$ 50\%--50\% zero-one.}
		\label{F:longitudes}
	\end{center}
\end{figure}

We define a ``great circle descent''  $C(p)$ through a point $p$ on the sphere as a great circle that starts off at constant latitude.
(So the point $p$ is either the northernmost or southernmost point on the great circle.  See figures~\ref{F:descent-1} and~\ref{F:descent-2}.)

\begin{figure}[!htbp]
	\begin{center}
		\includegraphics[scale=1.00]{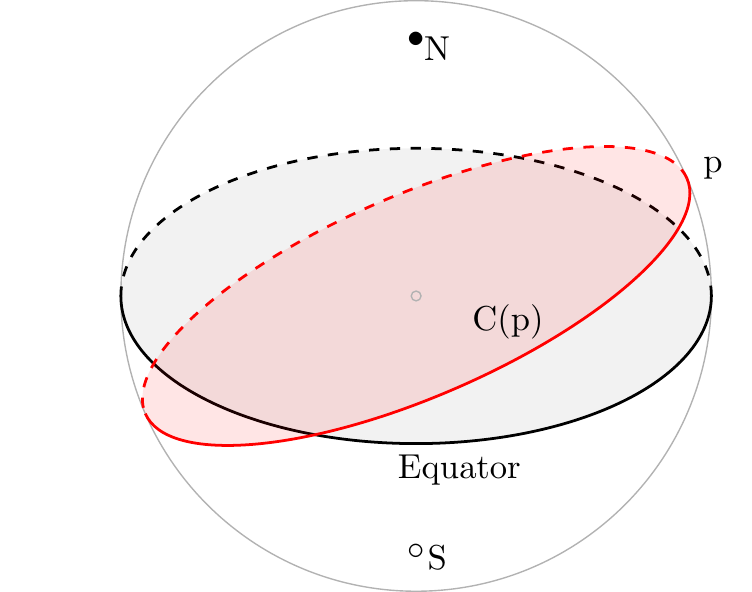}
		\caption{Descent great circle $C(p)$, with northernmost point at $p$.}
		\label{F:descent-1}
	\end{center}
\end{figure}

\begin{figure}[!htbp]
	\begin{center}
		\includegraphics[scale=0.50]{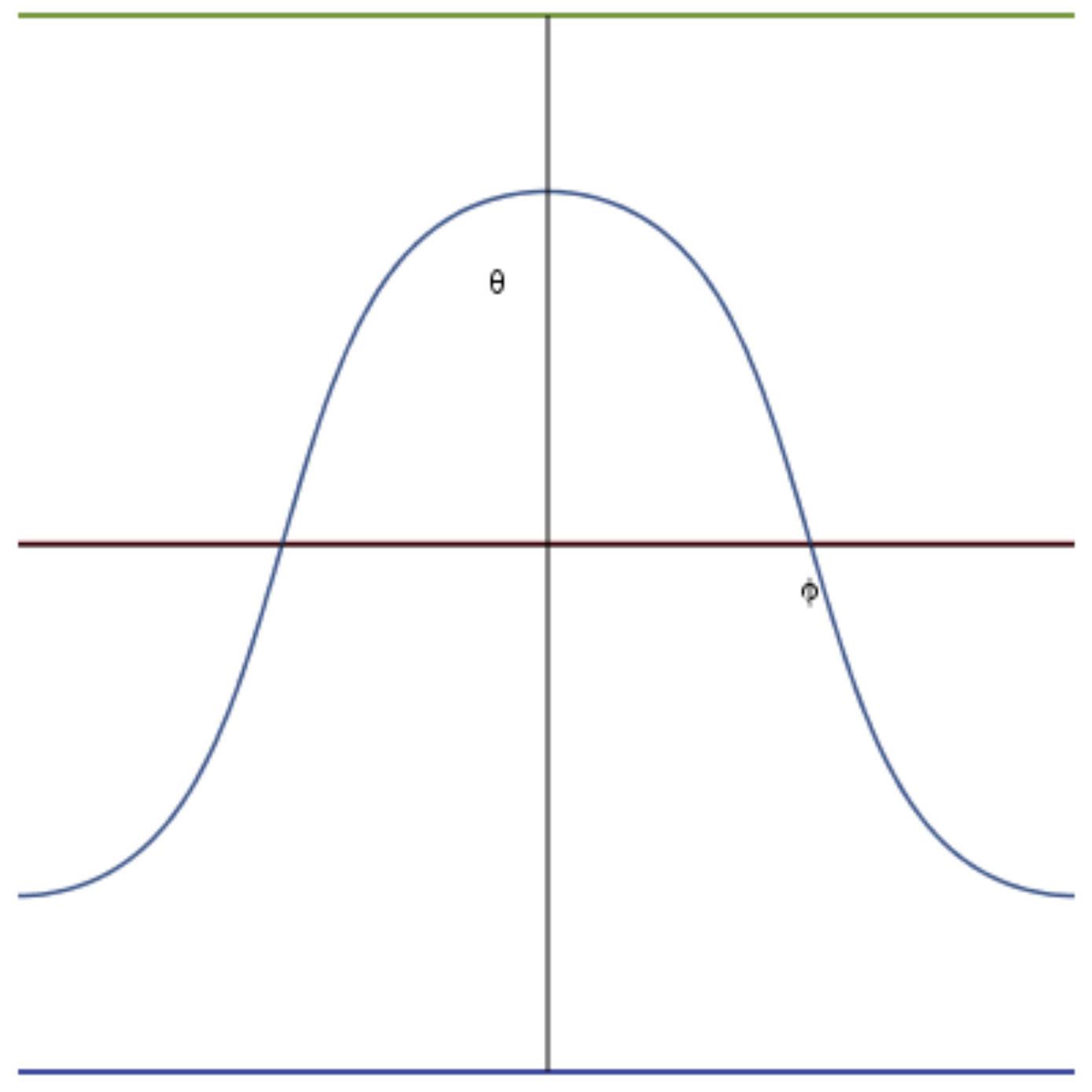}
		\caption{Descent great circle represented in terms of $\theta(\phi)$.}
		\label{F:descent-2}
	\end{center}
\end{figure}

{\bf Lemma:} Let $q$ be any other point at the same longitude as $p$ (the same meridian) that is closer to the equator than $p$. Then there exists a point $r$ such that $r$ lies on the great circle descent through $p$, and $q$ lies on the great circle descent through $r$. \hfill $\Box$

That is $r\in C(p)$ and $q\in C(r)$, so one can always zig-zag directly to towards the equator via exactly two great circle descents. Note that this is a much easier geometric result than that used in the Gill--Keane \cite{gill1996geometric} or Calude--Hertling---Svozil \cite{calude2014kochen} approaches where a finite but possibly large number of great circle descents is used to get to any point closer to the equator, not necessarily at the same longitude. (See figure~\ref{F:descent-2-step}.)

\begin{figure}[!htbp]
	\begin{center}
		\includegraphics[scale=0.750]{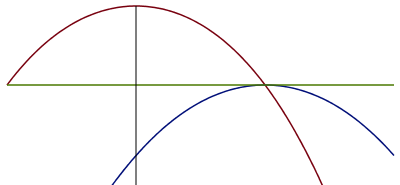}
		\caption{Example of a two-step descent with net motion along a meridian.}
		\label{F:descent-2-step}
	\end{center}
\end{figure}

\begin{proof}
Using spherical coordinates $(\theta,\phi)$ let the generic point $x$ be represented by the 3-vector 
\begin{equation} 
\vec x = (\cos\theta \cos\phi, \cos\theta \sin\phi,\sin\theta).
\end{equation} 
(Somewhat unusually, we adopt conventions close to the usual latitude nomenclature:  $\theta=+\pi/2$ represents the north pole, $\theta=0$ represents the equator, while while $\theta=-\pi/2$ represents the south pole. Doing this simplifies some of the formulae below.)\\
Now let the specific point $p$ of interest be represented by the 3-vector 
\begin{equation} 
\vec p = (\cos\theta_p \cos\phi_p, \cos\theta_p \sin\phi_p,\sin\theta_p).
\end{equation} 
Consider the great circle descent $C(p)$. This great circle will be orthogonal to the vector
\begin{equation} 
\vec p_\perp = (\sin\theta_p \cos\phi_p, \sin\theta_p \sin\phi_p,-\cos\theta_p).
\end{equation} 
The entire great circle $C(p)$ will be characterized by $\vec p_\perp \cdot \hat x(\theta,\phi) =0$, that is
\begin{equation} 
\sin\theta_p\cos\theta (\cos\phi_p \cos\phi + \sin\phi_p \sin \phi_p) - \cos\theta_p \sin\theta  = 0,
\end{equation} 
implying
\begin{equation} 
\sin\theta_p\,\cos\theta\, \cos(\phi - \phi_p) = \cos\theta_p \sin\theta.
\end{equation} 
That is
\begin{equation} 
\tan \theta = {\tan \theta_p \cos(\phi-\phi_p)},
\end{equation} 
or more explicitly 
\begin{equation} 
\theta(\phi) = \tan^{-1} \left({\tan \theta_p\; \cos(\phi-\phi_p)}\right).
\end{equation} 
This explicitly yields $\theta(\phi)$ along the entire descent circle $C(p)$.

Note that this descent circle crosses the equator at $\theta=0$, implying $(\phi-\phi_p)=\pm\pi/2$. This occurs at the points $s$ such that $\vec s=\pm(-\sin\phi_p,\cos\phi_p,0)$.  In particular, for the three points $p$, $r$, $q$, (and using $\phi_p=\phi_q$ because we want $p$ and $q$ to have the same longitude), we have
\begin{equation} 
\tan \theta_r = {\tan \theta_p\; \cos(\phi_r-\phi_p)}; \qquad \tan \theta_q = {\tan \theta_r\; \cos(\phi_r-\phi_p)};
\end{equation} 
implying
\begin{equation} 
\tan \theta_q = {\tan \theta_p\; \cos^2(\phi_r-\phi_p)}.
\end{equation} 
That is
\begin{equation} 
\cos^2(\phi_r-\phi_p) = {\tan \theta_q\over \tan\theta_p}.
\end{equation} 
Alternatively
\begin{equation}
|\phi_r-\phi_p| = \cos^{-1} \sqrt {\tan \theta_q\over \tan\theta_p}.
\end{equation}
The azimuthal difference $|\phi_r-\phi_p|$ tells you exactly how much you have to zig-zag along the descent circles for the net motion to be directly along the line of longitude towards the equator. Note $|\phi_r-\phi_p|$ is real only if you move towards (rather than away from) the equator.		
\end{proof}

\begin{figure}[!htbp]
	\begin{center}
		\includegraphics[scale=0.5]{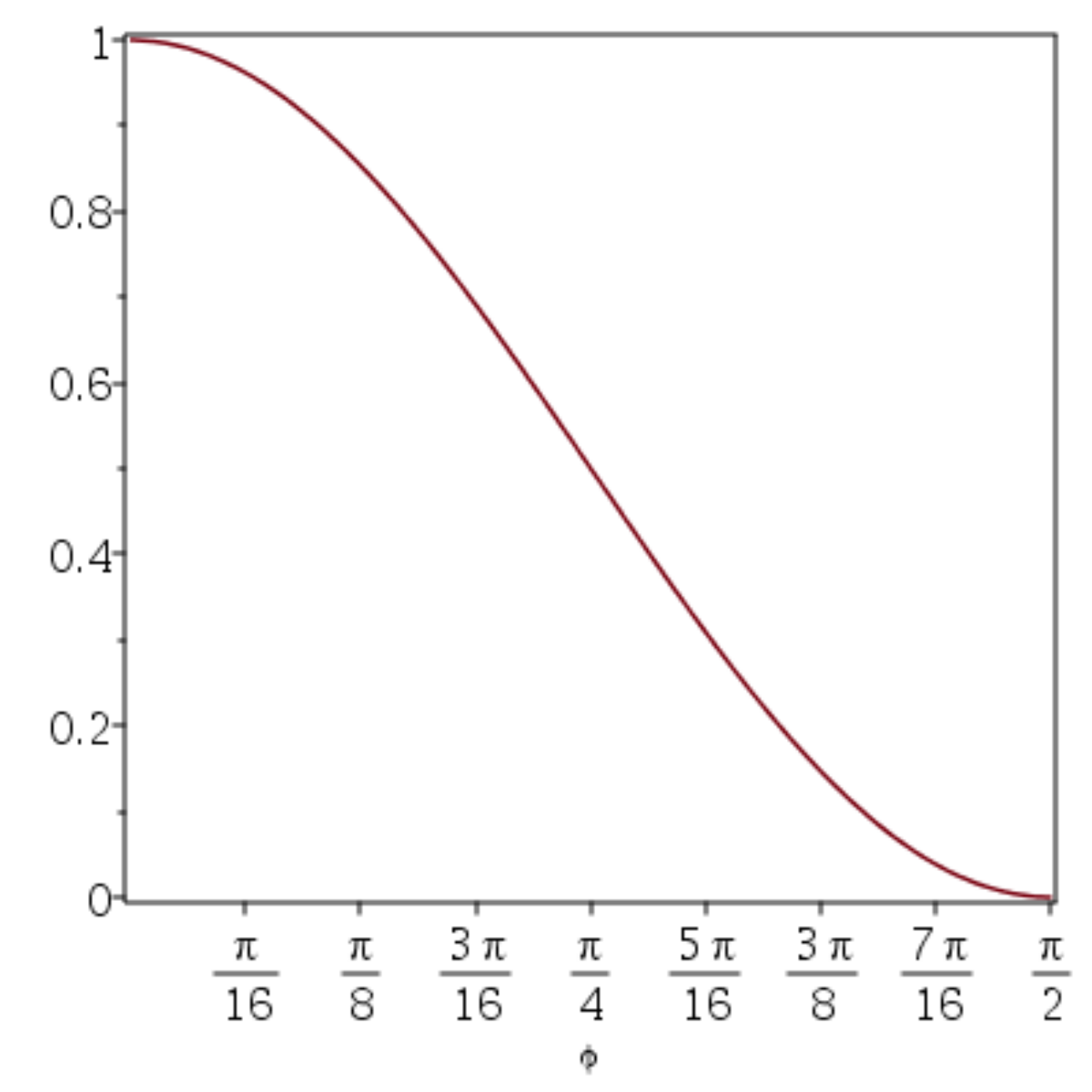}
		\caption{Quantifying $\delta\phi$ in terms of $\cos^{-1} \sqrt {\tan \theta_q\over \tan\theta_p}$ for two-step descent towards the equator.}
		\label{F:delta-phi}
	\end{center}
\end{figure}

{\bf Application to the Kochen-Specker theorem:}\\
Consider any point $p_0$ such that $v(p)=1$ and rotate to put it at the north or south pole. Then by hypothesis $v(equator)=0$ for any point on the equator.  Now consider any other point $p$ such that $v(p)=0$ and $p$ is not on the equator.  Consider the descent circle $C(p)$; we have $v(p)=0$ by hypothesis, and $v(s)=0$ at the perpendicular point $s$ with $\vec s=(0,-\sin\phi_0,\cos\phi_0)$ where $C(p)$ crosses the equator. Therefore $v(C(p))\equiv 0$ everywhere on this descent circle.  But in particular this implies that $v(r)=0$.
Thence 
$v(C(r))\equiv 0$ everywhere on this descent circle. Thence $v(q)=0$. This means we have proved:

{\bf Lemma:} If $v(pole)=1$ and $v(p)=0$ then also $v(q)=0$ for $q$ any point on the same line of longitude (same meridian) as $p$ that is closer to the equator than $p$. 
\hfill $\Box$

Consequently, for any line of longitude for which $v(poles)=1$,  we see that $v^{-1}(0)$ is path connected. Specifically this implies that  $\exists \; \pi/2\leq \theta_*\leq 0$ such that either
\begin{equation}
v(\theta) =  \left\{  \begin{array}{cl} 
\vphantom{\Big |} 1 & \hbox{ for } {\pi\over2}\geq \theta \geq \theta_*;\\
\vphantom{\Big |} 0 & \hbox{ for }  \theta_* > \theta \geq \theta_* - {\pi\over2};\\
\vphantom{\Big |}  1 & \hbox{ for } \theta_* - {\pi\over2} > \theta \geq -{\pi\over2};\\
\end{array}\right.
\end{equation}
or
\begin{equation}
v(\theta) =  \left\{  \begin{array}{cl} 
\vphantom{\Big |} 1 & \hbox{ for } {\pi\over2}\geq \theta > \theta_*;\\
\vphantom{\Big |} 0 & \hbox{ for }  \theta_* \geq \theta > \theta_* - {\pi\over2};\\
\vphantom{\Big |}  1 & \hbox{ for } \theta_* - {\pi\over2} \geq \theta \leq -{\pi\over2}.\\
\end{array}\right.
\end{equation}
(See figure \ref{F:piece-wise-connected}.)
\begin{figure}[!htbp]
	\begin{center}
		\includegraphics[scale=2.5]{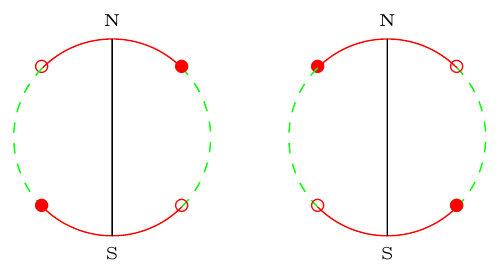}
		\caption{Assuming $v(poles)=1$, as a consequence of the two-step descent argument \emph{any} meridian can be put into one of these two forms for \emph{some} value of $\theta_*$.}
		\label{F:piece-wise-connected}
	\end{center}
\end{figure}

Now pick any specific line of longitude, by interchanging the north and south poles we can without loss of generality assert
\begin{equation}
v(\theta) =  \left\{  \begin{array}{cl} 
\vphantom{\Big |} 1 & \hbox{ for } {\pi\over2}\geq \theta \geq \theta_*;\\
\vphantom{\Big |} 0 & \hbox{ for }  \theta_* > \theta \geq \theta_* - {\pi\over2};\\
\vphantom{\Big |}  1 & \hbox{ for } \theta_* - {\pi\over2} > \theta \geq -{\pi\over2}.\\
\end{array}\right.
\end{equation}
Now rotate the sphere $S^2$ around the polar axis so that the line of longitude we have chosen lies on the zero meridian $\phi_*=0$ (the prime meridian). Then furthermore rotate the sphere $S^2$ around the axis perpendicular to the zero meridian so that point $p_*= (\sin\theta_*, 0, \cos\theta_*)$ is moved to the north pole. That is:

{\bf Lemma:}
Without any loss of generality we can choose the zero meridian to satisfy
\begin{equation}
v(\theta) =  \left\{  \begin{array}{cl} 
\vphantom{\Big |} 1 & \hbox{ for } \theta = {\pi\over2};\\
\vphantom{\Big |} 0 & \hbox{ for }  {\pi\over2} > \theta \geq 0;\\
\vphantom{\Big |}  1 & \hbox{ for }  0> \theta \geq -{\pi\over2}. \\
\end{array}\right.
\label{E:key}
\end{equation}
(See figure~\ref{F:standardized}.)

\begin{figure}[!htbp]
	\begin{center}
		\includegraphics[scale=3.0]{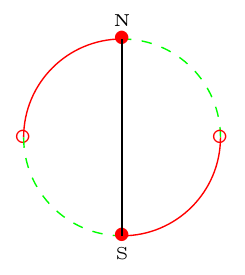}
		\caption{Assuming $v(poles)=1$,  after suitable rotations the prime meridian can always be put into this standardized form.}
		\label{F:standardized}
	\end{center}
\end{figure}
This will now quickly lead to a contradiction. \hfill $\Box$

First consider all the descent great circles $C(p)$ based on this particular choice of zero meridian. These descent great circles will (in the northern hemisphere) sweep out the entire half-hemisphere $\phi\in(-\pi/2,+\pi/2)$ and $\theta\in(\pi/2,0)$. Similarly, in the southern hemisphere these decent circles will in turn sweep out  the complementary half-hemisphere $\phi\in(+\pi/2,\pi] \cup [-\pi,-\pi/2)$ and $\theta\in(0,-\pi/2)$. But, following previous arguments,  since $v(\theta)=0$ at the apex of all these descent great circles, $v(C(p))=0$ for all these descent great circles. That is:

{\bf Lemma:}
Without loss of generality we have chosen the zero meridian such that (except possibly at the poles themselves) 
\begin{equation}
v(\theta>0,|\phi|<\pi/2) =0; \qquad\hbox{and} \qquad v(\theta<0,|\phi|<\pi/2) =1; 
\label{E:zero1}
\end{equation}
\begin{equation}
v(\theta<0,|\phi|>\pi/2) =0; \qquad\hbox{and} \qquad v(\theta>0,|\phi|>\pi/2) =1.
\label{E:zero2}
\end{equation}

\begin{figure}[!htbp]
	\begin{center}
		\includegraphics[scale=1.0]{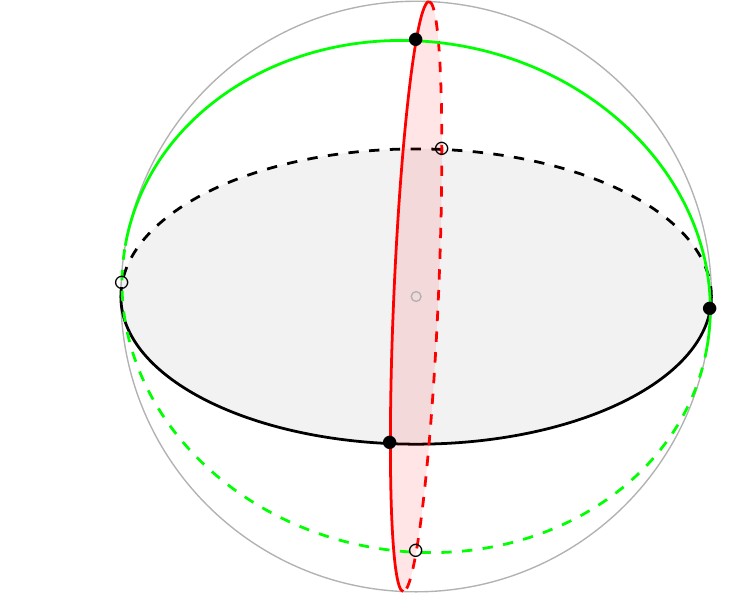}
		\caption{Let green denote the prime meridian at $\phi=0$, black the equator at $\theta=0$, and red the meridians at $\phi=\pm\pi/2$. The equator and red meridians split the sphere into four segments, with two of these segments having valuation zero, and two segments having valuation unity.}
		\label{F:segments}
	\end{center}
\end{figure}

Thus the valuation $v(\cdot)$ is 50\%--50\% zero-one over the entire 2-sphere $S^2$. \hfill $\Box$  

Completing the inconsistency argument can now be done in many ways (in fact, an infinite number of ways). Consider any meridian with $\phi_*\neq 0$ and $|\phi_*| <\pi/2$. On the one hand this meridian will also have the same valuation, equation~(\ref{E:key}), as the zero meridian. On the other hand  by considering the descent great circles based on this new meridian we have
\begin{equation}
v(\theta>0,|\phi-\phi_*|<\pi/2) =0; \qquad\hbox{and} \qquad v(\theta>0,|\phi-\phi_*|<\pi/2) =1; 
\end{equation}
\begin{equation}
v(\theta<0,|\phi-\phi_*|>\pi/2) =0; \qquad\hbox{and} \qquad v(\theta<0,|\phi-\phi_*|>\pi/2) =1.
\end{equation}
But this is incompatible with the behaviour based on the zero meridian, equations~(\ref{E:zero1}) and (\ref{E:zero2}), so we have a contradiction.  This completes the proof of Kochen--Specker in three dimensions. We feel that this is a nice simple proof of Kochen--Specker that does not rely on finding explicit bases for the Hilbert space --- it also seems to us  to be considerably simpler than the other geometric or colouring arguments.

\textbf{f) Four dimensions and higher: } What happens in a $d>3$-dimensional Hilbert space? The 3-dimensional logic carries over with utterly minimal modifications. 
\begin{itemize}
	\item 
	In $d=4$ one needs to study the unit 3-sphere $S^3$. Pick any point $n$ on $S^3$ such that $v(n)=0$. This can always be done. Then consider the 2-sphere perpendicular to chosen point $n$.
	On that 2-sphere the 4-dimensional Kochen--Specker theorem will reduce to the 3-dimensional Kochen--Specker theorem, which we have already established. So nothing more need be done.
	\item
	In $d\geq 4$ dimensions one needs to study the unit $(d-1)$-sphere $S^{d-1}$. Pick any $d-3$ mutually-orthogonal points $n_i$ on $S^3$ such that $v(n_i)=0$.  If this cannot be done then the existence of the claimed valuation $v(\cdot)$  already fails at this elementary level so that the $d$-dimensional Kochen--Specker theorem is established; so without loss of generality we can assume this can be done.
	Then consider the 2-sphere perpendicular to all the $n_i$.
	On that 2-sphere the $d$-dimensional Kochen--Specker theorem will reduce to the 3-dimensional Kochen--Specker theorem, which we have already established. So nothing more need be done.
\end{itemize}
It is interesting to note that 3-dimensions is the key part of the theorem; in 1 and 2 dimensions related results are trivial. In 4 or more dimensions the Kochen--Specker theorem follows immediately from the 3-dimensional result. 

\textbf{g) Comments: } 
\begin{enumerate}[noitemsep, topsep=0pt, label=\roman*)]
	\item We have presented a geometric approach where one constructs and exploits the properties of great circles on a $n$-sphere.  This has the power to significantly simplify the argument, while maintaining the validity of the theorem for a minimum dimension of three.
	\item The  Kochen--Specker theorem is  more basic and fundamental than Gleason's theorem. Indeed, if one assumes Gleason's theorem then the Kochen--Specker theorem is trivial. The point is that once one asserts that the valuation $v(\cdot)$ is inherited from a density matrix $v(n) = \langle n| \rho| n \rangle$, then one knows that the valuation is continuous. But no function from the connected space $S^n$ to the discrete set $\{0,1\}$ (with its implied discrete topology) can possibly be continuous.   
	\item The main implication of the result is that quantum theory fails to allow a underlying non-contextual model.  More precisely, it states that it is impossible for the predictions of quantum mechanics to be in line with measurement outcomes which are pre-determined in a non-contextual manner.  Hence this would rule out a large class of models that might otherwise seem at first sight to be intuitive representations of  the physical world.
	\item With respect to quantum information science, there has been recent evidence that contextuality may be the primary reason for the speedup for quantum computation.  This has been shown through `magic' state injection \cite{howard2014contextuality}.
\end{enumerate}

\section{Non-locality across Space}

Non-locality across space is the characteristic that an action on a subsystem can instantaneously influence another subsystem at an arbitrarily far spatial location.  We have seen this strange property exemplified in the previous chapter regarding the entanglement in space.  In this section we will describe two different non-localities across space.  One requires an entanglement in space, and is known as Bell non-locality. The other does not require the entanglement and is known as the violation of preparation independence.

By considering both the probabilistic aspects of quantum theory and the concept of realism, one is led to a mathematical formulation of the Bell non-locality across space.  This will expressed through what is known as Bell's theorem \cite{bell1964einstein}.  We will show that Bell non-locality across space represents a stricter form of non-classical interdependence than an entanglement in space.  The particular version of Bell's theorem we will focus our attention on is the Bell-CHSH or known simply as the CHSH (Clauser-Horne-Shimony-Holt) inequality \cite{clauser1969proposed}.  For the questions in quantum foundations, the CHSH inequality sheds a partial understanding on the nature of quantum measurement.  Due to its implications, Bell's theorem has also been viewed by some as the most profound discovery of science \cite{whitaker1998john}.  For a thorough review of Bell non-locality across space, we refer the reader to \cite{brunner2014bell}.

The second non-locality across space is known as the violation of preparation independence.  It does not require entanglement and applies to product states.  It will be expressed through the PBR (Pusey-Barrett-Rudolph) theorem \cite{pusey2012reality}.  The original aim of the theorem was to shed a partial understanding on the nature of the quantum state.  Due to its implications, the PBR theorem has been referred to as the most important theorem in quantum foundations since Bell's theorem \cite{reich2011quantum}. For a comprehensive reviews of the PBR theorem and the violation of preparation independence, we refer the reader to \cite{leifer2014quantum, jennings2016no}

Our aim in this section is therefore to present the CHSH inequality and the PBR theorem.  We will also articulate both of these results through the lens of a game.

\subsection{Bell-CHSH inequality}

\textbf{a) CHSH inequality: } The CHSH inequality will be used to demonstrate a non-locality across space.  It will be derived without any reference to quantum theory.  Suppose there are three parties who are each spatially apart named Alice, Bob and Charlie.  Charlie prepares two particles and sends one particle to Alice and the other one to Bob.  Each particle can be measured in two quantities.  For Alice's particle we denote these quantities as $A_{1}$ and $A_{2}$, and similarly for Bob's particle we have quantities $B_{1}$ and $B_{2}$.  Each of these can take either value $+1$ or $-1$.  We assume realism, and hence the values are objective properties which exist independent of observation; these values are merely revealed by measurement. 

Both Alice and Bob each choose to measure their respective particles at the same time. With this constraint, we can assume that the measurement of one particle cannot effect the result of the other particle.  This is known as the assumption of locality.   Furthermore, we also require that each choose to measure their particle randomly using their two options.  This is also known as the free will assumption.

We proceed to consider the quantity
\begin{equation}
A_{1}B_{1} + A_{2}B_{1} + A_{2}B_{2} - A_{1}B_{2}.
\end{equation}
This can be re-expressed as
\begin{equation}
A_{1}B_{1} + A_{2}B_{1} + A_{2}B_{2} - A_{1}B_{2} = (A_{1} + A_{2})B_{1} + (A_{2} - A_{1})B_{2}.
\end{equation}
Given that $A_{1}, A_{2} = \pm1$, we have that
\begin{align}
(A_{1} + A_{2})B_{1} = 0,
\end{align}
or  
\begin{align}
(A_{2} - A_{1})B_{2} = 0.
\end{align} 
For both cases, we obtain
\begin{equation}
A_{1}B_{1} + A_{2}B_{1} + A_{2}B_{2} - A_{1}B_{2} = \pm 2.
\end{equation}
Let $p(a_{1},a_{2}, b_{1},b_{2})$ denote the joint probability that before the measurements are performed the total system is in state $A_{1}=a_{1}, A_{2}=a_{2}, B_{1}=b_{1},$ and $B_{2}=b_{2}$.  Using the expectation value (\ref{expectation}), we have
\begin{align}
&\mathbb{E}(A_{1}B_{1} + A_{2}B_{1} + A_{2}B_{2} - A_{1}B_{2}) \\ &= \sum_{a_{1}a_{2}b_{1}b_{2}}p(a_{1},a_{2}, b_{1},b_{2})(a_{1}b_{1} + a_{2}b_{1} + a_{2}b_{2} - a_{1}b_{2}) \\ 
&\leq \sum_{a_{1}a_{2}b_{1}b_{2}} p(a_{1},a_{2}, b_{1},b_{2}) (2) \\ 
&= 2.\label{CHSHderivation1}
\end{align}
We can also deduce that
\begin{align}
&\mathbb{E}(A_{1}B_{1} + A_{2}B_{1} + A_{2}B_{2} - A_{1}B_{2}) \\ &=\sum_{a_{1}a_{2}b_{1}b_{2}}p(a_{1},a_{2}, b_{1},b_{2})a_{1}b_{1} + \sum_{a_{1}a_{2}b_{1}b_{2}}p(a_{1},a_{2}, b_{1},b_{2})a_{2}b_{1} \\
&+ \sum_{a_{1}a_{2}b_{1}b_{2}}p(a_{1},a_{2}, b_{1},b_{2})a_{2}b_{2} 
- \sum_{a_{1}a_{2}b_{1}b_{2}}p(a_{1},a_{2}, b_{1},b_{2})a_{1}b_{2} \\
&= \mathbb{E}(A_{1}B_{1}) + \mathbb{E}(A_{2}B_{1}) + \mathbb{E}(A_{2}B_{2}) - \mathbb{E}(A_{1}B_{2}). \label{CHSHderivaton2}
\end{align}
Using both (\ref{CHSHderivation1}) and (\ref{CHSHderivaton2}), we obtain the CHSH inequality  
\begin{equation}
\mathbb{E}(A_{1}B_{1}) + \mathbb{E}(A_{2}B_{1}) + \mathbb{E}(A_{2}B_{2}) - \mathbb{E}(A_{1}B_{2}) \leq 2.
\end{equation}
This can be re-written using the quantum theoretic notation for expectation value
\begin{empheq}[box=\widefbox]{align}\label{CHSHinfoundations}
\langle A_{1}B_{1} \rangle  + \langle A_{2}B_{1} \rangle + \langle A_{2}B_{2} \rangle - \langle A_{1}B_{2} \rangle \leq 2.
\end{empheq}
This is the equation we saw earlier (\ref{CHSH}) as a means to detect entanglement.  More precisely, using the probabilistic aspects of quantum theory, we saw the CHSH inequality violated using Bell state (\ref{CHSHviolationBell}), resulting in the equation
\begin{equation}
\langle A_{1}B_{1} \rangle  + \langle A_{2}B_{1} \rangle + \langle A_{2}B_{2} \rangle - \langle A_{1}B_{2} \rangle = 2\sqrt{2}.
\end{equation}
The value of $2\sqrt{2}$ is the maximum quantum value and is known as Tsirelson’s bound.  This violation has been experimentally verified \cite{freedman1972experimental, aspect1982experimental} in numerous quantum scenarios.  Hence, these measurement correlations are stronger than could ever exist in classical systems.  It implies a profound consequence in that these quantum correlations overthrow the classical picture of the world; at least one of the three assumptions made to derive the CHSH inequality is wrong.

\textbf{b) Implications: } It is dominantly viewed that the assumption of locality is the one that is incorrect.  Hence, the mathematical characterization for Bell \textit{non-locality across space} is expressed as the violation of (\ref{CHSHinfoundations}). Note that this Bell notion of locality (\ref{CHSHinfoundations}) is distinct from the term locality used in other areas of quantum physics which describes the case that operators defined in spacelike separated regions commute. In this section, when we refer to non-locality we shall mean a Bell non-locality across space.  

It also is common in the literature to interchange between the terms entanglement and non-locality.  Such use may in principle be sufficient for a large number of cases, but falls short of the precision required for an adequate scientific taxonomy.  We proceed to emphasize the differences between an entanglement and non-locality.  The most obvious difference is that former is an algebraic property residing in the mathematics of quantum theory (\ref{bipartiteseparable}), whereas the latter is rooted from the measurement outcomes/correlations of experiments (\ref{CHSHinfoundations}). 

Nevertheless, to obtain non-local correlations from measurements on a quantum state, it is necessary that the state is entangled.  This implies that the observations of non-local correlations means the state is entangled.  Hence our use in (\ref{CHSH}). In a converse direction, it only true that all pure entangled states are non-local.  This means for any entangled pure state one can obtain local measurements such that the measurement correlations violate the CHSH inequality.  (The only pure states that do not violate it are product states.)  However, there are entangled mixed states, such as (\ref{Wernerstates}), that do not violate the CHSH inequality.  Therefore, not all entangled states are non-local.

In the language of quantum information, we can say that the interdependence of certain quantum information systems, that violate the CHSH inequality, would be impossible to replicate by classical information systems, which cannot violate the inequality. One of the utilities of this is that it allows one to detect entangled quantum information systems directly from measurement data without any reference to the physical experiment. This is known as device independence.

From the perspective of quantum foundations, non-locality across space suggests that for a subset of entangled cases, a quantum measurement on one system has the ability to \textit{instantaneously} influence another system that can be arbitrarily spatially far.  Hence an alternative perspective to gain is that it sheds a partial understanding on the non-trivial properties of certain quantum measurements.  Furthermore, this instantaneous characteristic implies the \textit{lack of a time interval} in this scenario.  With a time interval involved, the non-local influence could be explained away by some hidden causal signal.  Hence the \textit{interdependence in this non-locality across space is shocking due to the absense of a time interval involved}.

The result also has a influence on philosophy, which can be highlighted by the subject being termed by some as `experimental metaphysics' \cite{shimony1984contextual}.  We provide a brief discussion.  The decision to forgo the assumption of locality so to explain the experimental violation of the CHSH inequality is not based on any rigorous evidence.  There is no mathematical or experimental proof to warrant such a decision.  It may very well be the case that our concept of physical realism needs to be radically altered.  From the perspective of this thesis, we find that there is more weight to the argument that one should drop the free will assumption.  Entanglement in time already suggests the eternalists' view that the past and future are as real as the present.  This provides an ideal scaffold to build a argument for the loss of free will, also known in this context as superdeterminism \cite{davies1993ghost, vervoort2013bell}. 

\textbf{c) Multipartite systems: } The definition of Bell non-locality across space has been extended to more than two systems.  Furthermore, it can be shown that all pure entangled $n$-partite states are are non-local \cite{popescu1992generic}.  

Another important point to discuss within multipartite scenarios is what is referred to as the monogomy of entanglement \cite{toner2008monogamy, ringbauer2018multi}.  Let the left hand side of (\ref{CHSHinfoundations}) be denoted
\begin{equation}
S_{CHSH}^{AB} \equiv \langle A_{1}B_{1} \rangle  + \langle A_{2}B_{1} \rangle + \langle A_{2}B_{2} \rangle - \langle A_{1}B_{2} \rangle. 
\end{equation} 
One property of this spatial non-locality is that a violation of the CHSH inequality precludes a simultaneous violation with another spatially separated system.  This is mathematically characterized as
\begin{equation}\label{monogomyCHSH}
S_{CHSH}^{AB} + S_{CHSH}^{BC} \leq 4,
\end{equation}
for systems $A$, $B$, and $C$.  A similar set of inequalities (\ref{monogomyCHSH}) hold for combinations $(AB, AC)$ and $(AC, BC)$.  

\textbf{d) Entropic version: } In \cite{braunstein1988information}, an information-theoretic CHSH inequality was put forth.  This provides a perspective in terms of systems storing information, as opposed to measurement correlations.  They assumed the same scenario as in the original case involving the two spatially separated parties.  Once again Alice has observables $A_{1}$ and $A_{2}$, whereas Bob has observables $B_{1}$ and $B_{2}$.  These have respective values $a_{1}, a_{2}, b_{1}$ and $b_{2}$.  The assumption of local realism (along with free will) is used to establish the existence of the joint probability $p(a_{1}, a_{2}, b_{1}, b_{2})$.  Using the Shannon conditional entropy (\ref{conditionalentropy}), the information-theoretic CHSH inequality can be expressed as
\begin{empheq}[box=\widefbox]{align}\label{entropicCHSH}
H(A_{1}|B_{1}) \leq H(A_{1}|B_{2}) + H(B_{2}|A_{2}) + H(A_{2}|B_{1}).
\end{empheq}
To derive such a quantity, one makes use of the assumption that four objective quantities cannot carry less information than two of them,
\begin{equation}
H(A_{1}, B_{1}) \leq H(A_{1}, B_{2}, A_{2}, B_{1}).
\end{equation}
Nevertheless, certain quantum entangled systems violate (\ref{entropicCHSH}).  An alternative entropic version can be found in \cite{cerf1997entropic}.

\subsection{CHSH game}

\textbf{a) Preliminaries: } We have seen the use of guessing games in articulating the entropic uncertainty relations (\ref{entropicuncertainty}).  More broadly the relationship between quantum theory and game theory is explored in \cite{eisert1999quantum, benjamin2001multiplayer, khan2018quantum}.  Pertinent to this section is that Bell's theorem (CHSH inequality) have also been viewed through the lens of game.  These are commonly referred to as nonlocal games, and the best known example is the CHSH game which we will briefly describe below; in this scenario the participants can win the game at a higher probability with quantum resources, as opposed to having access to only classical resources.  There has also been work on the relationship between Bell's theorem and Bayesian game theory \cite{brunner2013connection, roy2016nonlocal, banik2019two}; in a subset of cases it was shown that quantum resources provide an advantage, and lead to quantum Nash equilibria. In \cite{pappa2015nonlocality}, it was shown that quantum nonlocality can outperform classical strategies in games where participants have conflicting interests. However, in \cite{almeida2010guess}, a nonlocal game was constructed where quantum resources did not offer an advantage. 

\textbf{b) CHSH game: } In this game, we consider spatially separated players Alice and Bob, as well as an outside party known as the referee that plays against Alice and Bob. Based on some probability distribution,  
\begin{equation}
\pi: X \times Y \rightarrow [0,1],
\end{equation}
the referee chooses a question $x \in X$ for Alice and $y \in Y$ for Bob from some set of possible questions $X$ and $Y$.  With respect to the CHSH inequality, these questions can be thought of as labels for measurement settings.  After receiving the questions, Alice and Bob respectively return answers $a\in R_{A}$ and $b \in R_{B}$ from some set of possible answers $R_{A}$ and $R_{B}$.  Relaying this to the CHSH inequality, one can view the answers as measurement outcomes.  The referee is also tasked with deciding whether these answers are the winning answers for the questions that was posed according to the rules of the game.  These rules are expressed through
\begin{equation}
V: R_{A} \times R_{B} \times X \times Y \rightarrow \{0,1\},
\end{equation}    
where $V(a,b,x,y)=1$ if and only if Alice and Bob win against the referee by giving answers $a$ and $b$ for questions $x$ and $y$.  In this game, Alice and Bob have access to both the rules $V$ and the probability distribution $\pi$.  However, the constraint they face is that they cannot communicate once the game starts.  This implies that each player is unaware of what question is given to the other player.

To see the direct relationship to the CHSH inequality (\ref{CHSHinfoundations}), let $X=Y=\{0,1\}$ and $R_{A} = R_{B} = \{0,1\}$.  The rules of the game are such that Alice and Bob win if and only if
\begin{equation}
x \cdot y = a \oplus b,
\end{equation}
where $\oplus$ represents modulo $2$ addition. From this one can compute that the winning probability for a CHSH game is
\begin{equation}
p_{Win}^{CHSH} = \frac{1}{2}\Bigg(1 + \frac{S}{4}\Bigg),
\end{equation}
where $S$ is the CHSH expression (\ref{CHSHinfoundations}).  This provides us with an alternative view of the non-classical features of quantum resources.  The probability that Alice and Bob win using only classical resources is at most probability $0.75$, given $S\leq 2$.  This is in contrast to utilizing quantum resources where Alice and Bob have the ability to win the game at a probability of almost $0.85$ since $S=2\sqrt{2}$.

\subsection{PBR theorem}

\textbf{a) PBR theorem: } From the perspective of this thesis, the PBR theorem demonstrates the discovery of a new quantum non-locality across space.  However, the original intention of the theorem was to answer the foundational question:  What is the nature of the quantum state (or quantum information)?  The answer to this question can be aided by philosophical terminology.  An ontic state refers to a state of reality meaning something that exists objectively in the world independent of an observer; it can be thought of as realism for the system in consideration.  An epistemic state is a state of knowledge and refers to only what an observer currently knows about a physical system.  The PBR theorem answers the question: Is the quantum state ontic or epistemic? 

The mathematical characterization of these concepts is carried out through the framework of ontological models \cite{harrigan2010einstein}.  It can be thought of as a refinement of the hidden variable models found in the literature regarding Bell's theorem \cite{bell2004speakable}.  In the ontological model, when a system is prepared in some quantum state $\ket{\Psi}$, it is really in an ontic state $\lambda$, which describes a state of reality.  The set of ontic states is denoted $\Lambda$.  Due to our ignorance on what ontic state the system is in, the model assigns each quantum state $\ket{\Psi}$ an epistemic state $\mu_{\Psi}$, which is a probability distribution over $\Lambda$.  These satisfy
\begin{equation}\label{PBRmodelstates}
\mu_{\Psi}(\lambda) \geq 0, \quad \text{and} \quad \int \mu_{\Psi}(\lambda) d\lambda = 1.
\end{equation}
It also models a measurement and the outcome of that measurement in terms of the ontic state.  For a measurement $M$ we can denote the probability of obtaining outcome $f$ in the state $\lambda$ as $\xi_{M}(f|\lambda)$.  These satisfy the conditions
\begin{equation}\label{PBRmodelmeasure}
\xi_{M}(f|\lambda) \geq 0, \quad \text{and} \quad \sum_{f}\xi_{M}(f|\lambda) = 1.
\end{equation}   
In order to reproduce the measurement predictions of quantum theory (\ref{Bornruleeigenbasis}), we demand that
\begin{equation}\label{PBRmodelsimulate}
\int_{\Lambda} \xi_{M}(f|\lambda)\,\mu_{\Psi}(\lambda)\, d\lambda = \lvert \braket{f|\Psi}\rvert^{2},
\end{equation}
for all $\ket{\Psi}$ and $f$.  It is important to emphasize that this ontological model includes standard quantum theory as a special case.  Furthermore, note that the assumption of realism is implicit through the existence of an ontic state.   

We now have the required tools to mathematically define what it means for a quantum state to be a state of reality or a state of knowledge.  We say that an ontological model is $\Psi$-epistemic if there exists at least one pair of distinct quantum states $\ket{\Psi_{1}}$ and $\ket{\Psi_{2}}$, such that the corresponding epistemic states $\mu_{\Psi_{1}}$ and $\mu_{\Psi_{2}}$ have a non-zero overlap.  If a model is not $\Psi$-epistemic, then it is $\Psi$-ontic.  

When we say non-zero overlap we mean,
\begin{equation}
1-\delta_{C}(\mu_{\Psi_{1}}, \mu_{\Psi_{2}}) > 0,
\end{equation}  
where the classical trace distance is defined as
\begin{equation}
\delta_{C}(p,q) \equiv \frac{1}{2} \int \lvert p(x) - q(x) \rvert dx,
\end{equation}
for probability distributions $p(x)$ and $q(x)$.  The underlying idea is that if there is no overlap in the epistemic states then distinct quantum states refer to distinct ontic states, thereby warranting the quantum state itself as a state of reality.  However if there is an overlap in the epistemic states, then a single ontic state can relate to two different quantum states through the two respective epistemic states.  Hence, a unique quantum state cannot be associated with the ontic state.  In this case, a quantum state signifies itself merely as a state of knowledge.

The aim of the PBR theorem is to show that models must be $\Psi$-ontic.  The proof for the PBR theorem starts by assuming a $\Psi$-epistemic model and then arriving at a contradiction.  More precisely suppose that for two quantum states $\ket{\Psi_{1}}$ and $\ket{\Psi_{2}}$, the corresponding epistemic states $\mu_{\Psi_{1}}$ and $\mu_{\Psi_{2}}$ overlap.  This implies that there exists an ontic state $\lambda_{*} \in \Lambda$ where
\begin{equation}
\mu_{\Psi_{1}}(\lambda_{*}) > 0, \quad \text{and} \quad \mu_{\Psi_{2}}(\lambda_{*}) > 0. 
\end{equation} 
In this case, even if one had access to the underlying ontic state $\lambda_{*}$, it would be impossible to tell which of the two quantum states was prepared.  Alternatively, regardless of which of these quantum states were prepared, the ontic state $\lambda_{*}$ will be occupied a non-zero fraction $P_{*}>0$ of the time (where the value of $P_{*}$ does not need to be specified).  Next, let two copies of the system be prepared in one of the four quantum (separable or product) states,
\begin{eqnarray}\label{states}
\ket{\Psi_{11}} &= \ket{\Psi_{1}} \otimes \ket{\Psi_{1}},  \quad \ket{\Psi_{12}} = \ket{\Psi_{1}} \otimes \ket{\Psi_{2}},
\nonumber
\\
\ket{\Psi_{21}} &= \ket{\Psi_{2}} \otimes \ket{\Psi_{1}},  \quad \ket{\Psi_{22}} = \ket{\Psi_{2}} \otimes \ket{\Psi_{2}}.
\end{eqnarray}
These two systems are prepared spatially far apart, and the choice to prepare either $\ket{\Psi_{1}}$ or $\ket{\Psi_{2}}$ is made independently at each spatial location. For this task, we make use of the assumption of \textit{preparation independence}.  This comprises of two components.  The first is that each system obtains its own copy of $\Lambda$.  The total state space of the two systems is the product of two copies of $\Lambda$, and therefore the ontic states are written as
\begin{empheq}[box=\widefbox]{align}\label{PI1}
(\lambda_{1} \times \lambda_{2}) \in \Lambda \times \Lambda.
\end{empheq}
This implies that the quantum state $\ket{\Psi_{jk}}$ corresponds to epistemic state $\mu_{\Psi_{jk}}(\lambda_{1},\lambda_{2})$ and that joint measurements take the form $\xi_{M}(\Phi|\lambda_{1}, \lambda_{2})$ for some vector $\ket{\Phi}$ in a measurement basis.  The second component is that the epistemic state $\mu_{\Psi_{jk}}(\lambda_{1}, \lambda_{2})$ associated with quantum state $\ket{\Psi_{jk}}$ factorizes as
\begin{empheq}[box=\widefbox]{align}\label{PI2}
\mu_{\Psi_{jk}}(\lambda_{1}, \lambda_{2})  = \mu_{\Psi_{j}}\,(\lambda_{1})\mu_{\Psi_{k}}(\lambda_{2})
\end{empheq}
where $\mu_{\Psi_{j}}(\lambda_{1})$ is the epistemic state for $\ket{\Psi_{j}}$ and $\mu_{\Psi_{k}}(\lambda_{2})$ for $\ket{\Psi_{k}}$.  Notice the resemblance to (\ref{classicalindependence}) and (\ref{bipartiteseparable}) through its factorization.   

To illustrate the core points, let us first consider the simple case of qubits.  Suppose that $\ket{\Psi_{1}} = \ket{0}$ and $\ket{\Psi_{2}} = \ket{+} \equiv (\ket{0} + \ket{1})/\sqrt{2}$.  If we prepare the two systems in one of the four states $\ket{\Psi_{jk}}$ and use preparation independence, then at least $P_{*}^{2}$ of the time we arrive at the situation that total system will be in ontic state $(\lambda_{*,1}, \lambda_{*,2})$.  In this scenario, both systems are in ontic state $\lambda_{*}$; it will be impossible to decide whether $\ket{\Psi_{1}}$ or $\ket{\Psi_{2}}$ was prepared.  Next we introduce the following two-qubit measurement using basis  
\begin{eqnarray}\label{PBRmeasure}
\ket{\Phi_{11}} &= \frac{1}{\sqrt{2}}(\ket{0} \otimes \ket{1} + \ket{1} \otimes \ket{0}),
\nonumber
\\ 
\ket{\Phi_{12}} &= \frac{1}{\sqrt{2}}(\ket{0} \otimes \ket{-} + \ket{1} \otimes \ket{+}),
\nonumber
\\
\ket{\Phi_{21}} &= \frac{1}{\sqrt{2}}(\ket{+} \otimes \ket{1} + \ket{-} \otimes \ket{0}),
\nonumber
\\
\ket{\Phi_{22}} &= \frac{1}{\sqrt{2}}(\ket{+} \otimes \ket{-} + \ket{-} \otimes \ket{+}),
\end{eqnarray}
where $\ket{-}=(\ket{0}-\ket{1})/\sqrt{2}$.  The four states are antidistinguishable (\ref{antidisting}) using this measurement basis.  In other words, we have that
\begin{equation}\label{antipbr}
\lvert \braket{\Phi_{jk}|\Psi_{jk}}\rvert^{2} = 0
\end{equation} 
for every choice of $j,k = 1,2$.  We see therefore that quantum theory predicts that the measurement outcome $\ket{\Phi_{jk}}$ never occurs when the quantum state $\ket{\Psi_{jk}}$ is prepared.  Referring back to the ontological model we have the certainty that whichever quantum state is prepared, a fraction $P_{*}^{2}$ of the time the system is in ontic state $(\lambda_{*,1}, \lambda_{*,2})$.  We also see that if the system is in this ontic state, we may get outcome $\ket{\Phi_{jk}}$ when we measure in basis (\ref{PBRmeasure}).  Moreoever, this ontic state occurs when the quantum state $\ket{\Psi_{jk}}$ is prepared a non-zero fraction of the time.  However to reproduce the predictions of quantum theory (\ref{antipbr}), the measurement outcome $\ket{\Phi_{jk}}$ should never occur for this ontic state.  Therefore, a non-zero fraction of the time the measurement device contradicts the predictions of quantum theory.  This provides us with the desired contradiction.  This argument can be generalized using similar concepts (as described below).  Therefore, the PBR theorem can be stated as: Any ontological model that reproduces quantum predictions and satisfies preparation independence is $\Psi$-ontic. 

\textbf{b) Implications: } An equally weighted perspective is that at least one of the assumptions used to arrive at the contradiction must be false.  This allows for the position (held by most physicists) that the quantum state is simply a state of knowledge ($\Psi$-epistemic); such a view is desirable in dissolving away many conundrums including the measurement collapse which is only a problem if the quantum state has a physical existence.  To decipher which assumption must be incorrect, we relay our thoughts back to the CHSH inequality.  It was of consensus in that scenario to maintain realism and adopt non-locality.  In the PBR case, an analogous choice is to therefore adopt the \textit{violation of preparation independence}.  This resulting effect can be described as a new type of non-locality across space, and mathematically characterized as a violation of (\ref{PI1}) and/or (\ref{PI2}).

The violation of preparation independence is a far more perplexing spatial interdependence than the Bell non-locality across space.  First, it applies to product states, and hence does not require entanglement (unlike Bell non-locality).  The second point to note is that preparation independence is perhaps the most natural assumption to make in that two spatially separated systems should possess their own separate states of reality; such a notion of separability should be natural for product states.  As an example, if one person prepares a system in one part of the universe and another person prepares a system in the other part of the universe, then there should be no correlations between these preparations; if this was not the case as insisted above, then an extrapolation on this effect is that one requires every system in the universe in order to determine all the parameters that are of relevance for a system prepared on Earth.  A non-locality of such magnitude would lead to a radical destruction of basic assumptions.

Einstein wrote \cite{einstein1948quanten, howard1985einstein} about the dangers of abandoning such assumptions (but within another context), and we quote ``\textit{Further, it appears to be essential for this arrangement of the things introduced in physics that, at a specific time, these things claim an existence independent of one another, insofar as these things `lie in different parts of space.'  Without such an assumption of the mutually independent existence of spatially distant things, an assumption which originates in everyday thought, physical thought in the sense familiar to us would not be possible." }  From the perspective of the theme of this thesis, notice that Einstein mentions the words \textit{specific time} which signifies the \textit{lack of a time interval}. Whether we accept the quantum state as a state of reality, or instead give up realism, or introduce the violation of preparation independence, the PBR theorem has most certainly emphasized the large gap in our fundamental understanding of quantum physics.

\textbf{c) Multipartite systems: }  Using certain assumptions, we have shown that the epistemic states for $\ket{0}$ and $\ket{+}$ cannot overlap.  Generalizing this to any pair of quantum states implies that a quantum state can uniquely correspond to an ontic state, thereby signifying itself as a physical property of the system.  To prove this, we can let 
\begin{align}
\ket{\Psi_{0}} &= \cos\Big(\frac{\theta}{2}\Big)\ket{0} + \sin\Big(\frac{\theta}{2}\Big)\ket{1}, \\
\ket{\Psi_{1}} &= \cos\Big(\frac{\theta}{2}\Big)\ket{0} - \sin\Big(\frac{\theta}{2}\Big)\ket{1},
\end{align}
represent arbitrary non-orthogonal qubits, where $0 < \theta < \pi/2$.  As in the previous case, suppose there is a non-zero probability of at least $P_{*}$ that the ontic state of the system is compatible with either preparation.  This means the corresponding epistemic states overlap.  Suppose we prepare $n$ of these systems independently. The total system can be described by one of the quantum states,
\begin{equation}
\ket{\Psi(x_{1}, \dots x_{n})} = \ket{\Psi_{x_{1}}} \otimes \cdots \otimes \ket{\Psi_{x_{n-1}}} \otimes \ket{\Psi_{x_{n}}},
\end{equation}
where $x_{i} \in \{0,1\}$ for each $i$.  Assuming preparation independence, we have the probability that at least $P_{*}^{n}$ that the ontic state is compatible with any one of the quantum states.  The contradiction to quantum theory is obtained if we can derive a measurement basis that makes these quantum states antidistinguishable.  This can be achieved if the number of systems $n$ satisfies
\begin{equation}
2 \, \text{arctan} \, (2^{1/n} - 1) \leq \theta.
\end{equation}
Furthermore, the exact measurement circuit consists of a unitary evolution,  
\begin{equation}
U_{\alpha,\beta} = H^{\otimes n} R_\alpha {Z_\beta}^{\otimes n},
\end{equation}
where 
\begin{equation}
Z_\beta = \begin{pmatrix}1 & 0 \\ 0 & e^{i\beta}\end{pmatrix}.
\end{equation}
and where $H$ is Hadamard gate (\ref{hadamard}); the gate $R_{\alpha}$ operates as $R_{\alpha}\ket{0\cdots 0} = e^{i\alpha}\ket{0 \cdots 0}$ and acts as an identity operator on the other computational basis states.  To achieve the desired result the unitary evolution is chosen based on certain values of $\alpha$ and $\beta$.  This is followed by a measurement of each qubit in the computational basis states (\ref{compbasis}).  The result is that each outcome has zero probability given one of the $2^{n}$ possible preparations.  

More precisely, the probability of obtaining the basis state $\ket{x_{1}, \dots x_{n}}$ given that the state $\ket{\Psi(x_{1}, \dots x_{n})}$ is prepared is the squared absolute value of 
\begin{equation}
\bra{x_1 \ldots x_n} H^{\otimes n}R_\alpha Z_\beta^{\otimes n}\ket{\Psi_{x_1}}\otimes \cdots\otimes \ket{\Psi_{x_n}}
\end{equation}
which can be shown to equate to
\begin{equation}
\frac{1}{\sqrt{2^n}}\left(\cos\frac\theta2\right)^n \left(e^{i\alpha} + \left(1+e^{i\beta}\tan\frac\theta2\right)^n - 1\right).
\end{equation}
Moreover, $\alpha$ and $\beta$ can be derived so that
\begin{equation}
e^{i\alpha} + \left(1+e^{i\beta}\tan\frac\theta2\right)^n - 1 = 0.
\end{equation}
Hence, the required quantum probabilities are zero, and as a result we found a measurement that provides antidistinguishability for these quantum states.

\textbf{d) Entropic version: } From the perspective of quantum information science, it also interesting to note that the PBR theorem been interpreted through the language of classical and quantum communication protocols \cite{montina2012epistemic, montina2015communication}.  This program crucially involved the use of the Shannon entropy (\ref{Shannon}).  In a related work antidistinguishability, which is a core concept in the PBR theorem, was used to provide an advantage in a two-player communication task \cite{havlivcek2019simple}.  

Other notable developments on the PBR theorem and $\Psi$-epistemic models have been carried out in \cite{lewis2012distinct, schlosshauer2012implications, aaronson2013psi, patra2013no, schlosshauer2014no, mansfield2016reality} including on the issue of quantum indistinguishability \cite{leifer2014psi, barrett2014no, branciard2014psi}.

\subsection{PBR theorem as a Monty Hall game}

\textbf{a) Preliminaries: } Analogous to the game formulation of CHSH inequality, a desirable construction is to view the PBR theorem through the lens of a game.  One instantiation of this is in an exclusion game where the participant's goal is to produce a particular bit string \cite{perry2015communication, bandyopadhyay2014conclusive}; this has been shown to be related to the task of quantum bet hedging \cite{arunachalam2013quantum}.  Furthermore, concepts involved in the PBR proof have been used for a particular guessing game \cite{myrvold2018psi}.  

In this subsection, we reformulate the PBR theorem into a Monty Hall game \cite{rajan2019quantum}, which is \textit{part of the original component of this thesis} (which as done in collaboration with my supervisor). This particular gamification of the theorem highlights that winning probabilities, for switching doors in the game, depend on whether it is a $\Psi$-ontic or $\Psi$-epistemic game; we also show that in certain $\Psi$-epistemic games switching doors provides no advantage.  This may have consequences for an alternative experimental test of the PBR theorem

\textbf{b) PBR elements: } We extract certain parts of the PBR proof.  Recall that  two quantum systems are prepared independently, and each system is prepared in either state $\ket{0}$ or state $\ket{+}$.  This means that the total system is in one of the four possible non-orthogonal quantum states (\ref{states}) which we rewrite as:
\begin{eqnarray}\label{PBRgamestates}
\ket{\Psi_{1}} &= \ket{0} \otimes \ket{0},  \quad \ket{\Psi_{2}} = \ket{0} \otimes \ket{+},
\nonumber
\\
\ket{\Psi_{3}} &= \ket{+} \otimes \ket{0},  \quad \ket{\Psi_{4}} = \ket{+} \otimes \ket{+}.
\end{eqnarray}
The total system is brought together and measured in the basis (\ref{PBRmeasure}) which we re-label as:
\begin{eqnarray}\label{PBRgamemeasures}
\ket{\Phi_{1}} &= \frac{1}{\sqrt{2}}(\ket{0} \otimes \ket{1} + \ket{1} \otimes \ket{0}),
\nonumber
\\ 
\ket{\Phi_{2}} &= \frac{1}{\sqrt{2}}(\ket{0} \otimes \ket{-} + \ket{1} \otimes \ket{+}),
\nonumber
\\
\ket{\Phi_{3}} &= \frac{1}{\sqrt{2}}(\ket{+} \otimes \ket{1} + \ket{-} \otimes \ket{0}),
\nonumber
\\
\ket{\Phi_{4}} &= \frac{1}{\sqrt{2}}(\ket{+} \otimes \ket{-} + \ket{-} \otimes \ket{+}),
\end{eqnarray}
Invoking the Born probabilities, $\lvert \braket{\Phi_{i}|\Psi_{h}}\rvert^{2}$, where $i,h=1,2,3,4$, we found that for $i=h$, $\lvert \braket{\Phi_{i}|\Psi_{i}}\rvert^{2} = 0$. This means that for any value $i$, the outcome $\ket{\Phi_{i}}$ never occurs when the system is prepared in quantum state $\ket{\Psi_{i}}$.  The PBR proof showed that in $\Psi$-epistemic models there is a non-zero probability $q$ (whose value does not need to be specified) that outcome $\ket{\Phi_{i}}$ occurs when state $\ket{\Psi_{i}}$ is prepared, thereby contradicting the predictions of quantum theory; hence one can infer that the quantum state corresponds to a $\Psi$-ontic model.

\textbf{c) $\Psi$-ontic Monty Hall game: } Antidistinguishability, where there is a measurement for which each outcome identifies that a specific member of a set of quantum states was definitely not prepared, is highlighted in the PBR proof by $\lvert\braket{\Phi_{i}|\Psi_{i}}\rvert^{2} = 0$ for all $i$. We will exploit this to construct our game, which can be thought of as a quantum Ignorant Monty Hall game (\ref{IgnorantMontyHallgame}).

For state $\ket{\Psi_{1}}$ in (\ref{PBRgamestates}), we have
\begin{eqnarray}\label{Born}
\lvert\braket{\Phi_{1}|\Psi_{1}}\rvert^{2} &= 0,  \quad \lvert\braket{\Phi_{2}|\Psi_{1}}\rvert^{2} &= 1/4,
\nonumber
\\
\lvert\braket{\Phi_{3}|\Psi_{1}}\rvert^{2} &= 1/4,  \quad \lvert\braket{\Phi_{4}|\Psi_{1}}\rvert^{2} &= 1/2.
\end{eqnarray}
For the other states in (\ref{PBRgamestates}), the same probability distribution ($0$, $1/4$, $1/4$, $1/2$) occur but across the different outcomes (\ref{PBRgamemeasures}); hence we will focus our game on $\ket{\Psi_{1}}$, but similar constructions hold for the other states.  

The Monty Hall gamification is as follows:  There are four doors labelled $\{1,2,3,4\}$, and these correspond to the different measurement outcomes listed in (\ref{PBRgamemeasures}).  The prize door $A_{i}$, where $i$ takes one of the door labels, is the outcome $\ket{\Phi_{i}}$ that the state $\ket{\Psi_{1}}$ collapses to upon measurement.  For a $\Psi$-ontic game, through the Born probabilities (\ref{Born}), we have $P(A_{i}) = \lvert\braket{\Phi_{i}|\Psi_{1}}\rvert^{2}$.  

The contestant on the show does not know what state from (\ref{PBRgamestates}) is used, and is only aware of the possible measurement outcomes (\ref{PBRgamemeasures}).  Based on this limited information, the contestant randomly picks one of the doors which we denote $B_{j}$ where $j$ is the corresponding door label; hence we have $P(B_{j}|A_{i})= 1/4$, for all values $i,j$.

Monty's decision corresponds to the predictions of quantum theory.  He is aware that state  $\ket{\Psi_{1}}$ was used, and has access to the Born probabilities (\ref{Born}).  The door opened by Monty is denoted $C_{k}$ where $k$ is one of the door labels.  The main insight to construct this game is that when Monty opens a goat door, he is opening a door that has probability zero of having a prize in it.  And for our game, a door that definitely does not have a prize in it corresponds to outcome $\ket{\Phi_{1}}$ as $P(A_{1}) = \lvert\braket{\Phi_{1}|\Psi_{1}}\rvert^{2} = 0$.  Hence in this game, Monty will open door $C_{1}$ unless the contestant has already chosen this door as their pick (as Monty cannot open the door chosen by the contestant); in that case Monty will open one of the other remaining doors with equal probability, and there is a chance he may open up the prize door as in the Ignorant Monty Hall game.  From these factors, one can compute,
\begin{equation}\label{PBRMonty}
P(C_{k} \, | \, B_{j} \cap A_{i}) = 
\begin{cases}
\frac{1}{3},& \text{if } j =1 \text{ and } k=2,3, 4,\\
1,              & \text{if } j \neq 1 \text{ and } k=1, \\
0, & \text{otherwise,}
\end{cases}
\end{equation}
where we have adopted the notation for joint probabilities as $P(A, B) \equiv P(A \cap B)$.  The probability that Monty opens the prize door is
\begin{equation}
P(\text{opens prize door}) = \sum_{i = k \neq j} P(A_{i} \cap B_{j} \cap C_{k}) = \frac{1}{12}.
\end{equation}  
This implies that the probability that he opens a goat door is $11/12$.  Monty then offers the option to stick or switch.  Suppose the contestant always sticks with the initial choice.  Then the probability of winning if sticking and Monty opening a goat door is 
\begin{equation}
\sum_{i = j \neq k} P(A_{i} \cap B_{j} \cap C_{k}) = \frac{1}{4}.
\end{equation} 
With that, we can compute the conditional probability
\begin{equation}
P(\text{win if stick} \, | \, \text{opens goat door}) = \frac{1/4}{11/12} = \frac{3}{11}.
\end{equation}
Suppose the contestant decides to always switch to one of the other two unopened doors with equal probability $1/2$.  Let $\ket{\Phi_{l}}$ be the outcome switched to and let $D_{l}$ be the corresponding door.  With that, we can compute $P(A_{i} \cap B_{j} \cap C_{k} \cap D_{j}) =  P(D_{l} \, | C_{k} \cap B_{j} \cap A_{i})P(C_{k} \, | \, B_{j} \cap A_{i})P(B_{j}|A_{i})P(A_{i})$.  Hence, the probability of winning if switching and Monty opening a goat door is       
\begin{equation}
\sum_{i = l \neq j \neq k} P(A_{i} \cap B_{j} \cap C_{k} \cap D_{j}) = \frac{1}{3}.
\end{equation}
From that, one can calculate
\begin{equation}
P(\text{win if switch} \, | \, \text{opens goat door}) = \frac{1/3}{11/12} = \frac{4}{11}.
\end{equation}
Hence, we see in a $\Psi$-ontic game, switching provides an advantage.

\textbf{d) $\Psi$-epistemic Monty Hall game: } In the PBR proof, for the $\Psi$-epistemic model, there is a non-zero probability $q$ that outcome $\ket{\Phi_{1}}$ occurs when state $\ket{\Psi_{1}}$ is prepared.  This implies that in a $\psi$-epistemic game, $P(A_{1}) = q \neq 0$.  To allow for a comparison with the $\Psi$-ontic game, let $q = q_{1} + q_{2} + q_{3}$, and with that let the other prize door probabilities take values $P(A_{2}) = (1/4) - q_{1}$, $P(A_{3}) = (1/4) - q_{2}$ and $P(A_{4}) = (1/2) - q_{3}$.  

As in the $\psi$-ontic game, $P(B_{j}|A_{i})= 1/4$, for all values $i,j$.  Monty as a character corresponds to the predictions of quantum theory (\ref{Born}); he will assume $C_{1}$ is definitely a goat door since $\lvert\braket{\Phi_{1}|\Psi_{1}}\rvert^{2} = 0$.  This means the probabilities in  (\ref{PBRMonty}) apply in this game as well.  Hence, the probability that Monty opens the prize door
\begin{equation}
P(\text{opens prize door}) = \sum_{i = k \neq j} P(A_{i} \cap B_{j} \cap C_{k}) = \frac{1}{12} + \frac{2q}{3}.
\end{equation}  
This implies that the probability that Monty opens a goat door is $(11/12) - (2q/3)$.  The probability of winning if always sticking and that Monty opens a goat door is 
\begin{equation}
\sum_{i = j \neq k} P(A_{i} \cap B_{j} \cap C_{k}) = \frac{1}{4}.
\end{equation} 
From this we compute
\begin{equation}
P(\text{win if stick} \, | \, \text{opens goat door}) = \frac{3}{11-8q}.
\end{equation}
If a switching strategy is adopted then:      
\begin{eqnarray}
\sum_{i = l \neq j \neq k} P(A_{i} \cap B_{j} \cap C_{k} \cap D_{j}) = \frac{1}{3} - \frac{q}{3}, \\
P(\text{win if switch} \, | \, \text{opens goat door})  = \frac{4-4q}{11-8q}.
\end{eqnarray}
Thus the  probabilities depend on whether the game is a $\Psi$-ontic or $\Psi$-epistemic game.  For value $q=1/4$, we can calculate that $P(\text{win if switch} \, | \, \text{opens goat door}) = P(\text{win if stick} \, | \, \text{opens goat door})$; hence for certain $\Psi$-epistemic games, switching offers no advantage.

\textbf{e) Experimental implications: } Comparing a $\Psi$-ontic game to a $\Psi$-epistemic game, Monty opens the prize door less often.  This corresponds to certain probabilities in the PBR proof being zero; some work on the experimental tests \cite{pusey2012reality, nigg2015can, miller2013alternative, ringbauer2015measurements, liao2016experimental} of PBR discuss this exact zero probability as an experimental difficulty. Through our game, we provide another viewpoint;  the difference in the probabilities of winning conditioned that a goat door is opened are simply different for the two physical scenarios. This may provide insights to alternative experimental designs to test PBR.

\section{Non-locality across Time}

Our aim is to explore how the effects of non-locality across space extend into the temporal regime.  In particular, one can qualitatively describe a non-locality across time as a characteristic where an action on a subsystem can instantaneously influence the same subsystem at a later or earlier time!  We have seen this shocking property portrayed in the section regarding entanglement in time.  However in this chapter, our focus be on the case of a single system across multiple times.  This non-locality across time will be expressed mathematically through the violation of a temporal version of the Bell-CHSH inequality.  The most prominent of these are known as Leggett-Garg (LG) inequalities \cite{leggett1985quantum}.  

Some refer to the effects, that we shall describe, as an `entanglement in time.' However in this thesis we thread carefully and refer to these effects exclusively as a non-locality across time.  There are three reasons for this taxonomy and they stem from the extensive review of the spatial case.  The first reason is that the LG and related temporal inequalities are about measurement outcomes, and not about an algebraic property within the mathematics of quantum theory; in the spatial case, measurement outcomes related directly to non-locality, whereas the algebraic property defined the concept of entanglement; setting the spatial case as precedence allows us to forgo using the words entanglement in time to describe these scenarios of (temporal) measurement correlations.  The second and perhaps the more cautious reason is that not all spatially entangled states are spatially Bell non-local; hence assuming the nature of the relationship between an entanglement in time and non-locality across time prior to rigorous results is not very prudent.  The third reason is that the necessary algebraic construction to help define an entanglement in time between a single system over various times is met with considerable technical problems \cite{horsman2017can}.  It must therefore be emphasized that there is a large degree of unknown aspects to this area.  However, for an extensive review on LG inequalities, refer to \cite{emary2013leggett}.

In the spatial case, we reviewed Bell-CHSH, PBR and its games.  In this section we will articulate non-locality across time using the concepts in Bell-CHSH, PBR, and games.  This serves to provide a systematic view into the subject.

\subsection{Through Bell-CHSH concepts}

\textbf{a) LG inequalities: } The LG inequalities \cite{leggett1985quantum} can be thought of as a temporal version of the Bell-CHSH inequalities (\ref{CHSHinfoundations}).  It was derived within the context of macroscopic coherence which can be thought of as property of an object, consisting of many quantum particles, existing in superpositions of macroscopically distinct states. (A fictional example is the Schr\"{o}dinger's cat.)  The result largely follows the same style of derivation that was used in the spatial Bell-CHSH case.  We start with the following classically intuitive assumptions:
\begin{enumerate}[noitemsep, topsep=0pt, label=\roman*)]
	\item (A1) Macroscopic Realism (MR):  a macroscopic system with two or more macroscopically distinct states available to it will at all times be in one or the other of these states.
	\item (A2) Non-Invasive Measurability (NIM) at the macroscopic level:  it is possible, in principle, to determine the state of the system with arbitrarily small perturbation on its subsequent dynamics.
	\item (A3) Induction: the outcome of a measurement on the system cannot be affected by what will or will not be measured on it later. 
\end{enumerate}
NIM has also been described as that a measurement of an observable at any instant of time does not influence its subsequent evolution \cite{devi2013macrorealism}.  NIM has also been described as nondisturbance in that a measurement can be performed such that it does not influence the outcome of a measurement on the same system at a later time \cite{fedrizzi2011hardy}.  Another temporal Bell-CHSH inequality \cite{taylor2004entanglement} was more direct to replace NIM with the assumption that the results of measurements performed at some time is independent of any other measurement at another time; they referred to this as a locality in time.  The assumptions used in the LG inequalities are still of great debate.  In this thesis, we view both (A2) and (A3) as the assumption of locality in time, with (A1) taking the role of realism.  Notice the resemblance with the Bell-CHSH case where it was a locality in space paired with realism.

Using these assumptions, we can define a macroscopic dichotomic variable $Q=\pm 1$ for a system.  We aim to measure its two-time correlation function
\begin{equation}
C_{ij} = \langle Q(t_{i})\, Q(t_{j}) \rangle.
\end{equation}
This quantity is computed from the joint probability $P_{ij}(Q_{i}, Q_{j})$ of obtaining $Q_{i} = Q(t_{i})$ and $Q_{j} = Q(t_{j})$ from measurement times $t_{i}$, $t_{j}$ as
\begin{equation}\label{LGcorrelation}
C_{ij} = \sum_{Q_{i}, Q_{j} = \pm 1} Q_{i}\, Q_{j}\, P_{ij}(Q_{i}, Q_{j}).
\end{equation}
The assumption (A1) implies the observable has a defined value at all times regardless of whether it is measured.  Hence one can obtain a two-time probability as the marginal of a three-time distribution as follows
\begin{equation}
P_{ij}(Q_{i}, Q_{j}) = \sum_{Q_{k};k\neq i, j}P_{ij}(Q_{3}, Q_{2}, Q_{1}).
\end{equation}
Using (A2) and (A3), we find that the three probabilities $P_{21}(Q_{3}, Q_{2}, Q_{1})$, $P_{32}(Q_{3}, Q_{2}, Q_{1})$ and $P_{31}(Q_{3}, Q_{2}, Q_{1})$ become the same.  Hence we can write this simply as
\begin{equation}
P(Q_{3}, Q_{2}, Q_{1}) = P_{ij}(Q_{3}, Q_{2}, Q_{1}).
\end{equation}
One can proceed to use this single probability to compute the following correlation functions:
\begin{align}
C_{21} &= P(+1, +1, +1) - P(+1, +1, -1) - P(-1, -1, +1) + P(-1, -1, -1) \\
&- P(+1, -1, +1) + P(+1, -1, -1) + P(-1, +1, +1) - P(-1, +1, -1); 
\nonumber
\end{align} 
\begin{align}
C_{32} &= P(+1, +1, +1) + P(+1, +1, -1) + P(-1, -1, +1) + P(-1, -1, -1) \\
&- P(+1, -1, +1) - P(+1, -1, -1) - P(-1, +1, +1) - P(-1, +1, -1); 
\nonumber
\end{align} 
\begin{align}
C_{31} &= P(+1, +1, +1) - P(+1, +1, -1) - P(-1, -1, +1) + P(-1, -1, -1) \\
&+ P(+1, -1, +1) - P(+1, -1, -1) - P(-1, +1, +1) + P(-1, +1, -1); 
\nonumber
\end{align}
Given that
\begin{equation}
\sum_{Q_{3}, Q_{2}, Q_{1}} P(Q_{3}, Q_{2}, Q_{1}) \equiv 1,
\end{equation} 
this implies
\begin{align}\label{LGK3}
K_{3} &\equiv C_{21} + C_{32} - C_{31} \\ 
&= 1 - 4 (P(+1, -1, +1) + P(-, +, -)). \nonumber
\end{align} 
If $P(+1, -1, +1) = P(-, +, -) = 0$, then $K_{3} = 1$ which is the upper bound.  On the other hand, the choice $P(+1, -1, +1) + P(-, +, -) = 1$ gives lower bound $K_{3} \geq -3$.  From this, we obtain the simplest LG inequality,
\begin{empheq}[box=\widefbox]{align}\label{LGsimplest}
-3 \leq K_{3} \leq 1.
\end{empheq}  
This LG inequality has been violated through various quantum systems.  It can be shown that the maximum violation by a two-level quantum system (qubit) is $K_{3}^{max} = 3/2$; at least one of the three assumptions made to derive the LG inequality is wrong.

\textbf{b) Implications: }  If one takes the spatial Bell-CHSH case as an analogy but also as precedence, then we leave (A1) intact.  This means that assumptions (A2) and (A3), which express locality in time, are incorrect.  Hence the mathematical characterization of \textit{non-locality across time} is expressed as the violation of (\ref{LGsimplest}).

\textbf{c) Multi-measurements: } The LG inequality has been extended to $n$-meaurements.  Let us denote the variable,
\begin{equation}
K_{n} = C_{21} + C_{32} + C_{43} + \cdots +C_{n(n-1)} - C_{n1}.
\end{equation}
Using the assumptions (A1-3), one can obtain the following LG inequalities
\begin{empheq}[box=\widefbox]{align}\label{LGmultiple}
-n \leq \, &K_{n} \leq n-2 \quad \quad  n \geq 3, \, \text{odd}; \\
-(n-2) \leq \, &K_{n} \leq n-2 \quad \quad  n \geq 4, \, \text{even},
\end{empheq} 
where the only requirement on the variable is to be bounded $\lvert Q \rvert \leq 1$.  Using various symmetry properties one derive further inequalities.  One in particular is written as
\begin{empheq}[box=\widefbox]{align}\label{LGfour}
-2 \leq C_{21} + C_{32} + C_{43} - C_{41} \leq 2.
\end{empheq} 
Note that (\ref{LGfour}) and other LG inequalities describe a situation where there are a set of measurements on the same operator at $n \geq 3$ different times.  There is in fact another temporal Bell-CHSH inequality \cite{taylor2004entanglement} that considers a different physical scenario.  In this case, there are two different times but different operator choices at each time.  More precisely, in this scenario Alice measures at time $t_{1}$ while Bob measures at time $t_{2} > t_{1}$.  These measurements involve dichotomic variables.  Each of them have two measurement choice $(i=1,2)$, which can be denoted $A_{i}$ and $B_{i}$ for Alice and Bob respectively. Using assumptions (A1-3) where the locality of time assumption was explicity stated, one can derive the following temporal CHSH inequality,  
\begin{empheq}[box=\widefbox]{align}\label{temporalCHSH}
\lvert \langle B_{1}A_{1} \rangle + \langle B_{1}A_{2} \rangle + \langle B_{2}A_{1} \rangle - \langle B_{2}A_{2} \rangle \rvert \leq 2.   
\end{empheq} 
Despite the physical differences, the equation (\ref{temporalCHSH}) can be obtained directly from the LG inequality (\ref{LGfour}) by setting 
\begin{equation}
Q(t_{1}) = B_{2}, \quad Q(t_{2}) = A_{1}, \quad Q(t_{3}) = B_{1}, \quad Q(t_{4}) = A_{2}.
\end{equation}
Once again the violation of (\ref{temporalCHSH}) provides a mathematical characterization of a \textit{non-locality across time.}  A qubit can be shown to violate (\ref{temporalCHSH}) with a maximum value of $2\sqrt{2}$.  Notice the resemblance of temporal CHSH case (\ref{temporalCHSH}) to the spatial CHSH case (\ref{CHSHinfoundations}).

\textbf{d) Entropic version: } We have seen an entropic version (\ref{entropicCHSH}) of the spatial CHSH inequality.  A natural question to consider is whether such a possibility exists for the temporal LG case.  Such a curiosity has been answered in the affirmative in the works by \cite{devi2013macrorealism, morikoshi2006information}. We provide a derivation of this entropic LG inequalities which utilizes the Shannon entropy (\ref{Shannon}).

In the LG scenario, we have a macroscopic system where $Q(t_{i})$ represents an observable at time $t_{i}$.  Let the outcome be denoted $q_{i}$ with corresponding probability $P(q_{i})$.  Using assumption (A1), we have the existence of a joint probability distribution $P(q_1,q_2,\ldots )$ due to the notion that the outcomes of observable at all instants of time exist whether the system has been measured or not.  Using (A2) and (A3), we have the result that measurement at an earlier time $t_{i}$ has no influence on the value at a subsequent time $t_{j} > t_{i}$; this implies that joint probabilities  are written as convex combinations involving a hidden variable probability distribution $\rho(\lambda)$,        
\begin{equation} 
P(q_1,q_2,\dots , q_n) = \sum_{\lambda}\, \rho(\lambda)\, P(q_1\vert \lambda)P(q_2\vert \lambda)\dots P(q_n\vert \lambda)
\end{equation}
where
\begin{equation}
 0\leq \rho(\lambda)\leq 1, \quad  \sum_\lambda \rho_\lambda =1,
\end{equation}
and 
\begin{equation}
  0\leq P(q_i\vert \lambda)\leq 1 \quad \sum_{q_i}P(q_i\vert \lambda)=1.
\end{equation}
One can harness the joint Shannon entropy (\ref{(joint)}) to an observable at two different times $t_{k}$ and $t_{k+l}$, resulting in
\begin{equation}
H(Q_k,Q_{k+l})=-\sum_{q_k,q_{k+l}}\, P(q_k,q_{k+l})\, \log_2\, P(q_k,q_{k+l}).
\end{equation}
Using the conditional Shannon entropy (\ref{conditionalentropy}), we can examine the information held by observable $Q_{k+l}$ at time $t_{k+l}$ given it had the values $Q_{k}=q_{k}$ at a previous time $t_{k}$.  This can be shown to equate to
\begin{equation}
H(Q_{k+l}\vert Q_k=q_k)=-\sum_{q_{k+l}}\, P(q_{k+l}\vert q_{k})\, \log_2\, P(q_{k+l}\vert q_{k}),
\end{equation}
where the conditional probability is expressed as 
\begin{equation}
P(q_{k+l}\vert q_{k})=\frac{P(q_{k}, q_{k+l})}{P(q_k)}. 
\end{equation}
From this, one can easily derive the full conditional Shannon entropy,
\begin{align}
H(Q_{k+l}\vert Q_{k})&= \sum_{q_k}\, P(q_k)\,  H(Q_{k+l}\vert Q_k=q_k) \\
&= H(Q_k,Q_{k+l})-H(Q_k).
\end{align}
Re-arranging this, we obtain
\begin{equation}\label{useinentropicderivation}
H(Q_k,Q_{k+l}) = H(Q_{k+l}\vert Q_{k}) + H(Q_k)
\end{equation}
One further set of inequalities that will be of use is given by the properties intrinsic to the Shannon entropy
\begin{equation}\label{useinentropicderivation2}
H(Q_{k+l}\vert Q_{k})\leq H(Q_{k+l})\leq H(Q_k,Q_{k+l}),
\end{equation}
where the right-hand inequality signifies that two variables can never hold less information than held by one of them.  By combining (\ref{useinentropicderivation}) and (\ref{useinentropicderivation2}) and extending it to three variables, we obtain
\begin{align}
H(Q_k, Q_{k+m})&\leq& H(Q_k, Q_{k+l}, Q_{k+m})=H(Q_{k+m}\vert Q_{k+l}, Q_{k})+ H(Q_{k+l}\vert Q_{k})+H(Q_{k}).
\end{align}
This results in the entropic LG relation
\begin{empheq}[box=\widefbox]{align}\label{entropicLG}
H(Q_{k+m}\vert Q_k)\leq  H(Q_{k+m}\vert Q_{k+l})+ H(Q_{k+l}\vert Q_{k}), 
\end{empheq} 
for times $t_{k} < t_{k+l} < t_{k+m}$.  A similar line of argument allows one to obtain an $n$-measurement entropic LG inequality,
\begin{empheq}[box=\widefbox]{align}\label{entropicLGmulti}
H(Q_n\vert Q_1)\leq H(Q_n\vert Q_{n-1})+H(Q_{n-1}\vert Q_{n-2})+\ldots +H(Q_{2}\vert Q_{1}),
\end{empheq} 
for consecutive measurements  $Q_1, Q_2, \ldots, Q_n$ for the various times  $t_1<t_2<\ldots <t_n$.  

Once again a violation of (\ref{entropicLG}) or (\ref{entropicLGmulti}) is a mathematical characterization of \textit{non-locality across time}.  Such a violation has been exhibited by quantum systems.  Of great interest to this thesis is that \textit{the interdependence of this non-locality across time is shocking due to the existence of a time interval}.  To elaborate, let us consider the equation (\ref{entropicLG}).  It suggests the information content of the observable at three different times $t_{k} < t_{k+l} < t_{k+m}$ can never be smaller than the information content at two time instants.  A quantum violation suggests the perplexing narrative that in fact having the knowledge of an observable at three different times corresponds to less information than knowing the observable at only two different times!  From the view of this thesis, the added time interval to introduce the third time point makes this interdependence across time truly shocking.

\subsection{Through PBR concepts}

\textbf{a) LG inequalities: } One can provide an alternative derivation of the LG inequalities (\ref{LGsimplest}) using ontological models \cite{harrigan2010einstein, leifer2014quantum, emary2013leggett}.  Recall the use of this framework in proving the PBR theorem.  In this section, our aim is use these models to compute the correlation functions
\begin{equation}\label{PBRLGcorrelation}
C_{ij} = \sum_{Q_{i}, Q_{j} = \pm 1} Q_{i}\, Q_{j}\, P_{ij}(Q_{i}, Q_{j}),
\end{equation}
and thereby re-derive the LG inequality.  We start by describing the state of the system as outlined in (\ref{PBRmodelstates}).  This is denoted by an epistemic state $\mu(\lambda)$ over the set of ontic states $\lambda$.  Note that the ontic states capture assumption (A1). Next, a measurement (\ref{PBRmodelmeasure}) at time $t_{i}$ is represented as 
\begin{equation}
\xi_{i}(Q_{i}|\lambda),
\end{equation}
which signifies the probability that of outcome $Q_{i}$ given ontic state $\lambda$.  We denote the probability of disturbance by the measurement on the ontic state $\lambda \rightarrow \lambda'$ as
\begin{equation}
\gamma_{i}(\lambda'|Q_{i}, \lambda).
\end{equation}
In the ontological framework, the joint probability function of two measurements is then written as
\begin{equation}\label{LGontologicaltwo}
P(Q_{i}, Q_{j}) = \int d\lambda' \, d\lambda \, \xi_{j}(Q_{j}|\lambda') \,\gamma_{i}(\lambda'|Q_{i}, \lambda) \, \xi_{i}(Q_{i}|\lambda) \, \mu(\lambda).
\end{equation}
Using (A2) and (A3), we have the condition that the disturbance does not affect the ontic state.  This can be expressed generally as
\begin{equation}
\gamma_{M}(\lambda'|Q, \lambda) = \delta(\lambda' - \lambda).
\end{equation}
Hence (\ref{LGontologicaltwo}) equates to
\begin{equation}\label{LGontologicalthree}
P(Q_{i}, Q_{j}) = \int  d\lambda \, \xi_{j}(Q_{j}|\lambda) \, \xi_{i}(Q_{i}|\lambda) \, \mu(\lambda).
\end{equation} 
By substituting (\ref{LGontologicalthree}) into (\ref{PBRLGcorrelation}), we get
\begin{align}
\langle Q_{i}\, Q_{j} \rangle &= \int d\lambda \sum_{Q_{i}, Q_{j} = \pm 1} Q_{i} \, Q_{j} \, \xi_{j}(Q_{j}|\lambda) \, \xi_{i}(Q_{i}|\lambda) \, \mu(\lambda) \\
&= \int d\lambda \langle Q_{i} \rangle_{\lambda} \, \langle Q_{j} \rangle_{\lambda},
\end{align}
where $\langle \dots \rangle_{\lambda}$ denotes the expectation value for a given ontic state $\lambda$.  We can then express (\ref{LGK3}),
\begin{equation} 
K_{3} \equiv C_{21} + C_{32} - C_{31},
\end{equation}
in the following way
\begin{equation}
K_{3} = \int d\lambda \, \mu(\lambda) \Bigg( \langle Q_{2} \rangle_{\lambda} \,  \langle Q_{1} \rangle_{\lambda} +  \langle Q_{3} \rangle_{\lambda} \,  \langle Q_{2} \rangle_{\lambda} \, -  \langle Q_{3} \rangle_{\lambda} \, \langle Q_{1} \rangle_{\lambda} \Bigg).
\end{equation}
Given that the expectation value of $Q_{i}$ is bounded in magnitude by unity, the value $K_{3}$ is once again seen to satisfy the inequality
\begin{empheq}[box=\widefbox]{align}\label{LGsimplest2}
-3 \leq K_{3} \leq 1.
\end{empheq}  

\textbf{b) Comments: }
\begin{enumerate}[noitemsep, topsep=0pt, label=\roman*)]
	\item This expresses the notion that LG inequalities are valid for ontological models (A1) with locality in time, (A2) and (A3).  
	\item Macroscopic realism can be considered as a specific form of the ontological framework through the formula
	\begin{equation}
	\mu(\lambda) = \sum_{k} p_{k} \nu_{k} (\lambda),
	\end{equation}  
	where $\nu_{k} (\lambda)$ is a  distribution of states that all share macroscopic property $k$ with respect to the relevant measurement $M$.
	\item The framework of ontological models has also found use in other temporal settings.  In \cite{leifer2017time}, these models were used in arguing that a time symmetric interpretation of quantum theory is not possible without retrocausality.
\end{enumerate}

\subsection{Through Games}

\textbf{a) Preliminaries: } There are two considerable problems with the LG inequalities.  The first is that the LG inequalities were designed for macroscopic systems, as opposed to a single evolving system.  The second problem is that correlation functions that lead to violations of (\ref{temporalCHSH}) can be classically simulated using a temporal version of the Toner-Bacon protocol \cite{brierley2015nonclassicality,toner2003communication}.  To counter these points, we will describe the use of games to develop a new formulation \cite{brierley2015nonclassicality} of Bell's theorem for temporal correlations.  We consider the case of a single quantum system measured at $n$ points in time.  The focus will be on a novel definition of nonclassicality for these temporal correlations, and provide the needed advantages over the LG inequalities. 

\textbf{b) modulo-(m,d) games: } The particular $n$-player game which we will utilize are known as modulo-$m,d$ games \cite{boyer2004extended}.  Each $n\geq 1$ players is given an integer $X_{j} \in [0, \dots, d-1 ]$ for some fixed integer $d\geq 2$.  The players are promised that $d$ divides their sum 
\begin{equation}
\sum_{j=1}^{n} X_{j} \equiv 0 \quad \text{mod} \, d.
\end{equation} 
The players are allowed to give answers in the form of integers $Y_{j} \in [0, \dots , m-1]$ for some fixed integer $m \geq 2$.  The condition for winning the game is if the answers satisfy
\begin{equation}
\sum_{j=1}^{n} Y_{j} \equiv \frac{\sum_{j}X_{j}}{d} \quad \text{mod} \, d.
\end{equation}
One can think of these games as a distributed computing task.  It can be shown that these games cannot be solved with certainty using classical randomized algorithms.  However, it possible to solve these games with certainty using a quantum GHZ state. 

For a temporal scenario, a sequential version of the modulo-$m,d$ game is desired.  This can be achieved as follows.  A sequential mudolo-$m,d$ game is a communication task in which $n$ separate players are given $(\log \, d)$-bit inputs $X_{k}$ with the condition that
\begin{equation}
\sum_{k=1}^{n} X_{k} \text{mod} \, d = 0. 
\end{equation}
The requirement of the players is to provide values $Y_{k} \in [0,\dots m-1]$ that satisfies 
\begin{equation}
d \sum_{k=1}^{n}Y_{k}\equiv \sum_{k=1}^{n}X_{k} \, \text{mod} \, (md)
\end{equation}
in a sequential protocol.  In this sequential case, the $k$th stage allows the $k$th player to produce their local output $Y_{k}$ and communicate a $c_{k}$-bit message $M_{k}$ to the $(k+1)$st player.  

\textbf{c) Temporal correlations: } Temporal correlations that have the same form as spatial correlations of an $n$-qu$m$it GHZ state 
\begin{equation}
\ket{GHZ} = \frac{1}{\sqrt{m}}\sum_{i=1}^{m} \ket{i}^{\otimes n},
\end{equation}
are referred to as \textit{temporal GHZ correlations}.  It can be shown that the sequential mudolo-$(m,d)$ game can be solved exactly using a sequence of POVM measurements on a single qu$m$it state which produces temporal GHZ correlations \cite{brierley2015nonclassicality}.

\textbf{d) Nonclassicality of temporal correlations: } We describe a new definition to capture the nonclassical properties of these temporal quantum correlations, and relate this later to the game.  The impetus for this definition comes from the notion that an $m$-level physical system has a classical information capacity of $\text{log}_{2}\, m$.  The other motivation is that one wants to decide if a set of correlations is nonclassical purely based on the correlation function.  We write the temporal correlation function as 
\begin{equation}\label{temporalcorrelationfunction}
E(Y_{1} \dots, Y_{N} | X_{1} \dots, X_{N}),
\end{equation}
where a sequence of $N$ consecutive measurements on a single quantum system with measurement settings provided by inputs $X_{k}$ and measurement results given by numbers $Y_{k}$.  We define a temporal correlation function (\ref{temporalcorrelationfunction}) of the $m$-level physical system as \textit{nonclassical} if all classical algorithms that simulate the function require more than $\text{log}_{2}\, m$ bits of classical communication at some step of the simulation.  In other words, the correlation function is nonclassical if every classical simulation of it requires more communication that the classical communication capacity of the physical system in at least one stage of the simulation.

\textbf{e) A temporal ``Bell inequality": } A key result is that every classical protocol that solves the sequential modulo-$(m,d)$ games with certainty uses at least 
\begin{empheq}[box=\widefbox]{align}\label{nonclassicalitynumber}
c_{k} = \text{log} \, \frac{d}{m}
\end{empheq}  
bits of communication in all stages of the protocol except at most $md-1$ (not necessarily consecutive) stages when $d$ is an integer power of $2$ and $m$ is even. 

This result can be thought of as a Bell inequality in that it limits what one can do with classical resources and also us to exhibit the nonclassicality of temporal quantum correlations.  This latter piece can be described using the following result:  The temporal GHZ correlations arising from the sequential measurements on a single qu$m$it, where $m$ is even, are nonclassical for $n \geq 2m^{3}$.  To prove this one simply uses result (\ref{nonclassicalitynumber}) and also shows the result that there exists a sequential modulo-$(m,d)$ game for some $d$ and $n$ for which classical simulation uses in at least one stage of the protocol more than $\text{log} \, m$ bits of communication.

\subsection{Other works}

In this subsection, we provide a brief overview on some other interesting works regarding non-locality across time.

\textbf{a) Temporal Hardy's paradox: } A temporal version of Hardy's non-locality paradox was proposed \cite{fritz2010quantum} and experimentally verified \cite{fedrizzi2011hardy}.  In this scenario, let Alice and Bob measure one after the other to signify the temporal property.  Let $P(r,s|k,l)$ denote the probability that Alice obtains result $r$ and Bob obtains result $s$ given they chose detector settings $a_{k}$ and $b_{l}$ respectively.  The temporal Hardy paradox is that under the LG assumptions (A1-3), the probabilities
\begin{align}
P(+1, +1 |1,1) &= 0, \\
P(-1, +1 |1,2) &= 0, \\
P(+1, -1 |2,1) &= 0, \\
P(+1, +1 |2,2) &> 0,
\end{align}
are mutually inconsistent.  Quantum theory on the other hand provides a way where these probabilities can be simultaneously be fulfilled.  

\textbf{b) Indefinite causal structures: } There are many frameworks that employ the use of the Choi-Jamiolkowski isomorphism.  One example of this is in the framework of quantum indefinite causal structures \cite{brukner2014quantum, oreshkov2012quantum}.  It provides a framework that does not assume a pre-defined global causal structure but only that quantum theory holds locally. Central to the framework is the ``process matrix" which can be thought of as a generalization of a density matrix.  Of interest to the subject of this thesis is that this framework has been used to analyze temporal quantum correlations \cite{costa2018unifying, ringbauer2018multi}; in this work they experimentally observed multi-time quantum correlations that cannot be replicated by any spatial quantum state of equal dimension.

\textbf{c) Pseudo-density matrix: } Another generalization of a density matrix is known as the pseudo-density matrix \cite{fitzsimons2015quantum}.  This framework has been used to analyze various temporal quantum correlations including a weaker version than non-locality across time known as temporal steering \cite{ku2018hierarchy}. In addition to that it has been used in identifying the relationship between temporal correlations and aspects of quantum communications \cite{pisarczyk2019causal}.

\textbf{d) Quantum causal models: } Classical causal models have found a wide range of use in areas of machine learning \cite{koller2009probabilistic}.  There have been various advances \cite{allen2017quantum, barrett2019quantum} on quantum generalizations of classical causal models.  This may lead to a deeper understanding of how quantum causality differs from classical causality.  Moreover, from the perspective of this thesis, it may provide a platform for the development of temporal quantum machine learning algorithms.  

\textbf{e) Entangled histories framework: } The entangled histories framework \cite{nowakowski2017quantum, nowakowski2016monogamy} (and its consistent histories framework) are based on a analogous version of a unitary evolution operator known as the bridging operator.  The concept of entanglement in time is introduced within this framework with a focus on studying the property of monogomy of entanglement (\ref{monogomyCHSH}).


\chapter{Relativistic Quantum Information}\label{chap: RQI}

\begin{chapquote}{Philip Pullman, \textit{His Dark Materials}}
	``You know why we have come together: we must decide what to do about
	these new events. The universe is broken wide, and Lord Asriel has opened the way
	from this world to another.''
\end{chapquote}

\textsf{THE UNIVERSE} contains both quantum physics as well as relativistic effects.  However, a single theoretical description of these diverse phenomena remains elusive.  More recently, there have been investigations on whether the conceptualization of quantum information could play a crucial role for this unification.  On a coarse level, such research activities can be categorized in two directions.  The first is known as relativistic quantum information (RQI), and it examines the effects of relativity on the concepts of quantum information science  \cite{fuentes2013lecture, alsing2012observer, martin2014entanglement}.  Besides fundamental reasons, this has important applications most notably to satellite based quantum communications \cite{bruschi2014spacetime}.  The second direction explores how the concepts of quantum information science could be used to study relativistic structures \cite{harlow2016jerusalem, nishioka2018entanglement, witten2018aps}. A large motivation for this path stems from the holographic principle.  Both directions use quantum field theory \cite{srednicki2007quantum, mukhanov2007introduction, birrell1984quantum} which represents a partial unification; this is in contrast to the standard use of non-relativistic quantum mechanics to articulate quantum information.  In this thesis we focus on the first direction of RQI, and explore how an entanglement in space and an entanglement in time manifest themselves in such a setting.

\section{Review of Relativity}

To understand the relativistic effects on quantum information science, we first provide a brief review on the subject of relativity.  For a thorough introduction, we refer the reader to \cite{carroll2004introduction}. 

\subsection{Special relativity}

Perhaps the most shocking temporal effect in special relativity is time dilation.  This can be mathematically described as
\begin{equation}\label{timedilation}
\Delta t = \gamma \Delta \tau,
\end{equation}
where $\Delta t$ and $\Delta \tau$ represents the time intervals under consideration, and
\begin{equation}\label{timedilationgamma}
\gamma \equiv \frac{1}{\sqrt{1-\frac{v^{2}}{c^{2}}}}.
\end{equation} 
Ultimately, time dilation and other special relativistic effects are a consequence of the invariance of the spacetime interval between two events
\begin{equation}\label{spacetimeinterval}
(\Delta S)^{2} = c^{2}(\Delta t)^{2} - (\Delta x)^{2} - (\Delta y)^{2} - (\Delta z)^{2},
\end{equation}
where $c$ represents the speed of light and where we have used coordinates $(t, x, y, z)$.  This implies that observers who are in motion to each other should always agree on the value of the spacetime interval (\ref{spacetimeinterval}) despite them disagreeing on the individual spatial intervals and time interval.  The invariance of this particular combination of spatial and temporal intervals is what leads to the statement that space and time form one `object' called spacetime.

A spacetime interval is called timelike if $(\Delta S)^{2}>0$.  This means there is some frame of reference (coordinate system) where the events occur at the same spatial location, and there is no frame of reference where the events occur at the same instant of time.  Moreover, an event that occurs first in one frame of reference, occurs first in all frames of reference.  A spacetime interval is called spacelike if $(\Delta S)^{2}<0$.  This implies that there is some frame of reference where the events occur at the same instant of time, and there is no frame of reference where the events occur at the same spatial location.  Lastly, a spacetime interval is called lightlike or null if $(\Delta S)^{2}=0$.  This has the consequence that there is no frame of reference where the events occur at either the same instant of time or at the same spatial location.  Furthermore, the event that occurs first in one frame of reference, occurs first in all frames of reference.

A spacetime diagram is one where the vertical axis corresponds to time (and where we set $c=1$) and horizontal axis corresponds to one of the spatial coordinates, say $z$.  The origin is an event, denoted by say $E$.  Light rays move on lines $z=t$ and $z=-t$, which defines a light cone.  Special relativity says that nothing can travel faster than the speed of light.  This can be depicted in an alternative way in that spacelike intervals are the regions outside the light cone.  These are the sets of events that are causally unrelated to event $E$.  

The infinitesimal version of the spacetime interval (\ref{spacetimeinterval}) takes the form
\begin{equation}\label{flatspacetime}
ds^{2} = c^{2}dt^{2} - dx^{2} - dy^{2} - dz^{2}.
\end{equation}

\subsection{General relativity}

The theory of general relativity says that spacetime line element (\ref{flatspacetime}) is one of many possible spacetime line elements.  Each describes a different spacetime.  The particular case of (\ref{flatspacetime}) is known as flat spacetime, and the theory of general relativity includes curved spacetimes.  To adequately describe such curvature, the subject utilizes the mathematics of differential geometry.  The central tenet of differential geometry is that an intrinsic description of space could be accomplished by distance measurements made within that space.

For our brief review, we employ the standard use of the Einstein summation convention where one omits the summation symbol whenever a pair of contravariant and covariant indices appears in one term.  We usually let the indices range over the four spacetime dimensions unless otherwise stated.  Hence, (\ref{flatspacetime}) can be written as
\begin{equation}\label{flatspacetime2} 
ds^{2} = \eta_{ab} \, dx^{a}dx^{b},
\end{equation}
for coordinates $dx^{a}$ ($dx^{0} = c dt$) and the quantity  $\eta_{ab}$ is known as the Minkowski metric,
\begin{equation}
 \eta_{ab} = 
\begin{pmatrix}
1 & 0 & 0 & 0 \\
0 & -1 & 0 & 0 \\
0 & 0 & -1 & 0 \\
0 & 0 & 0 & -1
\end{pmatrix}.
\end{equation}
More generally, we can represent an arbitrary line element as
\begin{equation}\label{metrictensor}
ds^{2} = g_{ab} \, dx^{a}dx^{b},
\end{equation}
where $g_{ab}$ is known as the metric tensor or simply as metric.  A Lorentzian metric is a metric with signature  $(+---)$.  This means that any given point in spacetime we can find coordinates such that 
\begin{equation}
g_{ab} = \eta_{ab} = 
\begin{pmatrix}
1 & 0 & 0 & 0 \\
0 & -1 & 0 & 0 \\
0 & 0 & -1 & 0 \\
0 & 0 & 0 & -1
\end{pmatrix}.
\end{equation}
In the case of flat spacetime, the metric $g_{ab} = \eta_{ab}$ everywhere.

General relativity postulates that spacetime is a four-dimensional manifold equipped with a Lorentzian metric, $g_{ab}$.  A manifold can be thought of as a collection of points which locally looks like $\mathbb{R}^{4}$.  The quantity $ds^{2} = g_{ab}\:dx^{a}\:dx^{b}$ is invariant.  The metric tensor plays the crucial role of determining the geometry of the manifold and the important geometric quantities are built from this tensor and its derivatives.  The connection (or Christoffel symbol) is given by 
\begin{equation}
\Gamma_{\;\;bc}^{\:a} \equiv \frac{1}{2}\:g^{ad}\:(g_{db,c} + g_{dc,b} - g_{bc,d}).
\end{equation}
The matrix inverse of metric $g_{ab}$ is denoted $g^{ab}$.  Furthermore, the commas denote partial derivatives: $X_{a,b} \equiv \partial_{b}X_{a} \equiv \partial X_{a}/\partial x^{b}$.  We say a vector is timelike if $g_{ab}X^{a}X^{b} > 0$, spacelike if $g_{ab}X^{a}X^{b} < 0$ and lightlike or null if $g_{ab}X^{a}X^{b} = 0$  

This connection can be defined in terms of the covariant derivative of a tensor
\begin{equation}
\nabla_{b} \: V^{a} = \partial_{b}\:V^{a} + \Gamma_{\;\;cb}^{\:a}\:V^{c}.
\end{equation}
This is a generalization of taking a derivative in curved spaces. Notice the deviation from flat space is represented by the connection. The Riemann curvature tensor is a quantity which measures the extent to which the covariant derivative fails to commute, and in that sense, the information about the curvature is located in the components of this tensor.  The explicit formula for this tensor is given by
\begin{equation}
R_{\;\;\;bcd}^{\;a} \equiv \partial_{c}\:\Gamma_{\;\;bd}^{\:a}\: - \partial_{d}\:\Gamma_{\;\;bc}^{\:a}\: + \Gamma_{\;\;ec}^{\:a}\:\Gamma_{\;\;bd}^{\:e}\: - \Gamma_{\;\;ed}^{\:a}\:\Gamma_{\;\;bc}^{\:e}.
\end{equation}  
The Ricci tensor, Ricci scalar and Einstein tensor are respectively built out of the Riemann tensor as
\begin{align}
&R_{ab} \equiv R_{\;\;\;acb}^{c}, \\
&R \equiv g^{ab}\:R_{ab}, \\
&G_{ab} \equiv R_{ab} - \frac{1}{2} \:R\:g_{ab}.
\end{align}
From this, the theory of general relativity postulates the Einstein field equations
\begin{equation}\label{Einsteinfieldequations}
G_{ab} = \frac{8\pi G}{c^{2}} T_{ab},
\end{equation}
where $G$ is Newton's constant of universal gravitation.  The equations relate the curvature of spacetime (quantified by $G_{ab}$) to the distribution of matter and energy (as quantified by the stress-energy tensor $T_{ab}$).  

The Einstein field equations allows one to obtain a spacetime from a given matter-energy distribution.  Vacuum spacetimes are solutions where $T_{ab} = 0$ in (\ref{Einsteinfieldequations}).  This can be shown to be equivalent to the statement that $R_{ab} = 0$ and is known as a Ricci-flat solution.  The flat spacetime (\ref{flatspacetime}) one such solution.  Another Ricci-flat solution is known as the Schwarzschild metric which in coordinates $(t, r, \theta, \phi)$ is written as
\begin{equation}\label{Schmetric}
ds^{2}= \Bigg(1-\frac{2\:m}{r}\Bigg)\:{dt}^{2}-\frac{1}{1-\frac{2\:m}{r}}dr^{2}-{r}^2\, \Bigg({d\theta}^{2}+{{\sin}^{2}\theta}\:{d\phi}^2\Bigg).
\end{equation}

The parameter $m$ measures the amount of mass inside the radius $r$, and in the region $r \leq 2m$ the metric describes a black hole region.  The solution blows up at $r=0$ and $r=2m$; the former is known as a physical singularity whereas the latter is known as a coordinate singularity as it is simply an artefact of the use of this particular coordinate system.

In this thesis, we have seen the manifestation of the Schr\"{o}dinger equation (\ref{schrodinger}).  Along with the Einstein field equations (\ref{Einsteinfieldequations}), these two pieces form the fundamental equations of modern theoretical physics.  The aim to unify these descriptions is known as the problem of quantum gravity, and has so far remained unsolved despite considerable efforts.  We make a few remarks on the similarities of these equations.   Both require an energy input; the first through the energy-momentum tensor and the the second from the Hamiltonian.  The first equation describes the dynamics of spacetime while the second equation describes the dynamics of a quantity whose direct relationship to the physical world is unknown. Nevertheless, both output solutions that describe point particles that behave in the most bizarre manner; general relativity says that point particles cannot exist but form singularities; quantum theory provides point particles with the most bizarre properties such as entanglement.  Of particular relevance is that the authors Einstein and Rosen of the EPR paper \cite{einstein1935can}, wrote another paper that same year titled ``The particle problem in general relativity" \cite{einstein1935particle}.  In it they attempted (but failed) to build a model of a point particle without a singularity.  The work was later termed the Einstein-Rosen bridge, and provided the pathway for the most shocking temporal structures in relativity, namely wormholes \cite{visser1996lorentzian}.

\section{Quantum Fields}

\subsection{Quantum field theory}

We showed in Chapter \ref{chap: QInfo} how the quantum circuit model is based on the postulates of quantum theory.  As a framework, quantum theory (ie those set of postulates) does not specify the state space, the state vector, or the Hamiltonian of a specific physical system under consideration.  It merely provides the mathematical framework for the construction of various physical theories.  The specification of such quantities allows the physics to arise, resulting in different physical theories.  Non-relativistic quantum mechanics is only one such theory; the quantum circuit model corresponds to a non-relativistic two-level quantum system.  Quantum field theory is another subset of quantum theory which describes (special) relativistic quantum particles.  In this latter sense, quantum field theory can be viewed as a unification of quantum theory and special relativity. 

In non-relativistic quantum mechanics, we have a position and a momentum operator; however the existence of these operators are not part of the postulates.  In quantum field theory, these operators are not well-defined, and position is described as a label. We have an operator at each point in space and the collection of these position-dependent operators is known as a quantum field.  Each quantum field has what is known as a conjugate momentum density which is also a function of the spatial label.

The framework of quantum theory can be expressed in the Schr\"{o}dinger picture (where operators are time independent and states are time dependent), the Heisenberg picture (where operators are time dependent and states are time independent), or the Dirac picture (which is an intermediate of the two).  In the Heisenberg picture, the quantum field then is also a function of time.  So far in this thesis we have been using the Schr\"{o}dinger picture which involves the equation (\ref{schrodinger}).  Portraying quantum field theory using the Schr\"{o}dinger picture results the Hamiltonian expressed in terms of infinitely many degrees of freedom, and (\ref{schrodinger}) taking the form of a functional differential equation.  The quantum field state (or wave functional as its known in this case) is a function of time but also a functional of the classical field configuration.  And the square of the wave functional gives the probability density for measuring a certain field configuration.  On a related note, all the foundational mysteries regarding quantum theory, such as the measurement problem, still remain.

Despite that the Heisenberg picture is rarely used to introduce the subject of non-relativistic quantum mechanics, it is precisely the Heisenberg picture that is often used to introduce quantum field theory.  We aim to provide the most basic tools of this subject in order to progress towards to the entanglements in RQI.

\subsection{Quantization}

The procedure of quantization allows one to obtain a physical theory of a quantum system from an analogous classical system.  As an example, it allows one to obtain a quantum  Hamiltonian operator from a classical Hamiltonian function.

\textbf{a) Harmonic oscillator: } In non-relativistic quantum mechanics, one often quantizes a harmonic oscillator.  A classical harmonic oscillator with external force $J(t)$ satisfies the equation of motion
\begin{equation}
\ddot{q} = -\omega^{2} q + J(t),
\end{equation} 
where Hamiltonian is written as
\begin{equation}\label{classicalHamiltonian}
H(p,q) = \frac{p^{2}}{2} + \frac{\omega^{2}q^{2}}{2} - J(t)q,
\end{equation}
where $q$ is the spatial coordinate and $p$ is the momentum.  Quantization involves turning $q$ and $p$ into respective operators $\hat{q}(t)$ and $\hat{p}(t)$ that satisfy the following commutation relation
\begin{equation}
[\hat{q}, \hat{p}] = i,
\end{equation}
where we have set $\hbar = 1$.  From these quantities, one can define the annihilation and creation operators which can respectively be expressed as
\begin{equation}\label{creationannihilationHO}
\hat{a}(t) \equiv \sqrt{\frac{\omega}{2}} \, \Big[\hat{q}(t) + \frac{i}{\omega}\hat{p}(t)\Big], \quad \hat{a}^{\dagger}(t) \equiv \sqrt{\frac{\omega}{2}} \, \Big[\hat{q}(t) - \frac{i}{\omega}\hat{p}(t)\Big].
\end{equation}
These satisfy 
\begin{equation}
[\hat{a}(t), \hat{a}^{\dagger}(t)] = 1,
\end{equation}
at every moment of time.  Through various computations and a final substitution into (\ref{classicalHamiltonian}), one can obtain the quantum Hamiltonian operator
\begin{align}
\hat{H} = \frac{\omega}{2}(\hat{a}^{\dagger} \hat{a} + \hat{a}\hat{a}^{\dagger}) - \frac{\hat{a}^{\dagger} + \hat{a}}{\sqrt{2 \omega}} J(t).
\end{align}
One can proceed to construct a basis for the corresponding Hilbert space.  This assumes the existence of normalized state $\ket{0}$ (note this is not the element from the computational basis states (\ref{compbasis})) where 
\begin{equation}\label{HOvacuum}
\hat{a} \ket{0} = 0.
\end{equation}
This state is known as the vacuum state.  One can create excited states 
\begin{equation}
\ket{n} = \frac{1}{\sqrt{n!}}(\hat{a}^{\dagger})^{n}\ket{0},
\end{equation}   
for $n \geq 1$.  All possible quantum states of the oscillator can be written as
\begin{equation}
\ket{\psi} = \sum_{n=0}^{\infty}\psi_{n} \, \ket{n}, \quad \sum_{n=0}^{\infty} \lvert \psi_{n} \rvert^{2} = 1.
\end{equation}

\textbf{b) Field quantization: } Classical relativistic fields can be described by equations such as the Klein-Gordon equation, the Dirac equation, and the Maxwell equations.  Their role is analogous to harmonic oscillator in that they provide a classical Hamiltonian for quantization.  The quantization of these classical field equations provides the quantum field theory (which in turn results in a description of relativistic quantum particles).  More precisely, the Schr\"{o}dinger equations corresponding to each of the classical field equations articulates the different quantum field theories.  In this section, we will describe this quantization using the Heisenberg picture for the specific case of the Klein-Gordon equation.  The scalar field satisfying this equation can be thought of as a set of infinitely many harmonic oscillators.  Hence our quantization method will relate to the method used to quantize a harmonic oscillator.  To start this procedure, we have classical scalar field $\phi$ which satisfies the Klein-Gordon equation
\begin{equation}
\square \phi = 0,
\end{equation}
where the d'Alambertian operator $\square$ is defined as
\begin{equation}
\square \phi \equiv \frac{1}{\sqrt{-g}}\partial_{\mu}(\sqrt{-g}g^{\mu \nu}\partial_{\nu}\phi).
\end{equation}
We have used notation $g = \text{det}(g_{ab})$.  In flat two-dimensional spacetime (\ref{flatspacetime2}), the metric is given by $g_{\mu \nu} = \eta_{\mu \nu} = \{+-\}$.  Hence we have
\begin{equation}
\square \phi = (\partial_{t}^{2} - \partial_{z}^{2})\phi,
\end{equation}
for coordinates $(t,z)$.  The solution to the Klein-Gordon equation are plane waves in Minkowski spacetime $M$, 
\begin{equation}
u_{\omega,M}(t,z)=\frac{1}{\sqrt{4\pi\omega}}e^{-i\omega(t-\epsilon z)},
\end{equation}
where $\epsilon$ equates to $+1$ for positive momentum modes and to $-1$ for negative momentum modes. These plane wave solutions are known as global field modes.  The modes where $\omega>0$ are orthonormal with respect to a Lorentz invariant inner product
\begin{equation}
(\phi,\psi)=-i\int_{\Sigma}(\psi^*\partial_{\mu}\phi-(\partial_{\mu}\psi^*)\phi)d\Sigma^{\mu}.
\end{equation}
Our next step is to quantize this scalar field.  To do so we require a time-like Killing vector field.  We say that $K$ is a Killing vector field if
\begin{equation}
\mathcal{L}_{K} g_{\mu \nu} = 0
\end{equation}
where the Lie derivative of the metric tensor is defined as
\begin{equation}
\mathcal{L}_{K} g_{\mu \nu} \equiv K^{\lambda}\partial_{\lambda}g_{\mu \nu} + g_{\mu \lambda} \partial_{\nu} K^{\lambda} + g_{\nu \lambda} \partial_{\mu} K^{\lambda}.
\end{equation}
If a spacetime has a Killing vector field, then one can find a basis for the plane wave solutions of the Klein-Gordon equation such that
\begin{equation}\label{specialbasissolution}
\mathcal{L}_{K}u_{k,M} = K^{\mu}\partial_{\mu}u_{k, M} = -i \omega u_{k,M}.
\end{equation}
It can be shown that if $K$ is a time-like Minkowski vector field, then the Lie derivative corresponds to $\partial_{t}$.  Then (\ref{specialbasissolution}) takes the form
\begin{align}
\partial_{t} u_{k, M} &= -i \omega u_{k, M} \\
\partial_{t} u_{k, M}^{*} &= -i \omega u_{k, M}^{*}
\end{align}
where we identify $\omega > 0$ with a frequency.  One can classify the plane wave solutions to the Klein-Gordon equation as
\begin{align}
u_{k} \quad &\rightarrow \quad \text{positive frequency solutions} \\
u_{k}^{*} \quad &\rightarrow \quad \text{negative frequency solutions}
\end{align}
The quantized field can be obtained using these positive and negative frequency solutions.  More precisely, the quantized field satisfies equation
\begin{equation}
\square \hat{\phi} = 0,
\end{equation}
and is given by the operator value function
\begin{equation}\label{quantumfieldfunction}
\hat{\phi}=\int(u_{k,M}a_{k,M}+u_{k,M}^*a_{k,M}^{\dag})dk.
\end{equation}
The operators $a_{k,M}^{\dag}$ and $a_{k,M}$ are creation and annihilation operators which satisfy
\begin{equation}
[a_{k,M},a^{\dagger}_{k^{\prime},M}]=\delta_{k, k^{\prime}}.
\end{equation} 
Observe that positive frequency solutions are associated with annihilation operators, whereas negative frequency solution correspond to creation operators.  Furthermore, these creation and annihilation operators are analogous to case of the harmonic oscillator (\ref{creationannihilationHO}).  We also have a vacuum state for the quantum field which is defined by
\begin{equation}
a_{k,M}{\ket{0}}^{\mathcal{M}}=0.
\end{equation}  
This is analogous (\ref{HOvacuum}).  In fact, the vacuum state can be expressed as
\begin{equation}\label{Minkowskivacuum}
\ket{0}^{\mathcal{M}} = \prod_{k} \ket{0_{k}}^{\mathcal{M}},
\end{equation}
where $\ket{0_{k}}^{\mathcal{M}}$ is the ground state of mode $k$.  The vacuum can be physically thought of as `empty space.'  The action of the creation operators on the vacuum allows one to define particle states
\begin{equation}
\ket{n_1,...,n_{k}}^{\mathcal{M}}= (n_1!,...,n_{k}!)^{-1/2} \, a_{1,M}^{\dag n_1}...a_{k,M}^{\dag n_k}\ket{0}^{\mathcal{M}}.
\end{equation} 
This procedure implies that only when there exists a time-like Killing vector field is the notion of a particle well-defined.

\textbf{c) Bogoliubov transformation: } For the scenario that a spacetime admits a time-like Killing vector field, the vector field is generally not unique.  In our procedure we used $\partial_{t}$, and another such vector field could be denoted by $\partial_{\hat{t}}$.  For each case, one can obtain a basis for the solutions which we respectively denote $\{u_{k}, u_{k}^{*}\}$ and $\{\bar{u}_k,\bar{u}_k^*\}$.  Positive and negative frequency solutions can be identified for the other basis as well.  Hence, the field can be equivalently quantized in both bases,
\begin{equation}
\hat{\phi}=\int(u_ka_k+u_k^*a_k^{\dag})dk=\int(\bar{u}_{k^{\prime}}\bar{a}_{k^{\prime}}+\bar{u}_{k^{\prime}}^*\bar{a}_{k^{\prime}}^{\dag})dk^{\prime}.
\end{equation}
By utilizing the inner product, it is possible to obtain a transformation, known as the Bogoliubov transformation, between the representations for the creation and annihilation operators
\begin{equation}
a_k=\sum_{k^{\prime}}(\alpha^{\ast}_{kk^{\prime}}\bar{a}_{k^{\prime}}-\beta^{\ast}_{kk^{\prime}}\bar{a}_{k^{\prime}}^{\dag}),
\end{equation}
where $\alpha_{kk^{\prime}}=(u_{k},\bar{u}_{k^{\prime}})$ and $\beta_{kk^{\prime}}=-(u_{k},\bar{u}^{\ast}_{k^{\prime}})$ which are known as the Bogoliubov coefficients.  Both vacuum states are defined as
\begin{equation}
{a}_k{\ket{0}}={\bar{a}}_k{\bar{\ket{0}}}=0
\end{equation}
and hence it is possible to derive a transformation between the two vacuum states.

\textbf{d) For RQI: } The field from the above procedure is just one type of quantum field theory.  Collectively, quantum field theories provide a description of relativistic quantum systems. The field of RQI uses quantum field theory, as opposed to non-relativistic quantum mechanics, to express quantum information and its information tasks.  In this sense, it allows one to investigate the effects of relativity on the concepts of quantum information science.

\subsection{Locality in RQI}

In Chapter \ref{chap: QFound}, we defined locality as systems that obey the Bell-CHSH inequality (\ref{CHSHinfoundations}).  It is important to use the more precise terminology of Bell locality in relation to (\ref{CHSHinfoundations}) given that there are other mathematical characterizations of the notion of locality \cite{brunner2014bell}.  In quantum field theory, locality is quantitatively expressed in a different manner to (\ref{CHSHinfoundations}), and is also more commonly referred to as causality \cite{tongquantum}.  It captures the notion that a measurement at one spatial location say $x$ cannot affect a measurement at another spatial location $y$, when $x$ and $y$ are not causally connected.  

More rigorously, \textit{causality} is the requirement that all operators commute for spacelike separation
\begin{empheq}[box=\widefbox]{align}\label{localqft}
[O(x), O(y)] = 0 \quad \text{for} \quad (x-y)^{2} < 0.
\end{empheq} 
One often says that the theory is causal if the commutators vanish outside the light cone.  Nevertheless, (\ref{localqft}) does not make a distinction between the forward light cone and backward light cone.  In  \cite{donoghue2019arrow} it was shown that there is an implied arrow of causality (meaning what is the past and what is the future) which is connected to the sign of the imaginary number in the quantization procedure.  Reversing the sign of the factors of $i$ leads to a causal theory with the consequence of an arrow of causality running from large times to small times. 

\subsection{Entanglement in RQI}

In Chapter \ref{chap: QEnt}, we defined entanglement as the nonseparability of a state (\ref{bipartiteseparable}).  This nonseparability was also expressed through the density operator as (\ref{bipartitedensityseparable}).  Despite RQI harnessing the framework of quantum field theory, it utilizes the same algebraic nonseparability definition of entanglement as non-relativistic quantum mechanics.  An open question in RQI \cite{alsing2012observer} is whether there exists a more general notion of quantum interdependence for relativistic quantum systems which maps to the standard notion of entanglement in the non-relativistic regime.    

To articulate the current definition \cite{olson2012extraction}, consider a spacetime manifold $M$, with two subsets of it denoted $R_{1}$ and $R_{2}$.  The states of the field restricted to each subset are described by respective Hilbert spaces, $H_{R_{1}}$ and $H_{R_{2}}$.  If the field operators commute between the two regions then one can say that $H_{R_{1}}$ and $H_{R_{2}}$ represent independent systems.  Then the state of the quantum field $\rho$ restricted to $R_{1} \cup R_{2}$ is called \textit{entangled} if it is not separable, meaning if it cannot be represented as     
\begin{empheq}[box=\widefbox]{align}\label{entanglementqft}
\rho = \sum_{i}p_{i} \rho_{1}^{i} \otimes \rho_{2}^{i},
\end{empheq}
where $\rho_{1}^{i}$ are density operators on $H_{R_{1}}$ and $\rho_{2}^{i}$ are density operators on $H_{R_{2}}$, with $p_{i} \geq 0$.  Notice that this definition is analogous to (\ref{bipartitedensityseparable}).

Our aim in the next section is to show that the vacuum state (\ref{Minkowskivacuum}) of a quantum field is an entangled state.  We will do this through the state as opposed to a density operator framework.  It is important to emphasize that whether a state is entangled or not depends on the tensor-product decomposition that is chosen for the total Hilbert space \cite{martin2014entanglement}.  Hence the concept of entanglement in the field is to be understood within this context.  To elaborate on matter, let $H_{\text{field}}$ denote the Hilbert space associated to the free scalar field in Minkowski spacetime (\ref{quantumfieldfunction}).  If we decompose that Hilbert space into plane wave modes, then we have decomposition 
\begin{equation}
H_{\text{field}} = \bigotimes_{k} L^{2}(R)_{k}
\end{equation}  
where $L^{2}(R)_{k}$ is the countably infinite harmonic oscillator state space with mode $k$.  Using this, the Minkowski vacuum is not entangled but can be decomposed into product state 
\begin{equation}
\ket{0}^{\mathcal{M}} = \bigotimes_{k}  \ket{0_{k}}^{\mathcal{M}},
\end{equation} 
which is equivalent to (\ref{Minkowskivacuum}).  However, as we shall describe one can also decompose the field into a left and right half (known as Rindler wedges) 
\begin{equation}
H_{\text{field}} = H_{\text{left}} \bigotimes H_{\text{right}}.
\end{equation}  
Then the Minkowski vacuum state is a tensor product of two-mode squeezed (TMS) states in pairs of (Rindler) modes indexed by $\omega$ 
\begin{equation}\label{Minkentangled}
\ket{0}^{\mathcal{M}} = \bigotimes_{\omega}  \ket{TMS}_{(\omega, I); (\omega, II)}.
\end{equation} 
TMS states are a central topic in the area of quantum optics \cite{lvovsky2015squeezed}; physically in squeezed states the noise of the electric field at certain phases falls below that of the vacuum state; however our focus is solely on the mathematical description.  

The above decomposition represents a bipartite entanglement across the left-right cut, and the explicit form of equation (\ref{Minkentangled}) will be articulated in the next section.  We also would like to point out that the analysis in the next sections are all in $1+1$ dimensions.

\section{Spacelike Entanglement}

\subsection{Definition}

Recall the RQI definition of entanglement as expressed through (\ref{entanglementqft}) using subsets $R_{1}$ and $R_{2}$ of spacetime $M$.  A further distinction can be made \cite{olson2012extraction}.  If all the points in $R_{1}$ are spacelike separated with respect to all points in $R_{2}$, then we say that the quantum field in $R_{1}$ is \textit{spacelike entangled} with respect to the quantum field in $R_{2}$.  In other words, this nonseparability of the state of the quantum field can be regarded as an entanglement in space as described in Chapter \ref{chap: QEnt}.  However in this relativistic setting, the spatial aspect is articulated far more precisely by using the light cone structure i.e. spacelike intervals.

\subsection{Left-Right entanglement}

The two-dimensional Minkowski spacetime $(t,z)$ can be broken up into regions using the light cone structure.  The spacelike regions outside the light cone $\lvert t \rvert < -z$ and $\lvert t \rvert < z$ are respectively known as the left Rindler wedge and right Rindler wedge.  We want to show that the Minkowski vacuum can be written as a spacelike entangled state between the left and right Rindler modes \cite{crispino2008unruh}.

\textbf{a) Independent systems: } For the possibility of entanglement, we want that the fields within the left and right Rindler wedges are considered as independent systems.  This is a requirement as we want to quantize them separately.  Such a condition gets fulfilled if the commutators vanish
\begin{equation}
[\hat{\phi}(x_{L}),\hat{\phi}(x_{R}) ] = 0.
\end{equation} 
For spacelike intervals, this vanishing holds for both massive and massless fields.

\textbf{b) Minkowski plane waves: } Our derivation rests on the following statement:  Ordinary plane waves in Minkowski spacetime cover the spacetime.  Our coordinates for Minkowski spacetime is $(t,z)$ and hence the massless scalar field in two dimensions satisfies
\begin{equation}\label{waveequation} 
\Big(\frac{\partial^{2}}{\partial t^{2}} - \frac{\partial^{2}}{\partial z^{2}} \Big) \hat{\phi} = 0.
\end{equation}
If we were to use light-cone coordinates
\begin{equation}
U = t-z, \quad \quad \quad \quad V = t+z,  
\end{equation}
then we can write the field in terms left and right moving sectors
\begin{equation}
\hat{\phi}(t,z) = \hat{\phi}_{-}(U) + \hat{\phi}_{+}(V).
\end{equation}
Given that the left and right moving sectors do not interact, we can discuss the effect for the left-moving sector to simply the exposition.  Through expansion, one finds that
\begin{equation}
\hat{\phi}(V) = \int_{0}^{\infty} dk [\hat{b}_{+k}\, u_{k}(V) +\hat{b}_{+k}^{\dagger}u_{k}^{*}(V)],
\end{equation}
where
\begin{equation}\label{minkplanewaves}
u_{k}(V) = (4\pi k)^{-1/2}e^{-ikV}.
\end{equation}
The Minkowski spacetime plane waves are given by (\ref{minkplanewaves}) and its complex conjugate.  The Minkowski vacuum which we denote $\ket{0_{M}}$ is defined as
\begin{equation}
\hat{b}_{+k}\ket{0_{M}} = 0,
\end{equation}
for all $k$.

\textbf{c) Rindler plane waves:} For the right Rindler wedge $(0<V)$, we have the coordinate transformation
\begin{equation}
t = a^{-1} e^{a\epsilon}\sinh(a\tau),  \quad \quad \quad \quad z = a^{-1} e^{a\epsilon}\cosh(a\tau).
\end{equation}
Due to the conformal invariance of the massless wave equation in two dimension, the wave equation takes the same form as (\ref{waveequation}),
\begin{equation}
\Big(\frac{\partial^{2}}{\partial \tau^{2}} - \frac{\partial^{2}}{\partial \epsilon^{2}} \Big)_{R} \hat{\phi} = 0.
\end{equation}
Obtaining analogous light-cone coordinates 
\begin{equation}
\chi = \tau + \epsilon, \quad \quad \quad \quad \kappa = \tau - \epsilon,
\end{equation}
we can express the left-moving sector as 
\begin{equation}
\hat{\phi}_{+}(V) =  \int_{0}^{\infty}d\omega [\hat{a}_{+\omega}^{R}g_{\omega}^{R}(\chi)  +\hat{a}_{+\omega}^{R\dagger}g_{\omega}^{R*}(\chi)].
\end{equation}
The mode solutions or plane waves in Rindler coordinates are
\begin{equation}\label{Rindler1}
g_{\omega}^{R}(\chi) = (4\pi \omega)^{-1/2}e^{-i\omega\chi}.
\end{equation}
Without repeating all the details, similar calculations and results can be made for analogous light cone coordinate, $\overline{\chi}$, in the left Rindler wedge $(V<0<U)$.  In particular the mode function takes the form
\begin{equation}\label{Rindler2}
g_{\omega}^{L}(\overline{\chi}) = (4\pi \omega)^{-1/2}e^{-i\omega\overline{\chi}}.
\end{equation}
The vacuum state for the left Rindler wedge and right Rindler wedge are the identical.  It is known as the Rindler vacuum $\ket{0_{R}}$ and it is defined through
\begin{equation}
\hat{a}_{+\omega}^{R}\ket{0_{R}} = \hat{a}_{+\omega}^{L}\ket{0_{R}} = 0,
\end{equation}
for all $\omega$.

\textbf{d) Bogoliubov transformation: } The Minkowksi light cone coordinates are related to the analogous Rindler light cone coordinates through
\begin{equation}
V = a^{-1}e^{a\chi},  \quad \quad \quad \quad V = -a^{-1}e^{-a\overline{\chi}}.
\end{equation}
Since the modes are complete in their region, we can expand the Rindler modes $g_{\omega}^{R}(\chi)$ and $g_{\omega}^{L}(\overline{\chi})$ in terms of Minkowski plane waves, $u_{k}(V)$ and $u_{k}^{*}(V)$, because the plane waves are defined over all spacetime.  Hence, we can express one set of modes in terms of the other (i.e. using Bogoliubov transformations).  The Heaviside function is used to make the expression valid in their respective quadrant,  
\begin{align}\label{bogo1}
\theta(V)\, g_{\omega}^{R}(\chi) &=  \int_{0}^{\infty}dk \, ( \alpha_{\omega k}^{R} u_{k} (V) + \beta_{\omega k}^{R} u_{k}^{*} (V)), \\ \label{bogo2}
\theta(-V)\, g_{\omega}^{L}(\overline{\chi}) &=  \int_{0}^{\infty}dk \, ( \alpha_{\omega k}^{L} u_{k} (V) + \beta_{\omega k}^{L} u_{k}^{*} (V)).
\end{align}
These Rindler modes form a superposition of Minkowski plane waves and the coefficients, $\alpha$ and $\beta$, are the Bogoliubov coefficients.  Solving equations (\ref{bogo1}) and (\ref{bogo2}) gives the relations
\begin{equation}\label{bogorelations}
\beta_{\omega k}^{L} = -e^{-\pi \omega/a}\alpha_{\omega k}^{R*} \quad \quad \quad \quad \beta_{\omega k}^{R} = -e^{-\pi \omega/a}\alpha_{\omega k}^{L*}
\end{equation}

\textbf{e) L-R entanglement: } By substituting (\ref{bogorelations}) back into (\ref{bogo1}) and (\ref{bogo2}), we can define a new set of modes known as the Unruh modes,
\begin{align}
G_{\omega}(V) &= \theta(V)g_{\omega}^{R}(\chi) + \theta(-V)e^{-\pi\omega/a}g_{\omega}^{L*}(\overline{\chi}), \\
\overline{G}_{\omega}(V) &= \theta(-V)g_{\omega}^{L}(\overline{\chi}) + \theta(V)e^{-\pi\omega/a}g_{\omega}^{R*}({\chi}).
\end{align}
These share the Minkowski vacuum
\begin{equation}
\hat{a}_{G \omega} \ket{0_{M}} = \hat{a}_{\overline{G} \omega} \ket{0_{M}} = 0.
\end{equation}
The Unruh mode annihilation operators can be written in terms of the Rindler annihilation and creation operators as follows
\begin{align}
\hat{a}_{G \omega} &= (\hat{a}_{\omega}^{R} - e^{-\pi\omega/a}\hat{a}_{\omega}^{L\dagger}), \\
\hat{a}_{\overline{G} \omega} &= (\hat{a}_{\omega}^{L} - e^{-\pi\omega/a}\hat{a}_{\omega}^{R \dagger}). 
\end{align}
These expressions can be combined to produce the equation
\begin{equation}
(\hat{a}_{\omega}^{R \dagger}\hat{a}_{\omega}^{R} - \hat{a}_{\omega}^{L \dagger}\hat{a}_{\omega}^{L}) \ket{0_{M}} = 0.
\end{equation}
We proceed to use the approximation that $\omega$ is discrete.  From the previous equation we obtain,
\begin{empheq}[box=\widefbox]{align}\label{spacelikeentanglement}
\ket{0_{M}} = \prod_{i}C_{i} \sum_{n_{i}=0}^{\infty}\frac{e^{-\pi n_{i} \omega_{i}/a}}{n_{i}!}(\hat{a}_{\omega_{i}}^{R\dagger}\hat{a}_{\omega_{i}}^{L\dagger})^{n_{i}}\ket{0_{R}}
\end{empheq}
where $C_{i} = \sqrt{1 - e^{-2\pi \omega_{i}/a}}$.  This is a \textit{spacelike entanglement} of the Minkowski vacuum in terms of the left and right Rindler modes.  One can re-express equation (\ref{spacelikeentanglement}) as
\begin{equation}
\ket{0_{M}} = \prod_{i}C_{i} \sum_{n_{i}=0}^{\infty}{e^{-\pi n_{i} \omega_{i}/a}}\ket{n_{i}^{R}} \otimes \ket{n_{i}^{L}}
\end{equation} 
where $\ket{n_{i}^{R}}$ is the state of a Rindler mode restricted to the right wedge, containing $n$ excitations of frequency $\omega_{i}$.  In an analogous manner, $\ket{n_{i}^{L}}$ is the Rindler mode restricted to the left wedge.  This state is entangled as it is nonseparable between the left and right wedges.

\subsection{Implications}

We discuss some observations of this spacelike entanglement.  One can form a density operator using (\ref{spacelikeentanglement}) and trace over either one of the regions.  This results in a thermal state with temperature,
\begin{equation}
T = \frac{a \hbar}{2\pi k c},
\end{equation}
where $\hbar$ is the reduced Planck's constant and $k$ is the Boltzmann's constant.  This quantity is commonly referred to as the Unruh temperature.  By utilizing the Rindler coordinates, one can intepret the temperature in the following way:  While inertial observers describe the field to be the vacuum, observers in uniform acceleration observe a state thermalized with particles at the Unruh temperature.  Hence the particle content of a field is observer dependent!  This Unruh temperature is in fact analogous to the famous Hawking temperature of a black hole
\begin{equation}\label{hawkingtemperature}
T_{H} = \frac{\hbar c^{3}}{8 \pi G M k},
\end{equation}
where $M$ is the mass of the black hole (\ref{Schmetric}).  From an RQI point of view, one can say that the section of the quantum vacuum trapped behind the event horizon $r=2m$ is spacelike entangled with that outside.  From the temperature (\ref{hawkingtemperature}), one can derive the Bekenstein-Hawking entropy of a black hole
\begin{equation}\label{hawkingentropy}
S_{BH} = \frac{c^{3}Ak}{4G\hbar},
\end{equation}
where $A$ is the surface area of the black hole (ie area of the event horizon).  However it is not known what the microscopic nature the black hole entropy is, and many consider this formula as the crucial clue to quantum gravity.

In terms of quantum information protocols, in \cite{alsing2003teleportation} it was shown that this Unruh effect reduces the fidelity of quantum teleportation.  This alluded to the notion that entanglement is degraded in non-inertial frames.  This was clearly shown in \cite{fuentes2005alice} where degradation of spacelike entanglement was quantified when one of the observers moved in uniform acceleration.  Such a case was also mapped into the scenario of an observer falling into a black hole resulting in a similar degradation.  These results imply that entanglement is an observer dependent phenomenon!  In Chapter \ref{chap: QEnt}, we mentioned the procedure of entanglement swapping.  This notion carries over to RQI and is referred to as entanglement extraction or entanglement harvesting \cite{martin2014entanglement}.  It is the process of extracting field entanglement by local quantum systems interacting with the field in a spacelike separated way.  This harvesting procedure can be generalized into entanglement farming.

\section{Timelike Entanglement}

\subsection{Definition}

Recall the RQI definition of entanglement as expressed through (\ref{entanglementqft}) using subsets $R_{1}$ and $R_{2}$ of spacetime $M$.  Analogous to the spacelike case, one can provide a timelike version \cite{olson2012extraction}.  If all the points in $R_{1}$ are timelike separated with respect to all points in $R_{2}$, then we say that the quantum field in $R_{1}$ is \textit{timelike entangled} with respect to the quantum field in $R_{2}$.  In other words, this nonseparability of the state of the quantum field can be regarded as an entanglement in time as described in Chapter \ref{chap: QEnt}.  However in this relativistic setting, the temporal aspect is articulated far more precisely by using the light cone structure i.e. timelike intervals.

\subsection{Future-Past entanglement}

We want to focus on the regions inside the light cone in the two-dimensional Minkowski spacetime $(t,z)$.  The regions $t>\lvert z \rvert$ and $t < -\lvert z \rvert$ are known respectively as the Future and Past. We proceed to describe the work in \cite{olson2011entanglement} which showed the Minkowski vacuum can be written as a timelike entangled state between the future and past modes.  It follows an analogous procedure to the spacelike case.

\textbf{a) Independent systems: } For massless fields, the commutator vanishes for timelike intervals 
\begin{equation}
[\hat{\phi}(x_{F}),\hat{\phi}(x_{P}) ] = 0.
\end{equation}
Hence we can quantize the massless fields in $F$ and $P$ as independent systems.  The concept of independent system also remains valid as an approximation when the commutator is small.

\textbf{b) Minkowski plane waves: } Exactly the same content as the spacelike case.

\textbf{c) Future-Past plane waves: } For the future quadrant $F$ we have the coordinate transformation
\begin{equation}
t = a^{-1} e^{a\eta}\cosh(a\zeta),  \quad \quad \quad \quad z = a^{-1} e^{a\eta}\sinh(a\zeta).
\end{equation}
For the past quadrant $P$ we have the coordinate transformation
\begin{equation}
t = -a^{-1} e^{a\overline{\eta}}\cosh(a\overline{\zeta}),  \quad \quad \quad \quad z = -a^{-1} e^{a\overline{\eta}}\sinh(a\overline{\zeta}).  
\end{equation}
Due to conformal invariance of the massless wave equation, the wave equations take the same form as (\ref{waveequation}).  With analogous light cone coordinates,
\begin{equation}
\nu = \eta+\zeta, \quad \quad \quad \quad \overline{\nu} = - \overline{\eta} - \overline{\zeta},  
\end{equation}   
we obtain mode functions in these coordinates (like in the spacelike case).  These modes are called conformal modes and are written as
\begin{align}
g_{\omega}^{F}(\nu) &= (4\pi \omega)^{-1/2}e^{-i\omega\nu}, \\
g_{\omega}^{P}(\overline{\nu}) &= (4\pi \omega)^{-1/2}e^{-i\omega\overline{\nu}}.
\end{align}
These conformal modes resemble the Rindler modes (\ref{Rindler1}) and (\ref{Rindler2}).  In fact we will show that these are the Rindler modes.  Hence their annihilation operators define the Rindler vacuum
\begin{equation}
\hat{a}_{+\omega}^{F}\ket{0_{R}} = \hat{a}_{+\omega}^{P}\ket{0_{R}} = 0,
\end{equation}
for all $\omega$.

\textbf{d) Bogoliubov transformation: } The main construct we require, regardless of spacelike or timelike case, are the Minkowski plane waves.  For the spacelike case, we utilized the coordinate transformation
\begin{equation}
V = a^{-1}e^{a\chi},  \quad \quad \quad \quad V = -a^{-1}e^{-a\overline{\chi}},
\end{equation}
and for the timelike case we find that
\begin{equation}
V = a^{-1}e^{a\nu},  \quad \quad \quad \quad V = -a^{-1}e^{-a\overline{\nu}}.
\end{equation}
This shows that the light cone coordinate, $V$, has the same functional relationship to $\chi$ as to $\nu$.  More precisely, $g_{\omega}^{R}(\chi)$ and $g_{\omega}^{F}(\nu)$ are identical functions of $V$ since $\chi(V)=\nu(V)$.  This implies that they are made of the same combination of Minkowski plane waves.  Mathematically, these are the same quantities.  A similar relationship holds between $g_{\omega}^{L}(\chi)$ and $g_{\omega}^{P}(\nu)$.  In other words, these conformal modes are alternative expression for Rindler modes.  Therefore when we expand these modes into Minkowski plane waves
\begin{align}
\theta(V)\, g_{\omega}^{F}(\nu) &=  \int_{0}^{\infty}dk \, ( \alpha_{\omega k}^{F} u_{k} (V) + \beta_{\omega k}^{F} u_{k}^{*} (V)), \\ 
\theta(-V)\, g_{\omega}^{P}(\overline{\nu}) &=  \int_{0}^{\infty}dk \, ( \alpha_{\omega k}^{P} u_{k} (V) + \beta_{\omega k}^{P} u_{k}^{*} (V)),
\end{align}
and compare to (\ref{bogo1}) and (\ref{bogo2}) we obtain relationships
\begin{equation}
\alpha_{\omega k}^{F} = \alpha_{\omega k}^{R}, \quad \beta_{\omega k}^{F} = \beta_{\omega k}^{R}, \quad \alpha_{\omega k}^{P} = \alpha_{\omega k}^{L}, \quad \beta_{\omega k}^{P} = \beta_{\omega k}^{L}.
\end{equation}
\textbf{e) F-P entanglement: } From here, we carry the same procedure as the spacelike case except to replace labels $R$ to $F$, and $L$ to $P$.  This produces the final result,
\begin{empheq}[box=\widefbox]{align}\label{timelikeentanglement}
\ket{0_{M}} = \prod_{i}C_{i} \sum_{n_{i}=0}^{\infty}\frac{e^{-\pi n_{i} \omega_{i}/a}}{n_{i}!}(\hat{a}_{\omega_{i}}^{F\dagger}\hat{a}_{\omega_{i}}^{P\dagger})^{n_{i}}\ket{0_{R}}
\end{empheq}
where $C_{i} = \sqrt{1 - e^{-2\pi \omega_{i}/a}}$.  This shows a \textit{timelike entanglement} between the past and the future in the Minkowski vacuum.

\subsection{Implications}

We make a few remarks on the $F$-$P$ entanglement expressed in (\ref{timelikeentanglement}), in particular on its similarity with the temporal entanglement in Chapter \ref{chap: QEnt}.  Although the quantum field is causally disconnected between $F$ and $P$, measurements in $F$ (for example, projections onto $g_{\omega}$-particle number) can collapse the state of the field in $P$.  Similarly measurements in $P$ should collapse the state in $F$.  This is analogous to the properties of entanglement in time discussed in Chapter \ref{chap: QEnt}.  The similarities to temporal Bell states (\ref{temporalBellstate}) become more striking when in a recent work \cite{sabin2012extracting} it was theoretically shown that this future-past entanglement (\ref{timelikeentanglement}) could be extracted to a pair of qubits that do not coexist at the same time. 

Another extraction for this timelike entanglement was proposed in  \cite{olson2012extraction}.  This involved the use of two detectors.  One of the detectors interacted with the vacuum in the past while the other detector waits and interacts with the future vacuum.  The two detectors end up becoming entangled.  More precisely the timelike entanglement in the Minkowski vacuum is converted into bipartite entanglement between detectors at a constant time.  However, the procedure requires a particular time correlation for the extraction to optimally occur.  This was stated more shockingly through an example as, \textit{``a detector that is switched on and off in the vicinity of a quarter to 12:00 can become entangled with a detector interacting with the field at the same spatial location in the future, but only if the later detector waits to be switched on and off at a quarter past 12:00." }  It is therefore not surprising to see that the existence of a time interval (in this case thirty minutes) is what makes the interdependence of this timelike entanglement shocking.


\chapter{Conclusion}\label{chap: Conclusion}
\begin{chapquote}{Carl Jung, \textit{The Black Books}}
	``My soul, my soul, where are you? Do you hear me? I speak, I call you–are you there? I have returned, I am here again. I have shaken the dust of all the lands from my feet, and I have come to you, I am with you. After long years of long wandering, I have come to you again...''
\end{chapquote}

\textsf{IN THE REALM} of quantum physics, we have witnessed a shocking interdependence across time in a system of multiple particles (in Chapter \ref{chap: QEnt}), a single particle (in Chapter \ref{chap: QFound}), and zero particles (in Chapter \ref{chap: RQI}).  We found that the entanglement in time (in Chapter \ref{chap: QEnt}) and the timelike entanglement (in Chapter \ref{chap: RQI}) are similar in nature.  Both utilize the same algebraic definition of nonseparability.  Both can be expressed in terms of an entanglement between qubits that do not coexist.  However, the timelike entanglement articulates itself more clearly through the light cone terminology.  Despite the common use of referring to non-locality in time (in Chapter \ref{chap: QFound}) as an entanglement in time, we argue for the distinction to be clearly made.  Non-locality in time is not an algebraic definition of nonseperability but rather the property of experimental measurement correlations.  Moreover, further care needs to be taken in the RQI case where the term locality is characterized in a vastly different manner.  In this chapter, we summarize the main achievements of this thesis and discuss future projects that relate to its topic.  They range from conservative next steps to imaginatively speculative paths.

\section{Summary}

This thesis provides one of the first systematic expositions on the concept of entanglement in time of quantum systems.  Furthermore, the thesis contrasts it with the more familiar concept of entanglement in space.  The similarities and differences between these concepts are examined in the context of quantum information, quantum foundations and relativistic quantum information.

The thesis also achieved various original contributions:

\textbf{a) Quantum blockchain: } We designed a quantum information application of entanglement in time, namely a quantum blockchain \cite{rajan2019qblock}.  Most other applications of quantum information harness an entanglement in space.  Though the literature does refer to a few other applications of entanglement in time, they are fundamentally modifications of the spatially entangled case.  Therefore, our work can be regarded as the first novel application of entanglement in time.

\textbf{b) Monty hall teleportation: } We designed a Monty Hall version of quantum teleportation \cite{rajan2019quantum}.  The teleportation protocol is one of the most explored topics in quantum information and our work has added novel techniques into this area.  Future work may involve porting these techniques to other quantum protocols.

\textbf{c) Teleportation involving noise: } We developed a variation of the teleportation protocol for the effect of noise on teleportation \cite{rajan2019quantum}.  This work could be of great applicability to practical quantum communication networks.

\textbf{d) PBR game: } We provided one of the first (if not the first full) reformulation of the Pusey-Barrett-Rudolph theorem into a quantum game \cite{rajan2019quantum}. Given that game-theoretic versions of the CHSH inequalities have played a non-trivial role in quantum information and its foundation, time will only tell how impactful our gamification of this recent foundation result will be.

\textbf{e) Density matrix in Gleason's theorem: }  We provided an explicit construction of the density matrix in Gleason's theorem \cite{rajan2019explicit}.  Such a construction was missing in the vast literature concerning the foundations of quantum physics.

\textbf{f) Geometric proof of KS theorem: } We constructed a simplified geometrical proof \cite{rajan2017kochen} of the Kochen-Specker theorem in quantum foundations.

\section{Quantum Time Machines}

Our exploration moved from classical information (in Chapter \ref{chap: classical}),  towards quantum information (in Chapter \ref{chap: QInfo}), to finally an examination of the entanglements within quantum information science (in Chapter \ref{chap: QEnt}).  Much like classical information was the resource for the development of the `Information Age', one can imagine that quantum information may transform the world to a `Quantum Information Age.'  More pragmatically, the applications of quantum information can be regarded as a technological frontier.  Others have harnessed quantum information to build teleportation systems or create the most powerful computers.  In this thesis,  we have used quantum information to design a quantum blockchain, which can be viewed as a `quantum time machine.'  

Classically, a time machine is any system that permits one to travel into the past, and a rigorous definition can be found in \cite{visser1996lorentzian}.  Using quantum physics, we believe that a broader class of time machines may be possible; this includes functionalities that we have not yet imagined.  To be more precise, we define a \textit{quantum time machine as any quantum system that can perform information tasks across time in classically impossible ways}.  A more rigorous definition could perhaps be formulated with the use of constructs known as steering inequalities \cite{wiseman2007steering, chen2014temporal}.

Our view is that quantum time machines (instead of quantum teleportation or quantum computers) will be the most shocking applications of quantum information, and the most exciting technologies for the world's transformation to the Quantum Information Age.  We proceed to outline three possible projects regarding these temporal-based quantum information technologies.

\subsection{Temporal cryptography}

In Chapter \ref{chap: QEnt}, entanglement in space was described through the tools of qubits, density operators and entropy.  The entropic analysis was particularly useful in the construction of entropic uncertainty relations with a spatially entangled memory (\ref{entropicuncertaintyrelationsentangled}).  These relations were ultimately related to the security of quantum cryptographic protocols (\ref{entropicuncertaintycrypto}).  

For the entanglement in time in Chapter \ref{chap: QEnt}, a description through entropy was not found in the literature.  Rather than providing an analysis using the Shannon or von Neumann entropy, a novel next step is to make use of certain temporal entropic quantities.  In classical information theory, two relatively recent quantities in the analysis of temporal data are the transfer entropy \cite{schreiber2000measuring, staniek2008symbolic, barnett2012transfer} and the past entropy \cite{di2002entropy, nanda2006some}.  Modifications of these quantities for the quantum case may allow one to better analyze the entanglement in time.  Furthermore, it may allow for the derivation of a unique set of entropic uncertainty relations with a temporally entangled memory.  Following the spatial case, this may lead to the development of temporal quantum cryptographic protocols, which secure information across time in classically impossible ways.

\subsection{Network consensus} 

In our quantum blockchain, the entanglement in time was harnessed for the data structure component.  However, we believe that one of the best applications of entanglement in time could be in the network consensus component. 

Prior to the inception of blockchain systems, network consensus was considered to be central topic within the subject of distributed algorithms \cite{lynch1996distributed}.  An important aspect to this subject is the timing model which captures the timing of events in a distributed computer network.  This can be synchronous (processors performing communication and computation in perfect lock-step synchrony), completely asynchronous (taking steps at arbitrary speeds and arbitrary order), or partially synchronous (where processors have partial information about the timing of events).  Given this temporal environment,  one research direction would be to harness an entanglement in time to develop quantum distributed consensus algorithms that can outperform the classical algorithms within each of scenarios of the timing model.

A far more interesting path would be when one considers the network consensus protocols that blockchain systems have recently introduced.  Advanced blockchain systems such as proof-of-elapsed-time systems \cite{chen2017security} and hedara hashgraph \cite{baird2016swirlds} have a significant temporal property to their design.  But far more important is that blockchain consensus overall have a probabilistic aspect that makes their system operational.  Given that an entanglement in time is a temporal phenomenon with (quantum) probabilistic properties, it may be the case that developing probabilistic consensus protocols is their `killer app.'  One can imagine that these protocols would allow a network to achieve consensus across time in classically impossible ways.

\subsection{Temporal logical machines}

One can think of a digital computer as a machine that carries out boolean logic.  Quantum computers or more precisely the quantum circuit model can be thought of as a quantum analogy of boolean operators.  As an example, one refers to the Pauli operator $\sigma_{x}$ as the quantum NOT operator.  There has been recent work to reformulate the quantum circuit model into other frameworks; one such example utilizes category theory \cite{abramsky2004categorical, heunen2013quantum} which provides various advantages.   

Using a similar line of reasoning, we speculate that an entanglement in time may not be best captured through the quantum circuit model which is founded on boolean logic.  There exists a well developed field known as temporal logic \cite{krger2008temporal, galton1990temporal, gabbay1994temporal} which involves various temporal logical operators.  An ambitious path would be to develop a model that can be considered a quantum analogue of temporal logic.  This may better capture the the effect of entanglement in time than the boolean logic inspired quantum circuit model.  We speculate that such a framework may lead to the derivation of radically new types of time machines i.e. temporal logical machines that may be as revolutionary as digital computing.

\section{What is Quantum Information?}

Entanglement in time is a most shocking temporal effect which is fundamentally mysterious.
Quantum information (ie quantum state), such as the complex numbers in (\ref{qubit}), allowed us to mathematically express this entanglement in time.  We viewed similar temporal effects from a foundational perspective in Chapter \ref{chap: QFound} as well as in the relativistic regime in Chapter \ref{chap: RQI}.  If quantum information represents a physical quantity, then an entanglement in time magnifies the disruption of quantum physics onto the classical temporal world far greater than any other effect. Furthermore in Quantum Foundations, holding an epistemic view of quantum information has drastic consequences, and in RQI the unification of quantum information with the Einstein equations is still the most important open problem in theoretical physics.  Therefore, from such a wide exploration we come to appreciate the most fundamental mystery:  What is quantum information?

In this thesis, we take the view that there are two separate problems, a theoretically inclined problem and a physically inclined problem.  The theoretical problem is what does quantum information or the quantum state physically represent?  The physical problem is what is the physical state of the quantum system when it is not observed?  (A refinement of the latter question is:  Where is the mass of a quantum particle located or distributed when we do not observe it?)  The connection between the quantum state and its corresponding unobserved system is not direct within the postulates of quantum theory (note that a quantum state is not a probability distribution; probability requires squaring it first).  This lack of direct connection provides the ambiguity that results in the inception of the two problems described.  We present three directions that relate to entanglement in time that could lead to an advancement towards answering these two problems.

\subsection{Null tetrads}

RQI is currently presented through the metric formulation of general relativity.  There is an equivalent picture of general relativity known as the null tetrad formulation \cite{penrose1984spinors, o2003introduction}.  It views the light cone structure as the fundamental entity, and elevates complex numbers as central quantities within relativity.   

More precisely, one can define a set of light-like or null vectors as a basis.  By tetrad, this implies a basis of four vectors.  Hence, at each point on the spacetime manifold, there are four null vectors ${l^{a}, n^{a}, m^{a}, \overline{m}^{a}}$ with specific properties that we shall describe.  The vectors $l^{a}$ and  $n^{a}$ are real and satisfy $l^{a}\:n_{a} = 1$.  The other two vectors, $m^{a}, \overline{m}^{a}$ are complex null vectors and have the property that they are complex conjugates of each other and satisfy the condition, $m^{a}\:\overline{m}_{a} = -1$.  The relationship of the null tetrad to the metric tensor (\ref{metrictensor}) can be expressed as
\begin{align}
g_{ab} &= l_{a}\:n_{b} + n_{a}\:l_{b} - m_{a}\:\overline{m}_{b} - \overline{m}_{a}\:m_{b}, \\
g^{ab} &= l^{a}\:n^{b} + n^{a}\:l^{b} - m^{a}\:\overline{m}^{b} - \overline{m}^{a}\:m^{b}.
\end{align}
The spacelike and timelike entanglements in RQI were based on regions separated by the light cone structure.  Perhaps by using the null tetrad formulation, one may be able to express these entanglements more efficiently.  This may provide some ideas on generalizing these entanglements in non-trivial curved spacetimes settings.  This may lead to the discovery of a novel set of entanglements.

However, a far more ambitious reason is that since quantum information is complex valued, its unification to relativity may require relativistic structures to be complex valued \cite{rajan2016complex}.  This complex valued unification towards quantum gravity was already outlined in \cite{penrose2007road}.  The novelty would be whether a null tetrad formulation of RQI could lead to a better insight into this program.  This may help us answer the question of what is quantum information from the perspective of complex valued relativistic structures.

\subsection{`Spacetime information theory'}

Entanglement in time involves an interplay of both quantum physics and time at a fundamental level.  In this thesis we have investigated quantum physics through an information-theoretic perspective.  Could time itself also be studied using an information-theoretic method? If so, would that help understand entanglement in time (and ultimately quantum information) in a deeper way?  Such a curiosity aligns to a program set by John Wheeler known as `It from Bit' \cite{wheeler1990information}.  

\textbf{a) It from Bit: } This research program puts forth the notion that the physical world emerges from information.  Wheeler hypothesized that every physical quantity derives its ultimate significance from bits, and we quote, \textit{``...every it -- every particle, every field of force, even the spacetime continuum itself -- derives its function, its meaning, its very existence entirely -- even if in some contexts indirectly -- from the apparatus-elicited answers to yes or no questions, binary choices, bits...all physical things are information-theoretic in origin"}.  He emphasized the goal to carry out such a program and we quote \textit{``Tomorrow we will have learned to understand and express all of physics in the language of information"}.  He proceeds to set an agenda by saying \textit{``...capitalize on the findings and outlooks of information theory...search out every link each has with physics..."}.  

We aim to contribute to Wheeler's program by outlining an \textit{original} \cite{GodTime} speculative path on how time (and space) could be viewed from an information-theoretic perspective.  Before embarking on these ideas, it wise to first identify the non-trivial concept that underpins all information theories.

\textbf{b) Compression: }  Our view is that all information theories are fundamentally about compression and not about information.  To support the previous statement, we examine several information theories and find that the concept of compression does indeed exist as the foundational result for each of them.

For the case of classical information theory, the fundamental result is the noiseless coding theorem (Theorem \ref{Shannoncoding}).  It highlights that the Shannon entropy $H(X)$ can be operationally defined in terms of optimal compression.  More precisely, for a sequence of $n$ two-outcome random variables, uncompressed $n$ bits can be optimally compressed to $H(X)n$ bits:
\begin{equation}\label{classicalcompressionconclusion}
H(X)n_{uncompressed} = n_{compressed}.
\end{equation}
Quantum information theory provides the quantum generalization of classical information theory.  The fundamental result in this theory is the Schumacher’s noiseless coding theorem (Theorem \ref{Schumachercoding}).  This articulates that the von Neumann entropy $S(\rho)$ can be operationally defined in terms of optimal compression.  More specifically, uncompressed $n$ qubits can be optimally compressed to $S(\rho)n$ qubits:
\begin{equation}\label{quantumcompressionconclusion}
S(\rho)n_{uncompressed} = n_{compressed}.
\end{equation}

Therefore we see that compression is the key idea in the fundamental results of these theories, and it serves to provide the operational definition of the entropies.

The field of algorithmic information theory \cite{grunwald2008algorithmic, downey2010algorithmic}, which we have not examined in this thesis, can also be seen to be fundamentally about compression.  This particular information theory is based on a notion of the information content of an individual object using concepts from computability theory.  This is in contrast to classical information theory which is built on probability distributions.  To elaborate, the information content of an object such as a finite binary string $x$ can be captured by the algorithmic entropy $K(x)$ (also known as the Kolmogorov complexity) which is defined as the length of the shortest binary computer program that can produce $x$ as the output.  In other words, the algorithmic entropy of a string provides the length of its shortest possible compression.  As an example, the string \begin{equation}
01010101010101010101010101010101010101010101010101010101010101 
\end{equation}
has the compressed program description ``31 repetitions of 01".  This is opposed to the string
\begin{equation}\label{string}
11001000011000011101111011101100111110100100001001010111100101
\end{equation}
which has no shorter description than writing down the string itself.

Using the algorithmic entropy, one can arrive at a definition of randomness of a single individual sequence which is not possible using probabilistic theories.  The fundamental idea is that randomness equates to incompressibility.  The inability to compress highlights that there is no structure of pattern to provide a concise description (as witnessed in string (\ref{string})).  Stated more formally, a finite string $x$ is random if and only if the algorithmic entropy of the string $K(x)$ is no shorter than the length of the string.

Therefore we see compression existing at a foundational level in all information theories.  However the ambiguity arises when one probes what the information in each of the information theories is supposed to represent; classical information is fit for engineering purposes with no regard to meaning \cite{shannon1948mathematical}; nobody knows what quantum information is \cite{leifer2014quantum}; a similar ambiguity exists on what algorithmic information is about \cite{grunwald2008algorithmic}.  Given each of the information theories has an associated entropy, perhaps that may add some clarification.  Such a hope is quickly diminished when one reads \cite{scales2012information} that Shannon called his term the entropy because von Neumann suggested one reason as ``\textit{...nobody knows what entropy really is, so in a debate you will always have the advantage}." 

Our view to resolve the confusion is remove the focus on information in information theories. Rather it is compression in these theories that is fundamental. Building on these observations, we state that a structure that is compressible is what should constitute information, and the quantity involved in optimal compression is what should be termed the entropy.

\textbf{c) Spacetime compression: } If one takes `It from Bit' as the underlying principle of the Universe, then there ideally must exist an information theory associated to time and space, just as there exists information theories for both classical and quantum systems.  More precisely, this research direction demands the development of a `spacetime information theory.'  From an alternative direction, the notion that spacetime itself contains information is already alluded to by the Bekenstein-Hawking entropy (\ref{hawkingentropy}).  In fact the original inception of that result was based on observations regarding similarities between spacetime physics and aspects of information \cite{bekenstein1972black}. 

Furthermore if one takes compression as the primary information-theoretic technique, then such a spacetime information theory should be built on some mathematical notion of compression.  To explore this idea, we provide a speculative path through a heuristic argument.  We emphasize that these ideas are underdeveloped but proceed with the intention of conveying possibilities.

Our immediate aim is to simply identify whether a notion of compression can be found in the theory of relativity, which is our current framework for understanding spacetime.  And predicated on that, develop various inferences that may lead to insights for developing a formal spacetime information theory on a firm basis.

We suggest the following idea: \textit{A time interval itself should be treated as a form of information, and that time dilation as expressed in (\ref{timedilation}) can then be seen as a form of information-theoretic decompression}.  

\textbf{d) Mathematical details: } To express this idea mathematically, we require a `data compression' entropy, analogous to the other information theories.  We identify this by suggesting that the reciprocal of the Lorentz factor (\ref{timedilationgamma}) is an entropy as it `compresses' time intervals.  Mathematically, this can be expressed as
\begin{equation}\label{postulate2}
\alpha = H(X), 
\end{equation}
where $\alpha \equiv 1/\gamma = \sqrt{1-(v^2/c^2)}$.  (This is analogous to $S(\rho) = H(\lambda_{x})$ in quantum information theory).  For an uncompressed time interval $\Delta t_{uncompressed}$, optimal compression is achieved using the time dilation formula 
\begin{equation}
\alpha \Delta t_{uncompressed} = \Delta t_{compressed}.
\end{equation}
Note that this is analogous to the compression formulas in the classical (\ref{classicalcompressionconclusion}) and quantum (\ref{quantumcompressionconclusion}) cases. The bounds for this $\alpha$-entropy are $0 \leq \alpha \leq 1$, with $\alpha =0$ when $v=c$.  The uncompressed interval can be thought of as the case when maximum entropy occurs. 

A particular physical realization of this is when a given coordinate time $\Delta t$ is contracted to various proper times $\Delta \tau$ depending on the velocity of different observers, ie $\Delta t = \gamma_{1}\Delta \tau_{1}$, $\Delta t = \gamma_{2}\Delta \tau_{2}$, etc.  Different velocities compress a fixed time interval differently, and hence we can identify this entropy with velocity.   

Using this assumption, we can state some similarities to the classical and quantum information theories.  In classical information theory, the mutual information, $H(X{:}Y)$, of $X$ and $Y$ measure how much information $X$ and $Y$ have in common; the quantum mutual information for systems $A$ and $B$ is denoted by $S(A{:}B)$.  This notion of commonality can be captured using the physical scenario of the relativistic velocity, $v_{r} = (v_{1}-v_{2})/(1-(v_{1}v_{2}/c^{2}))$, of two observers, $v_{1}$ and $v_{2}$ with respect to a fixed coordinate time interval.  Each observer can be identified with a compression entropy, $\alpha_{1} = 1/\gamma_{1}$ and $\alpha_{2} = 1/\gamma_{2}$.  Their relative Lorentz factor, $\gamma_{r}=\gamma_{1}\gamma_{2}(1-(v_{1}v_{2}/c^{2}))$, provides the inspiration to define the relativistic mutual information between $\alpha_{1}$ and $\alpha_{2}$:
\begin{equation}
\alpha_{1{:}2} \equiv \frac{1}{\gamma_{r}} = \frac{1}{\gamma_{1}\gamma_{2}(1-\frac{v_{1}v_{2}}{c^{2}})} = \frac{\alpha_{1}\alpha_{2}}{(1-\frac{v_{1}v_{2}}{c^{2}})}.
\end{equation}  
The classical joint entropy, $H(A,B)$, and the quantum joint entropy, $S(A,B)$, help us define the relativistic joint entropy:
\begin{eqnarray}
&H(X,Y) = H(X) + H(Y) - H(X{:}Y), 
\\
&S(A,B) = S(A) + S(B) - S(A{:}B),
\\
&\alpha_{1,2} \equiv \alpha_{1} + \alpha_{2} - \alpha_{1{:}2}.
\end{eqnarray}
Similarly, the relativistic conditional entropy can be developed using the classical and quantum analogue: 
\begin{eqnarray}
&H(X|Y) \equiv H(X,Y) - H(Y),
\\
&S(A|B) \equiv S(A,B) - S(B), 
\\
&\alpha_{1|2} \equiv \alpha_{1,2} - \alpha_{2}.
\end{eqnarray}
From here, we proceed to derive entropic properties concerning two systems and compare this with the classical and quantum case.  For example, it can easily be shown that $\alpha_{1{:}2} = \alpha_{2{:}1}$ and $\alpha_{1,2} = \alpha_{2,1}$ which is like the classical case of $H(X{:}Y) = H(Y{:}X)$ and $H(X,Y)=H(Y,X)$.  Furthermore, $H(Y|X)\geq0$ and $H(X)\leq H(X,Y)$ with equality satisfied for each inequality, if and only if $Y$ is a function of $X$; the last two properties fail for the quantum case, in particular for systems involving entanglement; for our relativistic case, $\alpha_{2|1}\geq0$ and $\alpha_{1}\leq \alpha_{1,2}$ is satisfied as long as $\gamma_{r}\geq \gamma_{2}$, ie physically the relative Lorentz factor has to be greater than or equal to the Lorentz factor of the second observer. The property of subadditivity holds for all three cases: $H(X,Y)\leq H(X)+H(Y)$, $S(A,B)\leq S(A)+S(B)$, and $\alpha_{1,2}\leq \alpha_{1} + \alpha_{2}$ (which can be reduced to $\alpha_{1{:}2}\geq 0$).  Classically, $H(Y|X)\leq H(Y)$ and it can be shown that $\alpha_{1|2}\leq \alpha_{1}$ (which can be reduced to $\alpha_{1{:}2}\geq 0$).  In the last classical equation as well as second to last equation of classical subaddivity, equality is expressed if $X$ and $Y$ are independent variables.  In the last two analogous relativistic equations, equality is expressed if $\alpha_{1{:}2}=0$, which is when one of the observers has velocity $c$.  Thus, the notion of `independence' enters when one of the observers is moving at the speed of light.

To draw out physical implications, we assume this information-theoretic compression has the mathematical backbone of classical and quantum information theory.  In classical information theory, a central result was the construction of $\epsilon$-typical sequences (\ref{typical}) with Theorem \ref{typicaltheorem}.  A similar result held in the quantum information theory with $\epsilon$-typical states with Theorem \ref{typicalsubspacetheorem}.  In this `spacetime information' case, one would ideally need to define an time interval $\Delta t$ that is $\epsilon$-typical.  Hence in an analogous manner, time intervals come in typical and atypical forms.  If this relativistic case is similar to the classical and quantum case, then one can easily see that typical time intervals would be those that can be compressed from $\Delta t$ to $\alpha \Delta t$.  Physically, this means typical time intervals obey the relativistic time dilation formula.  Therefore, atypical time intervals are those that exhibit Lorentz violations.  Furthermore, if a similar information theory result occurs in this case, then the probability that a time interval is $\epsilon$-typical is
\begin{equation}
\text{Pr}\{T_{\epsilon}^{(\Delta t)}\} > 1 - \epsilon
\end{equation}
for sufficiently large duration $\Delta t$.  Hence in this setting, Lorentz violations do occur for large time intervals but very rarely.  In this sense, Lorentz violations is fundamentally not a matter of scale, but rather of probability; hence this violation may be experimentally detectable at large scales given a very large sample size.

It's important to note that this idea also applies to space intervals.  For the $x$-direction, $\alpha \Delta x_{uncompressed} = \Delta x_{compressed}$, where the compressed space interval is due to length contraction.  In the data compression subsection, it can be seen that optimal compression is achieved using $n H(X)$ bits (or $n S(\rho)$ qubits). For the case $n R > n H(X)$, compression without loss of information is achieved, but is not optimal.  For $nR < nH(X)$, information is lost and compression is not reliable.  These results seem to correspond to Lorentz transformations, which can be re-written as $\alpha \Delta t_{uncompressed} = \Delta t \pm \frac{v}{c^{2}}\Delta x$ where $\alpha \Delta t_{uncompressed} = \Delta t_{compressed}$ is the optimally compressed interval.  The Lorentz equation for the the minus case, leads to $\Delta t > \Delta t_{compressed}$; this can be interpreted as compression is achieved but not optimally.  For the plus case, $\Delta t < \Delta t_{compressed}$, which means some information is lost and gone to $\Delta x$. 

\textbf{e) Comments: } 
\begin{enumerate}[noitemsep, topsep=0pt, label=\roman*)]
	\item These set of ideas are merely speculations at this point and were inspired by Wheeler's program of `It from Bit.' In our thesis it is precisely a time interval that was central to the shocking nature of the entanglements.  Whether a time interval could be treated as information in the way we stated is an exciting premise but requires much greater exploration.
	\item There are various problems with this heuristic argument.  The first is regarding the problem on how this idea would work given the relative nature of Lorentz transformations between observers.  A related problem is that information in standard compression is carried on less bits while boosting a particle contracts or expands its time completely relative to the reference frame the particle is observed in.  With respect to these and other issues, it is important to emphasize that this is merely a heuristic argument for providing insights to develop a formal spacetime information theory.  The undertaking of this latter subject would of course require far more sophisticated mathematical constructs along with appropriate postulates.
	 \item However within such a formal spacetime information theory, we do believe compression will be found in its fundamental equations and help us view the Universe as Wheeler intended. Would the expansion of the Universe turn out to be an information-theoretic decompression? Would the Bekenstein-Hawking entropy correspond to an optimal compression? Through a unification with quantum information theory, would it help us achieve quantum gravity?  And would that let us finally answer the question what is quantum information (and its related intrinsic randomness)?
\end{enumerate}

\subsection{Unordered time}

Our current view of time is that it obeys a mathematical real number axis with an ordering from smaller values to larger values.  The final speculation we state is that time may be physically unordered in the quantum realm compared to the classical realm.  We believe that a development of this idea may provide a new interpretation of quantum physics, in particular to answering the physical and theoretical inclined problems that we stated earlier.  Perhaps the shocking properties are nothing intrinsic with the quantum particles but rather a consequence of an unordering of time.  When one observes (from the ordered world) a particle at various times, it gives the impression that the particle is behaving in a paradoxical manner. A measurement can be defined as the moment at which the temporal ordered classical world meets the temporal unordered quantum world.  From a theoretical view, the quantum state (ie quantum information) is a complex valued quantity and complex numbers are mathematically unordered; perhaps the quantum state through its unordered complex numbers is about spacetime and the Schr\"{o}dinger equation is an evolution of spacetime in a unordered temporal manner; the dynamics of this unordering of time may stem from the energy associated with the quantum system in question.  Providing a generalization of this bare idea with a mathematical framework is left for future work.  

We do want to emphasize certain works and ideas that served as stimulus for such a speculation.  First were the results on quantum causality \cite{brukner2014quantum, oreshkov2012quantum}, and in particular a Bell's theorem for temporal order \cite{zych2017bell}.  Relating to the unordered nature of complex numbers, the second influence comes from the use of imaginary time through the Wick rotation, as well as through the use of complex valued spacetime transformations in the Newman-Janis trick \cite{rajan2016complex}; there are no deep reasons at present for why these tricks work.  The third influence comes from the `spacetime information theory' ideas we set out earlier; this suggests that time can be compressed in an information-theoretic manner; given that compression techniques can involve data reordering or a reduction of data this also leads to a notion of unordering in time. The final influence is that the metric fluctuations on the sub-Planckian scale is completely unknown \cite{visser1996lorentzian}; a non-trivial metric fluctuation may provide the necessary basis for this unordered time.  

In closing, we want to provide some clarity on the historical aspect of the subject.  It is often emphasized that Einstein was critical of quantum theory.  But of far greater importance is that it should be stated that he was one of its pivotal founders  \cite{isaacson2011einstein, stone2015einstein}.  In fact it was the discovery of entanglement (in space) in the EPR paper \cite{einstein1935can} that is the most cited of all his works!  In align with the theme of this thesis, it should be noted that he also emphasized a time interval in his quest to truly understand quantum physics: \textit{``All the fifty years of conscious brooding have brought me no closer to answer the question, ‘What are light quanta?’ Of course today every rascal thinks he knows the answer, but he is deluding himself.''}

\bibliographystyle{unsrt}

\bibliography{bibfile} 
\end{document}